\documentclass[11pt]{article}

\usepackage[margin=1in]{geometry}
\usepackage[T1]{fontenc}
\usepackage{lmodern}
\usepackage{amsmath, amssymb, amsthm, amsfonts}
\usepackage{mathtools}
\usepackage{bm}
\usepackage{graphicx}
\usepackage{caption}
\usepackage{subcaption}
\usepackage{booktabs}
\usepackage{multirow}
\usepackage{array}
\usepackage{enumerate}
\usepackage{algorithm}
\usepackage{algpseudocode}
\usepackage{xcolor}
\usepackage{setspace}
\usepackage[authoryear,round]{natbib}
\usepackage[colorlinks=true,allcolors=blue]{hyperref}
\usepackage{authblk}

\bibpunct{(}{)}{;}{a}{}{,}

\newcommand{\symbf}[1]{\boldsymbol{#1}}

\theoremstyle{plain}
\newtheorem{theorem}{Theorem}
\newtheorem{proposition}{Proposition}
\newtheorem{lemma}{Lemma}

\newtheorem{assumption}{Assumption}
\theoremstyle{definition}

\theoremstyle{remark}

\title{Sensitivity Analysis for Trials with Irregular and Informative Assessment Times:
Marginal Mean Models and Sensitivity Parameter Selection}

\author[1]{Yujing Gao}
\author[1]{Shu Yang}
\author[2,3]{Ravi Varadhan}
\author[4]{Andrew Redd}
\author[5]{Andrea J. Apter}
\author[6, *]{Daniel O. Scharfstein}

\affil[1]{Department of Statistics, North Carolina State University}
\affil[2]{Department of Oncology, Johns Hopkins School of Medicine}
\affil[3]{Department of Biostatistics, Bloomberg School of Public Health, Johns Hopkins University}
\affil[4]{Department of Internal Medicine, University of Utah School of Medicine}
\affil[5]{Department of Medicine, University of Pennsylvania Perelman School of Medicine}
\affil[6]{Department of Population Health Sciences, University of Utah School of Medicine}
\affil[*]{Daniel.Scharfstein@hsc.utah.edu}

\date{\today}

\begin{document}

\maketitle

\begin{abstract}
In randomized trials, outcome assessment times often vary across participants and may depend on the underlying outcome process, so outcomes observed near a target time need not represent the target population. Whether assessment is informative cannot be determined from the observed data alone. We extend the sensitivity analysis framework of Smith et al.\ (2024), which is anchored in an explainable assessment assumption and was restricted to continuous outcomes through an identity-link marginal mean model, to a broader class of link functions that accommodates binary and count outcomes. Within this framework, we develop an influence function-based augmented inverse intensity-weighted estimator that is root-$n$ consistent under standard regularity conditions. We also propose a data-driven procedure for selecting the range of the sensitivity parameter from plausible bounds on the target mean outcome and for constructing a confidence set for this range, which accounts for uncertainty in sensitivity parameter selection. We illustrate the method with data from the ARC study and evaluate its finite-sample performance in simulation studies.
\end{abstract}

\noindent\textbf{Keywords:} informative assessment times; irregular visits; sensitivity analysis; augmented inverse intensity weighting; marginal mean models

\section{Introduction}
\label{sec:intro}

Randomized clinical trials are often designed to collect outcomes at pre-specified times following randomization. In practice, however, actual assessment times may vary substantially across participants. Such irregular assessment times complicate inference on treatment effects because outcomes are not observed for all participants at the target assessment times. Consequently, the outcomes observed around a given time may not be representative of the outcome distribution in the target population.

Assessment times are said to be informative when the distribution of outcomes among participants assessed at a given time differs from the corresponding population distribution at that time. For example, participants may postpone or miss assessments when their health status is worse, resulting in observed outcomes that tend to be better than those in the target population. Conversely, in studies that allow ``as-needed'' visits, participants may be more likely to undergo assessment when their health status deteriorates. Importantly, whether the assessment process is informative cannot generally be determined from the observed data alone \citep{Smith2024}.

Irregular assessment times can also complicate comparisons between treatment groups. Analyses based directly on outcomes at the observed assessment times may capture differences in assessment timing rather than differences in the underlying outcome trajectories. This issue can arise, for example, when assessment patterns differ between treatment groups or when treatment effects vary over time. Therefore, rather than focusing on outcomes at the random times at which participants happen to be assessed, it is often more scientifically relevant to estimate population mean outcomes and treatment effects at clinically meaningful target times.

Several approaches have been developed to account for informative assessment times, including inverse intensity weighting \citetext{\citealp*
{scharfstein2004dependent}; \citealp*{buzkova2007dependent}; \citealp*{buzkova2009repeated}; \citealp*{pullenayegum2013dr}; \citealp{sun2016quantile}}, semiparametric joint models \citetext{\citealp*{lin2001regression}; \citealp*{lin2003semiparametric}; \citealp*{sun2007informative}; \citealp*{liang2009jointmodeling}; \citealp{sun2011semiparametric}; \citealp*{sun2011regression}; \citealp{zhu2011regression}; \citealp{sun2012joint}}, pairwise likelihood approaches \citetext{\citealp*{chen2015regression}; \citealp{shen2019regression}}, and fully parametric models \citep{lipsitz2002dependent,ryu2007longitudinal,gasparini2020mixed}. However, analogous to analyses with missing data, inference about population outcomes at target assessment times generally requires assumptions about the relationship between the assessment process and the underlying outcome process that cannot be verified from the observed data.


In this paper, we extend the framework of \cite{Smith2024} by allowing for a broader class of link functions, thereby accommodating a wider range of outcome types, including binary and count outcomes. Within this generalized framework, we develop an influence function-based augmented inverse intensity-weighted estimator, enabling flexible semiparametric modeling while achieving root-$n$ consistency under standard regularity conditions. To address uncertainty in the selection of the sensitivity parameter, we provide a strategy for selecting the sensitivity parameter and constructing a confidence set for its plausible range. Simulation studies demonstrate that the proposed approach achieves satisfactory coverage performance.

The remainder of the paper is organized as follows. Section \ref{chp4:sec:setting} introduces the setting and notation. Section \ref{chp4:sec:framework} presents the proposed sensitivity analysis framework for general outcome types. Section \ref{chp4:sec:estimation} describes the estimation procedure, and Section \ref{sec:selection} develops the proposed method for sensitivity parameter selection. Section \ref{sec:analysis} illustrates the method using data from the ARC study. Section \ref{chp4:sec:simu} reports simulation results.

\section{Setting and Notation} 
\label{chp4:sec:setting}

We consider a randomized trial in which each participant has an outcome measured at baseline and at a sequence of post-baseline times. The number and timing of these post-baseline assessments may vary across participants. The primary objective is to estimate the treatment arm-specific mean outcome at target follow-up times. 

Let $[0,\tau]$ denote the study follow-up period, where $\tau$ is the end of follow-up. We focus on a time interval $[t_1,t_2]$, with $0<t_1<t_2\leq \tau$, that contains the target follow-up times of interest and over which assessments occur in both treatment arms. This interval is specified by the analyst and defines the time window for estimation and inference. We assume that there is no dropout during the study.

For each participant, let $A$ denote the treatment assignment, with $A=1$ for the treatment arm and $A=0$ for the control arm. For each time $t\in [0,\tau]$, let $Y(t)$ denote the (possibly unobserved) outcome at time $t$, and let $\symbf{X}(t)$ denote baseline and time-varying covariates collected up to time $t$. Let $N(t)$ denote the cumulative number of assessments up to time $t$, and define $\Delta N(t) = N(t) - N(t-)$ as the indicator of an assessment occurring at time $t$. Let $T_{k}$ denote the time of the participant's $k$th post-baseline assessment. We refer to $\{Y(t): t\in[0,\tau]\}$ as the outcome process and $\{N(t):t\in[0,\tau]\}$ as the assessment process. For each $t\in[0,\tau]$, let $\overline{\symbf{O}}(t)$ denote the observed history prior to time $t$, including baseline data, treatment assignment, assessment times before $t$, and all outcomes and covariates observed at those assessments.  Let $\symbf{O} = \overline{\symbf{O}}(\tau)$ denote the participant's observed data over the entire study.

For each $t\in[0,\tau]$, let $Y^a(t)$ denote the potential outcome that the participant would have at time $t$ under treatment assignment $A=a$ and $\mu^a(t) = \mathbb{E}\{Y^a(t)\}$ denote the corresponding population mean outcome at time $t$ were all participants assigned to treatment arm $A=a$ for $a=0,1$. Under randomization, $\mu^a(t) = \mathbb{E}\{Y(t)\mid A=a\}$, the mean outcome at time $t$ among participants assigned to treatment arm $a$, for $a\in\{0,1\}$. The treatment effect at time $t$ is then defined as $\delta(t) = \mu^1(t) - \mu^0(t)$.

To formalize the assumptions used in subsequent sections, we define the intensity function for the assessment process given the observed past as 
\begin{equation} 
  \label{eqn:lambda}
   \lambda\big\{t \mid  \overline{\symbf{O}}(t)\big\} 
   = 
   \lim_{\epsilon \to 0^+ } \big[\mathbb{P} \big\{N(t+\epsilon)-N(t-)  = 1 \mid  \overline{\symbf{O}}(t) \big\}  \big/ \epsilon \big] ,
\end{equation}
where $N(t+\epsilon)-N(t-)$ indicates whether the participant has an assessment
during the time interval $[t,t+\epsilon]$. For any time $t$ such that $\lambda\big\{t \mid \overline{\symbf{O}}(t)\big\}>0 $, we define 
\begin{align*}
     dF\big\{y(t) \mid \Delta N(t)=1, \overline{\symbf{O}}(t)\big\} 
     & = \lim_{\epsilon \to 0^+} dF \big\{y(t) \mid N(t+\epsilon)-N(t-) = 1, \overline{\symbf{O}}(t) \big\},\\
    dF\big\{y(t) \mid \Delta N(t)=0, \overline{\symbf{O}}(t)\big\} 
    & = \lim_{\epsilon \to 0^+} dF \big\{y(t) \mid  N(t+\epsilon)-N(t-) =0, \overline{\symbf{O}}(t) \big\},
\end{align*}
as the distributions of $Y(t)$ among individuals who were, and who were not, assessed at time $t$, given the observed history $\overline{\symbf{O}}(t)$.

\section{Sensitivity Analysis Framework for General Outcome Types}
\label{chp4:sec:framework}

When the outcome is continuous, \cite{Smith2024} developed an inferential framework for estimating the treatment effect $\delta(t)$ based on a sensitivity analysis under a range of plausible assumptions. Their approach proceeds by conducting inference for the treatment arm-specific mean outcomes $\mu^a(t)$ separately for $a \in \{0,1\}$.

We begin by reviewing their sensitivity analysis framework, in which all assumptions, distributions, and estimands are defined within each treatment arm. For notational simplicity, we suppress explicit dependence on treatment assignment throughout this section and reintroduce it when defining the treatment effect $\delta(t)$. We then extend this framework to accommodate more general outcome types by relaxing the modeling assumptions adopted in \cite{Smith2024} and introducing a more flexible class of marginal mean models.

\begin{assumption}[Tilting assumption] 
\label{chp4:assum:tilting}
For each time $t$ such that $\lambda\big\{t \mid \overline{\symbf{O}}(t)\big\} > 0$,    
\begin{equation}
    dF\big\{y(t) \mid \Delta N(t) = 0, \overline{\symbf{O}}(t)\big\} =
    dF\big\{y(t) \mid \Delta N(t) = 1, \overline{\symbf{O}}(t)\big\} \exp\{\alpha y(t)\} / c\left\{\overline{\symbf{O}}(t); \alpha\right\}
\end{equation}
where $c\left\{\overline{\symbf{O}}(t); \alpha\right\} = \mathbb{E}\left[\exp\{\alpha Y(t)\} \mid \Delta N(t) = 1, \overline{\symbf{O}}(t)  \right]$ and we assume $c\left\{\overline{\symbf{O}}(t);\alpha \right\} < \infty$.
\end{assumption}

Assumption \ref{chp4:assum:tilting} is a generalized version of the explainable assessment \citep{Smith2024} or the sequential ignorability \citep{scharfstein2004dependent}, meaning that outcomes among participants who are not assessed at a given time $t$, $dF\big\{y(t) \mid \Delta N(t) = 0, \overline{\symbf{O}}(t)\big\}$, tend to be larger or smaller than outcomes among similar participants who are assessed at time $t$, $dF\big\{y(t) \mid \Delta N(t) = 1, \overline{\symbf{O}}(t)\big\}$. Specifically, for a negative (positive) value of $\alpha$, the distribution of unobserved outcomes in the given arm is tilted to the left (right) relative to the distribution of observed outcomes in that arm, with smaller (larger) values of $Y(t)$ receiving greater weight. When $\alpha = 0$ (no tilt), Assumption \ref{chp4:assum:tilting} reduces to the explainable assessment assumption. As pointed out by \cite{Smith2024}, the tilting assumption can be adjusted by choosing a different form of $q\{t, Y(t);\alpha\}$ in place of $\alpha Y(t)$. The construction using $\alpha Y(t)$ is known as exponential tilting, which is widely used in sensitivity analysis for trials with missing or censored data \citep{barndorffnielsen1994inference, rotnitzky2001competing, birmingham2003patternmixture, vansteelandt2007nonmonotone, scharfstein2019globalsa}.

Because the assessment at time $t$ can depend on the current outcome $Y(t)$ under Assumption \ref{chp4:assum:tilting}, we define,
\begin{equation} 
  \label{eqn:rho}
   \rho\big\{t \mid \overline{\symbf{O}}(t), Y(t)\big\} 
   = 
   \lim_{\epsilon \to 0^+ } \big[\mathbb{P} \big\{N(t+\epsilon)-N(t-)  = 1 \mid \overline{\symbf{O}}(t), Y(t) \big\}  \big/ \epsilon \big].
\end{equation}

\begin{assumption}[Positivity assumption]
\label{chp4:assum:positivity}
There is some $c>0$ such that, for all $t$ in $[t_1,t_2]$, $\rho\big\{t\mid \overline{\symbf{O}}(t), Y(t)\big\} > c$ for all values of $Y(t)$ and $\overline{\symbf{O}}(t)$.
\end{assumption}

The intensity function $\rho\big\{t \mid \overline{\symbf{O}}(t), Y(t)\big\}$ is related to the intensity function $\lambda\big\{t \mid  \overline{\symbf{O}}(t)\big\}$ by the following proposition:

\begin{proposition}
\label{chp4:prop:lambda_rho}
Under Assumptions \ref{chp4:assum:tilting} and \ref{chp4:assum:positivity}, for each $t \in [t_1,t_2]$, 
\begin{align*}
    \rho\big\{t \mid \overline{\symbf{O}}(t), Y(t)\big\} = \lambda\big\{t \mid  \overline{\symbf{O}}(t)\big\}  \mathbb{E}\left[\exp\{\alpha Y(t)\} \mid \Delta N(t) = 1, \overline{\symbf{O}}(t)  \right] \big / \exp\{\alpha Y(t)\}.
\end{align*}
\end{proposition}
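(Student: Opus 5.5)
The plan is to apply Bayes' rule at the level of the $\epsilon$-window events and then pass to the limit $\epsilon\to 0^+$. Fix $t\in[t_1,t_2]$ and a history $\overline{\symbf{O}}(t)$, and write $E_\epsilon$ for the event $\{N(t+\epsilon)-N(t-)=1\}$. By Bayes' rule, the conditional probability of $E_\epsilon$ given $\overline{\symbf{O}}(t)$ and $Y(t)=y$ satisfies
\begin{equation*}
\mathbb{P}\{E_\epsilon\mid \overline{\symbf{O}}(t), Y(t)=y\}
= \frac{dF\{y\mid E_\epsilon, \overline{\symbf{O}}(t)\}\,\mathbb{P}\{E_\epsilon\mid\overline{\symbf{O}}(t)\}}{dF\{y\mid \overline{\symbf{O}}(t)\}} .
\end{equation*}
I will divide both sides by $\epsilon$ and let $\epsilon\to0^+$. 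On the left this gives $\rho\{t\mid\overline{\symbf{O}}(t),y\}$ by \eqref{eqn:rho}. In the numerator it gives $dF\{y\mid \Delta N(t)=1,\overline{\symbf{O}}(t)\}\,\lambda\{t\mid\overline{\symbf{O}}(t)\}$, using \eqref{eqn:lambda} and the definitions of the limiting conditional distributions. Here $\lambda>0$ is needed for those definitions to apply. It follows from Assumption \ref{chp4:assum:positivity}, because $\lambda$ is the average of $\rho$ over $Y(t)$ given $\overline{\symbf{O}}(t)$, and hence $\lambda > c$.

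Next I need to identify the denominator $dF\{y\mid\overline{\symbf{O}}(t)\}$, which does not depend on $\epsilon$. By the law of total probability,
\begin{equation*}
dF\{y\mid\overline{\symbf{O}}(t)\} = dF\{y\mid E_\epsilon,\overline{\symbf{O}}(t)\}\,\mathbb{P}(E_\epsilon\mid\cdot) + dF\{y\mid E_\epsilon^c,\overline{\symbf{O}}(t)\}\,\mathbb{P}(E_\epsilon^c\mid\cdot).
\end{equation*}
As $\epsilon\to0^+$ we have $\mathbb{P}(E_\epsilon\mid\overline{\symbf{O}}(t))=O(\epsilon)\to0$. Taking the limit therefore yields $dF\{y\mid\overline{\symbf{O}}(t)\} = dF\{y\mid \Delta N(t)=0,\overline{\symbf{O}}(t)\}$. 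Intuitively, conditioning on not being assessed in an infinitesimal window leaves the distribution unchanged.

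I then apply Assumption \ref{chp4:assum:tilting} to rewrite the denominator as
\begin{equation*}
dF\{y\mid \Delta N(t)=1,\overline{\symbf{O}}(t)\}\exp(\alpha y)/c\{\overline{\symbf{O}}(t);\alpha\}.
\end{equation*}
The factor $dF\{y\mid \Delta N(t)=1,\overline{\symbf{O}}(t)\}$ then cancels between numerator and denominator. What remains is
\begin{equation*}
\rho = \lambda\,c\{\overline{\symbf{O}}(t);\alpha\}/\exp(\alpha y),
\end{equation*}
which is the claim once we recall that $c\{\overline{\symbf{O}}(t);\alpha\}=\mathbb{E}[\exp\{\alpha Y(t)\}\mid\Delta N(t)=1,\overline{\symbf{O}}(t)]$. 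The cancellation is valid on the support of the assessed-outcome distribution. The tilting relation shows that this support coincides with the support of $dF\{y\mid\overline{\symbf{O}}(t)\}$, so the identity holds for all relevant $y$.

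The main obstacle is justifying the limit interchanges, since $(\epsilon\text{-quantity})/\epsilon$ and the conditional densities must converge jointly. I would handle this at the same level of rigor as the paper's definitions. I would assume the limits defining $\lambda$, $\rho$ and the two conditional laws exist, and take the limit of products as the product of limits. The key nontrivial point, that the $\Delta N(t)=0$ law equals the law given $\overline{\symbf{O}}(t)$ alone, would be stated explicitly as a consequence of $\mathbb{P}(E_\epsilon\mid\overline{\symbf{O}}(t))\to0$ together with boundedness of the conditional densities.
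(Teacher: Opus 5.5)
Your proof is correct and follows essentially the same route as the paper: Bayes' rule on the window event $\{N(t+\epsilon)-N(t-)=1\}$, division by $\epsilon$, a limit as $\epsilon\to 0^+$, and Assumption~\ref{chp4:assum:tilting} to relate the unassessed and assessed conditional laws. The only difference is bookkeeping. The paper normalizes the total-probability denominator by $dF\{y(t)\mid N(t+\epsilon)-N(t-)=1,\overline{\symbf{O}}(t)\}$ and takes the limit of that ratio, whereas you first identify the $\epsilon$-free marginal $dF\{y\mid\overline{\symbf{O}}(t)\}$ with the $\Delta N(t)=0$ law and then cancel; your remarks that positivity forces $\lambda>0$ (so the tilting assumption applies) and that the supports coincide are points the paper leaves implicit.
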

The proof is provided in Appendix \ref{apxB:appendix:condition}. 

To enable estimation of $\mu(t)$ from finite samples, we impose a smoothing assumption that allows information to be shared across different times.

\begin{assumption}[Generalized marginal mean assumption]
\label{chp4:assum:marginal} 
For all $t\in[t_1,t_2]$,
$\mu(t) = s\left\{\symbf{B}(t)^\prime \symbf{\beta}\right\}$, where $\symbf{B}(t) = (B_1(t), \cdots, B_p(t))^\prime$  is a specified vector of basis functions such that 
$\int_{t= t_1}^{t_2} \symbf{B}(t) \symbf{B}(t)^\prime dt$ is positive definite, $s\in \mathcal{C}^2$ is a known inverse link function of the link $g$ with strictly positive derivative $\frac{\partial s(z)}{\partial z} > 0$, and $\symbf{\beta} \in \mathbb{R}^p$ is a parameter vector.
\end{assumption}

Assumption \ref{chp4:assum:marginal} generalizes the marginal mean model in \cite{Smith2024} to accommodate a broader class of outcome types. In particular, the identity link used in \cite{Smith2024} is a special case of this formulation. More generally, different choices of link functions allow the framework to handle various outcome types. For example, for binary outcomes with support $\{0,1\}$, one may use the logit link $g(\mu) = \log\left\{\mu/ (1-\mu)\right\}$ with the inverse link $s(z) = 1/\{1+\exp(-z)\}$. For count outcomes with support $\{0,1,2,\cdots\}$, the log link $g(\mu) = \log(\mu)$ with inverse $s(z) = \exp(z)$ is appropriate.

\begin{proposition}
\label{chp4:prop:identify}
    The parameter $\symbf{\beta}$ is identified under Assumptions 
~\ref{chp4:assum:tilting}, ~\ref{chp4:assum:positivity} and ~\ref{chp4:assum:marginal}.
\end{proposition}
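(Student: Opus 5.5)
Identification proceeds in two stages. First, express $\mu(t)$ for each $t\in[t_1,t_2]$ as a functional of the observed-data distribution. Then invert the marginal mean model of Assumption \ref{chp4:assum:marginal} to recover $\symbf{\beta}$ from the curve $\{\mu(t): t\in[t_1,t_2]\}$.

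\emph{Step 1: expressing $\mu(t)$ in observed-data terms.} Use the intensity $\rho$ from Proposition \ref{chp4:prop:lambda_rho} to build an inverse-intensity-weighted representation. Since $\rho\{t\mid \overline{\symbf{O}}(t),Y(t)\}$ is the assessment intensity given the history and the current outcome, we have $\mathbb{E}\{dN(t)\mid \overline{\symbf{O}}(t),Y(t)\}=\rho\{t\mid\overline{\symbf{O}}(t),Y(t)\}\,dt$. Assumption \ref{chp4:assum:positivity} keeps $\rho$ bounded away from zero. For any bounded function $h$, the tower property then gives
\begin{equation*}
\mathbb{E}\left[\int_{t_1}^{t_2} h(t)\,\frac{Y(t)}{\rho\{t\mid \overline{\symbf{O}}(t),Y(t)\}}\,dN(t)\right]=\int_{t_1}^{t_2} h(t)\,\mu(t)\,dt .
\end{equation*}
Substituting Proposition \ref{chp4:prop:lambda_rho}, the weight becomes $\exp\{\alpha Y(t)\}/[\lambda\{t\mid\overline{\symbf{O}}(t)\}\,c\{\overline{\symbf{O}}(t);\alpha\}]$. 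Each factor is identified for fixed $\alpha$:
\begin{itemize}
\item $\lambda$ is the intensity of the observed counting process given the observed history;
\item $c\{\overline{\symbf{O}}(t);\alpha\}$ is a conditional expectation over participants assessed at $t$, a distribution seen in the data;
\item $Y(t)$ appears only on the jumps $dN(t)=1$, where it is observed.
\end{itemize}
The left-hand side is therefore a functional of the law of $\symbf{O}$.

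A pointwise alternative may be cleaner. Write $\mu(t)=\mathbb{E}[\mathbb{E}\{Y(t)\mid\overline{\symbf{O}}(t)\}]$ and decompose the inner expectation over $\Delta N(t)\in\{0,1\}$. Assumption \ref{chp4:assum:tilting} writes the unassessed component as a tilt of the assessed one. Since $\mathbb{P}\{\Delta N(t)=1\mid \overline{\symbf{O}}(t)\}\to 0$ in continuous time, the inner expectation reduces to $\mathbb{E}[Y(t)e^{\alpha Y(t)}\mid \Delta N(t)=1,\overline{\symbf{O}}(t)]/c\{\overline{\symbf{O}}(t);\alpha\}$, which is identified. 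I would present the weighted-integral version and remark that it agrees with this one.

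\emph{Step 2: recovering $\symbf{\beta}$.} Under Assumption \ref{chp4:assum:marginal}, $\mu(t)=s\{\symbf{B}(t)'\symbf{\beta}\}$. Suppose $\symbf{\beta}_1$ and $\symbf{\beta}_2$ both reproduce the identified curve $\mu(\cdot)$ on $[t_1,t_2]$ (almost everywhere suffices). Because $s'>0$, $s$ is strictly monotone and hence injective, so $\symbf{B}(t)'(\symbf{\beta}_1-\symbf{\beta}_2)=0$ for almost every $t$. Multiplying by $\symbf{B}(t)$ and integrating gives $\{\int_{t_1}^{t_2}\symbf{B}(t)\symbf{B}(t)'dt\}(\symbf{\beta}_1-\symbf{\beta}_2)=0$, and positive definiteness forces $\symbf{\beta}_1=\symbf{\beta}_2$.

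Equivalently, $\symbf{\beta}$ is the unique root of the identified estimating equation
\begin{equation*}
\int_{t_1}^{t_2} \symbf{B}(t)\,\big[\mu(t)-s\{\symbf{B}(t)'\symbf{\beta}\}\big]\,dt=0 .
\end{equation*}
Uniqueness follows because the Jacobian $-\int \symbf{B}\symbf{B}'s'\{\symbf{B}'\symbf{\beta}\}dt$ is negative definite everywhere ($s'>0$ and $\int\symbf{B}\symbf{B}'$ positive definite), i.e. the equation is the gradient of a strictly concave function.

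\emph{Main obstacle.} The delicate part is Step 1, namely justifying the continuous-time limits rigorously. One must show that the interchange of expectation and the counting-process integral is valid. One must also show that conditioning on $\Delta N(t)=1$ via the $\epsilon\to0^+$ limits matches the intensity-based representation, using finiteness of $c$ and positivity to control the weights. The inversion in Step 2 is routine linear algebra.
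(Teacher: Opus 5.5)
Your proof is correct, but the route you lead with in Step~1 is not the paper's. The paper first proves, for each $t$ with $\lambda\{t\mid\overline{\symbf{O}}(t)\}>0$, that $\mu(t)=\mathbb{E}\bigl[\mathbb{E}[Y(t)\exp\{\alpha Y(t)\}\mid \Delta N(t)=1,\overline{\symbf{O}}(t)]/c\{\overline{\symbf{O}}(t);\alpha\}\bigr]$. It does this by splitting on $N(t+\epsilon)-N(t-)\in\{0,1\}$, letting $\epsilon\to0^+$ so only the unassessed component survives, and tilting that component via Assumption~\ref{chp4:assum:tilting}. This is exactly your ``pointwise alternative.'' It then writes $\symbf{\beta}=\{\int_{t_1}^{t_2}\symbf{B}(t)\symbf{B}(t)'dt\}^{-1}\int_{t_1}^{t_2}\symbf{B}(t)s^{-1}\{\mu(t)\}dt$, which is your Step~2 injectivity argument in closed form.

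The pointwise route is the lighter one. It needs neither Proposition~\ref{chp4:prop:lambda_rho} nor $\rho$ itself. It involves only an $\epsilon\to0^+$ limit inside an ordinary expectation, rather than an interchange of expectation with a counting-process integral. It also identifies $\mu(t)$ at every $t$. Assumption~\ref{chp4:assum:positivity} enters only to guarantee $\lambda>0$ on $[t_1,t_2]$, since $\lambda\{t\mid\overline{\symbf{O}}(t)\}=\mathbb{E}[\rho\{t\mid\overline{\symbf{O}}(t),Y(t)\}\mid\overline{\symbf{O}}(t)]$ is then bounded away from zero. Your pointwise sketch should state this explicitly.

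Your inverse-intensity-weighted identity instead uses Proposition~\ref{chp4:prop:lambda_rho} to make $\rho$ observable at assessment times, and positivity to keep the weights finite. It identifies only the functionals $\int h\mu\,dt$, i.e.\ $\mu$ almost everywhere, which, as you correctly note, is all Step~2 needs. What it buys is a direct link to estimation: it is the population version of the inverse-weighted term of the influence function in Theorem~\ref{chp4:theorem:IF}. For the identification claim alone, leading with the pointwise formula gives the shorter proof. Your estimating-equation reformulation in Step~2 (the first-order condition of the quasi-likelihood loss $\mathcal{L}_2$) is correct but not needed.
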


The proof is provided in Appendix \ref{apxB:appendix:condition}.

Finally, we assume that the assessment process depends on future values of the outcome and covariates, the current value of covariates, and past unobserved values of the outcome and covariates only through past observed data and the current value of the outcome, which is helpful in deriving the influence function for $\symbf{\beta}$ \citep{Smith2024}. 

\begin{assumption}[Non-future dependence assumption]
 \label{chp4:assum:nofuture}
Let $\symbf{L}=\{ Y(t), \symbf{X}(t) : 0 \leq t \leq \tau \}$.  Then 
\begin{equation*}
     \lim_{\epsilon \to 0^+ } \left[ \mathbb{P} \big\{ N(t+\epsilon)-N(t-) = 1 \mid \overline{\symbf{O}}(t), \symbf{L} \big\} / \epsilon \right] = \lim_{\epsilon \to 0^+ } \left[ \mathbb{P} \big\{ N(t+\epsilon)-N(t-) = 1 \mid \overline{\symbf{O}}(t), Y(t) \big\} /\epsilon \right].
\end{equation*}
\end{assumption}

\section{Estimation}
\label{chp4:sec:estimation}

\subsection{Influence Functions}
\label{subsection:influence}

Before deriving the observed-data influence function, we first obtain a full-data influence function for $\symbf{\beta}$ following the approach provided by \cite{diaz2011superlearner}. We specify a function $\symbf{B}(t)$ with $\int_{t=t_1}^{t_2}\symbf{B}(t) \symbf{B}(t)^\prime dt$ positive definite, 
and a strictly increasing inverse link function $s\in \mathcal{C}^2$, which we will use when we make Assumption~\ref{chp4:assum:marginal}.  However, we initially do not assume that the equality $ \mu(t) = s\left\{\symbf{B}(t)^\prime \symbf{\beta}\right\}$ in Assumption~\ref{chp4:assum:marginal} holds, and instead work in the nonparametric model on the full data $\symbf{L}$. 
We use a smooth function of the form $s\left\{\symbf{B}(t)^\prime \symbf{\gamma}\right\}$ to approximate the curve $\mu(t)$ with the following two types of loss functions:

\begin{align*}
    \mbox{Squared error loss in the transformed space: }&\mathcal{L}_1(\symbf{\gamma}) = \int_{t=t_1}^{t_2} \left[ s^{-1}\{\mu(t)\} - \symbf{B}(t)^\prime \symbf{\gamma}\right]^2 dt 
    \\
    \mbox{Quasi log-likelihood loss: }& \mathcal{L}_2(\symbf{\gamma}) = \int_{t=t_1}^{t_2} \left[ b\{\symbf{B}(t)^\prime \symbf{\gamma}\} - \mu(t)\symbf{B}(t)^\prime \symbf{\gamma}\right] dt
\end{align*}
where $b(z)$ is defined by 
$b(z) = \int s(z) dz$, which means $s(z) = \frac{\partial b(z)}{\partial z }$.

Corresponding to each loss function, we derive the observed-data influence functions for $\symbf{\beta}$.

\begin{theorem}[Influence functions of $\symbf{\beta}$]
\label{chp4:theorem:IF}
Under Assumptions \ref{chp4:assum:tilting},
\ref{chp4:assum:positivity},  \ref{chp4:assum:marginal},  \ref{chp4:assum:nofuture}, a class of influence functions for $\symbf{\beta}$ is given by 
\begin{align*}
\symbf{\varphi}_j(\symbf{O}) 
    = & \int_{t=t_1}^{t_2} \frac{\symbf{W}_j(t;\symbf{\beta})\left[ Y(t) - \mathbb{E}\left\{Y(t)\mid \overline{\symbf{O}}(t)\right\}\right]}{\rho\left\{t \mid \overline{\symbf{O}}(t), Y(t)\right\}} dN(t) \\
    \quad & + \int_{t=t_1}^{t_2} \symbf{W}_j(t;\symbf{\beta})
    \left[ \mathbb{E}\left\{Y(t)\mid \overline{\symbf{O}}(t)\right\} -  s\left\{ \symbf{B}(t)^\prime \symbf{\beta}\right\}\right]dt 
\end{align*}
where $\symbf{W}_j(t;\symbf{\beta})$ has the following form:
\begin{align*}
    \symbf{W}_1(t;\symbf{\beta}) & = \symbf{V}_1^{-1}\symbf{B}(t)\frac{\partial g(\mu)}{\partial \mu} \Big|_{\mu = s\left\{\symbf{B}(t)^\prime \symbf{\beta}\right\}}, \quad \symbf{V}_1 = \int_{u=t_1}^{t_2} \symbf{B}(u) \symbf{B}(u)^\prime du;
    \\
    \symbf{W}_2(t;\symbf{\beta}) & = \symbf{V}_2\left(\symbf{\beta}\right)^{-1}\symbf{B}(t), \quad \symbf{V}_2(\symbf{\beta}) = \int_{u=t_1}^{t_2} \symbf{B}(u)\symbf{B}(u)^\prime \frac{\partial s(z)}{\partial z} \Bigg|_{z = \symbf{B}(u)^\prime \symbf{\beta}}du.
\end{align*}
\end{theorem}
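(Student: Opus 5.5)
The plan has three stages: a full-data influence function in the nonparametric model on $\symbf{L}$, a mapping of it to the observed data by inverse intensity weighting, and an augmentation step that projects onto the observed-data tangent space under Assumptions \ref{chp4:assum:tilting}--\ref{chp4:assum:nofuture}.

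\textbf{Full-data influence function.} Let $\symbf{\gamma}^*$ minimize $\mathcal{L}_j(\symbf{\gamma})$. Under Assumption~\ref{chp4:assum:marginal} this minimizer equals $\symbf{\beta}$. For $\mathcal{L}_1$ the first-order condition is
\[
\int_{t_1}^{t_2}\symbf{B}(t)\left[g\{\mu(t)\}-\symbf{B}(t)^\prime\symbf{\gamma}\right]dt=0 .
\]
For $\mathcal{L}_2$ it is
\[
\int_{t_1}^{t_2}\symbf{B}(t)\left[s\{\symbf{B}(t)^\prime\symbf{\gamma}\}-\mu(t)\right]dt=0 .
\]
Take a parametric submodel $P_\epsilon$ with score $S(\symbf{L})$. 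The pathwise derivative of $\mu(t)$ is $\mathbb{E}[\{Y(t)-\mu(t)\}S]$. Implicit differentiation of the estimating equation then gives the full-data influence functions
\begin{align*}
\symbf{\varphi}^F_1 &= \symbf{V}_1^{-1}\int_{t_1}^{t_2}\symbf{B}(t)\,g'\{\mu(t)\}\{Y(t)-\mu(t)\}\,dt,\\
\symbf{\varphi}^F_2 &= \symbf{V}_2(\symbf{\beta})^{-1}\int_{t_1}^{t_2}\symbf{B}(t)\{Y(t)-\mu(t)\}\,dt .
\end{align*}
Writing $\mu(t)=s\{\symbf{B}(t)^\prime\symbf{\beta}\}$, both take the form $\int \symbf{W}_j(t;\symbf{\beta})[Y(t)-s\{\symbf{B}(t)^\prime\symbf{\beta}\}]dt$. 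Positive definiteness of $\symbf{V}_1$ and of $\symbf{V}_2$ (which follows from $s'>0$) makes both maps well defined. This follows \cite{diaz2011superlearner}.

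\textbf{Inverse intensity weighted version.} Define
\[
\symbf{U}_j=\int_{t_1}^{t_2}\frac{\symbf{W}_j(t;\symbf{\beta})\,Y(t)}{\rho\{t\mid\overline{\symbf{O}}(t),Y(t)\}}\,dN(t)-\int_{t_1}^{t_2}\symbf{W}_j(t;\symbf{\beta})\,s\{\symbf{B}(t)^\prime\symbf{\beta}\}\,dt .
\]
By Assumption~\ref{chp4:assum:nofuture}, $\mathbb{E}\{dN(t)\mid\overline{\symbf{O}}(t),\symbf{L}\}=\rho\{t\mid\overline{\symbf{O}}(t),Y(t)\}\,dt$. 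Assumption~\ref{chp4:assum:positivity} keeps the weights bounded. Iterated expectations therefore give $\mathbb{E}(\symbf{U}_j\mid\symbf{L})=\symbf{\varphi}^F_j$, so $\symbf{U}_j$ is an observed-data mapping of the full-data influence function.

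\textbf{Augmentation and projection.} Subtract the projection of $\symbf{U}_j$ onto the augmentation space. This space consists of observed-data functions with conditional mean zero given $\symbf{L}$, namely terms of the form $\int h\{t,\overline{\symbf{O}}(t)\}\,[dN(t)/\rho-dt]$. The projection replaces $Y(t)$ in the weighted term by $\mathbb{E}\{Y(t)\mid\overline{\symbf{O}}(t)\}$. The resulting expression is the stated $\symbf{\varphi}_j$: a weighted residual $[Y(t)-\mathbb{E}\{Y(t)\mid\overline{\symbf{O}}(t)\}]/\rho$ integrated against $dN(t)$, plus the compensator term.

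The key identity is
\[
\mathbb{E}\left[\frac{\mathbb{E}\{Y(t)\mid\overline{\symbf{O}}(t)\}}{\rho}\,dN(t)\,\Big|\,\overline{\symbf{O}}(t)\right]=\mathbb{E}\{Y(t)\mid\overline{\symbf{O}}(t)\}\,dt .
\]
This holds because $\mathbb{E}\{dN(t)/\rho\mid\overline{\symbf{O}}(t),Y(t)\}=dt$. Assumption~\ref{chp4:assum:tilting} and Proposition~\ref{chp4:prop:lambda_rho} allow $\rho$ to be written as $\lambda\,c\{\overline{\symbf{O}}(t);\alpha\}/e^{\alpha Y(t)}$, and the tilting representation then links $\mathbb{E}\{Y(t)\mid\overline{\symbf{O}}(t)\}$ to observed quantities. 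From this, one checks that $\symbf{\varphi}_j$ has mean zero. One also checks that its pathwise-derivative property holds for scores of the observed-data model.

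\textbf{Main obstacle.} I expect the hardest step to be this last verification in continuous time. Submodel perturbations of $\lambda$ and of $dF\{y\mid\Delta N=1,\overline{\symbf{O}}\}$ both affect $\rho$ and $\mathbb{E}\{Y(t)\mid\overline{\symbf{O}}(t)\}$ through $c\{\overline{\symbf{O}}(t);\alpha\}$, and these contributions must cancel. I would handle it by decomposing the observed-data score into the intensity-score part and the conditional outcome-at-assessment part. I would then show each contributes only through the compensator term, mirroring the argument of \cite{Smith2024} for the identity link, with $\symbf{W}_j$ replacing their $\symbf{V}^{-1}\symbf{B}(t)$.
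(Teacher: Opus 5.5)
Your first two stages are fine; your $\symbf{U}_j$ differs from the paper's inverse-weighted element $\int\symbf{W}_j[Y(t)-s\{\symbf{B}(t)^\prime\symbf{\beta}\}]\rho^{-1}\,dN(t)$ only by a member of the augmentation space. The gap is in the augmentation step. For $\alpha\neq0$, projecting onto $\mathrm{Aug}$ does not replace $Y(t)$ by $\mathbb{E}\{Y(t)\mid\overline{\symbf{O}}(t)\}$.

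Write $dM(t)=dN(t)-\rho\{t\mid\overline{\symbf{O}}(t),Y(t)\}\,dt$, so that $\symbf{\varphi}_j=\int\symbf{W}_j[Y(t)-\mathbb{E}\{Y(t)\mid\overline{\symbf{O}}(t)\}]\rho^{-1}\,dM(t)+\int\symbf{W}_j[Y(t)-s\{\symbf{B}(t)^\prime\symbf{\beta}\}]\,dt$. By the martingale covariance identity, the covariance of $\symbf{\varphi}_j$ with an augmentation element $\int h\{t,\overline{\symbf{O}}(t)\}\rho^{-1}\,dM(t)$ is $\mathbb{E}\int\symbf{W}_j\,h\,\rho^{-1}[Y(t)-\mathbb{E}\{Y(t)\mid\overline{\symbf{O}}(t)\}]\,dt$. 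The second integral is $\symbf{L}$-measurable and contributes nothing. Since $\rho^{-1}=e^{\alpha Y(t)}/[\lambda\{t\mid\overline{\symbf{O}}(t)\}\,c\{\overline{\symbf{O}}(t);\alpha\}]$, the inner conditional expectation given $\overline{\symbf{O}}(t)$ is proportional to $\mathrm{Cov}\{Y(t),e^{\alpha Y(t)}\mid\overline{\symbf{O}}(t)\}$. This is nonzero for $\alpha\neq0$ unless $Y(t)$ is degenerate given $\overline{\symbf{O}}(t)$. So $\symbf{\varphi}_j$ is not orthogonal to $\mathrm{Aug}$ and cannot equal $\symbf{U}_j-\Pi(\symbf{U}_j\mid\mathrm{Aug})$.

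Carrying out your projection actually replaces $Y(t)$ by $\mathbb{E}\{Y(t)e^{\alpha Y(t)}\mid\overline{\symbf{O}}(t)\}/\mathbb{E}\{e^{\alpha Y(t)}\mid\overline{\symbf{O}}(t)\}$. Under tilting, that is the assessed-outcome mean tilted by $2\alpha$ rather than $\alpha$. It gives the minimum-variance member of $\symbf{U}_j+\mathrm{Aug}$, which would be the right object if $\rho$ were known, but it is not the influence function here. Your ``key identity'' shows only that the correction built from $\mathbb{E}\{Y(t)\mid\overline{\symbf{O}}(t)\}$ lies in $\mathrm{Aug}$; it does not single that correction out.

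The missing idea is that tilting makes the coarsening non-CAR, so the coarsening tangent space is not $\mathrm{Aug}$. Any submodel must preserve Proposition~\ref{chp4:prop:lambda_rho}, i.e.\ $\rho(\cdot;\symbf{\gamma})=r\{t,\overline{\symbf{O}}(t);\symbf{\gamma}\}e^{-\alpha Y(t)}$. Hence $\partial_{\symbf{\gamma}}\log\rho$ depends only on $\{t,\overline{\symbf{O}}(t)\}$, and $\Lambda(F_{R\mid L})\subseteq\{\int\symbf{H}\{t,\overline{\symbf{O}}(t)\}\,dM(t)\}$, with no $\rho^{-1}\propto e^{\alpha Y(t)}$ factor. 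The influence function is the element of $\symbf{U}_j+\mathrm{Aug}$ orthogonal to this space, which is how the paper proceeds.

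Subtracting $\int\symbf{W}_j\,\mathbb{E}\{Y(t)\mid\overline{\symbf{O}}(t)\}\rho^{-1}\,dM(t)\in\mathrm{Aug}$ from $\symbf{U}_j$ gives $\symbf{\varphi}_j$ in the form above. Now the compensator's $\rho$ cancels the $\rho^{-1}$, and iterating expectations on $\overline{\symbf{O}}(t)$ gives $\mathbb{E}\{\symbf{\varphi}_j\int H\,dM\}=\mathbb{E}\int\symbf{W}_j H[Y(t)-\mathbb{E}\{Y(t)\mid\overline{\symbf{O}}(t)\}]\,dt=0$. Orthogonality against unweighted $H$ is exactly what selects $\mathbb{E}\{Y(t)\mid\overline{\symbf{O}}(t)\}$.

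Normalization then follows from $\mathbb{E}(\symbf{\varphi}_j\mid\symbf{L})=\symbf{\varphi}^F_j$ and $\symbf{S}_{\symbf{\beta}}(\symbf{O})=\mathbb{E}\{\symbf{S}_{\symbf{\beta}}(\symbf{L})\mid\symbf{O}\}$. So the score decomposition you flag as the main obstacle is not needed. At $\alpha=0$, $\rho=\lambda$ and the two spaces coincide, which is why the CAR-style projection seems to work.
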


In Theorem \ref{chp4:theorem:IF}, $\symbf{W}_1(t;\symbf{\beta})$ corresponds to the squared error loss in the transformed space; 
$\symbf{W}_2(t;\symbf{\beta})$ corresponds to the quasi log-likelihood loss. The proof of Theorem \ref{chp4:theorem:IF} is provided in Appendix \ref{apxB:app:IF_deriv}.

\subsection{Estimation of the Mean Outcome}
\label{chp4:subsection:estimation}

Before implementing the influence functions in Theorem \ref{chp4:theorem:IF} to estimate $\symbf{\beta}$, as in \cite{Smith2024}, we need to fit two observed-data models: one for the intensity function in each arm $\lambda\{t \mid \overline{\symbf{O}}(t)\}$; and another for the conditional distribution of observed outcomes in each arm given the observed past $dF\{Y(t) = y(t) \mid \Delta N(t) = 1, \overline{\symbf{O}}(t)\}$. 

First, we model $\lambda\{t \mid\overline{\symbf{O}}(t)\}$ using an Andersen-Gill model \citep{andersen1982cox} stratified by assessment number $k$, $\lambda\{t \mid \overline{\symbf{O}}(t)\} = \lambda_{0,k}(t) \exp\{\symbf{\gamma}^{\prime}\symbf{Z}(t)\}D_k(t)$. Here $\symbf{Z}(t)$ is a specified function of the participants' observed past $\overline{\symbf{O}}(t)$, such as the outcomes at previous assessments. The function $\lambda_{0,k}(t)$ is an unspecified baseline intensity function for stratum $k$ estimated by kernel smoothing the Breslow estimator \citep{breslow1972discussion} of the cumulative baseline intensity functions, $\symbf{\gamma}$ is a parameter vector estimated by the partial likelihood, and $D_k(t)$ is an indicator that the participant is at risk for having the $k$-th assessment at time $t$.

Second, we model $dF\{Y(t) = y(t) \mid \Delta N(t) = 1, \overline{\symbf{O}}(t)\}$ using a single-index model \citep{chiang2012new}, where the conditional cumulative distribution function of $Y(t)$ is assumed to depend on predictors $\symbf{W}(t)$ only through a scalar $\symbf{\theta}^\prime \symbf{W}(t)$ and kernel smoothed with respect to $\symbf{\theta}^\prime \symbf{W}(t)$ via a bandwidth parameter $h$.  In the estimation process, $\symbf{\theta}$ and $h$ are jointly estimated by minimizing a pseudo sum of integrated squares. The estimator for the conditional cumulative distribution function of $Y(t)$, $\widehat{F}\{Y(t) = y(t) \mid \Delta N(t) = 1, \overline{\symbf{O}}(t) = \overline{\symbf{o}}(t)\}$, is a step function with jumps at all outcome values observed in the data.

Given these two observed-data models, suppose that we have data for $n$ independent individuals. 
We construct estimators $\widehat{\symbf{\beta}}$ and $\widehat{\mu}(t)$ using the following steps, where we use a subscript $i$ to denote data for individual $i$. 
For each individual $i$:

\begin{enumerate}
\item For each assessment $k$ with $T_{ik} \in [t_1,t_2]$, compute
\begin{align*}
& \widehat{\mathbb{E}} \big[\exp \{ \alpha Y(T_{ik}) \} \mid \Delta N(T_{ik})=1, \overline{\symbf{O}}_i(T_{ik})\big] \\
&\quad = \int_{y \in \mathcal{Y}} \exp( \alpha y )\, d\widehat{F} \big\{ Y(T_{ik})=y \mid \Delta N(T_{ik})=1, \overline{\symbf{O}}_i(T_{ik}) \big\},
\end{align*}
where $\mathcal{Y}$ is the set of all outcome values observed in the data. Then compute the estimated conditional mean
\begin{align*}
\widehat{\mathbb{E}}\big\{Y(T_{ik}) \mid \overline{\symbf{O}}_i(T_{ik})\big\} = \frac{\widehat{\mathbb{E}}\big[ Y(T_{ik}) \exp\{ \alpha Y(T_{ik})\} \mid \Delta N(T_{ik})=1, \overline{\symbf{O}}_i(T_{ik})\big] }{\widehat{\mathbb{E}}\big[ \exp\{ \alpha Y(T_{ik})\} \mid \Delta N(T_{ik})=1, \overline{\symbf{O}}_i(T_{ik})\big] },
\end{align*}
and, by Proposition \ref{chp4:prop:lambda_rho}, the estimated inverse intensity weight
\begin{align*}
& \widehat{\rho} \big\{ T_{ik} \mid \overline{\symbf{O}}_i(T_{ik}), Y_i(T_{ik})\big\} \\
&\quad = \widehat{\lambda}_{0,k}(T_{ik})\exp\{ \widehat{\symbf{\gamma}}^\prime \symbf{Z}_i(T_{ik})\} \exp\{- \alpha Y_i(T_{ik}) \} \\
&\qquad \times \widehat{\mathbb{E}}\big[ \exp\{ \alpha Y(T_{ik}) \} \mid \Delta N(T_{ik})=1,\overline{\symbf{O}}_i(T_{ik})\big].
\end{align*}
\item For each time $t \in [t_1,t_2]$, compute the predicted mean outcome given the observed past before time $t$:
\begin{align*}
\widehat{\mathbb{E}}\big\{Y(t) \mid \overline{\symbf{O}}_i(t)\big\} = \frac{\widehat{\mathbb{E}}\big[ Y(t) \exp\{ \alpha Y(t)\} \mid \Delta N(t)=1, \overline{\symbf{O}}_i(t)\big] }{\widehat{\mathbb{E}}\big[ \exp\{ \alpha Y(t)\} \mid \Delta N(t)=1, \overline{\symbf{O}}_i(t)\big] }.
\end{align*}
\item Compute
\begin{align*}
\widehat{\symbf{\Psi}}_j(\symbf{O}_i;\symbf{\beta})
={} & \sum_{k=1}^{K_i} \symbf{W}_j(T_{ik};\symbf{\beta})\,\frac{Y_i(T_{ik}) - \widehat{\mathbb{E}} \big\{Y(T_{ik}) \mid \overline{\symbf{O}}_i(T_{ik}) \big\} }{\widehat{\rho} \big\{T_{ik} \mid \overline{\symbf{O}}_i(T_{ik}), Y_i(T_{ik}) \big\}} \\
& + \int_{t=t_1}^{t_2} \symbf{W}_j(t;\symbf{\beta})\big[ \widehat{\mathbb{E}}\big\{Y(t)\mid \overline{\symbf{O}}_i(t)\big\} - s\big\{ \symbf{B}(t)^\prime \symbf{\beta}\big\} \big]\,dt.
\end{align*}
\end{enumerate}

The augmented inverse intensity-weighted estimator $\widehat{\symbf{\beta}}_j$ is solved from $\frac{1}{n}\sum_{i=1}^n \widehat{\symbf{\Psi}}_j(\symbf{O}_i;\symbf{\beta}) = 0$ and the corresponding estimated mean outcome is $\widehat{\mu}_j(t) = s\big\{\symbf{B}(t)^\prime \widehat{\symbf{\beta}}_j\big\}$, $j\in\{1,2\}$.

\subsection{Large-sample Distribution}
\label{subsection:large_sample}

If the models for $\lambda\{t\mid \overline{\symbf{O}}(t)\}$ and $dF\{Y(t) = y(t) \mid \Delta N(t) = 1, \overline{\symbf{O}}(t)\}$ are both correctly specified, and if Assumptions \ref{chp4:assum:tilting}--\ref{chp4:assum:nofuture} and additional regularity assumptions hold, then $\widehat{\symbf{\beta}}_j$ is a consistent estimator of the true value $\symbf{\beta}_0$ and $\sqrt{n}\big(\widehat{\symbf{\beta}}_j - \symbf{\beta}_0\big)  \overset{d}{\rightarrow}
\mathcal{N}(\symbf{0},  \Sigma_j^*)$, where $\Sigma_j^*$ is the influence function-based variance, $j\in\{1,2\}$. The proof is provided in Appendix \ref{apxB:app:asy_proof}. Here we derive the second-order remainder term in an expansion of $\sqrt{n}\big(\widehat{\symbf{\beta}}_j - \symbf{\beta}_0\big)$ following \cite{kennedy2016empiricalprocesses} and we show conditions under which the remainder term is asymptotically negligible.

Based on this result, we can derive the influence function-based variance estimators for $\widehat{\symbf{\beta}}_j$ as $\frac{1}{n}\widehat{\Sigma}_j^*$. The mean at any fixed time $t\in[t_1,t_2]$, $\mu(t) = \mathbb{E}\{Y(t)\} = s\{\symbf{B}(t)^\prime \symbf{\beta}\}$, can be estimated as $\widehat{\mu}_j(t) = s\{\symbf{B}(t)^\prime \widehat{\symbf{\beta}}_j\}$ with influence function-based variance estimator $\widehat{Var}\{\widehat{\mu}_j(t)\} = \frac{1}{n} \left[\frac{\partial s(z)}{\partial z} \big|_{z= \symbf{B}(t)^\prime \widehat{\symbf{\beta}}_j}\right]^2 \symbf{B}(t)^\prime\widehat{\Sigma}_j^* \symbf{B}(t)$, $j=1,2$.

A Wald confidence interval for $\mu(t)$ can be constructed using the influence function-based variance estimator or the variance estimator derived from the parametric bootstrap. The nonparametric bootstrap is not a feasible way of constructing confidence intervals, since ties in bootstrapped datasets cause estimates of conditional cumulative distribution functions based on the single-index model to be undefined \citep{Smith2024}.

Let $\alpha_a$ be the sensitivity parameter for treatment arm $a$ ($a=0,1$). For inference on $\delta(t)$, the estimation procedure above is repeated in the treatment arm to estimate $\mu^1(t)$ under a range of $\alpha_1$ values, and separately in the control arm to estimate $\mu^0(t)$ under a range of $\alpha_0$ values. These results are then combined to estimate the treatment effect as $\widehat{\delta}_j(t) = \widehat{\mu}_j^1(t) - \widehat{\mu}_j^0(t)$ over a grid of sensitivity parameters $(\alpha_1,\alpha_0)$, $j=1,2$.

\section{Selection and Inference for Sensitivity Parameter Values}
\label{sec:selection}

\subsection{True Range of the Sensitivity Parameter}
\label{subsection:true_range}

The estimation and inference of the mean outcome trajectory $\mu^a(t)$ in treatment arm $a$ are conducted over a range of values of a sensitivity parameter $\alpha_a$, $a=0,1$. Suppose clinical experts provide a plausible range $(\mu_a^{l}, \mu_a^{u})$ for the outcome mean $\mu^a(t)$ in each treatment arm, where $\mu_a^{l}< \mu_a^{u}$. For a given treatment arm $a$, we want to find the set of sensitivity parameter values $\alpha_a$ such that 
\begin{equation}
    \mu^l_{a} < \mathbb{E}\{Y^a(t);\alpha_a\} < \mu^u_{a}, \quad t \in [t_1, t_2], a\in\{0,1\}.
    \label{equ:require}
\end{equation}
In the simulation studies and the data analysis of  \cite{Smith2024}, we observed that the estimator $\widehat{\mathbb{E}}\{Y^a(t);\alpha_a\}$ increases with $\alpha_a$ for all $t$. The following result shows that this monotonicity also holds at the population level given Assumption \ref{chp4:assum:tilting} and bounded outcome values. 

\begin{theorem}
[Monotonicity]
\label{chp4:theorem:mono} Under Assumption \ref{chp4:assum:tilting} and assuming that the outcome is bounded for $t\in [t_1,t_2]$,  for each treatment arm $a$ and each time $t \in [t_1,t_2]$, $\mathbb{E}\{Y^a(t);\alpha_a \}$ is non-decreasing in $\alpha_a$.
\end{theorem}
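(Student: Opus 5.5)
Within a fixed arm, suppress $a$ and write $\mu(t;\alpha)=\mathbb{E}\{Y(t);\alpha\}$. The plan is to decompose $\mu(t;\alpha)$ by conditioning on the observed history and on whether an assessment occurs at $t$, and then show that the only $\alpha$-dependent piece is an exponentially tilted mean, which is non-decreasing in $\alpha$. The decomposition uses the tower property:
\begin{align*}
\mu(t;\alpha) = \mathbb{E}\Big( & \mathbb{P}\{\Delta N(t)=1\mid \overline{\symbf{O}}(t)\}\,\mathbb{E}\{Y(t)\mid \Delta N(t)=1,\overline{\symbf{O}}(t)\} \\
& + \mathbb{P}\{\Delta N(t)=0\mid \overline{\symbf{O}}(t)\}\,\mathbb{E}\{Y(t)\mid \Delta N(t)=0,\overline{\symbf{O}}(t);\alpha\}\Big).
\end{align*}
In the continuous-time limit this reduces to $\mu(t;\alpha)=\mathbb{E}[\mathbb{E}\{Y(t)\mid \Delta N(t)=0,\overline{\symbf{O}}(t);\alpha\}]$, which matches step 2 of the estimation procedure. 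The observed-data quantities are the law of $\overline{\symbf{O}}(t)$, the intensity $\lambda$, and $dF\{y(t)\mid \Delta N(t)=1,\overline{\symbf{O}}(t)\}$. These do not depend on $\alpha$, because $\alpha$ is a sensitivity parameter that indexes full-data laws all compatible with the same observed-data distribution. Hence it suffices to show that, for each fixed history $\overline{\symbf{o}}(t)$, the inner conditional mean is non-decreasing in $\alpha$.

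By Assumption~\ref{chp4:assum:tilting}, the inner mean equals
\[
m(\alpha)=\frac{\mathbb{E}[Y(t)e^{\alpha Y(t)}\mid \Delta N(t)=1,\overline{\symbf{o}}(t)]}{\mathbb{E}[e^{\alpha Y(t)}\mid \Delta N(t)=1,\overline{\symbf{o}}(t)]}=\frac{\partial}{\partial\alpha}\log c\{\overline{\symbf{o}}(t);\alpha\}.
\]
Because $Y(t)$ is bounded, say $|Y(t)|\le M$, the function $c(\alpha)$ is finite for every $\alpha$. The dominating functions $|y|^k e^{|\alpha|M}$ then justify differentiating under the integral twice. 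This gives $m'(\alpha)=\operatorname{Var}_\alpha\{Y(t)\}\ge 0$, where the variance is taken under the tilted law $dF_\alpha\propto e^{\alpha y}dF\{y\mid\Delta N(t)=1,\overline{\symbf{o}}(t)\}$. Equivalently, $\log c$ is convex as a cumulant generating function. So $m$ is non-decreasing pointwise in the history.

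Next, integrate over the $\alpha$-free law of $\overline{\symbf{O}}(t)$. This preserves monotonicity, since the integrand is uniformly bounded by $M$ and the weights do not depend on $\alpha$. The result is that $\mu(t;\alpha)$ is non-decreasing. Under randomization this quantity equals $\mathbb{E}\{Y^a(t);\alpha_a\}$. Monotonicity need not be strict, since $m'(\alpha)=0$ when $Y(t)$ is degenerate given the history; this is why the statement says non-decreasing.

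The main obstacle is the first step: making rigorous that the outer weights and the observed-outcome law are invariant in $\alpha$. One also has to handle the continuous-time conditioning on $\Delta N(t)=0$, which is a limiting event. If that causes trouble, I would work with intervals $[t,t+\epsilon]$ and take limits, using boundedness and dominated convergence. Bounding the $\Delta N(t)=1$ term shows that it contributes $O(\epsilon)$. That leaves the tilted mean as the only $\alpha$-dependent component.
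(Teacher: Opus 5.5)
Your proposal is correct and is essentially the paper's argument. The paper also uses the identification formula to write $\mathbb{E}\{Y(t);\alpha\}=\mathbb{E}\{B(\alpha,t)/A(\alpha,t)\}$, then shows by Cauchy--Schwarz that the derivative $\mathbb{E}\{(AC-B^2)/A^2\}$ is non-negative; this is exactly your tilted-variance computation. The only differences are cosmetic: you prove monotonicity pointwise in the history and then integrate, which avoids differentiating under the outer expectation, and you state explicitly the $\alpha$-invariance of the observed-data law, which the paper takes for granted.
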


The proof is provided in Appendix \ref{apxB:app:mono_proof}. Based on Theorem \ref{chp4:theorem:mono}, the true range of $\alpha_a$ that meets the requirement \eqref{equ:require} is an interval $(\alpha_a^l,\alpha_a^u)$: 
\begin{align*}
    \mathcal{W}_a = 
    \big\{\alpha \in [\alpha_a^{1}, \alpha_a^{2}]: \min\limits_{t\in[t_1,t_2]}\mathbb{E} \{Y^a(t);\alpha\} > \mu_a^{l}, \max\limits_{t\in[t_1,t_2]}\mathbb{E}\{Y^a(t);\alpha\} < \mu_a^{u}\big\} = (\alpha_a^l,\alpha_a^u).
\end{align*}
where $[\alpha_a^{1}, \alpha_a^{2}]$ is the initial search range for $\alpha_a$. Then the true values of $\alpha_a^l$ and $\alpha_a^u$ can be defined by the following optimization problems:
\begin{align*}
    \alpha_a^l & = \underset{[\alpha_a^{1}, \alpha_a^{2}]}{\mbox{inf }}\alpha \ \text{ s.t. } \min\limits_{t\in [t_1,t_2]}\mathbb{E}\{Y^a(t);\alpha\} > \mu_a^{l}, \\
    \alpha_a^u & = \underset{[\alpha_a^{1}, \alpha_a^{2}]}{\mbox{sup }} \alpha\  \text{ s.t. } \max\limits_{t\in [t_1,t_2]} \mathbb{E}\{Y^a(t);\alpha\} < \mu_a^{u}.
\end{align*}

Algorithm~\ref{alg:point_est} describes the procedure for estimating the range of the sensitivity parameter. We first discretize the initial search space into a finite grid of candidate values $\mathcal{A}_M$ and the study time interval into a finite grid of time points $\mathcal{T}_K$. The lower and upper bounds, $\alpha_a^l$ and $\alpha_a^u$, are then obtained by solving the corresponding optimization problems over $\mathcal{A}_M$, with the constraints evaluated at each time point in $\mathcal{T}_K$. Algorithm~\ref{alg:bias_corr} then uses a multiplier bootstrap to obtain bias-corrected estimates of these bounds, applying the same adaptive search at each bootstrap iteration.


Several practical issues arise in implementing these algorithms. First, for some datasets, no value of $\alpha \in \mathcal{A}_M$ satisfies condition \eqref{equ:require} in Algorithm \ref{alg:point_est}. This situation typically reflects poor finite-sample performance of the augmented inverse intensity-weighted estimator, leading to the failure of the initial search procedure. In such cases, the bias-correction procedure in Algorithm \ref{alg:bias_corr} is not performed because the original point estimate is unavailable. Second, even when Algorithm \ref{alg:point_est} successfully identifies $\widehat{\alpha}_a^l$ and $\widehat{\alpha}_a^u$, some multiplier bootstrap replicates may fail to produce valid estimates. Specifically, extreme values of the weighted influence function may cause the bootstrap estimator $\widehat{\symbf{\beta}}_j^{a,b}(\alpha)$ to deviate substantially from the corresponding original estimator $\widehat{\symbf{\beta}}_j^a(\alpha)$. Consequently, the optimization problem used to determine $\widehat{\alpha}_a^{l,b}$ or $\widehat{\alpha}_a^{u,b}$ may have no feasible solution. For example, if the bootstrap estimator $\widehat{\mathbb{E}}\{Y^a(t);\alpha\}$ exceeds the upper threshold $\mu_a^u$ for every $t \in \mathcal{T}_K$ and every $\alpha \in \mathcal{A}_M$, then the constraint $\max_{t \in \mathcal{T}_K} \widehat{\mathbb{E}}\{Y^a(t); \alpha\} < \mu_a^u$ cannot be satisfied, and $\widehat{\alpha}_{a}^{u,b}$ cannot be identified. Similarly, if $\widehat{\mathbb{E}}\{Y^a(t);\alpha\}$  is below the lower threshold $\mu_a^l$ for every  $t \in \mathcal{T}_K$ and every $\alpha \in \mathcal{A}_M$, then $\min_{t \in \mathcal{T}_K} \widehat{\mathbb{E}}\{Y^a(t); \alpha\} > \mu_a^l$  cannot be satisfied, and $\widehat{\alpha}_{a}^{l,b}$  cannot be identified. When either situation occurs, the corresponding bootstrap estimate is recorded as missing (NA). These invalid bootstrap replicates are excluded when computing the bias-corrected estimators of $\alpha_a^l$ and $\alpha_a^u$. The bias correction is therefore based only on bootstrap replicates for which the optimization problem admits a feasible solution.

The default implementation of Steps 2--3 in Algorithm \ref{alg:bias_corr} uses a grid search, which requires evaluating the multiplier bootstrap estimator over all candidate values in $\mathcal{A}_M$. When the search grid is sufficiently fine, this procedure can be computationally expensive. To improve efficiency, these steps may instead be implemented using a binary search, exploiting the monotonicity of $E\{Y^a(t);\alpha_a\}$ with respect to $\alpha_a$ established in Theorem \ref{chp4:theorem:mono}. In practice, we recommend an adaptive search strategy. Specifically, Steps 2--3 are first implemented using the binary search algorithm. If the binary search fails to identify a feasible solution (e.g., due to NA values arising during the bootstrap procedure), the algorithm automatically switches to the grid search. This hybrid approach substantially reduces computational cost while retaining the numerical robustness of the grid search in challenging bootstrap replicates.

\begin{algorithm}[!htbp]
\setlength{\baselineskip}{0.9\baselineskip}
\caption{Original point estimation for $\alpha_a^l$ and $\alpha_a^u$.}
\begin{itemize}
    \item Step 1: For each $\alpha \in \mathcal{A}_M$, compute the estimated influence function $\widehat{\symbf{\Psi}}_j(\symbf{O}_i; \symbf{\beta}^a, \alpha)$ for $i = 1, \ldots, n$.
     
    \item Step 2:  For each $\alpha \in \mathcal{A}_M$, obtain the augmented inverse intensity-weighted estimator $\widehat{\symbf{\beta}}_j^a(\alpha)$ by solving the estimating equation $\frac{1}{n} \sum_{i=1}^n \widehat{\symbf{\Psi}}_j(\symbf{O}_i; \symbf{\beta}^a, \alpha) = 0.$ 
    \item Step 3: For each $\alpha \in \mathcal{A}_M$ and each time point $t \in \mathcal{T}_K$, compute the marginal mean $\widehat{\mathbb{E}}\{Y^a(t); \alpha\} = s\big\{\symbf{B}(t)^\prime \widehat{\symbf{\beta}}_j^a(\alpha)\big\}$.
    \item Step 4: For each $\alpha \in \mathcal{A}_M$, compute the minimum and maximum estimated marginal means over the time grid $\mathcal{T}_K$:
    \[
    \widehat{{\cal M}}_K^{\min}(\alpha)
    =
    \min_{t \in \mathcal{T}_K}
    \widehat{\mathbb{E}}\{Y^a(t); \alpha\},
    \qquad
    \widehat{{\cal M}}_K^{\max}(\alpha)
    =
    \max_{t \in \mathcal{T}_K}
    \widehat{\mathbb{E}}\{Y^a(t); \alpha\}.
    \]
    \item Step 5 (Identify the smallest $\alpha$): Find the smallest $\alpha \in \mathcal{A}_M$ such that:
    \begin{itemize}
        \item  $\widehat{{\cal M}}_{K}^{\min}(\alpha) > \mu_a^l$, and 
        \item $\widehat{{\cal M}}_{K}^{\min}(\alpha)$ is locally monotone increasing in $\alpha$ at this value.
    \end{itemize}
    Denote this value by $\widehat{\alpha}_a^l$ and use it as the point estimate of $\alpha_a^l$.
    \item Step 6 (Identify the largest $\alpha$): Find the largest $\alpha \in \mathcal{A}_M$ such that:
    \begin{itemize}
        \item $\widehat{\mathcal{M}}_{K}^{\max}(\alpha) < \mu_a^u$, and 
        \item $\widehat{\mathcal{M}}_{K}^{\max}(\alpha)$ is locally monotone increasing in $\alpha$ at this value.
    \end{itemize}
    Denote this value by $\widehat{\alpha}_a^u$ and use it as the point estimate of $\alpha_a^u$.
\end{itemize}
\label{alg:point_est}
\end{algorithm}

\begin{algorithm}[!htbp]
\setlength{\baselineskip}{0.9\baselineskip}
\caption{Multiplier bootstrap bias-correction point estimation for $\alpha_a^l$ and $\alpha_a^u$.}
\begin{itemize}
    \item Follow Algorithm \ref{alg:point_est} to find $\widehat{\alpha}_a^u$ and $\widehat{\alpha}_a^l$, and record the augmented inverse intensity-weighted estimator $\widehat{\symbf{\beta}}_j^a(\alpha)$, the estimated influence function $\widehat{\symbf{\Psi}}_j(\symbf{O}_i; \symbf{\beta}^a, \alpha)$ and their derivatives $\widehat{\symbf{\Psi}}_j^1(\symbf{O}_i; \symbf{\beta}^a, \alpha)$, for $i = 1, \ldots, n$ and $\alpha \in \mathcal{A}_M$.
    \item For each bootstrap replication, $b=1,\cdots, B$:
    \begin{itemize}
    \item Step 1: Generate independent multiplier weights $w_1^b,\cdots, w_n^b$ from a mean-one, variance-one distribution (e.g., standard normal or Rademacher distribution). 
    \item Step 2: For each $\alpha \in \mathcal{A}_M$, find the multiplier bootstrap estimator $\widehat{\symbf{\beta}}_j^{a,b}(\alpha)$ by solving $\left[
     \frac{1}{n} \sum_{i=1}^nw_i^b \widehat{\symbf{\Psi}}_j\{\symbf{O}_i;\symbf{\beta}^a, \alpha\}
     \right] = 0$; or compute it as
    $$ \widehat{\symbf{\beta}}_j^{a,b}(\alpha) 
    = \widehat{\symbf{\beta}}_j^a(\alpha) - \left[
     \frac{1}{n}\sum_{i=1}^n \widehat{\symbf{\Psi}}^1_j\{\symbf{O}_i;\widehat{\symbf{\beta}}_j^a(\alpha), \alpha\}
     \right]^{-1} 
     \left[
     \frac{1}{n} \sum_{i=1}^n (w_i^b - 1) \widehat{\symbf{\Psi}}_j\{\symbf{O}_i;\widehat{\symbf{\beta}}_j^a(\alpha), \alpha\}
     \right],$$
     where we refer to the second form as the one-step multiplier bootstrap estimator and use it as the default.
    \item Step 3: Repeat Steps 3--6 of Algorithm \ref{alg:point_est} to obtain $\widehat{\alpha}_{a}^{l,b}$ and $\widehat{\alpha}_{a}^{u,b}$. 
\end{itemize}
\item Bias-corrected estimators for $\alpha_a^l$ and $\alpha_a^u$ are 
$$\widehat{\alpha}_a^{l,corr} = \widehat{\alpha}_a^l - \left(\frac{1}{B}\sum_{b=1}^B \widehat{\alpha}_a^{l,b} -\widehat{\alpha}_a^{l}\right),
\ \widehat{\alpha}_a^{u,corr} = \widehat{\alpha}_a^u - \left(\frac{1}{B}\sum_{b=1}^B \widehat{\alpha}_a^{u,b} -\widehat{\alpha}_a^u\right).$$
\end{itemize}
\label{alg:bias_corr}
\end{algorithm}


\subsection{Confidence Set for the Range of Sensitivity Parameter}
\label{subsection:true_set}

Having provided estimators for $\alpha_a^l$ and $\alpha_a^u$, we now construct a confidence set $(\widehat{\alpha}_a^l - \delta_a^l, \widehat{\alpha}_a^u + \delta_a^u)$ for $(\alpha_a^l, \alpha_a^u)$ such that the probability $\mathbb{P} \big\{(\alpha_a^l, \alpha_a^u) \subset (\widehat{\alpha}_a^l - \delta_a^l, \widehat{\alpha}_a^u + \delta_a^u)\big\}$ is at least the nominal confidence level. Suppose that we want to construct a $95\%$ confidence set $(\widehat{\alpha}_a^l - \delta_a^l, \widehat{\alpha}_a^u + \delta_a^u)$. We can find the $97.5\%$ lower confidence bound $\widehat{\alpha}_a^l - \delta_a^l$ for $\alpha_a^l$ and $97.5\%$ upper confidence bound $\widehat{\alpha}_a^u + \delta_a^u$ for $\alpha_a^u$. 
Using the fact that $\alpha_a^l < \alpha_a^u$, we have
\begin{align*}
    \mathbb{P} \big\{(\alpha_a^l, \alpha_a^u) \subset (\widehat{\alpha}_a^l - \delta_a^l, \widehat{\alpha}_a^u + \delta_a^u)\big\} & = \mathbb{P} \big(\widehat{\alpha}_a^l - \delta_a^l \leq \alpha_a^l, \alpha_a^u \leq \widehat{\alpha}_a^u + \delta_a^u \big)\\
    & = 1 - \mathbb{P}\big(\alpha_a^l < \widehat{\alpha}_a^l - \delta_a^l \mbox{ or } \alpha_a^u > \widehat{\alpha}_a^u + \delta_a^u \big) \\
    & \geq 1 - \mathbb{P}\big(\alpha_a^l < \widehat{\alpha}_a^l - \delta_a^l\big) -
    \mathbb{P}\big(\alpha_a^u > \widehat{\alpha}_a^u + \delta_a^u\big)\\
    & \geq 0.95.
\end{align*}

Thus, $(\widehat{\alpha}_a^l - \delta_a^l, \widehat{\alpha}_a^u + \delta_a^u)$ is a $95\%$ confidence set for $(\alpha_a^l, \alpha_a^u)$.

Algorithm \ref{alg:double_boot} presents a double bootstrap procedure for constructing this 95\% confidence set based on the one-step multiplier bias-correction procedure. In practice, the one-step multiplier bias-correction procedure may be replaced by the multiplier bootstrap procedure described in Algorithm \ref{alg:bias_corr}. In addition, the default implementation of Step 2 and Steps 2.2--2.3 in Algorithm \ref{alg:double_boot} uses a grid search. As discussed previously, these steps can instead be implemented using the adaptive search strategy to improve computational efficiency: the binary search is applied first, and the procedure reverts to the grid search only if NA issues are encountered. In the simulation studies presented in Section \ref{chp4:sec:simu}, all algorithms were implemented using the one-step multiplier bootstrap procedure together with the adaptive search strategy.

The implementation issues discussed for Algorithms \ref{alg:point_est} and \ref{alg:bias_corr} also apply to Algorithm \ref{alg:double_boot}. Specifically, if no value of $\alpha \in \mathcal{A}_M$ satisfies condition \eqref{equ:require} in Algorithm \ref{alg:point_est}, the original point estimate cannot be obtained, and the double bootstrap procedure is therefore not performed. Moreover, even when Algorithm \ref{alg:point_est} successfully identifies the original point estimates $\widehat{\alpha}_a^l$ and $\widehat{\alpha}_a^u$, some bootstrap or double-bootstrap replicates may fail to identify feasible values of $\alpha$ because the optimization problem has no solution. In these cases, the corresponding bootstrap estimates are recorded as missing (NA) and excluded from the bias-correction and confidence-set construction procedures.

\begin{algorithm}[!htbp]
\footnotesize
\setlength{\baselineskip}{0.9\baselineskip}
\setlength{\abovedisplayskip}{3pt}
\setlength{\belowdisplayskip}{3pt}
\setlength{\abovedisplayshortskip}{2pt}
\setlength{\belowdisplayshortskip}{2pt}
\caption{Multiplier double bootstrap bias-correction confidence set for $\alpha_a^l$ and $\alpha_a^u$.}
\begin{itemize}
    \item Step 1: Follow Algorithm \ref{alg:point_est} to find $\widehat{\alpha}_a^u$ and $\widehat{\alpha}_a^l$, and record the estimated coefficient $\widehat{\symbf{\beta}}_j^a(\alpha)$ and the estimated influence function $\widehat{\symbf{\Psi}}_j(\symbf{O}_i; \symbf{\beta}^a, \alpha)$ and their derivatives $\widehat{\symbf{\Psi}}_j^1(\symbf{O}_i; \symbf{\beta}^a, \alpha)$, for $i = 1, \ldots, n$ and $\alpha \in \mathcal{A}_M$.
    \item Step 2: For each outer layer bootstrap replication, $b_1=1,\cdots, B_1$:
    \begin{itemize}
    \item Generate independent multiplier weights $w_1^{b_1},\cdots, w_n^{b_1}$ from a mean-one, variance-one distribution (e.g., standard normal or Rademacher distribution). 
    \item For each $\alpha \in \mathcal{A}_M$, compute the one-step multiplier bootstrap estimator of $\widehat{\symbf{\beta}}_j^{a,b_1}(\alpha)$ as 
    \begin{align*}
        \widehat{\symbf{\beta}}_j^{a,b_1}(\alpha) 
        & = \  \widehat{\symbf{\beta}}_j^a(\alpha) - 
       \left[
     \frac{1}{n}\sum_{i=1}^n \widehat{\symbf{\Psi}}^1_j\{\symbf{O}_i; \widehat{\symbf{\beta}}_j^a(\alpha), \alpha\}
     \right]^{-1} 
     \left[
     \frac{1}{n} \sum_{i=1}^n (w_i^{b_1} -1)\widehat{\symbf{\Psi}}_j\big\{\symbf{O}_i;\widehat{\symbf{\beta}}_j^a(\alpha), \alpha\big\}
     \right].
    \end{align*} 
    \item Repeat Steps 3--6 of Algorithm \ref{alg:point_est} to obtain $\widehat{\alpha}_{a}^{l,b_1}$ and $\widehat{\alpha}_{a}^{u,b_1}$. 
    \item For each inner layer bootstrap replication, $b_2 =1,\cdots, B_2$:
    \begin{itemize}
        \item Step 2.1: Generate independent multiplier weights $w_1^{b_1,b_2},\cdots, w_n^{b_1,b_2}$ from a mean-one, variance-one distribution (e.g., standard normal or Rademacher distribution). 
       \item Step 2.2: For each $\alpha \in \mathcal{A}_M$, 
       compute the inner bootstrap estimator $\widehat{\symbf{\beta}}_j^{a,b_1,b_2}(\alpha)$ as
        \begin{align*}
        \widehat{\symbf{\beta}}_j^{a, b_1,b_2}(\alpha) 
        & = \  \widehat{\symbf{\beta}}_j^{a,b_1}(\alpha) - 
       \left[
     \frac{1}{n}\sum_{i=1}^n \widehat{\symbf{\Psi}}^1_j\{\symbf{O}_i; \widehat{\symbf{\beta}}_j^{a,b_1}(\alpha), \alpha \}
     \right]^{-1} 
     \left[
     \frac{1}{n} \sum_{i=1}^n (w_i^{b_1, b_2} - 1 ) \widehat{\symbf{\Psi}}_j\big\{\symbf{O}_i;\widehat{\symbf{\beta}}_j^{a,b_1}(\alpha), \alpha\big\}
     \right].
    \end{align*} 
       \item For each $t \in \mathcal{T}_K$, compute $\widehat{\mathbb{E}}^{b_1,b_2}\{Y^a(t);\alpha\} = s\big\{\symbf{B}(t)^\prime \widehat{\symbf{\beta}}_j^{a,b_1,b_2}(\alpha)\big\}$,
      $\widehat{{\cal M}}_K^{min, b_1,b_2}(\alpha) = \min\limits_{t\in \mathcal{T}_K} \widehat{\mathbb{E}}^{b_1,b_2}\{Y^a(t);\alpha\}$, $\widehat{{\cal M}}_K^{max,b_1,b_2}(\alpha) = \max\limits_{t\in \mathcal{T}_K} \widehat{\mathbb{E}}^{b_1,b_2}\{Y^a(t);\alpha\}.$
    \item Step 2.3: Repeat Steps 5--6 of Algorithm \ref{alg:point_est} to obtain $\widehat{\alpha}_{a}^{l,b_1,b_2}$ and $\widehat{\alpha}_{a}^{u,b_1,b_2}$. 
    \end{itemize}
    \item Bias-corrected outer bootstrap estimators are $$\widehat{\alpha}_a^{l,b_1,corr} = \widehat{\alpha}_a^{l,b_1} - \left(\frac{1}{B_2}\sum_{b_2=1}^{B_2} \widehat{\alpha}_a^{l,b_1,b_2} -\widehat{\alpha}_a^{l,b_1}\right), \ \widehat{\alpha}_a^{u,b_1,corr} = \widehat{\alpha}_a^{u,b_1} - \left(\frac{1}{B_2}\sum_{b_2=1}^{B_2} \widehat{\alpha}_a^{u,b_1,b_2} -\widehat{\alpha}_a^{u,b_1}\right).$$
\end{itemize}
\item Step 3:  Bias-corrected estimators for $\alpha_a^l$ and $\alpha_a^u$ are 
$$\widehat{\alpha}_a^{l,corr} = \widehat{\alpha}_a^l - \left(\frac{1}{B_1}\sum_{b_1=1}^{B_1} \widehat{\alpha}_a^{l,b_1} -\widehat{\alpha}_a^{l}\right),
\ \widehat{\alpha}_a^{u,corr} = \widehat{\alpha}_a^u - \left(\frac{1}{B_1}\sum_{b_1=1}^{B_1} \widehat{\alpha}_a^{u,b_1} -\widehat{\alpha}_a^u\right).$$
\item Step 4: We can construct two kinds of confidence sets: 
\begin{itemize}
    \item 1. Basic Bootstrap CI: find the 97.5\% quantile of $\{\widehat{\alpha}_a^{l,b_1,corr}\}_{b_1=1}^{B_1}$ as $C^{l,perc}_{0.975}$ and the 2.5\% quantile of $\{\widehat{\alpha}_a^{u,b_1,corr}\}_{b_1=1}^{B_1}$ as $C^{u,perc}_{0.025}$, then the 95\% confidence set is $(2\widehat{\alpha}_a^{l,corr} - C^{l,perc}_{0.975}, 2\widehat{\alpha}_a^{u,corr}  - C^{u,perc}_{0.025})$.
    \item 2. Wald CI: calculate the bias-corrected outer bootstrap standard error as $sd\{\widehat{\alpha}_a^{u,b_1,corr}\}_{b_1=1}^{B_1}$ and $sd\{\widehat{\alpha}_a^{l,b_1,corr}\}_{b_1=1}^{B_1}$, then the 95\% confidence set is $( \widehat{\alpha}_a^{l,corr} - q_{0.975} sd\{\widehat{\alpha}_a^{l,b_1,corr}\}_{b_1=1}^{B_1}, \widehat{\alpha}_a^{u,corr} + q_{0.975} sd\{\widehat{\alpha}_a^{u,b_1,corr}\}_{b_1=1}^{B_1})$, where $q_{0.975}$ is the standard normal 97.5\% quantile. 
\end{itemize}
\end{itemize}
\label{alg:double_boot}
\end{algorithm}


\section{Data Analysis: ARC Trial}
\label{sec:analysis}

We apply our sensitivity analysis method to the Asthma Research for the Community (ARC) study \citep{apter2019home}, a pragmatic randomized trial of 301 low-income participants with uncontrolled asthma. Participants in the active control group received usual
care plus access to and training in a web-based portal designed to improve communication between participants and their health-care providers. Participants in the intervention group received home visits by community health workers to promote care coordination and help with the use of the patient portal, in addition to usual care and portal training. The primary outcome was the score on the Asthma Control Questionnaire (ACQ) and the study protocol called for outcome data to be collected at 3, 6, 9 and 12 months after randomization. However, research coordinators were often unable to schedule data collection appointments until substantially later than these targeted times, and it is possible that assessment at time $t$ may be associated with the outcome at time $t$, even after adjusting for observed variables \citep{Smith2024}.

The ACQ score ranges from 0 (completely controlled asthma) to 6 (extremely uncontrolled asthma), taking values in $\{0, 1/6, 2/6, \cdots, 6\}$. Based on clinical input, a mean ACQ score of 3.0 or higher, or 1.2 or lower, at any time would be extreme in either treatment arm \citep{Smith2024}. In addition to the original ACQ score, we considered a binary outcome indicating asthma control status. Specifically, an asthma control indicator (AC indicator) was defined as 1 if the ACQ score was less than 1.5, corresponding to adequate asthma control, and 0 otherwise \citep{apter2019home}. For this binary outcome, the clinical collaborator considered a mean proportion of 0.5 or higher, or 0.2 or lower, at any follow-up time to represent an extreme outcome in either treatment arm.

Based on this information, we first obtain point estimates and confidence sets for the range of the sensitivity parameter in each arm of the ARC data and for both the original ACQ score and the AC indicator. To select the sensitivity parameter, we set the initial search grid of $\alpha_a$ as $\mathcal{A}_M = \{-3, -2.99, \cdots, 2.99, 3\}$ and the time set $\mathcal{T}_K = \{60,61,\cdots, 460\}$, which contains the target assessment times. For each $a=0,1$, we assume that $\mu^a(t) = s\{\symbf{B}(t)^\prime \symbf{\beta}\}$ for $t\in[60,460]$, with $\symbf{B}(t)$ a cubic spline basis with one interior knot at $t = 260$. We used the log link for the original ACQ score and the logit link for the AC indicator. As in \cite{Smith2024}, we modeled the intensity function separately for each arm using a stratified Andersen-Gill model, and modeled the conditional outcome distribution of observed outcomes separately for each treatment arm using a single-index model, then constructed the augmented inverse intensity-weighted estimators as in Section \ref{chp4:sec:estimation}.


We used the Rademacher weight distribution, the one-step multiplier bootstrap formula, and the adaptive search method in the bias correction process. We conducted 100 multiplier bootstrap replicates to obtain the bias-correction point estimates of $\alpha_a^l$ and $\alpha_a^u$. Then we used $400$ outer and $50$ inner bootstrap replicates to construct the bias-corrected confidence set for $(\alpha_a^l, \alpha_a^u)$.

\begin{table}[H]
\small
\centering
\caption[Bias-corrected estimates of the lower and upper sensitivity parameter bounds ($\alpha_l$ and $\alpha_u$), together with the corresponding bias-corrected 95\% confidence sets, for the ACQ score and the AC indicator in the ARC trial analysis.]{Bias-corrected estimates of the lower and upper sensitivity parameter bounds ($\alpha_l$ and $\alpha_u$), together with the corresponding bias-corrected 95\% confidence sets, for the ACQ score and the AC indicator in the ARC trial analysis. The confidence sets were constructed using the bias-corrected point estimates and bias-corrected bootstrap estimates according to Algorithm \ref{alg:double_boot}.}
\label{table:analysis_alpha_selection}
\begin{tabular}{c c c c c c}
\hline
Trt Group & Link Type & $\alpha_l$ & $\alpha_u$ & 95\% Basic Bootstrap CI &  95\% Wald CI \\
\hline

\multicolumn{6}{c}{\textbf{ACQ Score}}\\
\hline
\multirow{2}{*}{Treatment}
& Log ($\symbf{W}_1(t;\symbf{\beta})$)
& -0.95 & 0.55 & $(-1.21,\,0.77)$ & $(-1.23,\,0.81)$ \\
\cline{2-6}
& Log ($\symbf{W}_2(t;\symbf{\beta})$)
& -0.93 & 0.55 & $(-1.20,\,0.81)$ & $(-1.23,\,0.82)$ \\
\cline{1-6}
\multirow{2}{*}{Control}
& Log ($\symbf{W}_1(t;\symbf{\beta})$)
& -0.61 & 0.38 & $(-0.88,\,0.86)$ & $(-0.98,\,0.68)$ \\
\cline{2-6}
& Log ($\symbf{W}_2(t;\symbf{\beta})$)
& -0.62 & 0.37 & $(-0.91,\,0.81)$ & $(-1.01,\,0.66)$ \\
\hline

\multicolumn{6}{c}{\textbf{AC Indicator}}\\
\hline
\multirow{2}{*}{Treatment}
& Logit ($\symbf{W}_1(t;\symbf{\beta})$)
& -0.10 & 0.53 & $(-1.25,\,1.76)$ & $(-1.20,\,1.46)$ \\
\cline{2-6}
& Logit ($\symbf{W}_2(t;\symbf{\beta})$)
& -0.13 & 0.58 & $(-1.37,\,1.79)$ & $(-1.23,\,1.46)$ \\
\cline{1-6}
\multirow{2}{*}{Control}
& Logit ($\symbf{W}_1(t;\symbf{\beta})$)
& -0.27 & 1.16 & $(-0.96,\,2.15)$ & $(-0.81,\,1.74)$ \\
\cline{2-6}
& Logit ($\symbf{W}_2(t;\symbf{\beta})$)
& -0.25 & 1.16 &  $(-1.00,\,2.14)$ & $(-0.83,\,1.73)$ \\
\hline
\end{tabular}
\end{table}

Based on results in Table \ref{table:analysis_alpha_selection}, we choose $\alpha_0, \alpha_1 \in \{-1.2, -0.8, -0.4, 0, 0.4, 0.8\}$ for the ACQ score and $\{-0.8, -0.4, 0, 0.4, 0.8, 1.2\}$ for the AC indicator. Here we only present the analysis results with the first weight function $\symbf{W}_1(t;\symbf{\beta})$; results with the second weight function $\symbf{W}_2(t;\symbf{\beta})$
 are provided in Appendix \ref{apxB:appendix:data_analysis}.
Regarding the ACQ score, Figure \ref{fig:mean_curves_log} shows estimates of the mean curve $\mu^a(t)$ for each arm under these specific sensitivity parameter values, $t\in[60,460]$, and estimates and confidence intervals for $\mu^a(t)$ at the target times are also shown. Figure \ref{fig:contour_log} shows estimates and confidence intervals for $\delta(t)$ at 6 and 12 months under specific values of $\alpha_0$ and $\alpha_1$. Regarding the AC indicator, Figure \ref{fig:mean_curves_logit} shows estimates of the curve $\mu^a(t)$ for each arm under these specific sensitivity parameter values, $t\in[60,460]$. Estimates and confidence intervals for  $\mu^a(t)$ at the target times are also shown. Figure \ref{fig:contour_logit} shows estimates and confidence intervals for $\delta(t)$ at 6 and 12 months under specific values of $\alpha_0$ and $\alpha_1$.

\begin{figure}[H]
\centering
\begin{subfigure}[b]{0.43\textwidth}
    \centering
    \includegraphics[width=\textwidth]{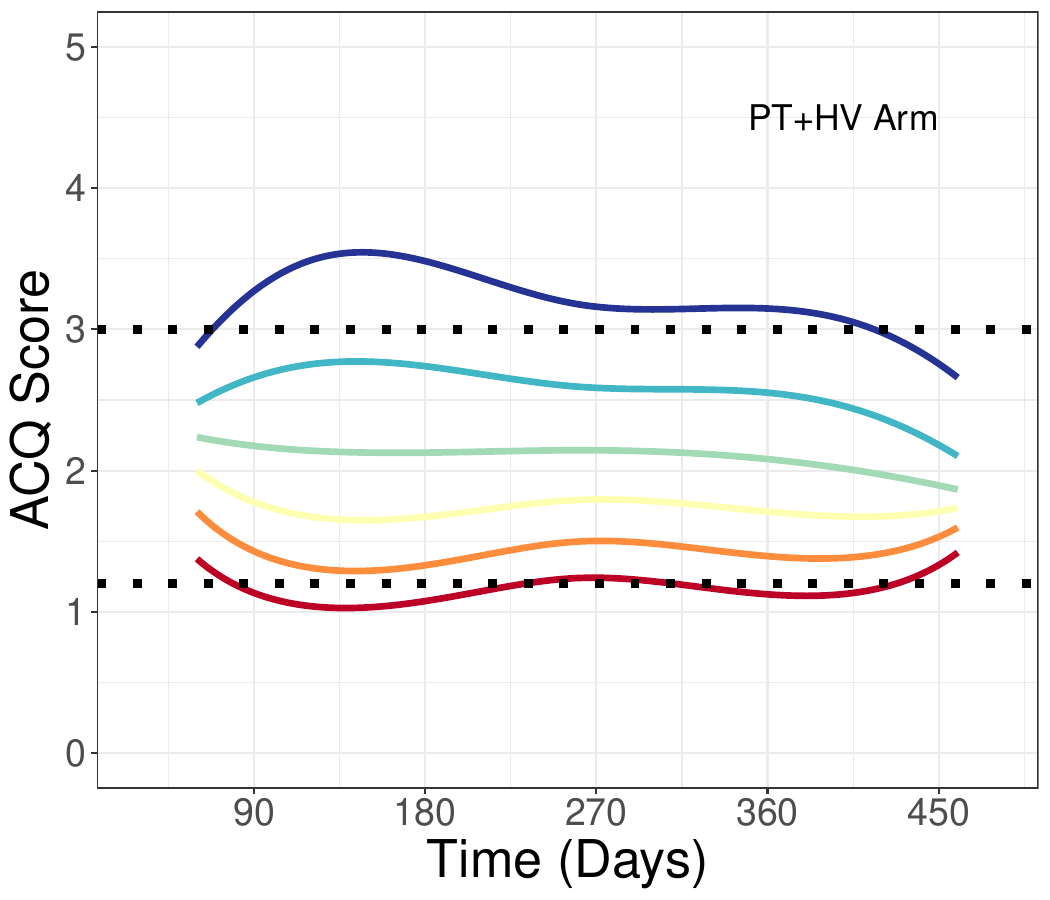}
\end{subfigure}
\hfill
\begin{subfigure}[b]{0.49\textwidth}  
    \centering 
    \includegraphics[width=\textwidth]{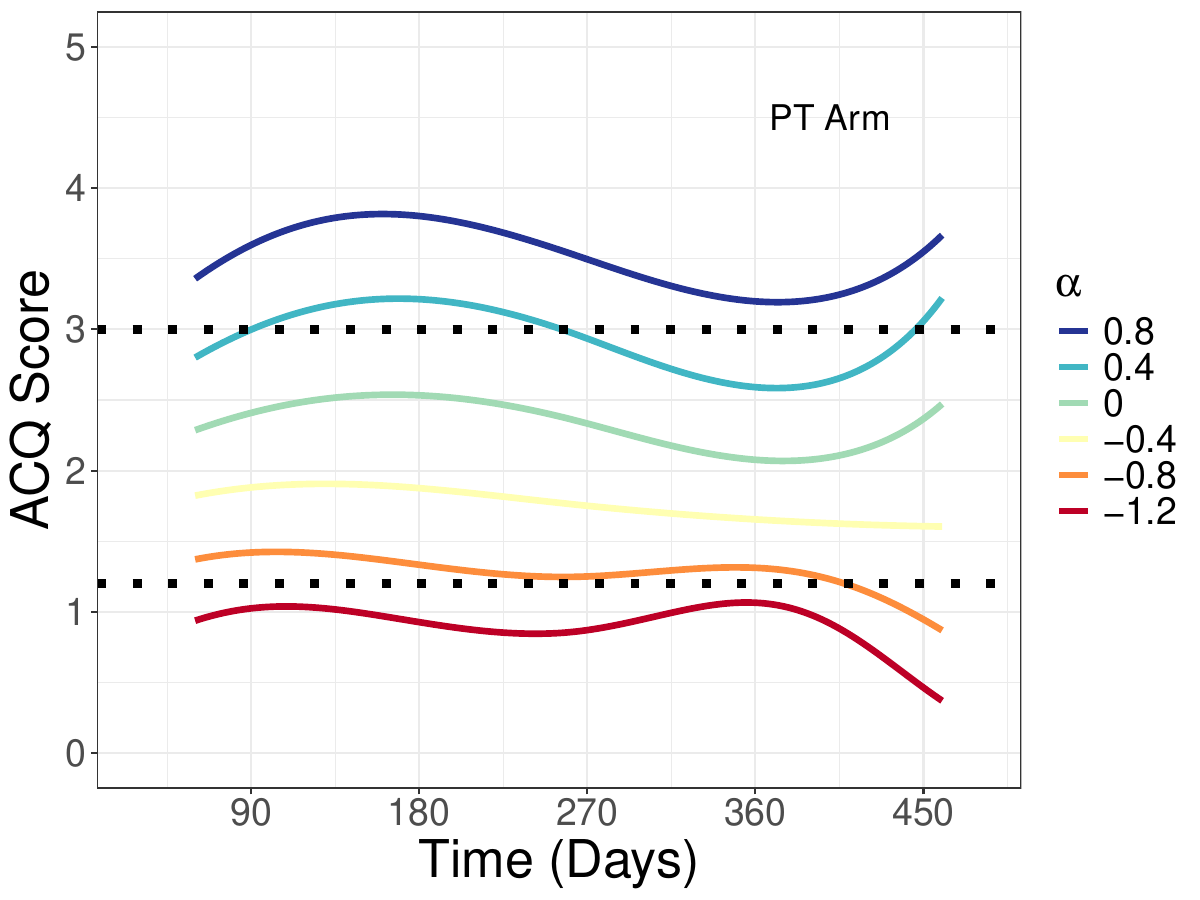}
\end{subfigure}

\medskip

\begin{subfigure}[b]{0.43\textwidth}   
    \centering 
    \includegraphics[width=\textwidth]{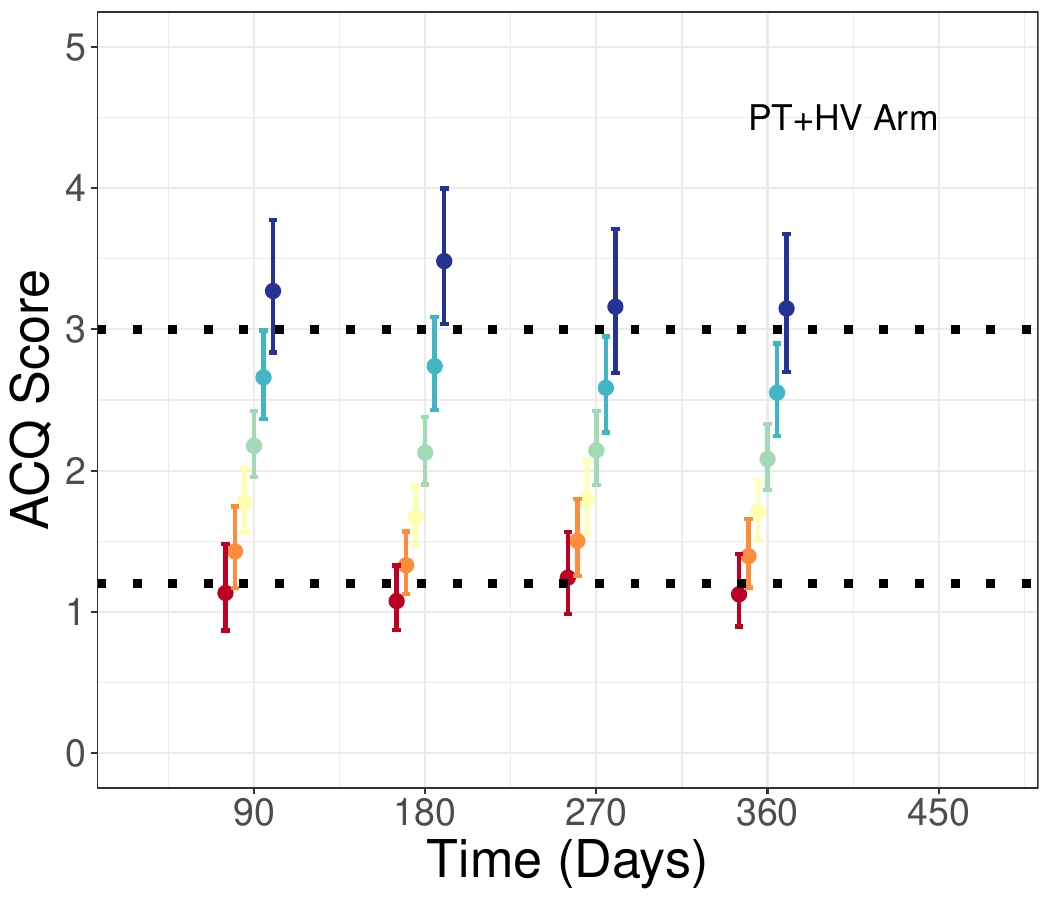}
   \end{subfigure}
\hfill
\begin{subfigure}[b]{0.49\textwidth}   
    \centering 
    \includegraphics[width=\textwidth]{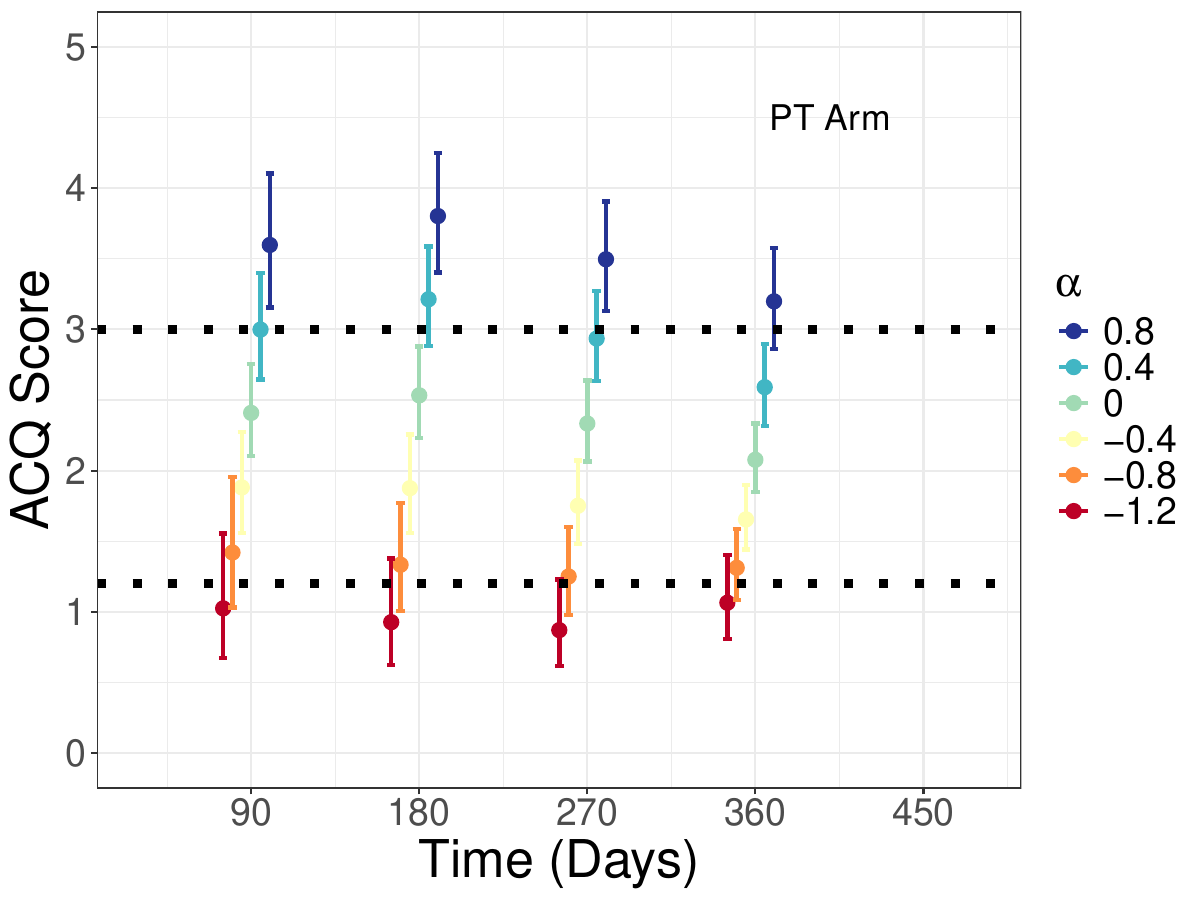}
\end{subfigure}

\caption[Population means of the Asthma Control Questionnaire (ACQ) score in the ARC trial under a range of sensitivity parameter values (log link with weight $\symbf{W}_1(t;\symbf{\beta})$).]{Population means of scores on the Asthma Control Questionnaire (ACQ) in the ARC trial under a range of sensitivity parameter values (log link with weight $\symbf{W}_1(t;\symbf{\beta})$). Here we show inference for the population mean ACQ score in the intervention (PT+HV) arm and the control (PT) arm.  Estimation is made under sensitivity parameter values of $\alpha = -1.2, -0.8, -0.4, 0, 0.4, 0.8$, where a positive (negative) value of $\alpha$ posits that unobserved outcome values at time $t$ tend to be higher (lower) than observed values, after controlling for variables observed before time $t$. Upper panels: point estimates on the interval 60 to 460 days after randomization.  Lower panels: point estimates and 95\% Wald confidence intervals using the parametric bootstrap variance estimate at each target time of 90, 180, 270, and 360 days. For each arm, only those values of $\alpha$ under which the estimated curves lie completely between the dotted lines at the lower bound $1.2$ and upper bound $3$ are considered plausible based on subject-matter expertise.} 
\label{fig:mean_curves_log}
\end{figure}

\begin{figure}[H]
\centering
\begin{subfigure}[b]{0.42\textwidth}
    \centering
    \includegraphics[width=\textwidth]{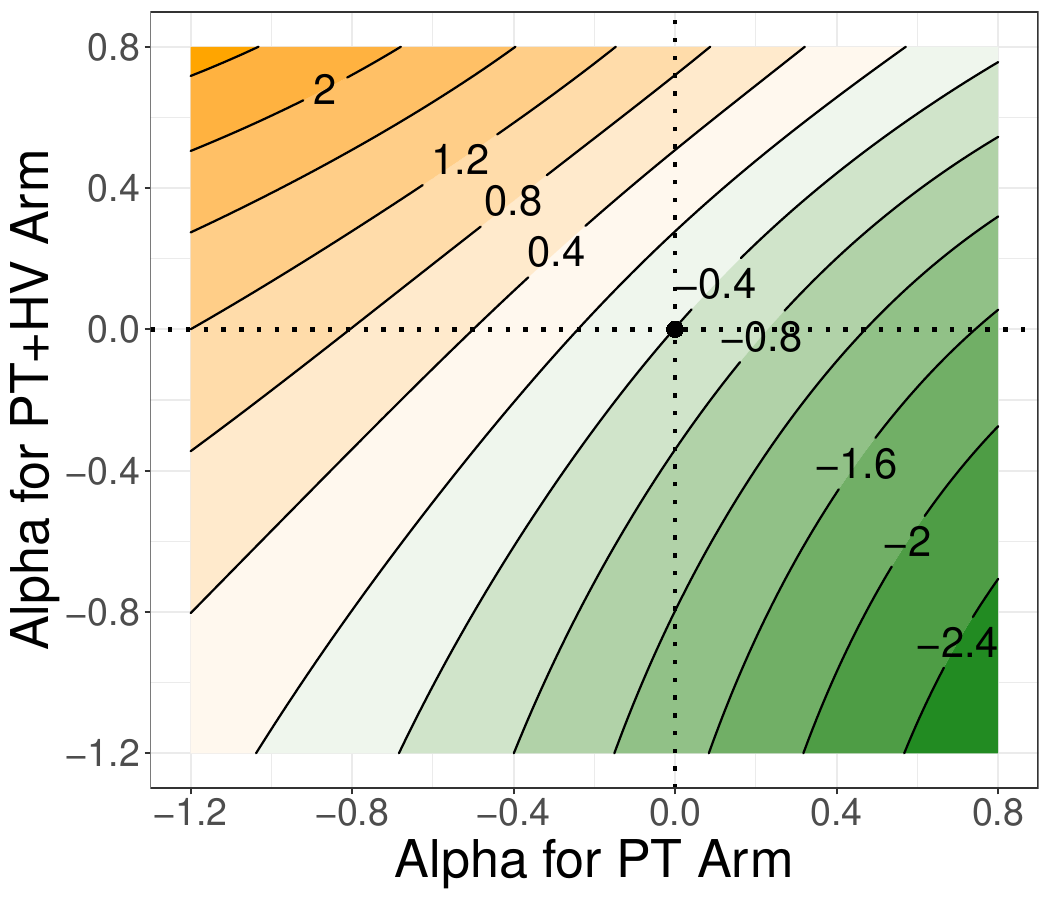}
\end{subfigure}
\hfill
\begin{subfigure}[b]{0.48\textwidth}  
    \centering 
    \includegraphics[width=\textwidth]{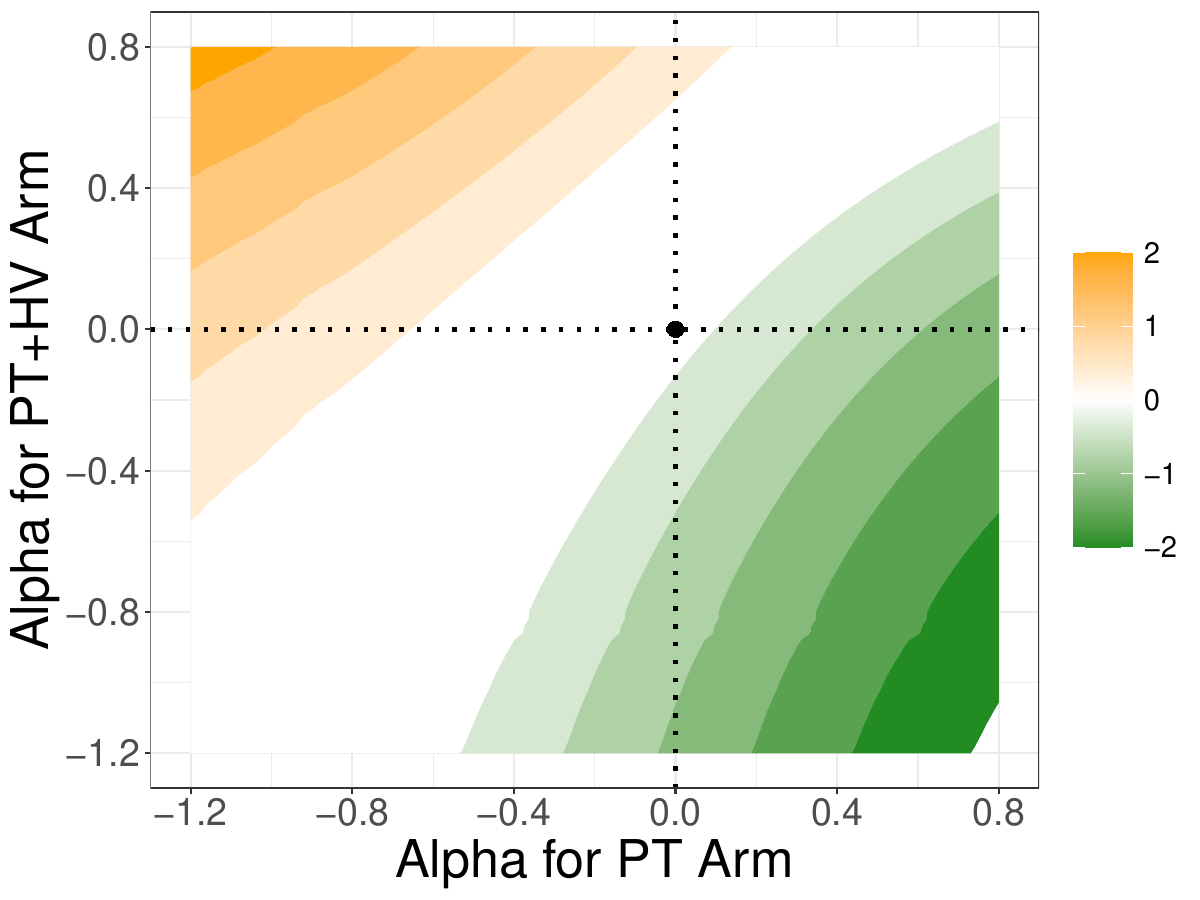}
\end{subfigure}

\medskip

\begin{subfigure}[b]{0.42\textwidth}   
    \centering 
    \includegraphics[width=\textwidth]{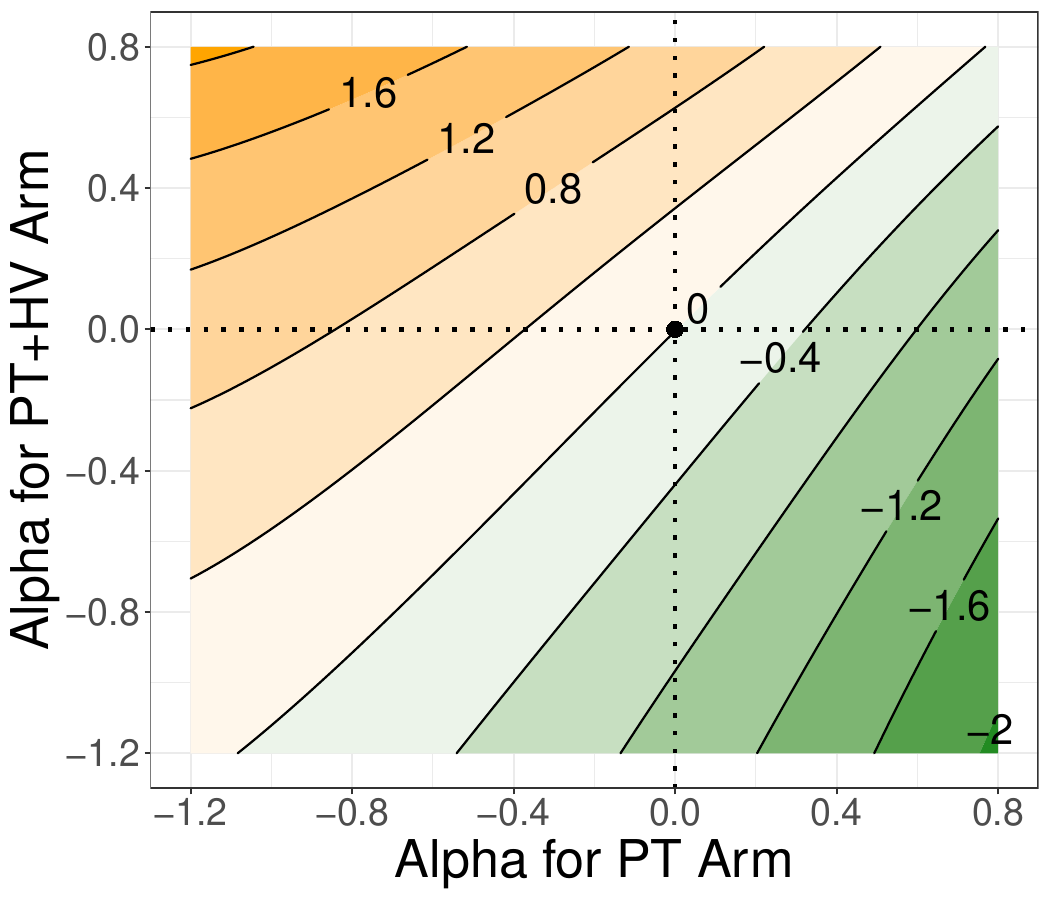}
   \end{subfigure}
\hfill
\begin{subfigure}[b]{0.48\textwidth}   
    \centering 
    \includegraphics[width=\textwidth]{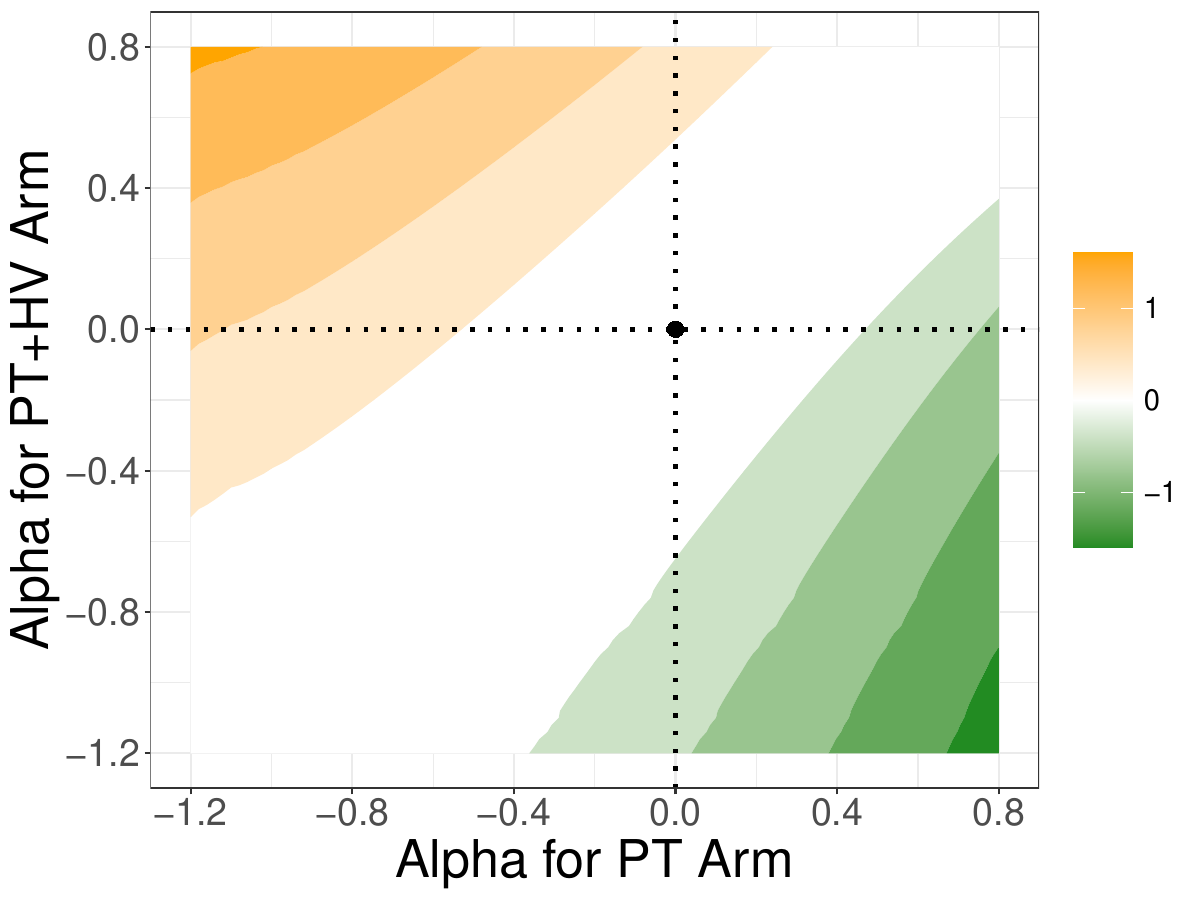}
\end{subfigure}
\caption[Sensitivity analysis for the Asthma Control Questionnaire (ACQ) score in the ARC trial (log link with weight $\symbf{W}_1(t;\symbf{\beta})$).]{Sensitivity analysis for the Asthma Control Questionnaire (ACQ) score in the ARC trial (log link with weight $\symbf{W}_1(t;\symbf{\beta})$). The left panels show the point estimate of the treatment effect $\delta(t)$ at 6 months (upper left) and 12 months (lower left) for each pair of sensitivity parameter values. The right panels summarize the 95\% confidence intervals for $\delta(t)$ at 6 months (upper right) and 12 months (lower right). The white region corresponds to sensitivity parameter values for which the confidence interval contains 0; the shaded region in the lower right (upper left) corresponds to values for which the confidence interval lies entirely below (above) 0. Contour heights give the value in the confidence interval closest to 0. Confidence intervals are Wald intervals based on the parametric bootstrap variance estimate.}
\label{fig:contour_log}
\end{figure}

\begin{figure}[H]
\centering
\begin{subfigure}[b]{0.43\textwidth}
    \centering
    \includegraphics[width=\textwidth]{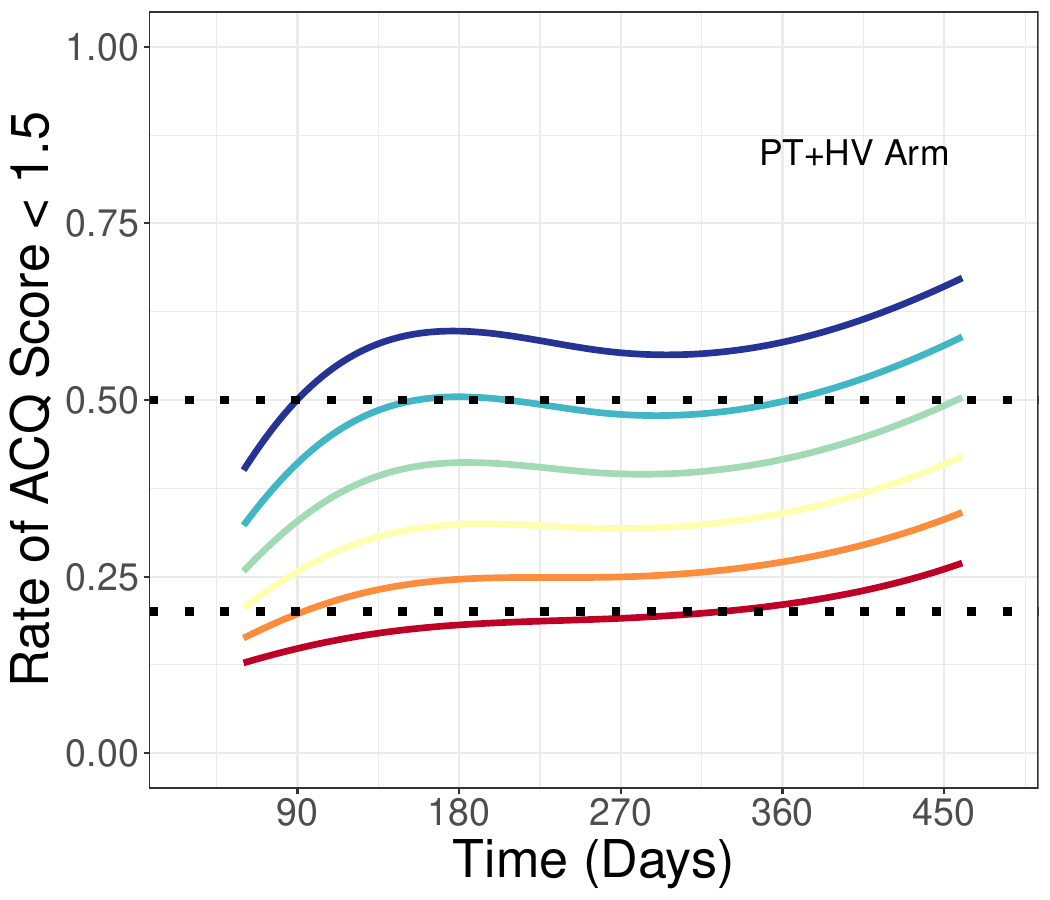}
\end{subfigure}
\hfill
\begin{subfigure}[b]{0.49\textwidth}  
    \centering 
    \includegraphics[width=\textwidth]{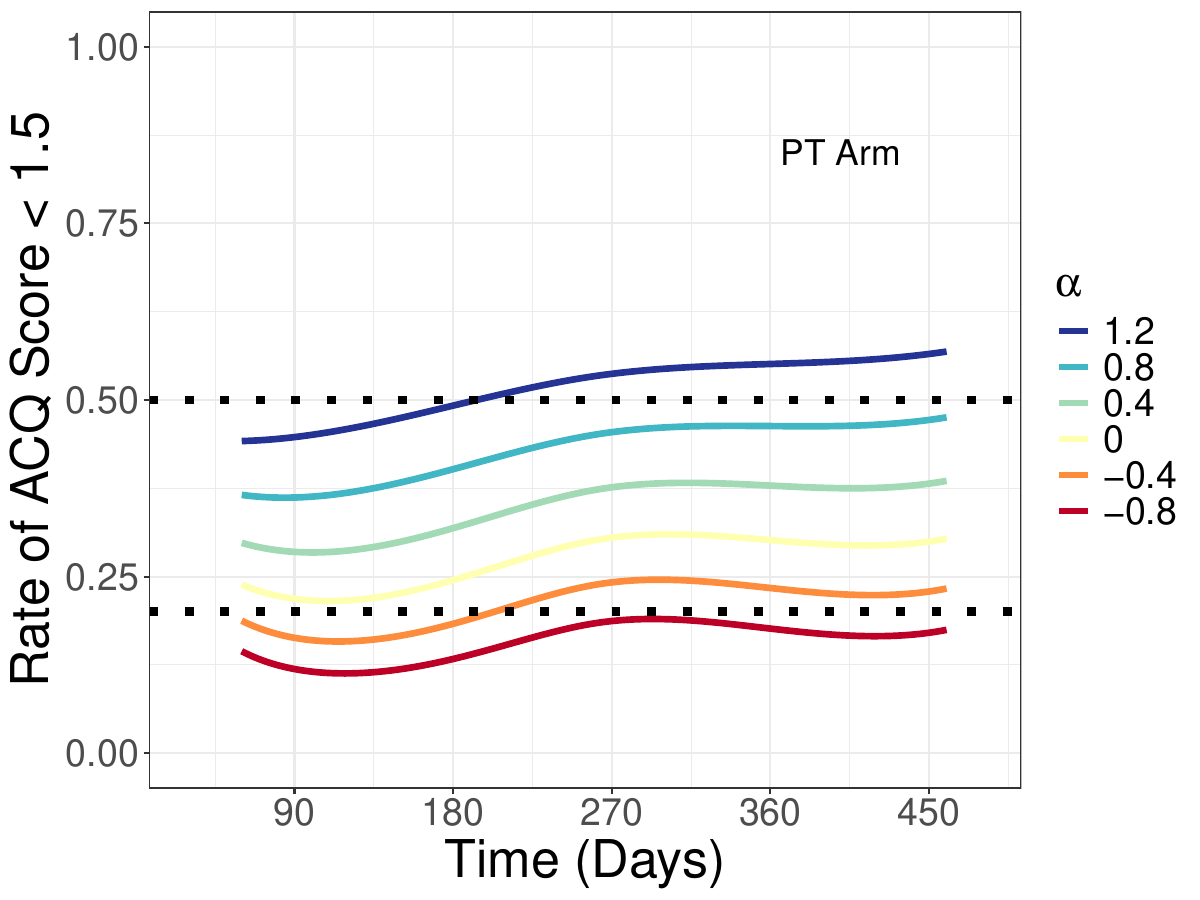}
\end{subfigure}

\medskip

\begin{subfigure}[b]{0.43\textwidth}   
    \centering 
    \includegraphics[width=\textwidth]{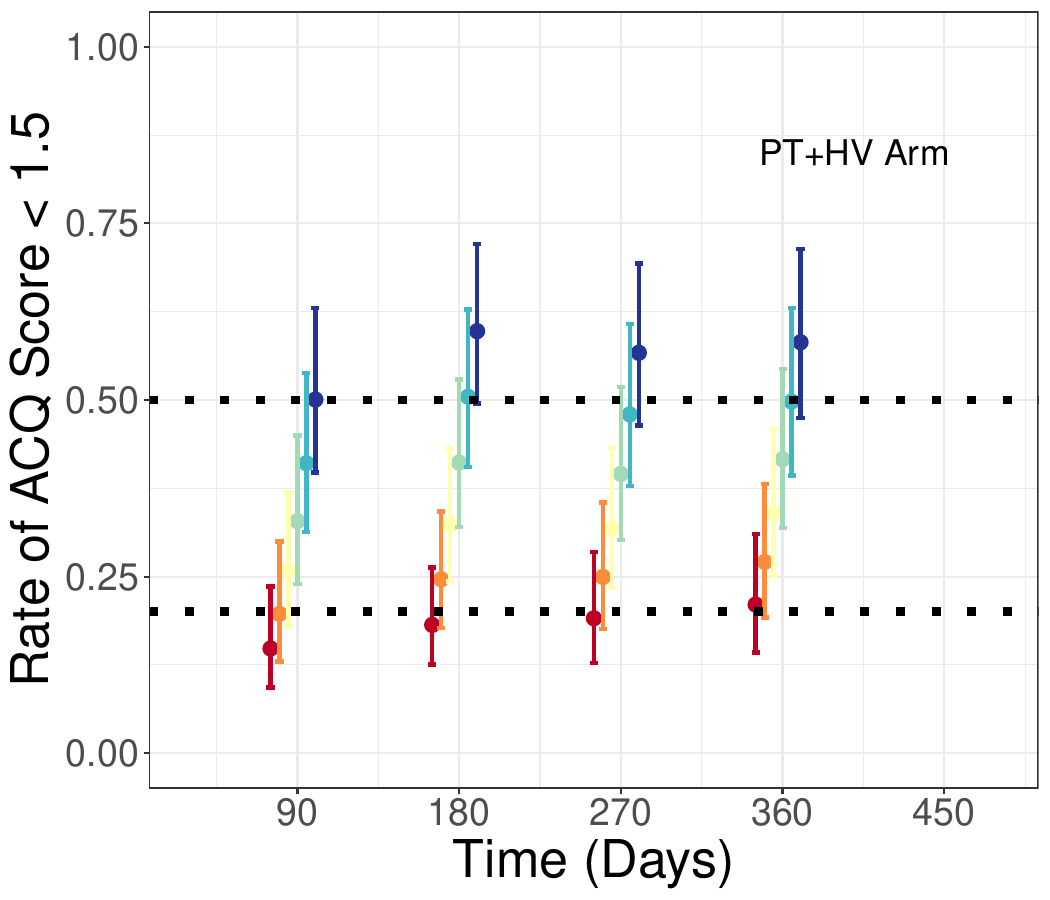}
   \end{subfigure}
\hfill
\begin{subfigure}[b]{0.49\textwidth}   
    \centering 
    \includegraphics[width=\textwidth]{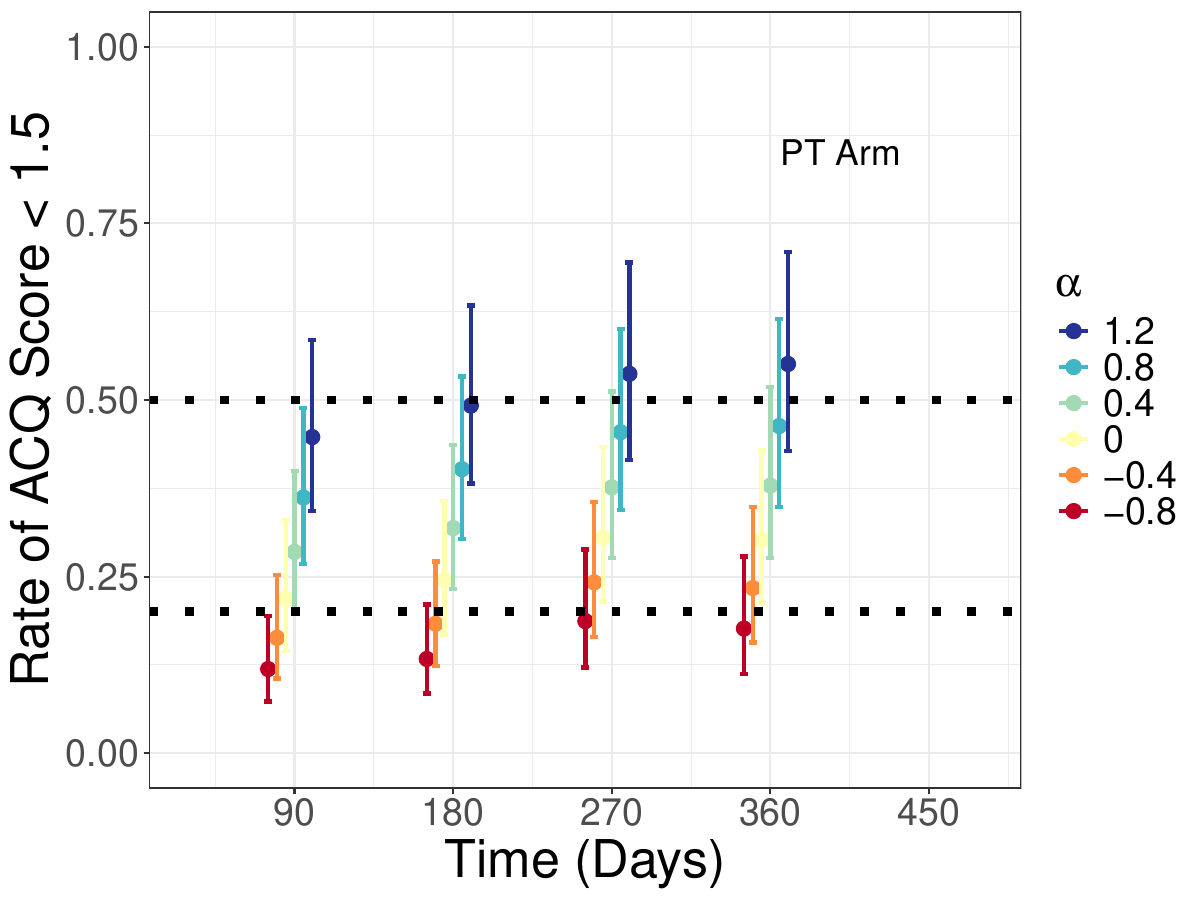}
\end{subfigure}
\caption[Population mean estimates of the proportion of patients with Asthma Control Questionnaire (ACQ) score less than 1.5 in the ARC trial under a range of sensitivity parameter values (logit link with weight $\symbf{W}_1(t;\symbf{\beta})$).]{Population mean estimates of the proportion of patients with Asthma Control Questionnaire (ACQ) score less than 1.5 in the ARC trial under a range of sensitivity parameter values (logit link with weight $\symbf{W}_1(t;\symbf{\beta})$). Here we show inference for the proportion of patients with ACQ score less than 1.5 in the intervention (PT+HV) arm and the control (PT) arm.  Estimation is made under sensitivity parameter values of $\alpha = -0.8, -0.4, 0, 0.4, 0.8, 1.2$, where a positive (negative) value of $\alpha$ posits that unobserved outcome values at time $t$ tend to be higher (lower) than observed values, after controlling for variables observed before time $t$. Upper panels: point estimates on the interval 60 to 460 days after randomization.  Lower panels: point estimates and 95\% Wald confidence intervals using the parametric bootstrap variance estimate at each target time of 90, 180, 270, and 360 days. For each arm, only those values of $\alpha$ under which the estimated curves lie completely between the dotted lines at the lower bound $0.2$ and upper bound $0.5$ are considered plausible based on subject-matter expertise.} 
\label{fig:mean_curves_logit}
\end{figure}

\begin{figure}[H]
\centering
\begin{subfigure}[b]{0.42\textwidth}
    \centering
    \includegraphics[width=\textwidth]{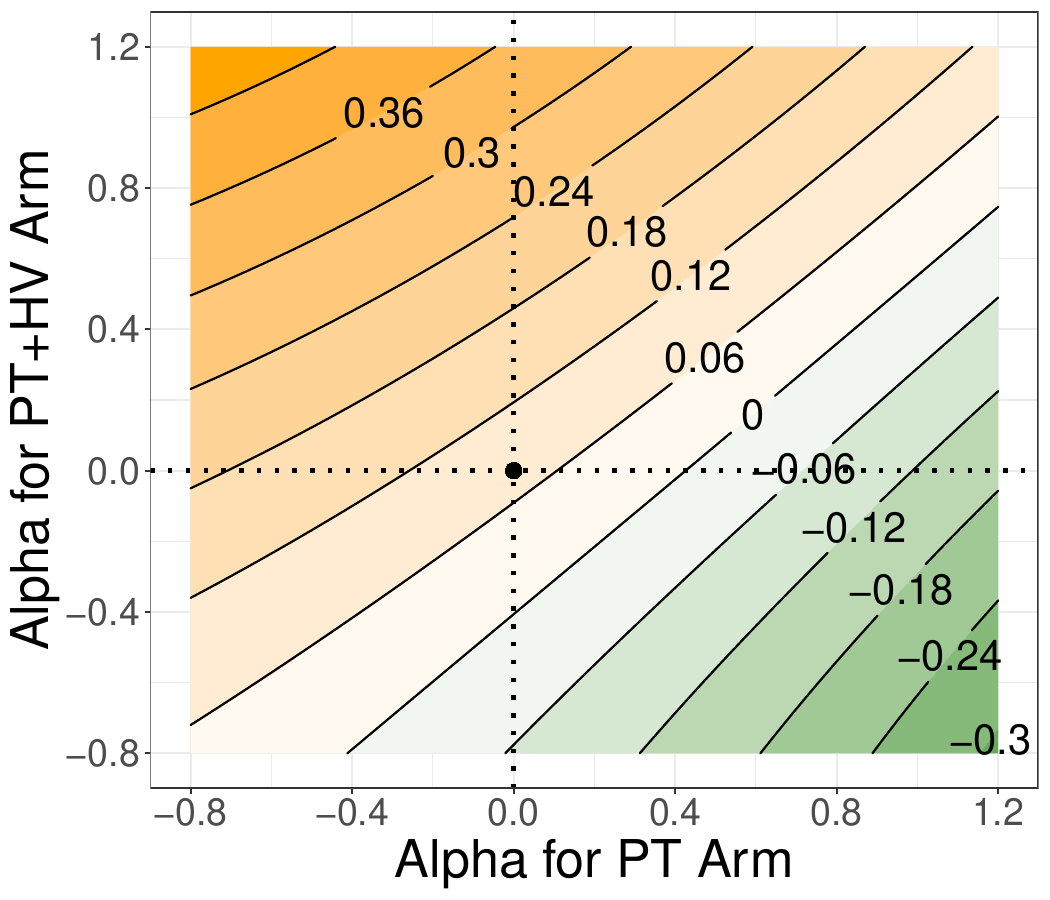}
\end{subfigure}
\hfill
\begin{subfigure}[b]{0.48\textwidth}  
    \centering 
    \includegraphics[width=\textwidth]{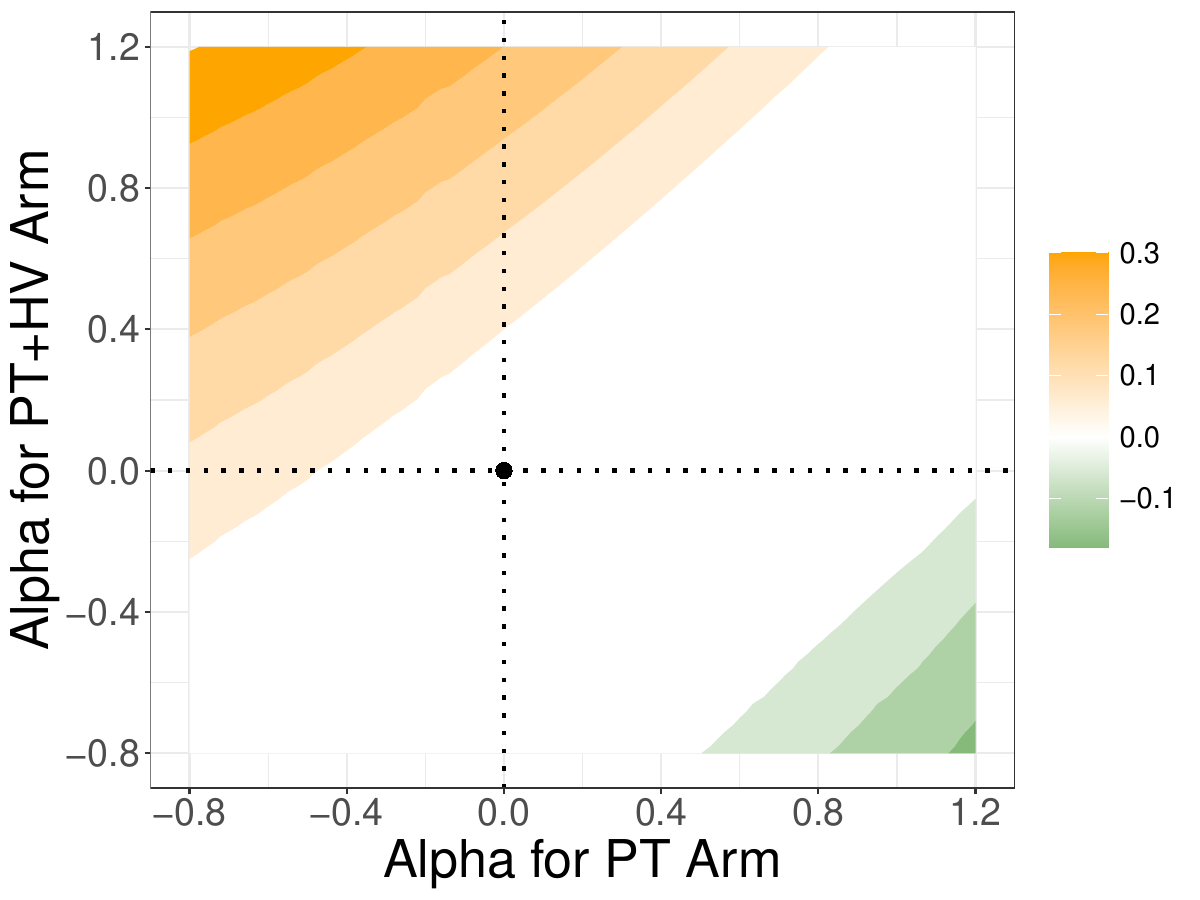}
\end{subfigure}

\medskip

\begin{subfigure}[b]{0.42\textwidth}   
    \centering 
    \includegraphics[width=\textwidth]{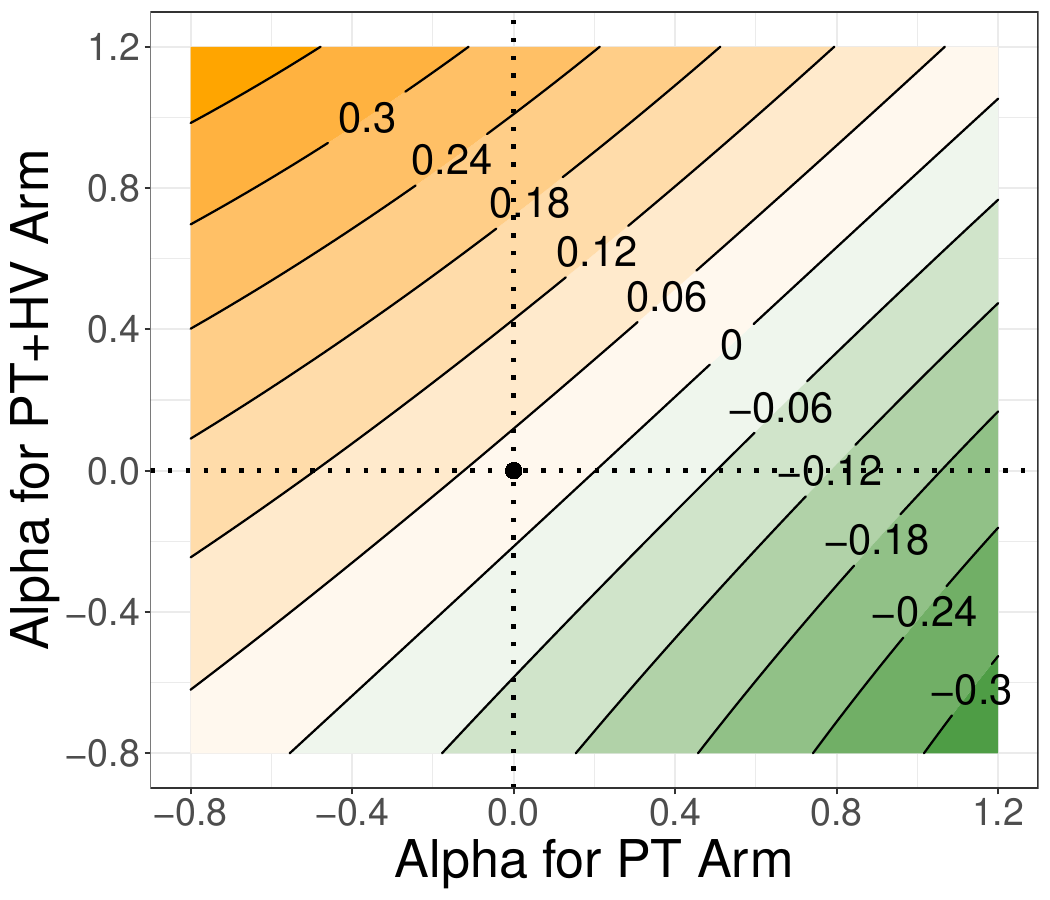}
   \end{subfigure}
\hfill
\begin{subfigure}[b]{0.48\textwidth}   
    \centering 
    \includegraphics[width=\textwidth]{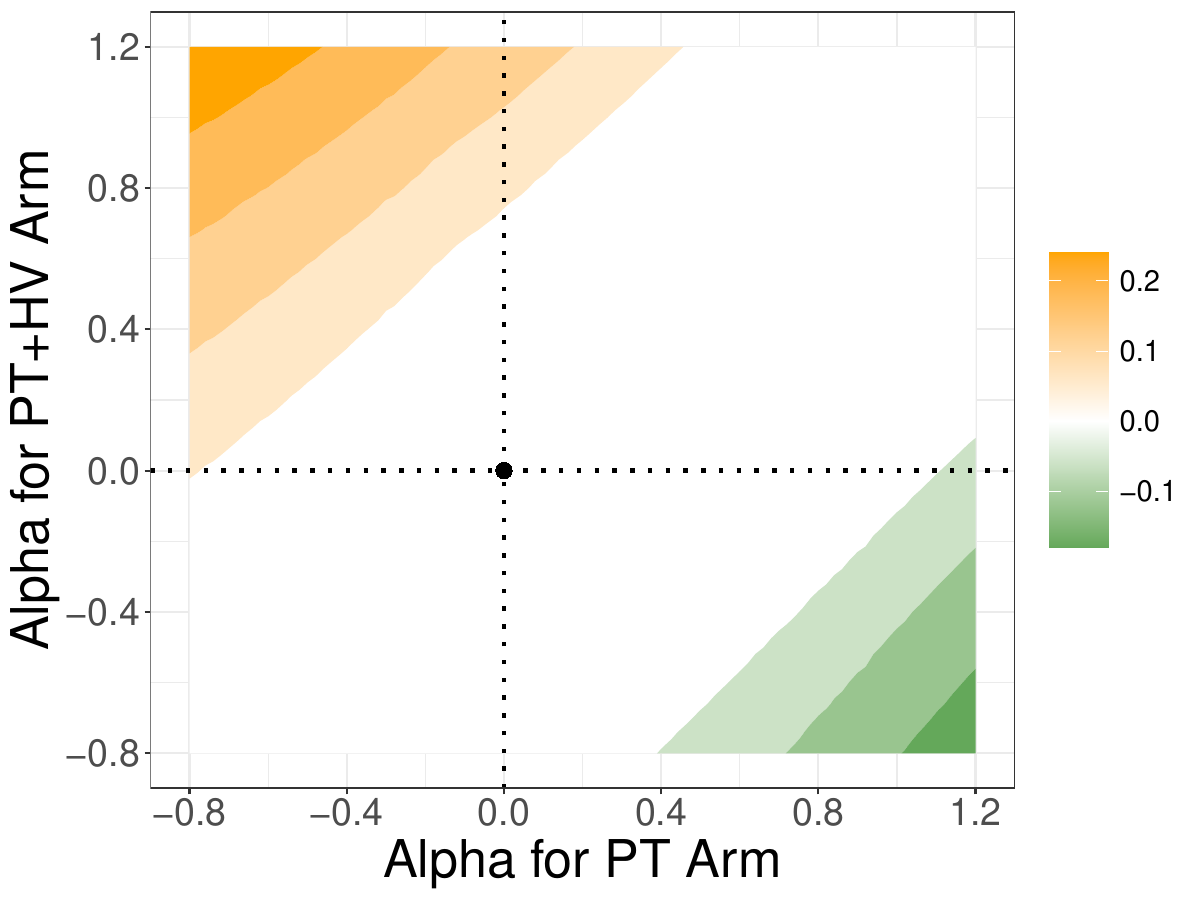}
\end{subfigure}

\caption[Sensitivity analysis for the proportion of patients with Asthma Control Questionnaire (ACQ) score less than 1.5 in the ARC trial (logit link with weight $\symbf{W}_1(t;\symbf{\beta})$).]{Sensitivity analysis for the proportion of patients with Asthma Control Questionnaire (ACQ) score less than 1.5 in the ARC trial (logit link with weight $\symbf{W}_1(t;\symbf{\beta})$). The left panels show the point estimate of the treatment effect $\delta(t)$ at 6 months (upper left) and 12 months (lower left) for each pair of sensitivity parameter values. The right panels summarize the 95\% confidence intervals for $\delta(t)$ at 6 months (upper right) and 12 months (lower right). The white region corresponds to sensitivity parameter values for which the confidence interval contains 0; the shaded region in the lower right (upper left) corresponds to values for which the confidence interval lies entirely below (above) 0. Contour heights give the value in the confidence interval closest to 0. Confidence intervals are Wald intervals based on the parametric bootstrap variance estimate.}
\label{fig:contour_logit}
\end{figure}

If we assume $\alpha_0 = \alpha_1$,
which includes the explainable assessment assumption within each treatment arm, there is insufficient evidence to conclude a treatment effect at either 6 or 12 months for either the ACQ score or the AC indicator. However, when informative assessment mechanisms are allowed to differ between treatment arms, statistically significant treatment effects emerge under some plausible sensitivity parameter values. For example, for the ACQ score, if we assume $\alpha_0=0$ and $\alpha_1=-0.8$, implying that unobserved ACQ scores tend to be lower than observed scores in the intervention arm, there is evidence that the home visits intervention improves the population mean ACQ score at both 6 and 12 months compared with portal training alone. For the AC indicator, if we assume $\alpha_0 = 0$ and $\alpha_1 = 0.8$, implying that unobserved AC indicators tend to be higher than observed values in the intervention arm, there is evidence that 
the home visits intervention results in better asthma control than the portal training alone at both 6 and 12 months.

Overall, the results for both outcomes indicate that when $\alpha_0$ and $\alpha_1$ are similar, differing by no more than some value, there is insufficient evidence to conclude a treatment effect at either follow-up time. In contrast, when $\alpha_0 > \alpha_1 + 0.8$, the analyses using the ACQ scores provide evidence that the home visits intervention improves the population mean asthma control outcome relative to portal training alone at both 6 and 12 months. Similarly, when $\alpha_1 > \alpha_0 + 0.8$, the analyses using the AC indicator provide evidence of the same improvement at both follow-up times.

\section{Simulation Study}
\label{chp4:sec:simu}

We generated realistic simulated data based on the ARC data, with a sample size of $N=200$ per treatment arm. Details of the data-generating process are provided in Appendix \ref{apxB:appendix:simu_mechanism}. The simulated data were generated under the proposed sensitivity analysis framework, with prespecified values of $\alpha_1$ and $\alpha_0$ corresponding to the link function used in the analysis. Specifically, the ACQ score was generated using a negative binomial regression model, whereas the AC indicator was generated using a logistic regression model. The resulting datasets were analyzed using the proposed augmented inverse intensity-weighted estimators.

We first evaluated the finite-sample performance of the proposed estimators by analyzing the simulated data using the true values of $\alpha_1$ and $\alpha_0$. We then analyzed the same datasets under the  explainable assessment assumption, setting $\alpha_0 = \alpha_1 =0$, to investigate the performance of methods that rely on this assumption when it is violated. Results based on the first weight function $\symbf{W}_1(t;\symbf{\beta})$ for the treatment effects at 6 and 12 months are presented in Tables \ref{table:coverage_log} and \ref{table:coverage_logit}. Corresponding results using the second weight function $\symbf{W}_2(t;\symbf{\beta})$ are provided in Appendix \ref{apxB:appendix:simu_result}. When the analysis was conducted using the true values of $\alpha_1$ and $\alpha_0$, the proposed estimators exhibited good finite-sample performance. Across 500 simulation replicates, the empirical bias was small, with absolute bias below 0.01 for the AC indicator and below 0.06 for the ACQ score. The empirical coverage probabilities of the nominal 0.95 confidence intervals were close to the target level, ranging from 0.938 to 0.970 for the AC indicator and from 0.926 to 0.972 for the ACQ score. In contrast, when the analyses assumed $\alpha_1 = \alpha_0 = 0$, the resulting estimators often exhibited substantial bias and the confidence interval coverage deviated markedly from the nominal 0.95 level.
This highlights the importance of considering a range of different assumptions through sensitivity analysis.

\begin{table}[htbp]
\centering
\caption[Simulation results for the ACQ score with the log link ($\symbf{W}_1(t;\symbf{\beta})$).]{Simulation results for the ACQ score with the log link ($\symbf{W}_1(t;\symbf{\beta})$). Data were generated under our sensitivity analysis assumption using values of $\alpha_0,\alpha_1=-1.2, -0.8, -0.4, 0, 0.4, 0.8$. The treatment effects at 6 and 12 months were then estimated using our augmented inverse intensity-weighted estimator with the log link ($\symbf{W}_1(t;\symbf{\beta})$): (a) using the true values of $\alpha_0,\alpha_1$ (rows denoted ``S.A.''), and (b) under the explainable assessment assumption that $\alpha_0=\alpha_1=0$ (rows denoted ``Expl.''). Shown are the absolute value of the empirical bias and the confidence interval coverage across 500 simulations. Confidence intervals are Wald confidence intervals using the parametric bootstrap variance estimate.}
\label{table:coverage_log}
\footnotesize
\setlength{\tabcolsep}{3pt}
\begin{tabular}{ccc*{12}{c}}
\hline
 & & & \multicolumn{12}{c}{True $\alpha_0$} \\
\cline{4-15}
 & & & \multicolumn{2}{c}{$-1.2$} & \multicolumn{2}{c}{$-0.8$} & \multicolumn{2}{c}{$-0.4$}
     & \multicolumn{2}{c}{0} & \multicolumn{2}{c}{0.4} & \multicolumn{2}{c}{0.8} \\
\cline{4-15}
Month & True $\alpha_1$ & Method & Bias & Cov. & Bias & Cov. & Bias & Cov. & Bias & Cov. & Bias & Cov. & Bias & Cov. \\
\hline
\multirow{12}{*}{6}
 & \multirow{2}{*}{0.8}    & S.A.  & 0.046 & 0.952 & 0.041 & 0.954 & 0.038 & 0.964 & 0.031 & 0.962 & 0.019 & 0.958 & 0.003 & 0.954 \\
 &                         & Expl. & 2.155 & 0.000 & 1.916 & 0.000 & 1.572 & 0.000 & 1.073 & 0.000 & 0.408 & 0.306 & 0.356 & 0.412 \\ \cline{2-15}
 & \multirow{2}{*}{0.4}    & S.A.  & 0.025 & 0.962 & 0.020 & 0.968 & 0.017 & 0.960 & 0.010 & 0.964 & 0.002 & 0.960 & 0.024 & 0.958 \\
 &                         & Expl. & 1.571 & 0.000 & 1.333 & 0.000 & 0.989 & 0.000 & 0.490 & 0.158 & 0.176 & 0.832 & 0.940 & 0.000 \\ \cline{2-15}
 & \multirow{2}{*}{0}      & S.A.  & 0.013 & 0.960 & 0.009 & 0.950 & 0.006 & 0.946 & 0.002 & 0.954 & 0.013 & 0.958 & 0.036 & 0.950 \\
 &                         & Expl. & 1.080 & 0.000 & 0.841 & 0.002 & 0.498 & 0.148 & 0.002 & 0.954 & 0.667 & 0.032 & 1.431 & 0.000 \\ \cline{2-15}
 & \multirow{2}{*}{$-0.4$} & S.A.  & 0.006 & 0.954 & 0.001 & 0.952 & 0.002 & 0.942 & 0.009 & 0.942 & 0.020 & 0.952 & 0.043 & 0.940 \\
 &                         & Expl. & 0.697 & 0.010 & 0.459 & 0.214 & 0.115 & 0.886 & 0.384 & 0.348 & 1.050 & 0.000 & 1.814 & 0.000 \\ \cline{2-15}
 & \multirow{2}{*}{$-0.8$} & S.A.  & 0.002 & 0.960 & 0.003 & 0.958 & 0.006 & 0.948 & 0.013 & 0.940 & 0.025 & 0.944 & 0.047 & 0.938 \\
 &                         & Expl. & 0.412 & 0.298 & 0.173 & 0.830 & 0.170 & 0.840 & 0.670 & 0.030 & 1.335 & 0.000 & 2.099 & 0.000 \\ \cline{2-15}
 & \multirow{2}{*}{$-1.2$} & S.A.  & 0.003 & 0.958 & 0.008 & 0.958 & 0.011 & 0.946 & 0.018 & 0.940 & 0.029 & 0.948 & 0.052 & 0.942 \\
 &                         & Expl. & 0.197 & 0.804 & 0.042 & 0.940 & 0.385 & 0.346 & 0.885 & 0.000 & 1.550 & 0.000 & 2.314 & 0.000 \\
\hline
\multirow{12}{*}{12}
 & \multirow{2}{*}{0.8}    & S.A.  & 0.035 & 0.954 & 0.029 & 0.958 & 0.023 & 0.962 & 0.013 & 0.962 & 0.004 & 0.962 & 0.010 & 0.938 \\
 &                         & Expl. & 2.028 & 0.000 & 1.805 & 0.000 & 1.485 & 0.000 & 1.019 & 0.000 & 0.387 & 0.304 & 0.363 & 0.366 \\ \cline{2-15}
 & \multirow{2}{*}{0.4}    & S.A.  & 0.020 & 0.948 & 0.014 & 0.944 & 0.008 & 0.956 & 0.002 & 0.960 & 0.011 & 0.966 & 0.025 & 0.944 \\
 &                         & Expl. & 1.466 & 0.000 & 1.243 & 0.000 & 0.922 & 0.000 & 0.456 & 0.192 & 0.175 & 0.784 & 0.925 & 0.000 \\ \cline{2-15}
 & \multirow{2}{*}{0}      & S.A.  & 0.012 & 0.946 & 0.007 & 0.944 & 0.001 & 0.948 & 0.009 & 0.966 & 0.018 & 0.972 & 0.033 & 0.954 \\
 &                         & Expl. & 1.000 & 0.000 & 0.777 & 0.000 & 0.457 & 0.192 & 0.009 & 0.966 & 0.641 & 0.008 & 1.391 & 0.000 \\ \cline{2-15}
 & \multirow{2}{*}{$-0.4$} & S.A.  & 0.007 & 0.932 & 0.002 & 0.952 & 0.004 & 0.954 & 0.015 & 0.966 & 0.024 & 0.958 & 0.038 & 0.940 \\
 &                         & Expl. & 0.641 & 0.008 & 0.418 & 0.248 & 0.098 & 0.922 & 0.368 & 0.350 & 1.000 & 0.000 & 1.750 & 0.000 \\ \cline{2-15}
 & \multirow{2}{*}{$-0.8$} & S.A.  & 0.000 & 0.934 & 0.006 & 0.940 & 0.012 & 0.950 & 0.022 & 0.952 & 0.031 & 0.946 & 0.045 & 0.934 \\
 &                         & Expl. & 0.371 & 0.344 & 0.147 & 0.860 & 0.173 & 0.790 & 0.639 & 0.008 & 1.271 & 0.000 & 2.021 & 0.000 \\ \cline{2-15}
 & \multirow{2}{*}{$-1.2$} & S.A.  & 0.009 & 0.934 & 0.014 & 0.946 & 0.020 & 0.948 & 0.030 & 0.946 & 0.039 & 0.942 & 0.054 & 0.926 \\
 &                         & Expl. & 0.167 & 0.818 & 0.056 & 0.944 & 0.376 & 0.334 & 0.842 & 0.000 & 1.474 & 0.000 & 2.224 & 0.000 \\
\hline
\end{tabular}
\end{table}

\begin{table}[htbp]
\centering
\caption[Simulation results for the AC indicator with the logit link ($\symbf{W}_1(t;\symbf{\beta})$).]{Simulation results for the AC indicator with the logit link ($\symbf{W}_1(t;\symbf{\beta})$). Data were generated under our sensitivity analysis assumption using values of $\alpha_0,\alpha_1 = -0.8, -0.4, 0, 0.4, 0.8, 1.2$. The treatment effects at 6 and 12 months were then estimated using our augmented inverse intensity-weighted estimator with the logit link ($\symbf{W}_1(t;\symbf{\beta})$): (a) using the true values of $\alpha_0,\alpha_1$ (rows denoted ``S.A.''), and (b) under the explainable assessment assumption that $\alpha_0=\alpha_1=0$ (rows denoted ``Expl.''). Shown are the absolute value of the empirical bias and the confidence interval coverage across 500 simulations. Confidence intervals are Wald confidence intervals using the parametric bootstrap variance estimate.}
\label{table:coverage_logit}
\footnotesize
\setlength{\tabcolsep}{3pt}
\begin{tabular}{ccc*{12}{c}}
\hline
 & & & \multicolumn{12}{c}{True $\alpha_0$} \\
\cline{4-15}
 & & & \multicolumn{2}{c}{$-0.8$} & \multicolumn{2}{c}{$-0.4$} & \multicolumn{2}{c}{0}
     & \multicolumn{2}{c}{0.4} & \multicolumn{2}{c}{0.8} & \multicolumn{2}{c}{1.2} \\
\cline{4-15}
Month & True $\alpha_1$ & Method & Bias & Cov. & Bias & Cov. & Bias & Cov. & Bias & Cov. & Bias & Cov. & Bias & Cov. \\
\hline
\multirow{12}{*}{6}
 & \multirow{2}{*}{1.2}    & S.A.  & 0.004 & 0.964 & 0.004 & 0.970 & 0.004 & 0.970 & 0.005 & 0.964 & 0.006 & 0.960 & 0.006 & 0.962 \\
 &                         & Expl. & 0.365 & 0.000 & 0.315 & 0.000 & 0.254 & 0.014 & 0.184 & 0.164 & 0.104 & 0.642 & 0.017 & 0.944 \\ \cline{2-15}
 & \multirow{2}{*}{0.8}    & S.A.  & 0.004 & 0.970 & 0.004 & 0.970 & 0.004 & 0.966 & 0.005 & 0.962 & 0.006 & 0.958 & 0.006 & 0.962 \\
 &                         & Expl. & 0.278 & 0.002 & 0.228 & 0.048 & 0.167 & 0.222 & 0.096 & 0.698 & 0.017 & 0.944 & 0.070 & 0.810 \\ \cline{2-15}
 & \multirow{2}{*}{0.4}    & S.A.  & 0.003 & 0.958 & 0.003 & 0.962 & 0.003 & 0.960 & 0.004 & 0.958 & 0.005 & 0.960 & 0.005 & 0.958 \\
 &                         & Expl. & 0.192 & 0.138 & 0.142 & 0.380 & 0.081 & 0.792 & 0.011 & 0.948 & 0.069 & 0.826 & 0.156 & 0.278 \\ \cline{2-15}
 & \multirow{2}{*}{0}      & S.A.  & 0.003 & 0.950 & 0.003 & 0.944 & 0.003 & 0.956 & 0.004 & 0.958 & 0.005 & 0.954 & 0.004 & 0.954 \\
 &                         & Expl. & 0.114 & 0.582 & 0.064 & 0.858 & 0.003 & 0.956 & 0.068 & 0.826 & 0.147 & 0.334 & 0.234 & 0.046 \\ \cline{2-15}
 & \multirow{2}{*}{$-0.4$} & S.A.  & 0.004 & 0.944 & 0.003 & 0.946 & 0.003 & 0.950 & 0.004 & 0.956 & 0.005 & 0.952 & 0.005 & 0.952 \\
 &                         & Expl. & 0.045 & 0.912 & 0.005 & 0.956 & 0.066 & 0.832 & 0.137 & 0.402 & 0.216 & 0.070 & 0.303 & 0.006 \\ \cline{2-15}
 & \multirow{2}{*}{$-0.8$} & S.A.  & 0.002 & 0.938 & 0.002 & 0.942 & 0.002 & 0.946 & 0.003 & 0.948 & 0.004 & 0.952 & 0.004 & 0.946 \\
 &                         & Expl. & 0.015 & 0.948 & 0.065 & 0.832 & 0.126 & 0.478 & 0.197 & 0.124 & 0.276 & 0.018 & 0.363 & 0.000 \\
\hline
\multirow{12}{*}{12}
 & \multirow{2}{*}{1.2}    & S.A.  & 0.001 & 0.950 & 0.001 & 0.952 & 0.002 & 0.950 & 0.002 & 0.954 & 0.003 & 0.950 & 0.003 & 0.952 \\
 &                         & Expl. & 0.366 & 0.000 & 0.312 & 0.000 & 0.246 & 0.012 & 0.172 & 0.206 & 0.089 & 0.694 & 0.003 & 0.958 \\ \cline{2-15}
 & \multirow{2}{*}{0.8}    & S.A.  & 0.001 & 0.954 & 0.001 & 0.950 & 0.002 & 0.956 & 0.002 & 0.954 & 0.003 & 0.954 & 0.003 & 0.956 \\
 &                         & Expl. & 0.281 & 0.000 & 0.226 & 0.028 & 0.161 & 0.268 & 0.086 & 0.708 & 0.003 & 0.958 & 0.083 & 0.700 \\ \cline{2-15}
 & \multirow{2}{*}{0.4}    & S.A.  & 0.002 & 0.958 & 0.002 & 0.952 & 0.003 & 0.958 & 0.003 & 0.958 & 0.004 & 0.954 & 0.004 & 0.952 \\
 &                         & Expl. & 0.196 & 0.096 & 0.141 & 0.378 & 0.076 & 0.752 & 0.001 & 0.954 & 0.081 & 0.716 & 0.168 & 0.234 \\ \cline{2-15}
 & \multirow{2}{*}{0}      & S.A.  & 0.002 & 0.950 & 0.003 & 0.954 & 0.003 & 0.954 & 0.003 & 0.958 & 0.004 & 0.954 & 0.004 & 0.950 \\
 &                         & Expl. & 0.117 & 0.526 & 0.062 & 0.806 & 0.003 & 0.954 & 0.078 & 0.734 & 0.160 & 0.262 & 0.247 & 0.024 \\ \cline{2-15}
 & \multirow{2}{*}{$-0.4$} & S.A.  & 0.002 & 0.956 & 0.003 & 0.958 & 0.004 & 0.958 & 0.004 & 0.948 & 0.005 & 0.952 & 0.004 & 0.950 \\
 &                         & Expl. & 0.046 & 0.890 & 0.009 & 0.954 & 0.075 & 0.748 & 0.149 & 0.322 & 0.232 & 0.036 & 0.318 & 0.000 \\ \cline{2-15}
 & \multirow{2}{*}{$-0.8$} & S.A.  & 0.003 & 0.952 & 0.004 & 0.950 & 0.005 & 0.952 & 0.004 & 0.948 & 0.005 & 0.948 & 0.005 & 0.948 \\
 &                         & Expl. & 0.017 & 0.944 & 0.071 & 0.772 & 0.137 & 0.384 & 0.211 & 0.060 & 0.294 & 0.000 & 0.380 & 0.000 \\
\hline
\end{tabular}
\end{table}

We conducted an additional simulation study to evaluate the finite-sample performance of the proposed bias-corrected point estimators and confidence sets for the sensitivity parameter range. Sample sizes of 200, 400, and 600 were considered, with 500 simulation replicates generated for each scenario. To obtain the bias-corrected point estimates of $\alpha_a^l$ and $\alpha_a^u$, we applied Algorithm \ref{alg:bias_corr} using 100 multiplier bootstrap replicates. To construct the corresponding bias-corrected confidence sets, we implemented the double-bootstrap procedure in Algorithm \ref{alg:double_boot} with 400 outer bootstrap replicates and 50 inner bootstrap replicates.

\begin{figure}[htbp]
\centering
\begin{subfigure}[b]{1\textwidth}
    \centering
    \includegraphics[width=\textwidth]{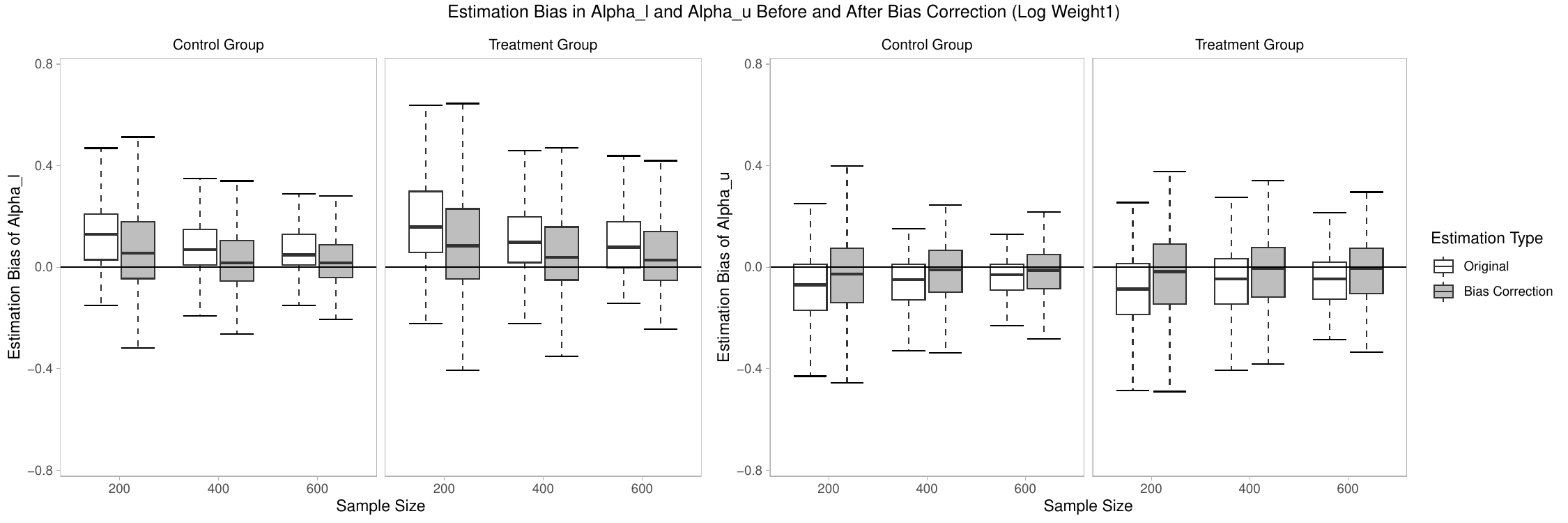}
\end{subfigure}

\medskip

\begin{subfigure}[b]{1\textwidth}   
    \centering 
    \includegraphics[width=\textwidth]{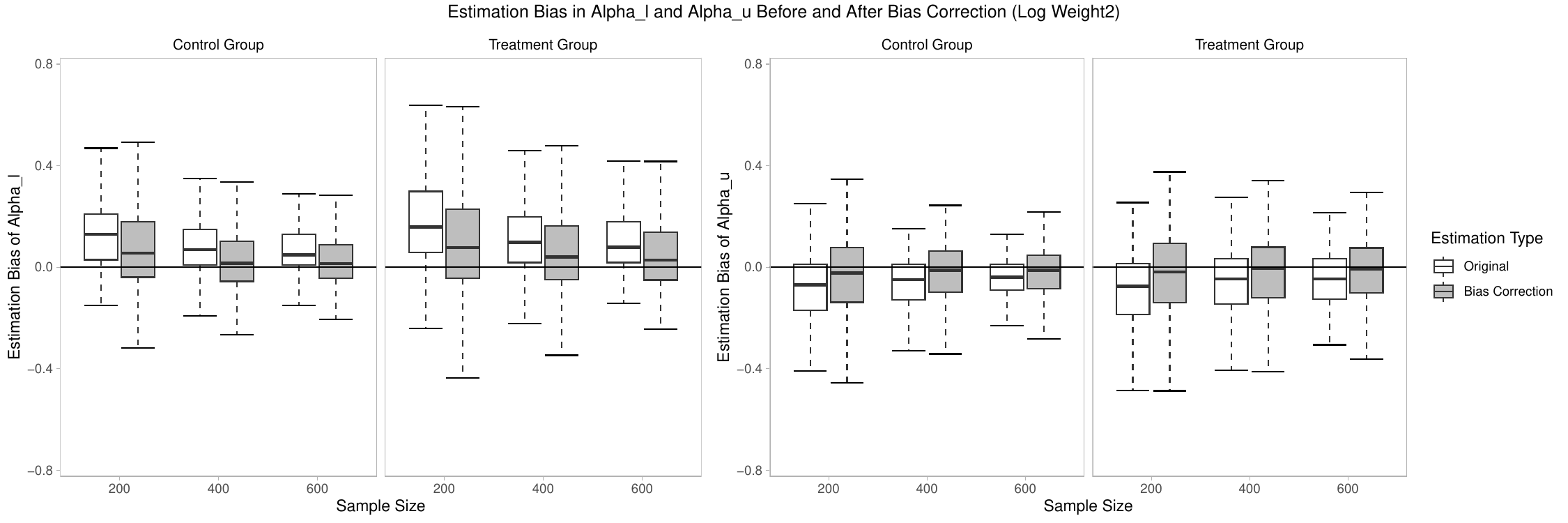}
   \end{subfigure}

   \medskip

\begin{subfigure}[b]{1\textwidth}   
    \centering 
    \includegraphics[width=\textwidth]{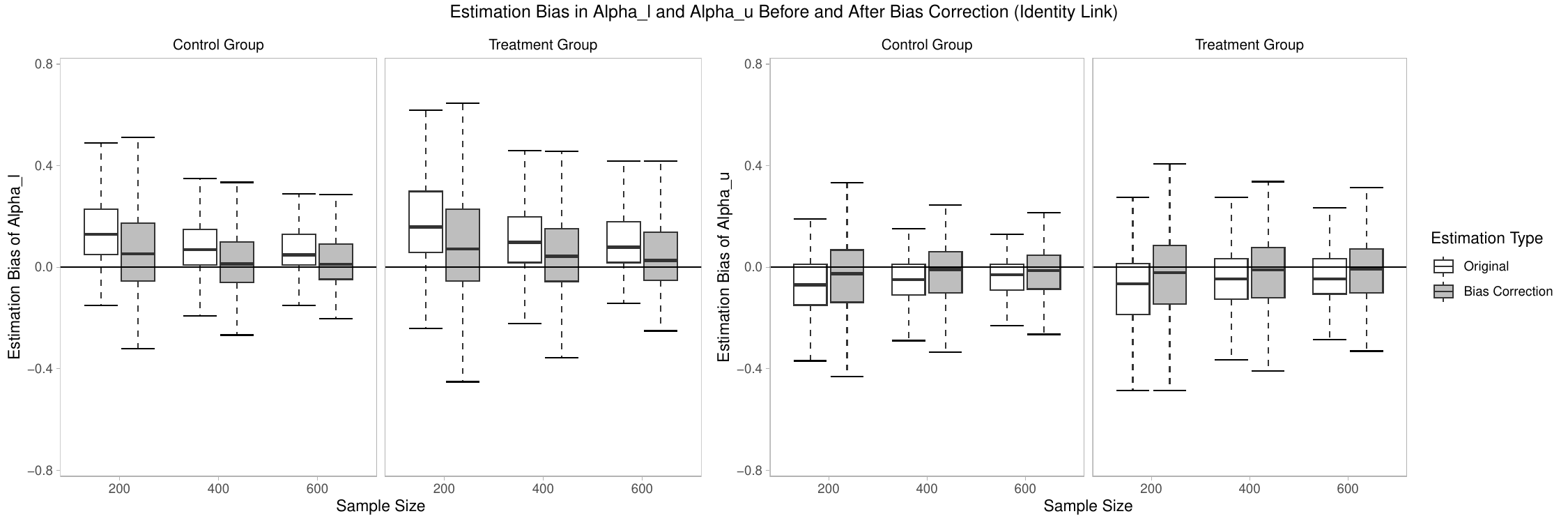}
   \end{subfigure}
   
\caption[Boxplots of the estimation bias of the lower and upper sensitivity parameter bounds, $\alpha_a^l$ and $\alpha_a^u$, before and after bias correction for the ACQ score under sample sizes of 200, 400, and 600.]{Boxplots of the estimation bias of the lower and upper sensitivity parameter bounds, $\alpha_a^l$ and $\alpha_a^u$, before and after bias correction for the ACQ score under sample sizes of 200, 400, and 600. The first row of panels corresponds to the log link with weight function $\symbf{W}_1(t;\symbf{\beta})$, the second row corresponds to the log link with weight function $\symbf{W}_2(t;\symbf{\beta})$, and the third row corresponds to the identity link. The ``Original'' estimates are obtained using Algorithm \ref{alg:point_est}, whereas the ``Bias Correction'' estimates are obtained using Algorithm \ref{alg:bias_corr} with 100 multiplier bootstrap replicates. Within each row, the left panel displays the estimation bias of $\alpha_a^l$, and the right panel displays the estimation bias of $\alpha_a^u$.} 
\label{fig:boxplot_log}
\end{figure}

\begin{figure}[htbp]
\centering
\begin{subfigure}[b]{1\textwidth}
    \centering
    \includegraphics[width=\textwidth]{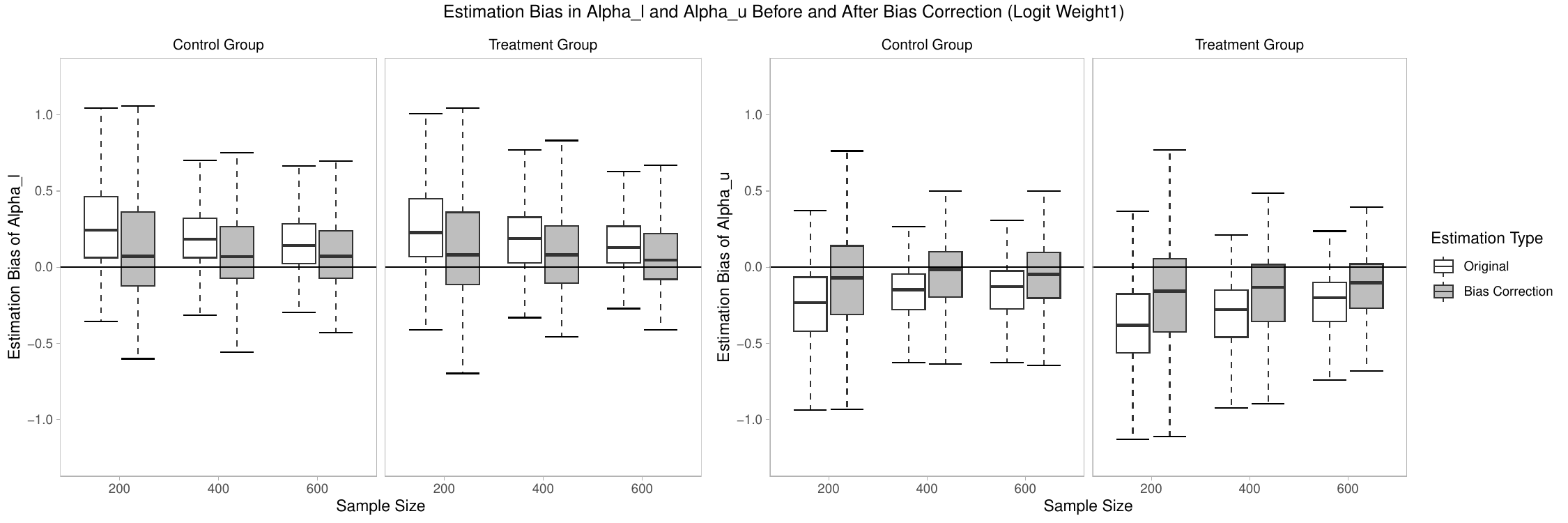}
\end{subfigure}

\medskip

\begin{subfigure}[b]{1\textwidth}   
    \centering 
    \includegraphics[width=\textwidth]{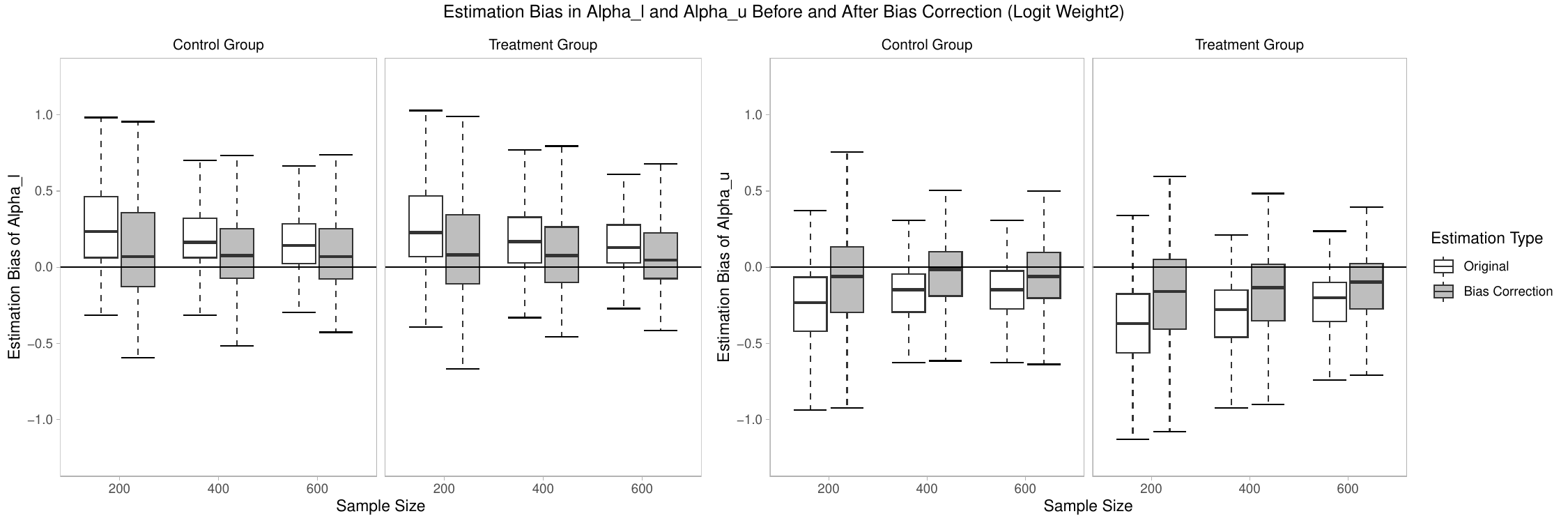}
   \end{subfigure}
\caption[Boxplots of the estimation bias of the lower and upper sensitivity parameter bounds, $\alpha_a^l$ and $\alpha_a^u$, before and after bias correction for the AC indicator under sample sizes of 200, 400, and 600.]{Boxplots of the estimation bias of the lower and upper sensitivity parameter bounds, $\alpha_a^l$ and $\alpha_a^u$, before and after bias correction for the AC indicator under sample sizes of 200, 400, and 600. The upper panels correspond to the logit link with weight function $\symbf{W}_1(t;\symbf{\beta})$, and the lower panels correspond to the logit link with weight function $\symbf{W}_2(t;\symbf{\beta})$. The ``Original'' estimates are obtained using Algorithm \ref{alg:point_est}, whereas the ``Bias Correction'' estimates are obtained using Algorithm \ref{alg:bias_corr} with 100 multiplier bootstrap replicates. Within each panel, the left plot displays the estimation bias of $\alpha_a^l$, and the right plot displays the estimation bias of $\alpha_a^u$.} 
\label{fig:boxplot_logit}
\end{figure}

Figures \ref{fig:boxplot_log} and \ref{fig:boxplot_logit} compare the estimation bias of the original point estimators obtained from Algorithm \ref{alg:point_est} with that of the bias-corrected estimators obtained from Algorithm \ref{alg:bias_corr}. As expected, the absolute bias of both the original and bias-corrected estimators decreases as the sample size increases. Moreover, across all sample sizes, treatment groups, and link functions, the bias-corrected estimators consistently exhibit smaller bias than the corresponding original estimators, demonstrating the effectiveness of the proposed bias-correction procedure.

A notable difference is observed between the continuous and binary outcome settings. For the ACQ score (identity and log links), the bias decreases rapidly with increasing sample size. When the sample size reaches 600, the boxplots are generally centered around zero, indicating that the remaining bias is negligible in most scenarios. In contrast, for the asthma control indicator (logit link), the bias remains noticeably larger. Although the bias is substantially reduced after bias correction and continues to decrease as the sample size increases, several boxplots at a sample size of 600 remain shifted below zero, indicating the presence of residual negative bias. This observation is consistent with the more challenging finite-sample behavior of the logit-link estimator and helps explain the relatively lower empirical coverage observed for the corresponding confidence sets.

\begin{table}[htbp]
\centering
\caption[Empirical coverage probabilities of the 95\% confidence sets for the sensitivity parameter range under sample sizes of 200, 400, and 600.]{Empirical coverage probabilities of the 95\% confidence sets for the sensitivity parameter range under sample sizes of 200, 400, and 600. The confidence sets were constructed using the basic bias-corrected bootstrap confidence interval and the Wald confidence interval based on the bias-corrected point estimates and bias-corrected bootstrap variance estimates.}
\label{table:simu_alpha_selection}
\small
\setlength{\tabcolsep}{4pt}
\begin{tabular}{lllcccccc}
\hline
 & & & \multicolumn{3}{c}{Basic} & \multicolumn{3}{c}{Wald} \\
\cline{4-9}
Outcome & Trt Group & Link & 200 & 400 & 600 & 200 & 400 & 600 \\
\hline
\multirow{6}{*}{ACQ Score}
 & \multirow{3}{*}{Treatment} & Identity                                 & 0.778 & 0.830 & 0.862 & 0.904 & 0.948 & 0.952 \\ \cline{3-9}
 &                            & Log ($\symbf{W}_1(t;\symbf{\beta})$)     & 0.778 & 0.830 & 0.868 & 0.908 & 0.944 & 0.952 \\ \cline{3-9}
 &                            & Log ($\symbf{W}_2(t;\symbf{\beta})$)     & 0.776 & 0.828 & 0.872 & 0.912 & 0.942 & 0.958 \\ \cline{2-9}
 & \multirow{3}{*}{Control}   & Identity                                 & 0.808 & 0.848 & 0.858 & 0.938 & 0.930 & 0.944 \\ \cline{3-9}
 &                            & Log ($\symbf{W}_1(t;\symbf{\beta})$)     & 0.810 & 0.848 & 0.872 & 0.936 & 0.932 & 0.944 \\ \cline{3-9}
 &                            & Log ($\symbf{W}_2(t;\symbf{\beta})$)     & 0.818 & 0.852 & 0.870 & 0.938 & 0.938 & 0.944 \\
\hline
\multirow{4}{*}{AC Indicator}
 & \multirow{2}{*}{Treatment} & Logit ($\symbf{W}_1(t;\symbf{\beta})$)   & 0.858 & 0.886 & 0.872 & 0.890 & 0.898 & 0.890 \\ \cline{3-9}
 &                            & Logit ($\symbf{W}_2(t;\symbf{\beta})$)   & 0.862 & 0.898 & 0.868 & 0.882 & 0.902 & 0.882 \\ \cline{2-9}
 & \multirow{2}{*}{Control}   & Logit ($\symbf{W}_1(t;\symbf{\beta})$)   & 0.856 & 0.880 & 0.896 & 0.886 & 0.906 & 0.916 \\ \cline{3-9}
 &                            & Logit ($\symbf{W}_2(t;\symbf{\beta})$)   & 0.848 & 0.882 & 0.884 & 0.886 & 0.902 & 0.912 \\
\hline
\end{tabular}
\end{table}

Table \ref{table:simu_alpha_selection} summarizes the empirical coverage probabilities of the confidence sets across treatment groups, link functions, weight functions, and sample sizes. During the simulation studies, we observed that, for the asthma control indicator (logit link), the point estimation procedure in Algorithm \ref{alg:point_est} occasionally failed to identify any value of $\alpha$ within the initial search space that satisfied the outcome mean constraint. For these simulated datasets, the corresponding confidence sets were treated as non-covering. The frequency of such failures decreased substantially as the sample size increased from 200 to 400 and 600, indicating improved finite-sample performance of the proposed estimator. Similarly, the number of NA values arising during the bootstrap bias-correction procedure also decreased with increasing sample size. Overall, the Wald confidence interval consistently achieved better empirical coverage than the basic bootstrap confidence interval. For most scenarios, the empirical coverage approached the nominal 95\% level as the sample size increased. This trend was particularly evident for the ACQ score, where the Wald confidence intervals attained coverage probabilities close to 0.95 across both treatment groups and all link and weight function combinations when the sample size was 600. In contrast, for the asthma control indicator, the empirical coverage remained below the nominal level even at the largest sample size. Moreover, the coverage probabilities at sample size 600 were similar to those at sample size 400, despite the reduction in finite-sample bias. This finding is consistent with the persistent non-negligible bias observed for the logit-link estimator and suggests that the remaining bias continues to affect the finite-sample accuracy of the confidence sets. Consequently, increasing the sample size beyond 400 provided only limited improvement in the empirical coverage for the binary outcome.

\section{Discussion}
\label{sec:discussion}

In modern clinical trials, the timing of outcome assessments often varies across participants, and assessment times may be associated with the underlying outcome process. If this dependence is not appropriately accounted for, treatment effect estimates at the target assessment times may be biased and may reflect differences in assessment timing rather than differences in participants' underlying outcome process. Most existing analysis methods for this setting assume a certain type of informativeness. However, multiple assumptions may be consistent with the observed study data, and as demonstrated in our simulation studies, different assumptions can lead to substantially different treatment effect estimates. The proposed sensitivity analysis framework enables inference for treatment effects under a range of plausible assumptions, providing a more comprehensive representation of the overall uncertainty in the study conclusions.

Compared with the previous sensitivity analysis framework of \cite{Smith2024}, the proposed methodology substantially broadens its scope. Our approach accommodates a broad class of outcome types, including continuous, binary, and count outcomes, through appropriate link functions. Furthermore, rather than requiring investigators to specify a plausible range of sensitivity parameters solely based on subject-matter knowledge, we develop a data-driven procedure for identifying plausible ranges of the sensitivity parameters and constructing confidence sets based on the plausible bounds for the target mean outcome. This additional layer of inference provides a principled approach to accounting for uncertainty in the specification of the sensitivity parameters and leads to more interpretable sensitivity analyses.

Several limitations need further investigation. The estimation procedure continues to rely on the non-future dependence assumption, and as in our previous work \citep{Smith2024}, assumes that no participants are censored throughout the study. Extending the proposed framework to relax these assumptions and accommodate informative censoring is an important direction for future research. In addition, although we propose a data-driven procedure for identifying plausible ranges of the sensitivity parameters and constructing confidence sets, the empirical coverage probabilities in the simulation studies were slightly below the nominal level in some settings, suggesting that larger sample sizes or further methodological refinements may be needed to improve finite-sample performance. We also observed occasional NA values during the sensitivity parameter selection procedure, primarily due to extreme values of the weighted influence function. These numerical issues occurred more frequently in the inner bootstrap procedure and under the logit link, and may affect the accuracy of the estimated sensitivity parameter range. Developing more numerically stable procedures that better handle extreme values of the weighted influence function is another important direction for future research.

\section*{Funding Statement}

Research reported in this article was funded through a Patient-Centered Outcomes Research Institute (PCORI) award ME-2021C3-24972. Shu Yang, Yujing Gao, Andrew Redd, Andrea Apter and Daniel Scharfstein received funding from this award. The views and statements presented in this article are solely the responsibility of the authors and do not necessarily represent the views of PCORI. The ARC study was funded through PCORI award AS-1307-05218.

\section*{Data Availability Statement}

The data from the Asthma Research for the Community study that support the findings in this paper will be shared on reasonable request to the corresponding author. All code required for the replication of the simulations and the case study is publicly available at \url{https://github.com/yujing24/sensitivity_general.git}.


\bibliographystyle{apalike}
\bibliography{bibliography}

\clearpage
\appendix
\renewcommand{\thesection}{\Alph{section}}
\setcounter{equation}{0}
\renewcommand{\theequation}{\thesection.\arabic{equation}}

\section{Theory}

\subsection{Proof of Propositions \ref{chp4:prop:lambda_rho} and \ref{chp4:prop:identify}}
\label{apxB:appendix:condition}



\subsubsection{Proof of Proposition \ref{chp4:prop:lambda_rho}}
\label{app:prop1}

\begin{proof}[Proof of Proposition~\ref{chp4:prop:lambda_rho} ]

Fix a time $t \in [t_1,t_2]$ and a value $y(t)$.  For each $\epsilon > 0$ let: 
\begin{align*}
A_\epsilon := & \ dF\big\{y(t) \mid N(t+\epsilon)-N(t-)=0, \overline{\symbf{O}}(t) \big\} \\
B_\epsilon := & \ dF\big\{y(t)\mid N(t+\epsilon)-N(t-) =1, \overline{\symbf{O}}(t) \big\} 
\frac{ \exp \{ \alpha Y(t) \} }{ \mathbb{E} \big[ \exp\{ \alpha Y(t) \} \mid  N(t+\epsilon)-N(t-)=1, \overline{\symbf{O}}(t) \big] }.
\end{align*}

\noindent For each $\epsilon >0$, by Bayes' theorem we have:
\begin{align}
 & \mathbb{P}\big\{ N(t+\epsilon)-N(t-) = 1 \mid \overline{\symbf{O}}(t), Y(t)=y(t)\big\} \nonumber \\
={} & \frac{ \mathbb{P} \big\{ N(t+\epsilon)-N(t-)=1 \mid \overline{\symbf{O}}(t) \big\}}{
  \mathbb{P} \big\{ N(t+\epsilon)-N(t-)=1 \mid \overline{\symbf{O}}(t) \big\} + \mathbb{P} \big\{ N(t+\epsilon)-N(t-)=0 \mid \overline{\symbf{O}}(t) \big\} D_\epsilon } ,
\label{bayes}
\end{align}

\noindent for $D_\epsilon :=  A_\epsilon / dF\big\{y(t) \mid N(t+\epsilon)-N(t-)=1, \overline{\symbf{O}}(t) \big\}$.  

Dividing each side of \eqref{bayes} by $\epsilon$ and taking the limit as $\epsilon \to 0^+$, the left side of \eqref{bayes} is
\begin{align*}
\lim_{\epsilon \to 0^+} & \frac{ 
\mathbb{P} \big\{ N(t+\epsilon)-N(t-) =1 \mid \overline{\symbf{O}}(t), Y(t)=y(t) \big\} }{\epsilon} = \rho\big\{t\mid \overline{\symbf{O}}(t), Y(t)\big\},
\end{align*}
and the numerator of the right side of \eqref{bayes} is
\begin{align*}
\lim_{\epsilon \to 0^+} & \frac{ 
\mathbb{P} \big\{ N(t+\epsilon)-N(t-) =1 \mid \overline{\symbf{O}}(t) \big\} }{\epsilon} = \lambda\big\{t\mid \overline{\symbf{O}}(t)\big\}.
\end{align*}

For the denominator of the right side of \eqref{bayes}, write:

\begin{align*}
D_\epsilon = & \  
\frac{ B_\epsilon}{ dF\big\{y(t)\mid N(t+\epsilon)-N(t-)=1 , \overline{\symbf{O}}(t) \big\} }
+
\frac{ A_\epsilon- B_\epsilon }{ dF\big\{y(t)\mid N(t+\epsilon)-N(t-)=1, \overline{\symbf{O}}(t) \big\} }\\
= & \ \frac{ \exp\{ \alpha Y(t) \} }{ \mathbb{E} \big[ \exp \{ \alpha Y(t) \} \mid N(t+\epsilon)-N(t-) =1, \overline{\symbf{O}}(t) \big]  }  
+ 
\frac{ A_\epsilon- B_\epsilon}{ dF\big\{ y(t) | N(t+\epsilon)-N(t-)=1, \overline{\symbf{O}}(t) \big\} } ,
\end{align*}

\noindent and note that:
\begin{align*}
\lim_{\epsilon \to 0^+} (A_\epsilon  - B_\epsilon) 
= & \  dF \big\{y(t) \mid \Delta N(t)=0, \overline{\symbf{O}}(t) \big\} \\
& \quad -  
dF \big\{y(t) \mid \Delta N(t)=1, \overline{\symbf{O}}(t) \big\}
\frac{ \exp \{ \alpha Y(t) \} }{ \mathbb{E} \big[ \exp \{ \alpha Y(t) \} \mid \Delta N(t)=1, \overline{\symbf{O}}(t) \big] } \\
= & \  0 
\end{align*}
\noindent by Assumption \ref{chp4:assum:tilting}.  

Therefore, 
\begin{align*}
\lim_{\epsilon \to 0^+} D_{\epsilon} = \frac{ \exp \{ \alpha Y(t) \} }{ \mathbb{E} \big[ \exp \{ \alpha Y(t) \} \mid \Delta N(t)=1, \overline{\symbf{O}}(t) \big] }
\end{align*}
completes the proof.

\end{proof}

\subsubsection{Proof of Proposition \ref{chp4:prop:identify}}
\label{app:prop2}

Before we prove Proposition \ref{chp4:prop:identify}, we provide the following proposition to show that the mean outcome $\mu(t)$ in each arm is identified from the observed data.

\begin{proposition}
\label{prop:mu_identity}
For each time $t$ with $\lambda\{t\mid \overline{\symbf{O}}(t)\} > 0$, the mean $\mu(t)$ in each arm is identified under Assumption \ref{chp4:assum:tilting}, as
\begin{align*}
\mu(t) = \mathbb{E} \left[  \frac{\mathbb{E} \big[ Y(t) \exp\{ \alpha Y(t) \}\mid \Delta N(t)=1,\overline{\symbf{O}}(t) \big]}{\mathbb{E} \big[  \exp\{ \alpha Y(t) \}\mid \Delta N(t)=1, \overline{\symbf{O}}(t) \big] } \right].
\end{align*}
\end{proposition}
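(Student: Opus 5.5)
The plan is to identify $\mu(t)$ in three moves: iterate expectations over the observed history $\overline{\symbf{O}}(t)$, split the conditional mean according to whether an assessment occurs at $t$, and use the tilting assumption to express both pieces through the distribution of observed outcomes.

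First, by the tower property,
\begin{align*}
\mu(t) = \mathbb{E}\big[\mathbb{E}\{Y(t)\mid \overline{\symbf{O}}(t)\}\big],
\end{align*}
so it suffices to show that
\begin{align*}
\mathbb{E}\{Y(t)\mid \overline{\symbf{O}}(t)\} = \frac{\mathbb{E}[Y(t)e^{\alpha Y(t)}\mid \Delta N(t)=1,\overline{\symbf{O}}(t)]}{\mathbb{E}[e^{\alpha Y(t)}\mid \Delta N(t)=1,\overline{\symbf{O}}(t)]}.
\end{align*}
The right-hand side is a functional of the observed-data distribution of outcomes at assessments, given the history. It is therefore identified wherever $\lambda\{t\mid\overline{\symbf{O}}(t)\}>0$.

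Second, fix $\epsilon>0$ and decompose over the event $\{N(t+\epsilon)-N(t-)=1\}$ and its complement:
\begin{align*}
\mathbb{E}\{Y(t)\mid \overline{\symbf{O}}(t)\} = p_\epsilon\, \mathbb{E}\{Y(t)\mid N(t+\epsilon)-N(t-)=1,\overline{\symbf{O}}(t)\} + (1-p_\epsilon)\, \mathbb{E}\{Y(t)\mid N(t+\epsilon)-N(t-)=0,\overline{\symbf{O}}(t)\},
\end{align*}
where $p_\epsilon=\mathbb{P}\{N(t+\epsilon)-N(t-)=1\mid\overline{\symbf{O}}(t)\}$. The left side does not depend on $\epsilon$. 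Since $p_\epsilon \approx \lambda\epsilon \to 0$, it equals the limit of the second conditional expectation, provided the first term stays bounded. That requires mild integrability of $Y(t)$, e.g.\ the finiteness of $c\{\overline{\symbf{O}}(t);\alpha\}$ together with finite moments.

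Third, apply Assumption \ref{chp4:assum:tilting} to the limiting distribution $dF\{y(t)\mid \Delta N(t)=0,\overline{\symbf{O}}(t)\}$. This gives
\begin{align*}
\int y\, e^{\alpha y}\, dF\{y\mid \Delta N(t)=1,\overline{\symbf{O}}(t)\}\big/c\{\overline{\symbf{O}}(t);\alpha\},
\end{align*}
which is exactly the displayed ratio. Substituting into the tower identity completes the argument.

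The main obstacle is the exchange of limit and expectation in the second step. We need $\lim_{\epsilon\to0^+}\mathbb{E}\{Y(t)\mid N(t+\epsilon)-N(t-)=0,\overline{\symbf{O}}(t)\}$ to equal the mean under the limiting distribution $dF\{\cdot\mid\Delta N(t)=0,\overline{\symbf{O}}(t)\}$, which the paper defines only as a pointwise limit of distributions. I would handle this by dominated convergence, using either a boundedness or uniform-integrability condition on $Y(t)$ (as in Theorem \ref{chp4:theorem:mono}) or the finiteness of $c$. In the same spirit as the proof of Proposition \ref{chp4:prop:lambda_rho}, one can sidestep this issue by treating the limiting conditional laws as the primitive objects.
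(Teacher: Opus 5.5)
Your proposal is correct and follows essentially the same route as the paper. Both arguments decompose the law of $Y(t)$ given $\overline{\symbf{O}}(t)$ over the events $\{N(t+\epsilon)-N(t-)=1\}$ and $\{N(t+\epsilon)-N(t-)=0\}$, let $\epsilon\to0^+$ so the weights become $0$ and $1$, substitute the tilted density from Assumption~\ref{chp4:assum:tilting}, and take the outer expectation. The limit interchange you flag is one the paper passes over silently, and it is milder than you fear: because $\mathbb{P}\{N(t+\epsilon)-N(t-)=1\mid\overline{\symbf{O}}(t)\}\to0$, the unassessed conditional law converges in total variation to the law of $Y(t)$ given $\overline{\symbf{O}}(t)$, so only integrability of $Y(t)$ is needed.
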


\begin{proof}[Proof of Proposition~{\rm\ref{prop:mu_identity}}]

Let $t$ be a time with $\lambda\{t \mid \overline{\symbf{O}}(t)\} >0$.  For each $\epsilon > 0$,

\begin{align}
\mu(t) = & \int_{\overline{\symbf{o}}(t)} \int_{y(t)} y(t)\,
  dF \big\{ y(t) \mid \overline{\symbf{O}}(t)=\overline{\symbf{o}}(t) \big\}\, dF \big\{\overline{\symbf{o}}(t) \big\} \nonumber \\
={} & \int_{\overline{\symbf{o}}(t)} \int_{y(t)} y(t)
  \Big[ dF \big\{ y(t) \mid N(t+\epsilon)-N(t-)=0, \overline{\symbf{O}}(t)=\overline{\symbf{o}}(t) \big\} \nonumber \\
 & \hspace{1.2in} \times \mathbb{P} \big\{ N(t+\epsilon)-N(t-)=0 \mid \overline{\symbf{O}}(t)=\overline{\symbf{o}}(t) \big\} \nonumber \\
 & \hspace{0.9in} + dF \big\{ y(t) \mid N(t+\epsilon)-N(t-)=1, \overline{\symbf{O}}(t)=\overline{\symbf{o}}(t) \big\} \nonumber \\
 & \hspace{1.2in} \times \mathbb{P} \big\{ N(t+\epsilon)-N(t-)=1 \mid \overline{\symbf{O}}(t)=\overline{\symbf{o}}(t) \big\} \Big]\, dF\big\{\overline{\symbf{o}}(t) \big\}. \nonumber
\end{align}

Taking the limit as $\epsilon \to 0^+$, we have:
\begin{align}
\mu(t) ={} & \int_{\overline{\symbf{o}}(t)} \int_{y(t)} y(t) \Big[ dF \big\{ y(t) \mid \Delta N(t)=0, \overline{\symbf{O}}(t)=\overline{\symbf{o}}(t) \big\} \times 1 \nonumber \\
 & \hspace{0.9in} + dF \big\{ y(t) \mid \Delta N(t)=1, \overline{\symbf{O}}(t)=\overline{\symbf{o}}(t) \big\} \times 0 \Big]\, dF \big\{\overline{\symbf{o}}(t) \big\} \nonumber \\
={} & \int_{\overline{\symbf{o}}(t)} \int_{y(t)} \frac{ y(t) \exp \big\{ \alpha y(t) \big\}\, dF \big\{ y(t) \mid \Delta N(t)=1, \overline{\symbf{O}}(t) =\overline{\symbf{o}}(t) \big\} }{ \mathbb{E} \big[ \exp \big\{\alpha Y(t) \big\} \mid \Delta N(t) =1, \overline{\symbf{O}}(t)=\overline{\symbf{o}}(t) \big] }\, dF\big\{\overline{\symbf{o}}(t) \big\} \nonumber \\
={} & \mathbb{E} \left[ \frac{\mathbb{E} \big[ Y(t) \exp\{ \alpha Y(t) \}\mid \Delta N(t)=1,\overline{\symbf{O}}(t) \big]}{\mathbb{E} \big[ \exp\{ \alpha Y(t) \}\mid \Delta N(t)=1, \overline{\symbf{O}}(t) \big] } \right]. \nonumber
\end{align}

\end{proof}

Based on Proposition \ref{prop:mu_identity}, we provide the proof for the Proposition \ref{chp4:prop:identify}.

\begin{proof}
[Proof of Proposition~{\rm\ref{chp4:prop:identify}}]
Under Assumptions \ref{chp4:assum:tilting} and \ref{chp4:assum:positivity}, by Proposition \ref{prop:mu_identity}, $\mu(t)$ is identified from the observed data for each $t\in [t_1,t_2]$, and under Assumption \ref{chp4:assum:marginal}, 
$\int_{t=t_1}^{t_2} \symbf{B}(t) \symbf{B}(t)^\prime dt$ is invertible, then 
$$\symbf{\beta} = \left\{\int_{t=t_1}^{t_2} \symbf{B}(t) \symbf{B}(t)^\prime dt\right\}^{-1} \int_{t=t_1}^{t_2} \symbf{B}(t) s^{-1}\left\{\mu(t)\right\}dt.$$
\end{proof}


\subsection{Influence Function Derivation}
\label{apxB:app:IF_deriv}

\begin{proof}[Proof of Theorem~\ref{chp4:theorem:IF}] 
In this appendix, we prove Theorem ~\ref{chp4:theorem:IF} using the semiparametric theory \citep{scharfstein1999adjusting, tsiatis2006semiparametric, Smith2024}. We first define the corresponding notation and review elements of this theory applied to our context.

Let $\{\symbf{\mathcal{F}}\}_{t=0}^{\tau}$ be the filtration defined by $\symbf{\mathcal{F}}_t = (\{ \symbf{X}(u), Y(u): \Delta N(u) = 1, 0\leq u \leq t\}, \{N(u): 0\leq u \leq t\})$. Here $\symbf{\mathcal{F}}_t$ is the observed past up through time $t$, and $\symbf{\mathcal{F}}_{t-} = \overline{\symbf{O}}(t)$ is the observed past prior to, but not including, time $t$. The intensity function $\lambda\{t \mid \overline{\symbf{O}}(t)\}$ is the intensity for the counting process $\{N(t):t\in[0,\tau]\}$ with respect to the filtration $\{\symbf{\mathcal{F}}_t\}_{t=0}^{\tau}$. Additionally, let $\symbf{L} = \{\symbf{X}(t), Y(t): 0\leq t\leq \tau\}$ and let $\{\symbf{\mathcal{G}}_t\}_{t=0}^{\tau}$ be the filtration defined by $\symbf{\mathcal{G}}_t = (\symbf{L}, \{N(u): 0\leq u\leq t\}) = (\symbf{L}, \symbf{\mathcal{F}}_t)$. Under Assumption \ref{chp4:assum:nofuture}, $\rho\big\{t \mid \overline{\symbf{O}}(t), Y(t)\big\}$ is the intensity function for $\{N(t)\}_{t=0}^{\tau}$ with respect to the filtration $\{\symbf{\mathcal{G}}_t\}_{t=0}^\tau$. 

We will refer to $\symbf{L} = \{\symbf{X}(t), Y(t): 0\leq t\leq \tau\}$ as the full data. We write $\symbf{R} = \{N(t):t_1\leq t\leq t_2\}$, and we refer to $\symbf{Z} = (\symbf{L}, \symbf{R})$ as the total data. Let $\mathcal{H}^{L}$, $\mathcal{H}^{O}$, $\mathcal{H}^{Z}$ be the Hilbert spaces of mean-zero, finite-variance, p-dimensional (where $p$ is the length of $\symbf{\beta}$) functions of the full, observed, and total data respectively, with the covariance inner product. Full-data influence functions are normalized elements of the orthogonal complement of the full-data nuisance tangent space $\Lambda( F_{L})$. Observed data influence functions are normalized elements of $\Lambda^{O,\perp} = \Lambda_1^{O,\perp} \cap \Lambda_2^{O,\perp}$, where $\Lambda_1^{O}$ is the image of $\Lambda(F_{L})$, and $\Lambda_2^{O}$ is the image of the coarsening nuisance tangent space $\Lambda(F_{R|L})$, under the map $A: \mathcal{H}^{Z} \rightarrow \mathcal{H}^{O}$ given by $A(\cdot) = \mathbb{E}\{\cdot \mid \symbf{O}\}$. Therefore, the orthogonal complement $\Lambda_1^{O,\perp}$ is equal to the null space of the adjoint of the map $A_1: \Lambda (F_L) \rightarrow \mathcal{H}^{O}$ given by $A_1(\cdot) = \mathbb{E}\{\cdot \mid \symbf{O}\}$. Similarly, $\Lambda_2^{O,\perp}$ is the null space of the adjoint of the map $A_2: \Lambda(F_{R|L})\rightarrow \mathcal{H}^{O}$ given by $A_2(\cdot) = \mathbb{E}\{\cdot \mid \symbf{O}\}$. The adjoint of $A_1$ is the map $A_1^*: \mathcal{H}^{O}\rightarrow \Lambda (F_{L})$ given by $A_1^*(\cdot) = \Pi[\mathbb{E}\{\cdot \mid \symbf{L}\}| \Lambda(F_L)]$. Therefore, $\Lambda_1^{O,\perp} = \{h(\symbf{O}) \in \mathcal{H}^{O}: \mathcal{K}(g(\symbf{O})) \in \Lambda (F_L)^{\perp}\}$, where $\mathcal{K}: \mathcal{H}^{O} \rightarrow \mathcal{H}^{L}$ is the map defined by $\mathcal{K}(g(\symbf{O})) = \mathbb{E}\{g(\symbf{O})\mid \symbf{L}\} $. Again using adjoints, $\Lambda_2^{O,\perp} = \{g(\symbf{O}) \in \mathcal{H}^{O}: g(\symbf{O}) \in \Lambda (F_{R|L})^{\perp}\} = \mathcal{H}^{O} \cap \Lambda(F_{R|L})^{\perp}$, and therefore, $\Lambda^{O,\perp} = \Lambda_1^{O,\perp} \cap \Lambda(F_{R|L}^{\perp})$.

In the following part, we first derive full-data influence functions $\symbf{\varphi}_j\left(\symbf{L}\right)$, obtain elements of $\Lambda_1^{O,\perp}$ by inverse-weighting $\symbf{\varphi}_j\left(\symbf{L}\right)$ and adding elements of the augmentation space. Then we identify one of these elements that is also orthogonal to $\Lambda(F_{R|L})$. By the above theory, this element is in the space $\Lambda^{O,\perp}$. We further show this element inherits the property of being correctly normalized from $\symbf{\varphi}_j\left(\symbf{L}\right)$, and is therefore an observed data influence function.

\textbf{Part 1: a full-data influence function}.

We obtain a full-data influence function for $\beta$ following the approach provided by  \cite{diaz2011superlearner}. We specify a function $\symbf{B}(t)$ with $\int_{t=t_1} ^{t_2} \symbf{B}(t) \symbf{B}(t)^\prime dt$ positive definite, and a strictly increasing inverse link function $s\in \mathbb{C}^2$, which we will use when we make Assumption~\ref{chp4:assum:marginal}.
However, we initially do not assume that the equality $ \mu(t) = s\left\{\symbf{B}(t)^\prime \symbf{\gamma}\right\}$ in Assumption \ref{chp4:assum:marginal}
holds, and instead work in the nonparametric model on $\symbf{L}$. We use a smooth function of the form $s\left\{\symbf{B}(t)^\prime \symbf{\gamma}\right\}$ to approximate the curve $\mu(t)$, where the inverse link function $s(\cdot)$ is strictly monotonic and invertible under Assumption \ref{chp4:assum:marginal}.

Consider the following two types of loss functions:
\begin{align}
    \mathcal{L}_1(\symbf{\gamma}) &= \int_{t=t_1}^{t_2} \left[ g\{\mu(t)\} - \symbf{B}(t)^\prime \symbf{\gamma}\right]^2 dt \label{equ:loss_square1}
    \\
    \mathcal{L}_2(\symbf{\gamma}) & = \int_{t=t_1}^{t_2} \left[ b\{\symbf{B}(t)^\prime \symbf{\gamma}\} - \mu(t)\symbf{B}(t)^\prime \symbf{\gamma}\right] dt
    \label{equ:loss_exp}
\end{align}
where $b(z)$ is defined by 
$b(z) = \int s(z) dz$, which means $s(z) = \frac{\partial b(z)}{\partial z }$. 

The loss function $\mathcal{L}_1(\symbf{\gamma})$ is the squared error loss function in the transformed space and the loss function $\mathcal{L}_2(\symbf{\gamma})$ is the quasi log-likelihood loss function inspired by the negative log-likelihood of the exponential family distribution.

Next, we derive the second derivative of each loss function:
\begin{align*}
    \frac{\partial ^2 \mathcal{L}_1 (\symbf{\gamma})}{\partial \symbf{\gamma}\partial  \symbf{\gamma}^\prime} & = \int_{t=t_1}^{t_2} 2\symbf{B}(t) \symbf{B}(t) ^\prime dt \\
    \frac{\partial ^2 \mathcal{L}_2 (\symbf{\gamma})}{\partial \symbf{\gamma}\partial  \symbf{\gamma}^\prime} & = \int_{t=t_1}^{t_2} \frac{\partial  s(z)}{\partial z} \Bigg|_{z = \symbf{B}(t)^\prime \symbf{\gamma}}\symbf{B}(t) \symbf{B}(t)^\prime dt 
\end{align*}

Since $\int_{t=t_1} ^{t_2} \symbf{B}(t) \symbf{B}(t)^\prime dt$ is positive definite, then $\mathcal{L}_1(\symbf{\gamma})$ in equation~\eqref{equ:loss_square1} is strictly convex.

Since $s(\cdot)$ is strictly increasing and $\int_{t=t_1} ^{t_2} \symbf{B}(t) \symbf{B}(t)^\prime dt$ is positive definite, $\frac{\partial ^2 \mathcal{L}_2 (\symbf{\gamma})}{\partial \symbf{\gamma}\partial \symbf{\gamma}^\prime}$ is positive definite, and $\mathcal{L}_2(\symbf{\gamma})$ in equation \eqref{equ:loss_exp} is also strictly convex.

Thus, loss functions $\mathcal{L}_j(\symbf{\gamma})$, $j=1,2$, have a unique minimizer, defined as a parameter $\symbf{\gamma}_0$, which is the unique solution of the equation by setting the first derivative of $\mathcal{L}_j (\symbf{\gamma})$ to be $\symbf{0}$:
\begin{align}
    \frac{\partial \mathcal{L}_1 (\symbf{\gamma})}{\partial \symbf{\gamma}}\Big |_{\symbf{\gamma} = \symbf{\gamma}_0} & = \int_{t=t_1}^{t_2} -2 \symbf{B}(t) \left[g\{\mu(t)\}  - \symbf{B}(t)^\prime \symbf{\gamma}_0\right] dt = \symbf{0}
    \label{equ:loss1_deriv} 
    \\
    \frac{\partial \mathcal{L}_2 (\symbf{\gamma})}{\partial \symbf{\gamma}} \Big |_{\symbf{\gamma} = \symbf{\gamma}_0}& = \int_{t=t_1}^{t_2} \symbf{B}(t) \left[s\left\{\symbf{B}(t)^\prime \symbf{\gamma}_0\right\} - \mu(t)\right] dt = \symbf{0}
    \label{equ:loss3_deriv}
\end{align}

Assuming the equality $\mu(t) = s\left\{\symbf{B}(t)^\prime \symbf{\beta}\right\}$ in Assumption~\ref{chp4:assum:marginal}, if we substitute $\mu(t)$ in the first derivative of each loss function with $s\left\{\symbf{B}(t)^\prime \symbf{\beta}\right\}$ and set $\symbf{\gamma}_0 = \symbf{\beta}$, then we have $\frac{\partial \mathcal{L}_j (\symbf{\gamma})}{\partial \symbf{\gamma}}\big|_{\symbf{\gamma} = \symbf{\beta}} = 0$ for $j=1,2$. Thus, under Assumption \ref{chp4:assum:marginal}, 
$\symbf{\gamma}_0 = \symbf{\beta}$ is the unique minimizer of each loss function $\mathcal{L}_j (\symbf{\gamma})$ for $j=1,2$.

Below we compute the influence function for $\symbf{\gamma}_0$ in the nonparametric model, which we denote by $\symbf{\varphi}^{NP}\left(\symbf{L};\symbf{\gamma}_0\right)$. Then $\symbf{\varphi}^{NP}\left(\symbf{L};\symbf{\gamma}_0\right)$ will also be a valid full-data influence function for $\symbf{\beta}$ in our model.

Let $\mathcal{P}$ be any parametric submodel (parametrized by $\epsilon$ such that $\epsilon = 0$ corresponds to the true distribution of $\symbf{L}$) and let $a(\symbf{L})$ denote the score vector for $\mathcal{P}$. The influence function for $\symbf{\gamma}_0$ is the mean-zero, p-dimensional function of $\symbf{L}$ satisfying $\frac{\partial \symbf{\gamma}_0}{\partial \epsilon} \big |_{\epsilon = 0} = \mathbb{E}\left\{\symbf{\varphi}^{NP}\left(\symbf{L};\symbf{\gamma}_0\right) a(\symbf{L})\right\}$. For a fixed $a(\symbf{L})$, we consider the parametric submodel $\mathcal{P}_a = \left\{P_a(\epsilon): \epsilon \in D \subset \mathbb{R}\right\}$, where $P_a(\epsilon) = dF_0(\symbf{L})\left\{1 + \epsilon a(\symbf{L})\right\}$, and where the score of $P_a(\epsilon)$ evaluated at $\epsilon = 0$ is $a(\symbf{L})$.

In $\mathcal{P}_a$, equations ~\eqref{equ:loss1_deriv}, ~\eqref{equ:loss3_deriv} become:
\begin{align}
    \mathcal{L}_1 (\symbf{\gamma}): & \int_{t=t_1}^{t_2} \symbf{B}(t) \left[g\left[ \int_{l} y (t) dF_0(\symbf{l}) \left\{1+ \epsilon a(\symbf{l})\right\}\right]  - \symbf{B}(t)^\prime \symbf{\gamma}(\epsilon)\right] dt = \symbf{0} 
    \label{equ:loss1_deriv1}\\
  \mathcal{L}_2 (\symbf{\gamma}): & \int_{t=t_1}^{t_2} \symbf{B}(t)\left[ s\left\{\symbf{B}(t)^\prime \symbf{\gamma}(\epsilon)\right\}- \int_{\symbf{l}}y(t) dF_0(\symbf{l})\left\{1 + \epsilon a(\symbf{l})\right\}\right] dt =  \symbf{0} 
    \label{equ:loss3_deriv1}
\end{align}


Taking the derivative with respect to $\epsilon$ on both sides of equations~\eqref{equ:loss1_deriv1} and ~\eqref{equ:loss3_deriv1} gives:
\begin{align}
    \mathcal{L}_1 (\symbf{\gamma}): & \int_{t=t_1}^{t_2} \symbf{B}(t) \left[\frac{\partial g(z)}{\partial z} \Big|_{z = \int_{l} y (t) dF_0(\symbf{l}) \left\{1+ \epsilon a(\symbf{l})\right\}} \left[\int_{l} y (t) dF_0(\symbf{l}) a(\symbf{l})\right]  - \symbf{B}(t)^\prime \frac{\partial \symbf{\gamma}}{\partial \epsilon} \right] dt = \symbf{0}
    \label{equ:loss1_deriv2} \\
    \mathcal{L}_2 (\symbf{\gamma}): & \int_{t=t_1}^{t_2} \symbf{B}(t) \left[\frac{\partial s(z)}{\partial z} \Bigg|_{z = \symbf{B}(t)^\prime \symbf{\gamma}(\epsilon)} \symbf{B}(t)^\prime \frac{\partial \symbf{\gamma}}{\partial \epsilon} - \int_{\symbf{l}}y(t) dF_0(\symbf{l})a(\symbf{l})\right]dt =  \symbf{0}.
    \label{equ:loss3_deriv2}
\end{align}

Setting $\epsilon = 0$ on both sides of equation~\eqref{equ:loss1_deriv2} and solving for $\frac{\partial \symbf{\gamma}}{\partial \epsilon}\big|_{\epsilon = 0}$, we have:
\begin{align}
     \mathcal{L}_1 (\symbf{\gamma}):  \frac{\partial \symbf{\gamma}}{\partial \epsilon}\Bigg|_{\epsilon = 0}
    & = \symbf{V}_1^{-1} \int_{t=t_1}^{t_2} \symbf{B}(t) \frac{\partial g(z)}{\partial z} \Big|_{z = \mathbb{E}\left\{Y(t)\right\}} \mathbb{E}\left\{Y(t)a(\symbf{L})\right\} dt \nonumber \\
    & = \mathbb{E}\left[\left\{\symbf{V}_1^{-1} \int_{t=t_1}^{t_2} \symbf{B}(t) \frac{\partial g(z)}{\partial z} \Big|_{z = \mathbb{E}\left\{Y(t)\right\}}  Y(t)dt \right\}a(\symbf{L}) \right] \nonumber \\
    & = \mathbb{E}\left[\left[\symbf{V}_1^{-1}\int_{t=t_1}^{t_2} \symbf{B}(t) \frac{\partial g(z)}{\partial z} \Big|_{z = \mathbb{E}\left\{Y(t)\right\}} \left[ Y(t) - \mathbb{E}\{Y(t)\} \right]dt \right]a(\symbf{L}) \right],
    \label{equ:loss1_deriv3}
\end{align}
where $\symbf{V}_1 = \int_{t=t_1}^{t_2} \symbf{B}(t) \symbf{B}(t)^\prime dt$.

Setting $\epsilon = 0$ on both sides of equation~\eqref{equ:loss3_deriv2} and solving for $\frac{\partial \symbf{\gamma}}{\partial \epsilon}\big|_{\epsilon = 0}$, we have:

\begin{align}
    \mathcal{L}_2(\symbf{\gamma}): \frac{\partial \symbf{\gamma}}{\partial \epsilon}\Bigg|_{\epsilon = 0}
    & = \symbf{V}_2(\symbf{\gamma}_0)^{-1}\int_{t=t_1}^{t_2} \symbf{B}(t) \mathbb{E}\left\{Y(t)a(\symbf{L})\right\} dt \nonumber \\
    & = \mathbb{E}\left[\left\{\symbf{V}_2(\symbf{\gamma}_0)^{-1} \int_{t=t_1}^{t_2} \symbf{B}(t) Y(t)dt \right\}a(\symbf{L}) \right] \nonumber \\
    & = \mathbb{E}\left[\left[\symbf{V}_2(\symbf{\gamma}_0)^{-1} \int_{t=t_1}^{t_2} \symbf{B}(t) \left[ Y(t) - \mathbb{E}\{Y(t)\} \right]dt \right]a(\symbf{L}) \right],
    \label{equ:loss3_deriv3}
\end{align}
where $\symbf{V}_2(\symbf{\gamma}_0) = \int_{t=t_1}^{t_2} 
\symbf{B}(t)\symbf{B}(t)^\prime \frac{\partial s(z)}{\partial z} \Bigg|_{z = \symbf{B}(t)^\prime \symbf{\gamma}_0}dt$.

Therefore, under Assumption~\ref{chp4:assum:marginal},
the influence function for $\symbf{\gamma}_0$ of each loss function is:
\begin{align}
  \mathcal{L}_j(\symbf{\gamma}): \quad & \symbf{\varphi}_j\left(\symbf{L}\right) = \int_{t=t_1}^{t_2} \symbf{W}_j(t;\symbf{\beta})\left[ Y(t) - s\left\{ \symbf{B}(t)^\prime \symbf{\beta}\right\}\right]dt,
  \label{equ:full_influ1}
\end{align}

where 
\begin{align*}
    \symbf{W}_1(t;\symbf{\beta}) & = \symbf{V}_1^{-1}\symbf{B}(t)\frac{\partial g(z)}{\partial z} \Big|_{z = s\left\{\symbf{B}(t)^\prime \symbf{\beta}\right\}}, \quad \symbf{V}_1 = \int_{t=t_1}^{t_2} \symbf{B}(t) \symbf{B}(t)^\prime dt;
    \\
    \symbf{W}_2(t;\symbf{\beta}) & = \symbf{V}_2\left(\symbf{\beta}\right)^{-1}\symbf{B}(t), \quad \symbf{V}_2(\symbf{\beta}) = \int_{t=t_1}^{t_2} \symbf{B}(t)\symbf{B}(t)^\prime \frac{\partial s(z)}{\partial z} \Bigg|_{z = \symbf{B}(t)^\prime \symbf{\beta}}dt.
\end{align*}


\textbf{Part 2: an inverse-weighted element and the augmentation space}. 

Next, for $\symbf{\varphi}_j\left(\symbf{L}\right)$ in equations~\eqref{equ:full_influ1}, we find an element $\symbf{g}_j^*\left(\symbf{O}\right) \in \mathcal{H}^O$ which maps to $\symbf{\varphi}_j\left(\symbf{L}\right)$ under the map $\mathcal{K}(\cdot)$. One such element is the following inverse-weighted element:
\begin{align}
\symbf{g}_j^*\left(\symbf{O}\right) \in \mathcal{H}^O 
& = \int_{t=t_1}^{t_2} \frac{
\symbf{W}_j(t;\symbf{\beta}) \left[ Y(t) - s\left\{ \symbf{B}(t)^\prime \symbf{\beta}\right\}\right]}{\rho\left\{t \mid \overline{\symbf{O}}(t), Y(t)\right\}} dN(t), 
\label{equ:gstar}
\end{align}
as we verify below, $j=1,2$.

To show this, we will make use of a counting process martingale whose properties we will exploit throughout this appendix. Define $M(t):= N(t) - \int_{u=0}^t \rho\left\{u \mid \overline{\symbf{O}}(u), Y(u)\right\} du$. Under Assumption~\ref{chp4:assum:nofuture},
$\left\{M(t)\right\}_{t=0}^{\tau}$ is a martingale adapted to the filtration $\{\symbf{\mathcal{G}}_t\}_{t=0}^\tau$. Let $H(t)$ be any predictable process. A sufficient condition for this is that $H(t)$ is left continuous in $t$, and that for each $t$, $H(t)$ is $\symbf{\mathcal{G}}_t$-measurable. In particular, with an assumption that $Y(t)$ is left continuous, $H(t)$ can be any function of $\overline{\symbf{O}}(t)$ and $Y(t)$. 

Then $U(t) := \int_{u=0}^{t}H(u)dM(u)$ is also a martingale adapted to $\{\symbf{\mathcal{G}}_t\}_{t=0}^\tau$, so $\mathbb{E}\left\{ U(t)\mid \symbf{\mathcal{G}}_s\right\} = U(s)$ for all $s<t$.  In particular, $U(t)$ has mean zero. Also, since $\symbf{\mathcal{G}}_0 = \symbf{L}$, 
\begin{align}
\label{equ:martingale1}
\mathbb{E}\left\{\int_{u=0}^t H(u)dM(u) \big | \symbf{L} \right\} = \mathbb{E}\left\{U(t)\mid \symbf{\mathcal{G}}_{0}\right\} = U(0) = 0    
\end{align}

Thus, by equation~\eqref{equ:martingale1}, we have 
\begin{align*}
    &\  \mathbb{E}\left\{\symbf{g}_j^*\left(\symbf{O}\right) \mid  \symbf{L}\right\} \nonumber \\
  = &\ \mathbb{E}\left[ \int_{t=t_1}^{t_2} \frac{\symbf{W}_j(t;\symbf{\beta})  \left[ Y(t) - s\left\{ \symbf{B}(t)^\prime \symbf{\beta}\right\}\right]}{\rho\left\{t \mid \overline{\symbf{O}}(t), Y(t)\right\}} \left[dN(t) - \rho\left\{t \mid \overline{\symbf{O}}(t), Y(t)
  \right\} dt\right] \bigg | \symbf{L}\right]\nonumber \\
  & \  + \mathbb{E}\left[\int_{t=t_1}^{t_2} \symbf{W}_j(t;\symbf{\beta}) \left[Y(t) - s\left\{\symbf{B}(t)^\prime \symbf{\beta}\right\}\right]dt \bigg | \symbf{L} \right] \nonumber \\
 = &  \ \symbf{0} +  \symbf{\varphi}_j(\symbf{L})
\end{align*}
as claimed. 

The inverse-weighted element $\symbf{g}_j^*(\symbf{O})$ and the augmentation space
\begin{equation*}
\mathrm{Aug} := \big\{ \symbf{g}(\symbf{O}) \in \mathcal{H}^{O} : \mathbb{E}\{\symbf{g}(\symbf{O}) \mid \symbf{L}\} = 0 \big\}
\end{equation*}
determine the space $\Lambda_1^{O,\perp}$ as
\begin{equation*}
\Lambda_1^{O,\perp} = \big\{ \symbf{g}_j^*(\symbf{O}) + \symbf{g}(\symbf{O}) : \symbf{g}(\symbf{O}) \in \mathrm{Aug} \big\}, \quad j=1,2.
\end{equation*}

We know that for $\symbf{g}(\symbf{O}) \in \mathcal{H}^O$: 
\begin{equation*}
    \symbf{g}(\symbf{O}) = \int \symbf{h}\left\{t, \overline{\symbf{O}}(t), Y(t)\right\}dN(t) +  \int \symbf{I}\left\{t, \overline{\symbf{O}}(t)\right\}\left[dN(t) - \lambda\left\{t\mid \overline{\symbf{O}}(t)\right\}dt\right]
\end{equation*}
where $\symbf{h}\left\{t, \overline{\symbf{O}}(t), Y(t)\right\}$ satisfies $\mathbb{E}\left[\int \symbf{h}\left\{t, \overline{\symbf{O}}(t), Y(t)\right\} dN(t)\right] = 0$ and $\symbf{I}\left\{t, \overline{\symbf{O}}(t)\right\}$ can be function of $t$ and $\overline{\symbf{O}}(t)$.

Thus, we need to find $\symbf{h}\left\{t, \overline{\symbf{O}}(t), Y(t)\right\}$ and $\symbf{I}\left\{t, \overline{\symbf{O}}(t)\right\}$ to make $\mathbb{E}\left\{\symbf{g}(\symbf{O})\mid \symbf{L}\right\} = 0$, which becomes 
\begin{equation}
\label{equ:aug_deriv1}
    \int \symbf{h}\left\{t, \overline{\symbf{O}}(t), Y(t)\right\}\rho\left\{t \mid \overline{\symbf{O}}(t), Y(t)\right\} dt +
    \int \symbf{I}\left\{t, \overline{\symbf{O}}(t)\right\} \left[\rho\left\{t \mid \overline{\symbf{O}}(t), Y(t)\right\}dt - \lambda\left\{t \mid \overline{\symbf{O}}(t)\right\}dt \right] = 0
\end{equation}


If we write $\symbf{I}\left\{t, \overline{\symbf{O}}(t)\right\}$ as 
\begin{equation}
\label{equ:aug_deriv2}
    \symbf{I}\left\{t, \overline{\symbf{O}}(t)\right\} = - \frac{\symbf{h}\left\{t, \overline{\symbf{O}}(t), Y(t)\right\}\rho\left\{t \mid \overline{\symbf{O}}(t), Y(t)\right\} }{\rho\left\{t \mid \overline{\symbf{O}}(t), Y(t)\right\} - \lambda\left\{t\mid \overline{\symbf{O}}(t)\right\}},
\end{equation}
then equation \eqref{equ:aug_deriv1} holds. Thus, we can rewrite equation \eqref{equ:aug_deriv2} as 
\begin{equation}
\label{equ:aug_deriv3}
    \symbf{h}\left\{t, \overline{\symbf{O}}(t), Y(t)\right\} =  -\symbf{I}\left\{t, \overline{\symbf{O}}(t)\right\} \left[1 - \frac{\lambda\left\{t \mid \overline{\symbf{O}}(t)\right\}}{\rho\left\{t \mid \overline{\symbf{O}}(t), Y(t)\right\}}\right].
\end{equation}


Thus, plugging equation \eqref{equ:aug_deriv3} into $\symbf{g}(\symbf{O})$, we have
\begin{align*}
  \symbf{g}(\symbf{O}) & = \int -\symbf{I}\left\{t, \overline{\symbf{O}}(t)\right\} \left[1 - \frac{\lambda\left\{t\mid \overline{\symbf{O}}(t)\right\}}{\rho\left\{t \mid \overline{\symbf{O}}(t), Y(t)\right\}}\right]dN(t) +  \int \symbf{I}\left\{t, \overline{\symbf{O}}(t)\right\}\left[dN(t) - \lambda\left\{t\mid \overline{\symbf{O}}
  (t)\right\}dt\right] \\   
  & =  \int  \frac{\symbf{I}\left\{t, \overline{\symbf{O}}(t)\right\}\lambda\left\{t\mid \overline{\symbf{O}}(t)\right\}}{\rho\left\{t \mid \overline{\symbf{O}}(t), Y(t)\right\}}dN(t) - \int \symbf{I}\left\{t, \overline{\symbf{O}}(t)\right\} \lambda\left\{t\mid \overline{\symbf{O}}
  (t)\right\}dt\\
  & = \int \frac{\symbf{I}\left\{t, \overline{\symbf{O}}(t)\right\}\lambda\left\{t\mid \overline{\symbf{O}}(t)\right\}}{ \rho\left\{t \mid \overline{\symbf{O}}(t), Y(t)\right\}} \left[ dN(t)- \rho\left\{t \mid \overline{\symbf{O}}(t), Y(t)\right\}dt\right]
\end{align*}

Then we can write one subset $Aug^*$ of the augmentation space $Aug$ as:
\begin{align}
\label{equ:aug_space1}
Aug^* = \left\{
\int_{t=t_1}^{t_2} \frac{\symbf{h}\left\{t, \overline{\symbf{O}}(t)\right\}}{ \rho\left\{t \mid \overline{\symbf{O}}(t), Y(t)\right\}} \left[ dN(t)- \rho\left\{t \mid \overline{\symbf{O}}(t), Y(t)\right\}dt\right] :\mbox{ any }  \underbrace{ \symbf{h} \left\{ t, \overline{\symbf{O}}(t)\right\} }_{ p \times 1} \right\} \subseteq Aug
\end{align}


\textbf{Part 3: the coarsening tangent space.}

The coarsening tangent space $\Lambda (F_{R\mid L})$ is in the set of all elements of $\mathcal{H}^Z$ that are linear combinations of scores of $\symbf{R}\mid \symbf{L}$ in a parametric submodel of our model. Following \cite{cooklawless2007book}, we can compute the likelihood for $\symbf{R} \mid \symbf{L}$ by first partitioning the interval $[t_1,t_2]$ into a finite number of subintervals $[a_k, a_k+\epsilon)$, $k=0,\cdots, K$, each of length $\epsilon$, and then taking the limit as $K \rightarrow \infty$. For fixed $\epsilon > 0 $, let $\Delta(a_k)$ denote the number of visits during the interval $[a_k, a_k+\epsilon)$; for small $\epsilon$, this number of visits is assumed to be 0 or 1. For fixed $K$, the conditional (on $\symbf{L}$) likelihood of $\symbf{R} = (N(a_0), \cdots, N(a_K))$ is the same as the conditional (on $\symbf{L}$) likelihood of $(\Delta(a_0), \cdots, \Delta(a_K))$, which can be written as:
\begin{equation*}
    \mathbb{P}\left\{ \Delta(a_0) \mid \symbf{L}\right\} \prod\limits_{k=1}^K \mathbb{P}\left\{\Delta(a_k) \mid \Delta(a_0), \cdots, \Delta(a_{k-1}), \symbf{L} \right\}.
\end{equation*}

For each $k=1,\cdots,K$, 
\begin{align*}
     \mathbb{P}\left\{\Delta(a_k) \mid \Delta(a_0), \cdots, \Delta(a_{k-1}), \symbf{L} \right\} 
     & = \rho\left\{a_k \mid \symbf{L}, \overline{\symbf{O}}(a_k)\right\} \epsilon + o(\epsilon) \\
    & = \rho\left\{a_k \mid \overline{\symbf{O}}(a_k), Y(a_k)\right\} \epsilon + o(\epsilon)
\end{align*}
where the last equality holds by Assumption~\ref{chp4:assum:nofuture}.

Similarly, $ \mathbb{P}\left\{\Delta(a_0) \mid \symbf{L} \right\} = \rho\left\{a_0 \mid \symbf{L}, \overline{\symbf{O}}(a_0)\right\} \epsilon + o(\epsilon) = \rho\left\{a_0 \mid \overline{\symbf{O}}(a_0), Y(a_0)\right\} \epsilon + o(\epsilon)$. Therefore, for fixed $K$, the conditional (on $\symbf{L}$) likelihood of $(\Delta(a_0), \cdots, \Delta(a_K))$ is 
\begin{equation*}
    \prod_{k=0}^K   \left[  \rho\left\{a_k \mid \overline{\symbf{O}}(a_k),  Y(a_k)\right\} \epsilon + o(\epsilon) \right]^{\Delta(a_k) } \left[ 1- \rho\left\{a_k\mid \overline{\symbf{O}}(a_k), Y(a_k)\right\} \epsilon + o(\epsilon)\right]^{1-\Delta(a_k) } .
\end{equation*}

After dividing by the constant $\prod\limits_{k=0}^K \epsilon ^{\Delta(a_k)}$, the conditional likelihood is proportional to:
\begin{equation}
\label{equ:like_h} \prod_{k=0}^K   \left[  \rho \left\{a_k \mid \overline{\symbf{O}}(a_k) , Y(a_k)\right\} + o(\epsilon)/\epsilon  \right]^{\Delta(a_k) } \left[ 1- \rho \left\{ a_k \mid \overline{\symbf{O}}(a_k) , Y(a_k)\right\} \epsilon + o(\epsilon) \right] ^{1-\Delta(a_k) }. 
\end{equation}

The continuous-time conditional (on $\symbf{L}$) likelihood is the limit of equation \eqref{equ:like_h} as $K\rightarrow \infty$. We compute this by viewing \eqref{equ:like_h} as a product of three factors and note that:
\begin{equation*}
     \lim_{K \to \infty} \prod_{k=0}^K   \left[  \rho \left\{ a_k \mid \overline{\symbf{O}}(a_k), Y(a_k) \right\}  + o(\epsilon)/\epsilon \right]^{\Delta(a_k) } = \prod_{t=t_1}^{t_2}  \left[ \rho \left\{ t \mid \overline{\symbf{O}}(t), Y(t) \right\}  \right]^{\Delta N(t)}.
\end{equation*} 
By a result on product integrals \citep{cooklawless2007book},

\begin{equation*}
     \lim_{K \to \infty} \prod_{k=0}^K  \left[ 1- \rho \left\{a_k \mid \overline{\symbf{O}}(a_k) , Y(a_k) \right\} \epsilon + o(\epsilon)\right] =  \exp \left[ - \int_{t=t_1}^{t_2} \rho \left\{ t \mid \overline{\symbf{O}}(t), Y(t)\right\} dt \right].
\end{equation*}

Finally, since $\Delta(a_k)=1$ at only a fixed number of values $a_k$ as $K$ varies, and since
\begin{equation*}
\lim_{\epsilon \to 0^+} \big[ 1- \rho \big\{ a_k \mid \overline{\symbf{O}}(a_k), Y(a_k) \big\} \epsilon + o(\epsilon) \big] = 1
\end{equation*}
for each $a_k$, we have
\begin{equation*}
\lim_{K \to \infty} \prod_{k=0}^K \big[ 1- \rho \big\{ a_k \mid \overline{\symbf{O}}(a_k), Y(t_k) \big\} \epsilon + o(\epsilon) \big]^{\Delta(a_k)} = 1.
\end{equation*}

Therefore, the continuous-time conditional (on $\symbf{L}$) likelihood $\mathcal{L}$ is:
\begin{equation*}
    \mathcal{L} = \left[ \prod_{t=t_1}^{t_2} \rho \left\{t \mid  \overline{\symbf{O}}(t), Y(t) \right\}^{\Delta N(t)} \right] \exp \left[  - \int_{t=t_1}^{t_2} \rho \left\{ t \mid \overline{\symbf{O}}(t), Y(t)\right\} dt \right].
\end{equation*}

To compute $\Lambda(F_{R \mid L})$, we now consider the conditional likelihood in a parametric submodel of our model for $dF(\symbf{r}\mid \symbf{L})$. Any such  parametric submodel is induced by a model for the intensity function $\rho\left\{t\mid  \overline{\symbf{O}}(t), Y(t)\right\}$, say $\left\{\rho\left\{t\mid \overline{\symbf{O}}(t), Y(t) ; \symbf{\gamma}\right\}: \symbf{\gamma} \in \Gamma \right\}$ with $\symbf{\gamma}$ a finite-dimensional parameter and $\symbf{\gamma} = \symbf{0}$ corresponding to the truth. The conditional likelihood in this parametric submodel is:
\begin{equation*}
    \mathcal{L}(\symbf{\gamma}) =  \left[ \prod_{t=t_1}^{t_2} \rho \left\{ t \mid \overline{\symbf{O}}(t), Y(t); \symbf{\gamma} \right\}^{\Delta N(t)} \right] \exp \left[  - \int_{t=t_1}^{t_2} \rho \left\{ t \mid \overline{\symbf{O}}(t), Y(t); \symbf{\gamma} \right\} dt \right].
\end{equation*}
The log-conditional likelihood is:
\begin{equation*}
    \ell( \symbf{\gamma}) = \int_{t=t_1}^{t_2}   \log \left[ \rho \left\{ t \mid \overline{\symbf{O}}(t), Y(t) ; \symbf{\gamma} \right \} \right] dN(t)  \ - \ \int_{t=t_1}^{t_2} \rho \left\{ t \mid \overline{\symbf{O}}(t), Y(t) ; \symbf{\gamma} \right \} dt,
\end{equation*}
and the score is:
\begin{align*}
 \symbf{S}_{\symbf{\gamma}}(\symbf{O}) =  & \ \int_{t=t_1}^{t_2}     \left[ \frac{   \frac{ \partial  }{ \partial \gamma} \rho \left\{ t \mid \overline{\symbf{O}}(t), Y(t); \symbf{\gamma} \right\} }{ \rho \left\{ t \mid \overline{\symbf{O}}(t), Y(t); \symbf{\gamma} \right\} } \right]dN(t)  \ - \ \int_{t=t_1}^{t_2}   \frac{ \partial }{ \partial \symbf{\gamma}} \rho \left\{ t \mid \overline{\symbf{O}}(t), Y(t) ; \symbf{\gamma} \right \} dt  \\
= & \  \int_{t=t_1}^{t_2} \left[  \frac{   \frac{ \partial }{ \partial \gamma}  \rho \left\{ t \mid \overline{\symbf{O}}(t), Y(t); \symbf{\gamma} \right \}  }{ \rho \left\{ t \mid \overline{\symbf{O}}(t), Y(t); \symbf{\gamma} \right \} } \right] \left[ dN(t) -\rho \left\{ t \mid \overline{\symbf{O}}(t), Y(t) ; \symbf{\gamma} \right\}dt \right].
\end{align*}

In order to satisfy the relation in Proposition \ref{chp4:prop:lambda_rho}, we must have
\begin{equation*}
\rho \big\{ t \mid \overline{\symbf{O}}(t), Y(t) ; \symbf{\gamma} \big\} = r \big\{ t, \overline{\symbf{O}}(t) ; \symbf{\gamma}\big\} \exp\{-\alpha Y(t) \}
\end{equation*}
for some function $r \{ t, \overline{\symbf{O}}(t) ; \symbf{\gamma} \}$. Since
\begin{equation*}
\frac{ \frac{ \partial }{ \partial \gamma} \rho \big\{ t \mid \overline{\symbf{O}}(t), Y(t) ; \symbf{\gamma} \big\} }{ \rho \big\{ t \mid \overline{\symbf{O}}(t), Y(t) ; \symbf{\gamma} \big\} } = \frac{ \frac{ \partial }{ \partial \gamma} r \big\{ t, \overline{\symbf{O}}(t) ; \symbf{\gamma} \big\} }{ r \big\{ t, \overline{\symbf{O}}(t) ; \symbf{\gamma} \big\} }
\end{equation*}
is a function of $t$ and $\overline{\symbf{O}}(t)$ only, evaluating the score at the truth $\symbf{\gamma}=\symbf{0}$ shows that any score of a parametric submodel of $\Lambda(F_{R|L})$ must be of the form
\begin{equation*}
\int_{t=t_1}^{t_2} \symbf{h} \big\{ t, \overline{\symbf{O}}(t) \big\} \big[ dN(t) - \rho \big\{ t \mid \overline{\symbf{O}}(t),Y(t) \big\} dt \big]
\end{equation*}
for some function $\symbf{h} \{ t, \overline{\symbf{O}}(t) \}$. That is, we have shown the following containment:
\begin{equation} \label{equ:coarsening_spc}
\begin{split}
\Lambda(F_{R|L}) \subseteq \bigg\{ & \int_{t=t_1}^{t_2} \symbf{H} \big\{ t, \overline{\symbf{O}}(t) \big\} \big[dN(t) - \rho \big\{ t \mid \overline{\symbf{O}}(t), Y(t) \big\}dt \big] \in \mathcal{H}^Z : \\
 & \mbox{any } \underbrace{ \symbf{H} \big\{ t, \overline{\symbf{O}}(t)\big\} }_{ p \times 1} \bigg\}.
\end{split}
\end{equation}

\textbf{Part 4: an observed-data influence function.}

We have defined the inverse-weighted element in equation \eqref{equ:gstar},

\begin{equation*}
\symbf{g}_j^*\left(\symbf{O}\right) = \int_{t=t_1}^{t_2} \frac{ \symbf{W}_j(t;\symbf{\beta}) \left[ Y(t) - s\left\{ \symbf{B}(t)^\prime \symbf{\beta}\right\}\right]}{\rho\left\{t \mid \overline{\symbf{O}}(t), Y(t)\right\}} dN(t),
\end{equation*}

Now consider the function: 
\begin{equation*}
\symbf{h}_j^*\left(\symbf{O}\right) = \int_{t=t_1}^{t_2} \frac{\symbf{W}_j(t;\symbf{\beta})  \left[\mathbb{E}\left\{Y(t)\mid \overline{\symbf{O}}(t)\right\} - s\left\{\symbf{B}(t)^\prime \symbf{\beta}\right\}\right]}{\rho\left\{t \mid \overline{\symbf{O}}(t), Y(t)\right\}} \left[dN(t) - \rho\left\{ t \mid \overline{\symbf{O}}(t), Y(t)\right\} dt\right].
\end{equation*}
Here $\symbf{h}_j^*\left(\symbf{O}\right) \in Aug^* \subseteq Aug$, and therefore $\symbf{g}_j^*\left(\symbf{O}\right) - \symbf{h}_j^*\left(\symbf{O}\right) \in \Lambda_1^{O,\perp}$. In order to make $\symbf{\varphi}_j(\symbf{O}) = \symbf{g}_j^*\left(\symbf{O}\right) - \symbf{h}_j^*\left(\symbf{O}\right)$ be an influence function of $\symbf{\beta}$, we need to show that $\symbf{g}_j^*\left(\symbf{O}\right) - \symbf{h}_j^*\left(\symbf{O}\right)$ is orthogonal to $\Lambda(F_{R\mid L})$, which means $\symbf{g}_j^*\left(\symbf{O}\right) - \symbf{h}_j^*\left(\symbf{O}\right) \in \Lambda^{O,\perp} = \Lambda_1^{O,\perp} \cap (\Lambda(F_{R\mid L}))^{\perp}$ and that it is properly normalized. To show $\symbf{g}_j^*\left(\symbf{O}\right) - \symbf{h}_j^*\left(\symbf{O}\right) \in (\Lambda(F_{R\mid L}))^{\perp}$, we need to show that $\mathbb{E}\left[\left\{ \symbf{g}_j^*(\symbf{O}) - \symbf{h}_j^*(\symbf{O})\right\} \symbf{w}(\symbf{Z})\right] = 0$
 for any $\symbf{w}(\symbf{Z}) \in \Lambda(F_{R\mid L})$. 

Let $\symbf{w}(\symbf{Z})= \int_{t=t_1}^{t_2} \symbf{H} \left\{ t, \overline{\symbf{O}}(t) \right\}  \left[dN(t) - \rho \left\{ t \mid \overline{\symbf{O}}(t), Y(t) \right\}dt \right]  =  \big( w_1(\symbf{Z}), \ldots, w_p(\symbf{Z}) \big) $ be any element of $\Lambda(F_{R|L})$, and for each $l$, write $w_l(\symbf{Z}) = \int_{t=t_1}^{t_2} 
H_l\left\{ t, \overline{\symbf{O}}(t)\right\}  \left[dN(t) - \rho \left\{ t \mid \overline{\symbf{O}}(t),Y(t) \right\} dt \right] $.  Denote the $l$-th components of $\symbf{g}_j^*(\symbf{O})$ and $\symbf{h}_j^*(\symbf{O})$ as 
\[ g_{j,l}^*(\symbf{O}) = \int_{t=t_1}^{t_2} \frac{ W_{j,l}(t;\symbf{\beta}) \left[ Y(t) -   s\left\{\symbf{B}(t)^\prime \symbf{\beta}  \right\}\right] }{ \rho \left\{ t \mid \overline{\symbf{O}}(t),Y(t) \right\} }   dN(t) \]
\[ h_{j,l}^*(\symbf{O})= \int_{t=t_1}^{t_2} \frac{ W_{j,l}(t;\symbf{\beta}) \left[\mathbb{E}\left\{Y(t)\mid \overline{\symbf{O}}(t)\right\} - s\left\{\symbf{B}(t)^\prime \symbf{\beta}\right\}\right] }{\rho \left\{ t \mid \overline{\symbf{O}}(t),Y(t) \right\}} \left[ dN(t) - \rho \left\{ t \mid \overline{\symbf{O}}(t),Y(t) \right\} dt \right] \]
where $W_{j,l}(t;\symbf{\beta})$ is the $l$-th component of $\symbf{W}_j(t;\symbf{\beta})$, $j=1,2$, $l=1,\cdots, p$.

We show orthogonality using the martingale covariance property \citep{andersen1993bookABGK}, which says that, for predictable processes $H_1(t)$ and $H_2(t)$, 
\begin{equation*}
     \mathbb{E} \left\{ \left( \int_{t=t_1}^{t_2} H_1(t) \left[ dN(t) - \rho \left\{ t \mid \overline{\symbf{O}}(t),Y(t) \right\} dt \Big] \right)  \left( \int_{t=t_1}^{t_2} H_2(t) \Big[ dN(t) - \rho \left\{ t \mid \overline{\symbf{O}}(t),Y(t) \right\} dt \right] \right) \right\}  
\end{equation*}
\begin{equation}
 \label{equ:mart_covar}
 =  \mathbb{E} \left[  \int_{t=t_1}^{t_2} H_1(t) H_2(t) \rho \left\{t \mid \overline{\symbf{O}}(t), Y(t) \right\} dt  \right]. 
\end{equation}
By \eqref{equ:mart_covar},
$\mathbb{E} \left\{ h_{j,l}^*(\symbf{O}) w_l(\symbf{Z}) \right\} 
=  \mathbb{E} \left[ \int_{t=t_1}^{t_2} W_{j,l}(t;\symbf{\beta}) H_l \left\{ t, \overline{\symbf{O}}(t) \right\}   \left[ \mathbb{E} \left\{ Y(t) \mid  \overline{\symbf{O}}(t) \right\} - s\left\{\symbf{B}(t)^\prime \symbf{\beta} \right\} \right] dt \right]$.
On the other hand, by adding and subtracting we can write:
\begin{align}
& \mathbb{E} \left\{ g_{j,l}^*(\symbf{O})w_{l}(\symbf{Z})\right\}  \nonumber \\
= & \ \mathbb{E} \left\{ \left(  \int_{t=t_1}^{t_2} \frac{ W_{j,l}(t;\symbf{\beta}) \left[Y(t) - s\left\{\symbf{B}(t)^\prime \symbf{\beta}\right\}\right]  }{ \rho \left\{ t \mid \overline{\symbf{O}}(t),Y(t) \right\} } \left[ dN(t) - \rho \left\{ t \mid \overline{\symbf{O}}(t),Y(t) \right\} dt \right] \right) w_l(\symbf{Z}) \right\} \label{equ:mean_gw} \\
&  \hspace{0.1in} + \mathbb{E} \left( \left[ \int_{t=t_1}^{t_2} W_{j,l}(t;\symbf{\beta}) \left[ Y(t)  - s\left\{\symbf{B}(t)^\prime \symbf{\beta} \right\} \right] dt \right] w_l(\symbf{Z}) \right). \nonumber 
\end{align}

By \eqref{equ:mart_covar}, the first term in equation \eqref{equ:mean_gw} equals 
\begin{align*}
    & \mathbb{E}\left[\int_{t=t_1}^{t_2}
    W_{j,l}(t;\symbf{\beta}) H_l\left\{t, \overline{\symbf{O}}(t)\right\}\left[Y(t) - s\left\{\symbf{B}(t)^\prime \symbf{\beta}\right\}\right]dt\right]\\
    & = \mathbb{E}\left[\int_{t=t_1}^{t_2}W_{j,l}(t;\symbf{\beta}) H_l\left\{t, \overline{\symbf{O}}(t)\right\}\left[\mathbb{E}\left\{Y(t) \mid \overline{\symbf{O}}(t)\right\} - s\left\{\symbf{B}(t)^\prime \symbf{\beta}\right\}\right]dt\right]\\
    & = \mathbb{E} \left\{ h_{j,l}^*(\symbf{O}) w_l(\symbf{Z}) \right\}
\end{align*}
where the first equation is by iterated expectation conditioning on $\overline{\symbf{O}}(t)$.

Using iterated expectation conditioning on $\symbf{L}$, the second term in equation \eqref{equ:mean_gw} equals:
\begin{align*}
& \mathbb{E}\left[   \left( \int_{t=t_1}^{t_2}  W_{j,l}(t;\symbf{\beta}) \left[ Y(t) - s\left\{\symbf{B}(t)^\prime \symbf{\beta}\right\} \right] dt \right) \times  \right. \\
& \hspace{0.1in} \underbrace{ \mathbb{E} \left\{  \left( \int_{t=t_1}^{t_2} H_l \left\{ t, \overline{\symbf{O}}(t) \right\}  \Big[ dN(t) - \rho \left\{ t \mid \overline{\symbf{O}}(t),Y(t) \right\}dt \Big] \right) \Big | \symbf{L}   \right\} }_{ = \ 0  \mbox{ by equation \eqref{equ:martingale1} }} \bigg] =0.
\end{align*}

Thus, $\mathbb{E}\left[\left\{ \symbf{g}_j^*(\symbf{O}) - \symbf{h}_j^*(\symbf{O})\right\} \symbf{w}(\symbf{Z})\right] = 0$. To make $\symbf{g}_j^*(\symbf{O}) - \symbf{h}_j^*(\symbf{O})$ be an influence function, we also need to show $\mathbb{E}\left[\left\{\symbf{g}_j^*(\symbf{O}) - \symbf{h}_j^*(\symbf{O})\right\} \symbf{S}_{\symbf{\beta}} (\symbf{O})^\prime \right] = \symbf{I}_{p}$, where $\symbf{S}_{\symbf{\beta}} (\symbf{O})$ is the observed-data score for $\symbf{\beta}$ and $\symbf{I}_{p}$ is the $p\times p$ identity matrix. Following Theorem 8.3 in \cite{tsiatis2006semiparametric}, we rewrite the left side using iterated expectations and the fact that $\symbf{S}_{\symbf{\beta}}(\symbf{O}) = \mathbb{E}\left\{\symbf{S}_{\symbf{\beta}}(\symbf{L}) \mid \symbf{O}\right\}$, where $\symbf{S}_{\symbf{\beta}}(\symbf{L})$ is the full-data score for $\symbf{\beta}$:
\begin{align*}
    \mathbb{E}\left[\left\{\symbf{g}_j^*(\symbf{O}) - \symbf{h}_j^*(\symbf{O})\right\} \symbf{S}_{\symbf{\beta}} (\symbf{O})^\prime \right] 
    & = \mathbb{E}\left[\left\{\symbf{g}_j^*(\symbf{O}) - \symbf{h}_j^*(\symbf{O})\right\} \mathbb{E}\left\{ \symbf{S}_{\symbf{\beta}}(\symbf{L})^\prime\mid \symbf{O}\right\}\right]\\
    & = \mathbb{E}\left[\left\{\symbf{g}_j^*(\symbf{O}) - \symbf{h}_j^*(\symbf{O})\right\} \symbf{S}_{\symbf{\beta}}(\symbf{L})^\prime\right]\\
     = & \ \mathbb{E} \big[ \underbrace{ \mathbb{E} \left\{ \symbf{g}_j^*(\symbf{O}) \mid \symbf{L} \right\} }_{\symbf{\varphi}(\symbf{L})} \symbf{S}_{\symbf{\beta}}(\symbf{L})^\prime  \big] - \mathbb{E} \big[ \underbrace{ \mathbb{E} \left\{ \symbf{h}_j^*(\symbf{O}) \mid \symbf{L} \right\} }_{= \symbf{0}} \symbf{S}_{\symbf{\beta}}(\symbf{L})^\prime  \big] \\
     = & \  \mathbb{E} \left[ \symbf{\varphi}(\symbf{L}) \symbf{S}_{\symbf{\beta}}(\symbf{L})^\prime  \right]  \\
     & = \symbf{I}_p
\end{align*}
where the last equality holds since $\symbf{\varphi}(\symbf{L})$ is a full-data influence function. 

Denoting $\symbf{g}_j^*(\symbf{O}) - \symbf{h}_j^*(\symbf{O})$ now by $\symbf{\varphi}_j(\symbf{O})$ as 
\begin{align*}
    \symbf{\varphi}_j(\symbf{O}) 
    = & \int_{t=t_1}^{t_2} \frac{
    \symbf{W}_j(t;\symbf{\beta}) \left[ Y(t) - \mathbb{E}\left\{Y(t)\mid \overline{\symbf{O}}(t)\right\}\right]}{\rho\left\{t \mid \overline{\symbf{O}}(t), Y(t)\right\}} dN(t) \\
    & \quad + \int_{t=t_1}^{t_2} \symbf{W}_j(t;\symbf{\beta}) \left[ \mathbb{E}\left\{Y(t)\mid \overline{\symbf{O}}(t)\right\} - s\left\{ \symbf{B}(t)^\prime \symbf{\beta}\right\}\right]dt 
\end{align*}
where 
\begin{align*}
    \symbf{W}_1(t;\symbf{\beta}) & = \symbf{V}_1^{-1}\symbf{B}(t)\frac{\partial g(z)}{\partial z} \Big|_{z = s\left\{\symbf{B}(t)^\prime \symbf{\beta}\right\}}, \quad \symbf{V}_1 = \int_{t=t_1}^{t_2} \symbf{B}(t) \symbf{B}(t)^\prime dt;
    \\
    \symbf{W}_2(t;\symbf{\beta}) & = \symbf{V}_2\left(\symbf{\beta}\right)^{-1}\symbf{B}(t), \quad \symbf{V}_2(\symbf{\beta}) = \int_{t=t_1}^{t_2} \symbf{B}(t)\symbf{B}(t)^\prime \frac{\partial s(z)}{\partial z} \Bigg|_{z = \symbf{B}(t)^\prime \symbf{\beta}}dt.
\end{align*}
\end{proof}

\subsection{Proof of the Asymptotic Results}
\label{apxB:app:asy_proof}

In the following part, we use $\mathbb{P}_{n}f$ to denote $n^{-1}\sum_{i=1}^n f(X_i)$, $\mathbb{P}f$ to denote $\mathbb{E}\{f(X)\} = \int f dP$,  and $\mathbb{G}_n f$ to denote $ n^{-1/2}(\mathbb{P}_n - \mathbb{P})f =  n^{-1/2} \sum_{i=1}^n \{f(X_i) - \mathbb{P}f\}$. In addition, we use $\symbf{\eta}$ to denote all parameters of the models for $\lambda\{t \mid \overline{\symbf{O}}(t)\}$ and $dF\{ y(t) \mid \Delta N(t) = 1, \overline{\symbf{O}}(t)\}$ in the influence function. Then we write the influence function for $\symbf{\beta}$ as $\symbf{\varphi}_j \left(\symbf{O}; \symbf{\eta}, \symbf{\beta}\right)$. We will assume the parameter space $\mathcal{B}^* \subset \mathbb{R}^p$ is an open set. For p-dimensional vectors $\symbf{x}, \symbf{y} \in \mathbb{R}^p$, we define the L2 norm as $||\symbf{x}||_2 = \sqrt{\symbf{x}^\prime \symbf{x}}$, and the distance as $d(\symbf{x},\symbf{y}) = ||\symbf{x}-\symbf{y}||_2$. For a matrix $A = (a_{ij}) \in \mathbb{R}^{p_1\times p_2}$, we define the matrix norm as $||A||_2 = ||vec(A)||_2 = \sqrt{\sum_{i=1}^{p_1}\sum_{j=1}^{p_2} a_{ij}^2}$, where $a_{ij}$ denotes the $(i,j)$-th element of $A$ and $vec(\cdot)$ means stacking all elements of the matrix $A$ column-wise into a single vector.

We use $\symbf{\eta}_0$ and $\symbf{\beta}_0 \in \mathcal{B}^*$ to denote the true values of these parameters, and use $\widehat{\symbf{\eta}}$ and $\widehat{\symbf{\beta}}\in \mathcal{B}^*$ to denote the estimated values of these parameters, where
 \begin{align*}
     & \mathbb{P}_n \symbf{\varphi}_j \big( \symbf{O} ; \widehat{\symbf{\eta}}, \widehat{\symbf{\beta}} \big) = 0, \\
     & \mathbb{P} \symbf{\varphi}_j \left( \symbf{O} ; \symbf{\eta}_0, \symbf{\beta}_0\right) = 0.
 \end{align*}

Suppose the intensity function $\lambda \left\{ t \mid \overline{\symbf{O}}(t) \right\}$ is modeled using a stratified Andersen-Gill model as described in Section \ref{chp4:subsection:estimation}. The partial likelihood estimator $\widehat{\symbf{\gamma}}$ and the Breslow estimator $\widehat{\Lambda}_{0,k}(t)$ are used to estimate $\symbf{\gamma}$ and the cumulative baseline intensity functions $\Lambda_{0,k}(t) = \int_{s=0}^t \lambda_{0,k}(s) ds$ for each assessment $k$. Then kernel smoothing of $\widehat{\Lambda}_{0,k}(t)$ is used to estimate the baseline intensity $\lambda_{0,k}(t)$ as $\widehat{\lambda}_{0,k}(t) = \frac{1}{h} \sum_{k} K\left(\frac{t-T_j}{h}\right)d\widehat{\Lambda}_{0,k} (T_j)$, where $h$ is the bandwidth, $K(\cdot)$ is a choice of kernel, $T_j$ is a time at which one or more participants had their $k$th assessment, and $d\widehat{\Lambda}_{0,k}(T_j)$ is the size of the jump in $\widehat{\Lambda}_{0,k}(t)$ at $T_j$. For the model of the conditional outcome distribution $dF\{ y(t) | \Delta N(t) = 1, \overline{\symbf{O}}(t)\}$, we impose a certain rate condition below, and we will show that the single-index model satisfies this rate. 

In the following part, we denote these estimated nuisance functions with the estimated nuisance parameter $\widehat{\symbf{\eta}}$ as $\widehat{\lambda} \left\{ t \mid \overline{\symbf{O}}(t) \right\}$, 
$\widehat{\rho} \left\{t \mid \overline{\symbf{O}}(t), Y(t)\right\}$,
$\widehat{\mathbb{E}}\{Y(t)\mid \overline{\symbf{O}}(t)\}$,  
$\widehat{\mathbb{E}} \big[ Y(t) \exp \{ \alpha Y(t) \}  \mid \Delta N(t) = 1, \overline{\symbf{O}}(t) \big]$, 
$\widehat{\mathbb{E}} \big[ \exp \{ \alpha Y(t) \}  \mid \Delta N(t) = 1, \overline{\symbf{O}}(t) \big]$; and with the true nuisance parameter $\symbf{\eta}_0$ as $\lambda \left\{ t \mid \overline{\symbf{O}}(t) \right\}$, $\rho\left\{t \mid \overline{\symbf{O}}(t), Y(t)\right\}$,
$\mathbb{E} \{Y(t)\mid \overline{\symbf{O}}(t)\}$,  
$\mathbb{E} \big[ Y(t) \exp \{ \alpha Y(t) \}  \mid \Delta N(t) = 1, \overline{\symbf{O}}(t) \big]$, $\mathbb{E} \big[ \exp \{ \alpha Y(t) \}  \mid \Delta N(t) = 1, \overline{\symbf{O}}(t) \big]$.

\subsubsection{Additional Assumptions and Lemmas}
\label{app:assum_lemma}

Here we introduce some additional assumptions for the influence function $\symbf{\varphi}_j \left(\symbf{O}; \symbf{\eta}, \symbf{\beta}\right)$, $j=1,2$.

\begin{assumption}
    \label{assum:bound_Y}
$Y(t)$ is bounded for $t\in[t_1,t_2]$.
\end{assumption}


\begin{assumption}
\label{assum:donsker1}
   The class of functions $\left\{ \symbf{\varphi}_j \left(\symbf{O} ;  
   \symbf{\eta}, \symbf{\beta}\right):
  \symbf{\beta} \in \mathcal{B}^* \right\}$ 
  is P-Donsker, j=1,2.
\end{assumption}

\begin{assumption}
\label{assum:donsker2} 
   The class of functions $\left\{
   \frac{\partial \varphi_j(\symbf{O}; \symbf{\eta}, \symbf{\beta})}{\partial \symbf{\beta}}:
  \symbf{\beta} \in \mathcal{B}^* \right\}$ 
  is P-Donsker, j=1,2.
\end{assumption}


Here we introduce some additional assumptions for estimation of nuisance functions in the influence function $\symbf{\varphi}_j \left(\symbf{O}; \symbf{\eta}, \symbf{\beta}\right)$.

\begin{assumption}
    \label{assum: corr_spec}
    The models for $\lambda\{t \mid \overline{\symbf{O}}(t)\}$ and $dF\{ y(t) \mid \Delta N(t) = 1, \overline{\symbf{O}}(t)\}$ are correctly specified.
\end{assumption}

\begin{assumption}
    \label{assum: lambda_rate}
     The model for $\lambda\{t \mid \overline{\symbf{O}}(t) \}$ is a (possibly stratified) Andersen-Gill model, $\gamma$ and $\lambda_{0,k}(t)$, for each visit number $k$, are estimated as described in Section \ref{chp4:subsection:estimation}, and the bandwidth $h=h_n$ used for kernel smoothing is chosen using a procedure such that $\lim_{n \to \infty} nh_n^5=d < \infty$.
\end{assumption}

\begin{assumption}
    \label{assum:Y_rate}
     The model for $dF\{ y(t) \mid \Delta N(t) = 1, \overline{\symbf{O}}(t)\} $ and estimators of $\mathbb{E}[\exp \{ \alpha Y(t) \} \mid \Delta N(t) = 1, \overline{\symbf{O}}(t) ]$ and $\mathbb{E}[Y(t) \exp \{ \alpha Y(t) \} \mid \Delta N(t) = 1, \overline{\symbf{O}}(t) ]$ are such that: 
     \[ \mathbb{E} \left[ \int_{t=t_1}^{t_2} \Big\{ \widehat{\mathbb{E}}[ \exp\{\alpha Y(t)\} \mid \Delta N(t) = 1, \overline{\symbf{O}}(t)]  
     - 
     \mathbb{E}[\exp \{ \alpha Y(t) \} \mid \Delta N(t) = 1, \overline{\symbf{O}}(t) ] \Big\}^2 dt \right] \]
     and 
     \[ \mathbb{E} \left[ \int_{t=t_1}^{t_2} \Big\{ \widehat{\mathbb{E}}[ Y(t) \exp\{\alpha Y(t)\} \mid \Delta N(t) = 1, \overline{\symbf{O}}(t)] 
     - 
     \mathbb{E}[ Y(t)\exp \{ \alpha Y(t) \} \mid \Delta N(t) = 1, \overline{\symbf{O}}(t) ] \Big\}^2 dt \right]\]
are each $o_P(n^{-1/2})$.
\end{assumption}

If the models for $\lambda\{t\mid \overline{\symbf{O}}(t)\}$ and $dF\{Y(t) = y(t) \mid \Delta N(t) = 1, \overline{\symbf{O}}(t)\}$ are both correctly specified, and if Assumptions \ref{chp4:assum:tilting}--\ref{chp4:assum:nofuture} and additional regularity assumptions hold, then $\widehat{\symbf{\beta}}$ derived in Section \ref{chp4:subsection:estimation} is a consistent estimator of the true value $\symbf{\beta}$ and is asymptotically normal.

\begin{theorem}[Consistency]
\label{theorem:consist}
Under Assumptions \ref{chp4:assum:tilting}--\ref{assum:donsker1}, \ref{assum: corr_spec}--\ref{assum:Y_rate}, $\widehat{\symbf{\beta}}_j \overset{p}{\rightarrow} \symbf{\beta}$, $j=1,2$.
\end{theorem}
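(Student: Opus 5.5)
We will prove consistency with the standard Z-estimator argument (e.g., van der Vaart, Theorem 5.9). Write $\Psi_n(\symbf{\beta}) = \mathbb{P}_n \symbf{\varphi}_j(\symbf{O};\widehat{\symbf{\eta}},\symbf{\beta})$ and $\Psi(\symbf{\beta}) = \mathbb{P}\symbf{\varphi}_j(\symbf{O};\symbf{\eta}_0,\symbf{\beta})$. Two facts are needed: (i) $\symbf{\beta}_0$ is a well-separated zero of $\Psi$, and (ii) $\sup_{\symbf{\beta}\in\mathcal{B}^*}\|\Psi_n(\symbf{\beta})-\Psi(\symbf{\beta})\|_2 = o_P(1)$. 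Together with $\Psi_n(\widehat{\symbf{\beta}})=0$, these give $\widehat{\symbf{\beta}}_j\overset{p}{\rightarrow}\symbf{\beta}_0$.

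\textbf{Identification.} The martingale identity \eqref{equ:martingale1} and Proposition~\ref{prop:mu_identity} compute $\Psi(\symbf{\beta})$ directly. The augmentation term $\symbf{h}_j^*$ has mean zero. The inverse-weighted term $\symbf{g}_j^*$ has conditional mean $\symbf{\varphi}_j(\symbf{L})$ given $\symbf{L}$. Hence
\[
\Psi(\symbf{\beta}) = \int_{t_1}^{t_2}\symbf{W}_j(t;\symbf{\beta})\big[\mu(t)-s\{\symbf{B}(t)^\prime\symbf{\beta}\}\big]\,dt .
\]
For $j=2$, this is $\symbf{V}_2(\symbf{\beta})^{-1}$ times the gradient of the strictly convex loss $\mathcal{L}_2$. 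For $j=1$, it is a positive-weighted analogue built from $\mathcal{L}_1$. In both cases Assumption~\ref{chp4:assum:marginal} makes $\symbf{\beta}_0$ the unique zero. To obtain well-separatedness, i.e. $\inf_{\|\symbf{\beta}-\symbf{\beta}_0\|\ge\epsilon}\|\Psi(\symbf{\beta})\|>0$, I would restrict to a compact subset of $\mathcal{B}^*$ containing $\widehat{\symbf{\beta}}$ with probability tending to one, or use convexity of the loss directly. On a compact set, continuity of $\Psi$ (from $s\in\mathcal{C}^2$) and uniqueness of the zero suffice.

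\textbf{Uniform convergence.} Decompose
\[
\Psi_n-\Psi = (\mathbb{P}_n-\mathbb{P})\symbf{\varphi}_j(\cdot;\widehat{\symbf{\eta}},\symbf{\beta}) + \mathbb{P}\{\symbf{\varphi}_j(\cdot;\widehat{\symbf{\eta}},\symbf{\beta})-\symbf{\varphi}_j(\cdot;\symbf{\eta}_0,\symbf{\beta})\}.
\]
The first term is $o_P(1)$ uniformly in $\symbf{\beta}$ by Assumption~\ref{assum:donsker1}, since Donsker implies Glivenko--Cantelli. This uses a class indexed also by $\symbf{\eta}$ in a neighbourhood of $\symbf{\eta}_0$, which I would take as part of that assumption.

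The second term is handled by writing it as a product of nuisance errors. Its dependence on $\symbf{\beta}$ enters only through the bounded factors $\symbf{W}_j$ and $s\{\symbf{B}(t)^\prime\symbf{\beta}\}$. Condition on $\overline{\symbf{O}}(t)$ and use $dN(t)$ with intensity $\rho$. The difference then becomes
\[
\mathbb{E}\int_{t_1}^{t_2}\symbf{W}_j\Big(\frac{\rho}{\widehat\rho}-1\Big)\big(\mathbb{E}\{Y(t)\mid\overline{\symbf{O}}(t)\}-\widehat{\mathbb{E}}\{Y(t)\mid\overline{\symbf{O}}(t)\}\big)\,dt .
\]
By Cauchy--Schwarz, this is bounded by the product of $L_2$ errors of $\widehat\rho$ and $\widehat{\mathbb{E}}\{Y(t)\mid\overline{\symbf{O}}(t)\}$. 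It is therefore $o_P(1)$ uniformly, because:
\begin{itemize}
\item Assumption~\ref{assum:bound_Y} gives bounded outcomes.
\item Assumption~\ref{chp4:assum:positivity} gives $\rho>c$, so $\widehat\rho$ is bounded below with probability tending to one.
\item The denominators $\mathbb{E}[\exp\{\alpha Y\}\mid\cdot]$ are bounded away from zero because $Y$ is bounded.
\item Assumption~\ref{assum:Y_rate} controls the tilted-moment ratio defining $\widehat{\mathbb{E}}\{Y(t)\mid\overline{\symbf{O}}(t)\}$ and the $\exp\{\alpha Y\}$ factor in $\widehat\rho$, via Proposition~\ref{chp4:prop:lambda_rho}.
\item Assumption~\ref{assum: lambda_rate} gives consistency of $\widehat{\lambda}$: Breslow/partial likelihood plus kernel smoothing with $nh_n^5\to d$ yield uniform convergence on $[t_1,t_2]$.
\end{itemize}

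\textbf{Main obstacle.} I expect the hardest step to be making the uniform-in-$\symbf{\beta}$ control rigorous while $\widehat{\symbf{\eta}}$ is random. This needs uniform boundedness of $\symbf{W}_1$, which contains $g'(s\{\symbf{B}(t)^\prime\symbf{\beta}\})$ and can blow up at the boundary of the mean space, e.g. under the logit link. It also needs the uniform kernel-intensity consistency near the endpoints $t_1,t_2$. I would first handle $\symbf{W}_1$ by restricting to a compact $\symbf{\beta}$-set on which $s\{\symbf{B}(t)^\prime\symbf{\beta}\}$ stays in the interior of the outcome range. I would then invoke standard uniform consistency results for smoothed Breslow estimators on $[t_1,t_2]\subset(0,\tau)$.
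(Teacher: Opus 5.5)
Your two ingredients match the paper's. Your formula $\Psi(\symbf{\beta})=\int_{t_1}^{t_2}\symbf{W}_j(t;\symbf{\beta})[\mu(t)-s\{\symbf{B}(t)^\prime\symbf{\beta}\}]\,dt$ is the computation in Lemma~\ref{lemma:unique}. Your product-of-errors bound for $\mathbb{P}\{\symbf{\varphi}_j(\cdot;\widehat{\symbf{\eta}},\symbf{\beta})-\symbf{\varphi}_j(\cdot;\symbf{\eta}_0,\symbf{\beta})\}$ is Lemma~\ref{lemma:convg1}; the paper instead uses Lemma~\ref{lemma:L2P_continuous} plus Jensen, which is all consistency needs.

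The gap is the step that turns these into consistency. Van der Vaart's Theorem~5.9 requires uniform convergence and well-separatedness over the whole parameter set, and it takes as given that $\widehat{\symbf{\beta}}_j$ is a (near-)root. On the open, possibly unbounded $\mathcal{B}^*$ neither of the first two is available. Under log or logit links, $\symbf{W}_1$ blows up through $g'\{s(z)\}$ and $\symbf{W}_2$ through $\symbf{V}_2(\symbf{\beta})^{-1}$ as $\|\symbf{\beta}\|\to\infty$, so the supremum of your nuisance term over $\mathcal{B}^*$ is not controlled.

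Your remedy, restricting to a compact set that contains $\widehat{\symbf{\beta}}_j$ with probability tending to one, is circular. That tightness is essentially the conclusion, and no assumption supplies it. If you instead define the estimator as a root in a fixed compact $K\ni\symbf{\beta}_0$, you must prove such a root exists, which your plan never does.

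For $j=1$ this is not cosmetic. In $\mathbb{P}_n\symbf{\varphi}_1(\cdot;\widehat{\symbf{\eta}},\symbf{\beta})$, $\symbf{\beta}$ sits inside the weight $g'[s\{\symbf{B}(t)^\prime\symbf{\beta}\}]$ multiplying the $dN(t)$ atoms. Unlike $j=2$, it has no convex potential, and nothing rules out roots far from $\symbf{\beta}_0$. Separately, population uniqueness for $j=1$ does not come from $\mathcal{L}_1$. It follows from $(\symbf{\beta}-\symbf{\beta}_0)^\prime\symbf{V}_1\Psi(\symbf{\beta})=-\int_{t_1}^{t_2}g'\{s(z_t)\}(z_t-z_{0t})\{s(z_t)-s(z_{0t})\}\,dt<0$ for $\symbf{\beta}\neq\symbf{\beta}_0$, where $z_t=\symbf{B}(t)^\prime\symbf{\beta}$ and $z_{0t}=\symbf{B}(t)^\prime\symbf{\beta}_0$.

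The paper avoids all of this by proving a local statement:
\begin{enumerate}
\item Lemma~\ref{lemma:unique} gives Jacobian $-\symbf{I}_p$ at $\symbf{\beta}_0$, hence $\|\Psi(\symbf{\beta})\|\ge c\|\symbf{\beta}-\symbf{\beta}_0\|$ on a small closed ball.
\item Uniform convergence is needed only on that compact ball, where $\symbf{W}_j$ is bounded and your argument goes through unchanged.
\item A boundary argument yields a root $\widehat{\symbf{\beta}}_j$ in the ball, and then $c\|\widehat{\symbf{\beta}}_j-\symbf{\beta}_0\|\le\|\Psi(\widehat{\symbf{\beta}}_j)-\Psi_n(\widehat{\symbf{\beta}}_j)\|\overset{p}{\rightarrow}0$.
\end{enumerate}
This proves consistency of a sequence of roots, which is what the assumptions actually support. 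Replace your global step with this argument.

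For the existence part, use $\Psi(\symbf{\beta})=-(\symbf{\beta}-\symbf{\beta}_0)+o(\|\symbf{\beta}-\symbf{\beta}_0\|)$ together with uniform convergence to get $(\symbf{\beta}-\symbf{\beta}_0)^\prime\Psi_n(\symbf{\beta})<0$ on the sphere, then conclude with Brouwer's theorem. A lower bound on $\|\Psi_n\|$ on the sphere plus $\Psi_n(\symbf{\beta}_0)\approx\symbf{0}$ does not by itself force a zero.

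For $j=2$, your convexity remark can be made global and simpler. Since $\symbf{V}_2(\symbf{\beta})^{-1}$ factors out of $\symbf{\varphi}_2$, the estimating equation is $\int_{t_1}^{t_2}\symbf{B}(t)s\{\symbf{B}(t)^\prime\symbf{\beta}\}\,dt=\widehat{\symbf{m}}_n$ with $\widehat{\symbf{m}}_n$ free of $\symbf{\beta}$. The left side is the gradient of a strictly convex function with Jacobian $\symbf{V}_2(\symbf{\beta})$, hence a homeomorphism onto an open set. So $\widehat{\symbf{\beta}}_2\overset{p}{\rightarrow}\symbf{\beta}_0$ follows directly from $\widehat{\symbf{m}}_n\overset{p}{\rightarrow}\int_{t_1}^{t_2}\symbf{B}(t)\mu(t)\,dt$, with no uniformity in $\symbf{\beta}$ needed.
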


\begin{theorem}[Asymptotic normality]
Under Assumptions \ref{chp4:assum:tilting}--\ref{assum:Y_rate},
\label{theorem:asy_normal}
    \begin{align*} \sqrt{n}\big(\widehat{\symbf{\beta}}_j - \symbf{\beta}_0\big)  
    \overset{d}{\rightarrow}
    \mathcal{N}(\symbf{0},  \Sigma_j^*),
\end{align*}
where 
\begin{align*}
     \Sigma_j^* = 
     \left[\mathbb{P} \frac{\partial \symbf{\varphi}_j \big( \symbf{O} ; \symbf{\eta}_0, \symbf{\beta}_0 \big)}{\partial \symbf{\beta}}\right]^{-1}
     \mathbb{E} \left\{\symbf{\varphi}_j \left( \symbf{O}; \symbf{\eta}_0, \symbf{\beta}_0\right)  \symbf{\varphi}_j \left( \symbf{O}; \symbf{\eta}_0, \symbf{\beta}_0\right) ^\prime \right\} \left[\mathbb{P} \frac{\partial \symbf{\varphi}_j \big( \symbf{O} ; \symbf{\eta}_0, \symbf{\beta}_0 \big)}{\partial \symbf{\beta}}\right]^{-1, \prime},
\end{align*}
and 
\begin{align*}
    \frac{\partial \symbf{\varphi}_j \big( \symbf{O} ; \symbf{\eta}_0, \symbf{\beta}_0 \big)}{\partial \symbf{\beta}} = &  \ 
    \int_{t=t_1}^{t_2} \frac{\partial  \symbf{W}_j(t;\symbf{\beta}_0)}{\partial \symbf{\beta}}\frac{
    \left[ Y(t) - \mathbb{E}\left\{Y(t)\mid \overline{\symbf{O}}(t)\right\}\right]}{\rho\left\{t \mid \overline{\symbf{O}}(t), Y(t)\right\}} dN(t) \\
    & \quad + \int_{t=t_1}^{t_2} \frac{\partial  \symbf{W}_j(t;\symbf{\beta}_0)}{\partial \symbf{\beta}}\left[ \mathbb{E}\left\{Y(t)\mid \overline{\symbf{O}}(t)\right\} -  s\left\{ \symbf{B}(t)^\prime \symbf{\beta}_0\right\}\right]dt \\
    & \quad -
    \int_{t=t_1}^{t_2} \symbf{W}_j(t;\symbf{\beta}_0)\frac{\partial s\left\{ \symbf{B}(t)^\prime \symbf{\beta}_0\right\}}{\partial \symbf{\beta}}dt,\ j=1,2.
\end{align*}
\end{theorem}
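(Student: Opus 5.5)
We follow the standard Z-estimator route of \cite{kennedy2016empiricalprocesses}. Using $\mathbb{P}_n \symbf{\varphi}_j(\symbf{O};\widehat{\symbf{\eta}},\widehat{\symbf{\beta}}_j)=0$ and $\mathbb{P}\symbf{\varphi}_j(\symbf{O};\symbf{\eta}_0,\symbf{\beta}_0)=0$, we write
\begin{align*}
0 ={}& (\mathbb{P}_n-\mathbb{P})\symbf{\varphi}_j(\symbf{O};\symbf{\eta}_0,\symbf{\beta}_0)
+ (\mathbb{P}_n-\mathbb{P})\{\symbf{\varphi}_j(\symbf{O};\widehat{\symbf{\eta}},\widehat{\symbf{\beta}}_j)-\symbf{\varphi}_j(\symbf{O};\symbf{\eta}_0,\symbf{\beta}_0)\}\\
&+ \mathbb{P}\{\symbf{\varphi}_j(\symbf{O};\widehat{\symbf{\eta}},\widehat{\symbf{\beta}}_j)-\symbf{\varphi}_j(\symbf{O};\symbf{\eta}_0,\widehat{\symbf{\beta}}_j)\}
+ \mathbb{P}\{\symbf{\varphi}_j(\symbf{O};\symbf{\eta}_0,\widehat{\symbf{\beta}}_j)-\symbf{\varphi}_j(\symbf{O};\symbf{\eta}_0,\symbf{\beta}_0)\}.
\end{align*}
The first term is $n^{-1/2}$ times an asymptotically normal vector with covariance $\mathbb{E}(\symbf{\varphi}_j\symbf{\varphi}_j^\prime)$, by the CLT and boundedness (Assumption \ref{assum:bound_Y}, Assumption \ref{chp4:assum:positivity}). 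We expand the last term by a mean-value (Taylor) argument in $\symbf{\beta}$. This is legitimate because $s\in\mathcal{C}^2$, so $\symbf{W}_j$ and $s\{\symbf{B}(t)^\prime\symbf{\beta}\}$ are $\mathcal{C}^1$ in $\symbf{\beta}$. The expansion gives $\{\mathbb{P}\,\partial\symbf{\varphi}_j(\symbf{O};\symbf{\eta}_0,\symbf{\beta}_0)/\partial\symbf{\beta}+o_P(1)\}(\widehat{\symbf{\beta}}_j-\symbf{\beta}_0)$, with consistency supplied by Theorem \ref{theorem:consist}. The derivative has the displayed form, and its first two terms have mean zero at the truth: the first by the martingale identity \eqref{equ:martingale1} after iterated expectation on $\overline{\symbf{O}}(t)$, the second by Proposition \ref{prop:mu_identity}. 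So the limit is $-\int \symbf{W}_j\,\partial s/\partial\symbf{\beta}^\prime\,dt$. For $j=1$ this equals $-\symbf{I}_p$, since $g^\prime(s(z))s^\prime(z)=1$. For $j=2$ it also equals $-\symbf{I}_p$ by the definition of $\symbf{V}_2$. In both cases the limit is invertible.

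The empirical-process term is $o_P(n^{-1/2})$. To see this, we use the Donsker classes of Assumptions \ref{assum:donsker1}--\ref{assum:donsker2} together with $L_2(P)$ consistency of $\symbf{\varphi}_j(\cdot;\widehat{\symbf{\eta}},\widehat{\symbf{\beta}}_j)$. That consistency follows from uniform consistency of the kernel-smoothed Breslow and partial-likelihood estimators (Assumption \ref{assum: lambda_rate}), the rates in Assumption \ref{assum:Y_rate}, boundedness, and the positivity lower bound on $\rho$. Lemma 19.24 of van der Vaart then applies; if $\widehat{\symbf{\eta}}$ is not in a Donsker class, we instead use a sample-splitting variant.

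The main obstacle is the third term, the nuisance bias. We compute it exactly. Write $\widehat{m}(t)=\widehat{\mathbb{E}}\{Y(t)\mid\overline{\symbf{O}}(t)\}$ and $\widehat{\rho}$. Conditioning on $\overline{\symbf{O}}(t)$ and $Y(t)$, and replacing $dN(t)$ by $\rho\,dt$ (Assumption \ref{chp4:assum:nofuture}), gives
\[
\mathbb{P}\int \symbf{W}_j\Big[\frac{\rho}{\widehat{\rho}}\{Y(t)-\widehat m(t)\}+\widehat m(t)-m(t)\Big]dt
=\mathbb{P}\int \symbf{W}_j\Big(\frac{\rho}{\widehat{\rho}}-1\Big)\{Y(t)-\widehat m(t)\}\,dt .
\]
The identity uses $\mathbb{E}\{Y(t)-m(t)\mid\overline{\symbf{O}}(t)\}=0$ for the $\rho=\widehat\rho$ part. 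Conditioning further on $\overline{\symbf{O}}(t)$ alone, the right-hand side becomes a product-type remainder. By Proposition \ref{chp4:prop:lambda_rho}, $\rho/\widehat\rho = (\lambda c)/(\widehat\lambda\widehat c)$ with $c=\mathbb{E}[e^{\alpha Y}\mid \Delta N=1,\overline{\symbf{O}}(t)]$. Hence the remainder has the form $\mathbb{E}\int \symbf{W}_j\,\{\lambda c/(\widehat\lambda\widehat c)-1\}\,e^{\alpha Y}$-weighted differences, which after algebra reduces to sums of products $(\widehat\lambda-\lambda)(\widehat c-c)$, $(\widehat c - c)^2$, and $(\widehat c-c)(\widehat d-d)$, where $d=\mathbb{E}[Ye^{\alpha Y}\mid\cdots]$. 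Cauchy--Schwarz then bounds it by products of $L_2$ errors. The $(\widehat c,\widehat d)$ errors are $o_P(n^{-1/4})$ in norm by Assumption \ref{assum:Y_rate}. The intensity error is $O_P(n^{-2/5})$ under $nh_n^5\to d$, by standard kernel-smoothing rates. Each product is therefore $o_P(n^{-1/2})$. Care is needed here because the $\exp\{-\alpha Y\}$ factor in $\rho$ must be handled by writing everything relative to $\Delta N=1$ using the tilting identity. Combining the four terms, multiplying by $\sqrt n$, and applying Slutsky yields $\sqrt n(\widehat{\symbf{\beta}}_j-\symbf{\beta}_0)\overset{d}{\to}\mathcal N(\symbf 0,\Sigma_j^*)$ with the stated sandwich form.
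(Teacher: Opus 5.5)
Your proposal is correct, but you linearize at a different point than the paper does.

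\textbf{The paper's route.} It first handles $\mathbb{P}_n\symbf{\varphi}_j(\symbf{O};\widehat{\symbf{\eta}},\symbf{\beta}_0)$ at the \emph{fixed} point $\symbf{\beta}_0$ (Lemma~\ref{lemma:convg3}: CLT, Lemma 19.24, and the product bound of Lemma~\ref{lemma:convg1}). It then applies a mean value expansion to the \emph{sample} map $\symbf{\beta}\mapsto\mathbb{P}_n\symbf{\varphi}_j(\symbf{O};\widehat{\symbf{\eta}},\symbf{\beta})$. This forces it to show that the empirical Jacobian at an intermediate $\widetilde{\symbf{\beta}}$ converges to $\mathbb{P}\,\partial\symbf{\varphi}_j(\symbf{O};\symbf{\eta}_0,\symbf{\beta}_0)/\partial\symbf{\beta}$. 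That step is what Lemmas~\ref{lemma:convg2} and~\ref{lemma:convg4} and the derivative Donsker condition (Assumption~\ref{assum:donsker2}) are for.

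\textbf{Your route.} You plug $\widehat{\symbf{\beta}}_j$ into the four-term decomposition and expand only the deterministic population map
$\symbf{\beta}\mapsto\mathbb{P}\symbf{\varphi}_j(\symbf{O};\symbf{\eta}_0,\symbf{\beta})=\int_{t_1}^{t_2}\symbf{W}_j(t;\symbf{\beta})[s\{\symbf{B}(t)^\prime\symbf{\beta}_0\}-s\{\symbf{B}(t)^\prime\symbf{\beta}\}]\,dt$.
Its Jacobian at $\symbf{\beta}_0$ is explicitly $-\symbf{I}_p$, as computed in Lemma~\ref{lemma:unique}.

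\textbf{What each buys.} Your approach dispenses with empirical-Jacobian convergence and with Assumption~\ref{assum:donsker2}. Because your map is a deterministic $\mathcal{C}^1$ function, differentiability at $\symbf{\beta}_0$ is enough. You therefore avoid the paper's single-intermediate-point mean value step for a random vector-valued map, which strictly needs row-wise intermediate points.

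The price is that your empirical-process and nuisance-bias terms are evaluated at the random $\widehat{\symbf{\beta}}_j$ rather than at $\symbf{\beta}_0$. You therefore need three things: the Donsker class must range jointly over $(\symbf{\eta},\symbf{\beta})$; $L_2(P)$ continuity in $\symbf{\beta}$ together with consistency of $\widehat{\symbf{\beta}}_j$; and a bias bound uniform over a neighbourhood of $\symbf{\beta}_0$. All of these are cheap here. $\symbf{\beta}$ enters the bias only through the bounded, continuous weight $\symbf{W}_j(t;\symbf{\beta})$, and Lemma~\ref{lemma:L2P_continuous} is already uniform in $\symbf{\beta}$.

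\textbf{Minor points.} Via Proposition~\ref{chp4:prop:lambda_rho}, your bias identity reduces to exactly the paper's expression $\mathbb{E}\int \symbf{W}_j(\rho/\widehat{\rho}-1)(m-\widehat{m})\,dt$.

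Your algebraic expansion of this remainder omits the cross term $(\widehat{\lambda}-\lambda)(\widehat{d}-d)$. It is controlled by the same rates. Alternatively, you can apply Cauchy--Schwarz directly to $\|\rho-\widehat{\rho}\|\,\|m-\widehat{m}\|$, as the paper does.

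Finally, your sample-splitting fallback would change the estimator. For the theorem as stated you must rely on the Donsker assumption, as the paper does.
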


Before we provide the proof for these two theorems, we first provide some lemmas. The proofs of Theorems \ref{theorem:consist} and \ref{theorem:asy_normal} are provided in Appendices \ref{app:proof_consist} and \ref{app:proof_asy}.


\begin{lemma}
    \label{lemma:unique}
    Suppose that Assumptions \ref{chp4:assum:tilting}, \ref{chp4:assum:positivity}, \ref{chp4:assum:marginal}, \ref{chp4:assum:nofuture} hold,
    $\symbf{\beta}_0 \in \mathcal{B}^*$ is a locally unique solution of the equation $\mathbb{P}\varphi_j(\symbf{O}; \symbf{\eta}_0, \symbf{\beta}) = 0$ for $j=1,2$.
\end{lemma}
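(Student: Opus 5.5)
Our plan is to compute the population estimating function $\symbf{\beta}\mapsto\mathbb{P}\symbf{\varphi}_j(\symbf{O};\symbf{\eta}_0,\symbf{\beta})$ in closed form, verify that it vanishes at $\symbf{\beta}_0$, and show that its Jacobian at $\symbf{\beta}_0$ is nonsingular. Local uniqueness then follows from the inverse function theorem, since $s\in\mathcal{C}^2$ makes the map $\mathcal{C}^1$.

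\textbf{Step 1 (reduce the first term).} Write the first integral in $\symbf{\varphi}_j$ as a stochastic integral against $dM(t)=dN(t)-\rho\{t\mid\overline{\symbf{O}}(t),Y(t)\}dt$, plus the compensator term $\int \symbf{W}_j(t;\symbf{\beta})[Y(t)-\mathbb{E}\{Y(t)\mid\overline{\symbf{O}}(t)\}]dt$. The integrand $\symbf{W}_j(t;\symbf{\beta})[Y(t)-\mathbb{E}\{Y(t)\mid \overline{\symbf{O}}(t)\}]/\rho$ is a function of $t,\overline{\symbf{O}}(t),Y(t)$ and is bounded under Assumption~\ref{chp4:assum:positivity}, because $\symbf{W}_j(t;\symbf{\beta})$ is deterministic. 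By Assumption~\ref{chp4:assum:nofuture} and equation~\eqref{equ:martingale1}, the martingale part therefore has mean zero. By iterated expectations over $\overline{\symbf{O}}(t)$, the compensator part also has mean zero. This gives
\[
\mathbb{P}\symbf{\varphi}_j(\symbf{O};\symbf{\eta}_0,\symbf{\beta})=\int_{t_1}^{t_2}\symbf{W}_j(t;\symbf{\beta})\big[\mathbb{E}\{Y(t)\}-s\{\symbf{B}(t)'\symbf{\beta}\}\big]dt .
\]
Proposition~\ref{prop:mu_identity} shows that $\mathbb{E}[\mathbb{E}\{Y(t)\mid\overline{\symbf{O}}(t)\}]=\mu(t)$ under Assumption~\ref{chp4:assum:tilting}, so this equals $\int \symbf{W}_j(t;\symbf{\beta})[\mu(t)-s\{\symbf{B}(t)'\symbf{\beta}\}]dt$. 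Under Assumption~\ref{chp4:assum:marginal}, $\mu(t)=s\{\symbf{B}(t)'\symbf{\beta}_0\}$, so the expression vanishes at $\symbf{\beta}_0$.

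\textbf{Step 2 (Jacobian).} Differentiate the reduced expression at $\symbf{\beta}_0$. The term involving $\partial\symbf{W}_j/\partial\symbf{\beta}$ multiplies $\mu(t)-s\{\symbf{B}(t)'\symbf{\beta}_0\}=0$ and drops out. What remains is $-\int \symbf{W}_j(t;\symbf{\beta}_0)\,s'\{\symbf{B}(t)'\symbf{\beta}_0\}\symbf{B}(t)'dt$.
\begin{itemize}
\item For $j=1$, use $g'(\mu)=1/s'(z)$ at $\mu=s(z)$. The Jacobian becomes $-\symbf{V}_1^{-1}\int \symbf{B}\symbf{B}'dt=-\symbf{I}_p$.
\item For $j=2$, it becomes $-\symbf{V}_2(\symbf{\beta}_0)^{-1}\symbf{V}_2(\symbf{\beta}_0)=-\symbf{I}_p$.
\end{itemize}
Both matrices are well defined: $\symbf{V}_1$ is positive definite by Assumption~\ref{chp4:assum:marginal}, and $\symbf{V}_2(\symbf{\beta}_0)$ is positive definite because $s'>0$ on the compact set $\{\symbf{B}(t)'\symbf{\beta}_0\}$ (taking $\symbf{B}$ continuous). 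The Jacobian is therefore nonsingular.

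\textbf{Step 3 (conclude).} Since the map is $\mathcal{C}^1$ on the open set $\mathcal{B}^*$, vanishes at $\symbf{\beta}_0$, and has invertible derivative there, the inverse function theorem gives a neighborhood of $\symbf{\beta}_0$ containing no other zero. For $j=2$ one can even get global uniqueness: the reduced expression equals $\symbf{V}_2(\symbf{\beta})^{-1}$ times the gradient of the strictly convex $-\mathcal{L}_2$, but local uniqueness suffices.

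\textbf{Main obstacle.} The delicate step is Step 1. Interchanging expectation with the $dN$-integral and applying the martingale property both require integrability. We would secure this with Assumption~\ref{assum:bound_Y}-type boundedness of $Y$, positivity of $\rho$, and finiteness of $\mathbb{E}N(t_2)$. We also need the integrand to be predictable, which is where left-continuity of $Y(t)$ enters.
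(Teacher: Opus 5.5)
Your proposal is correct and follows the paper's proof. Both reduce $\mathbb{P}\symbf{\varphi}_j(\symbf{O};\symbf{\eta}_0,\symbf{\beta})$ to $\int_{t_1}^{t_2}\symbf{W}_j(t;\symbf{\beta})[s\{\symbf{B}(t)^\prime\symbf{\beta}_0\}-s\{\symbf{B}(t)^\prime\symbf{\beta}\}]dt$, which you justify more carefully via the martingale/compensator split, and both compute $J_j(\symbf{\beta}_0)=-\symbf{I}_p$ for both weights. The only difference is the last step: the paper argues that the Hessian of $g_j(\symbf{\beta})=\|\mathbb{P}\symbf{\varphi}_j(\symbf{O};\symbf{\eta}_0,\symbf{\beta})\|_2^2$ at $\symbf{\beta}_0$ is $2J_j(\symbf{\beta}_0)^\prime J_j(\symbf{\beta}_0)=2\symbf{I}_p$, whereas you invoke the inverse function theorem directly. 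Your route has the mild advantage of needing only the $\mathcal{C}^1$ smoothness that $s\in\mathcal{C}^2$ provides; the paper's Hessian formula nominally involves second derivatives of $\symbf{W}_j$, that is, third derivatives of $s$.
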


\begin{proof}
\begin{align*}
    \mathbb{P}\varphi_j(\symbf{O}; \symbf{\eta}_0, \symbf{\beta})
    = & \ \mathbb{E}\left[\int_{t=t_1}^{t_2} \frac{
    \symbf{W}_j(t;\symbf{\beta}) \left[ Y(t) - \mathbb{E}\left\{Y(t)\mid \overline{\symbf{O}}(t)\right\}\right]}{\rho\left\{t \mid \overline{\symbf{O}}(t), Y(t)\right\}} dN(t)\right.  \\
    & \left. \quad + \int_{t=t_1}^{t_2} \symbf{W}_j(t;\symbf{\beta}) \left[ \mathbb{E}\left\{Y(t)\mid \overline{\symbf{O}}(t)\right\} -  s\left\{ \symbf{B}(t)^\prime \symbf{\beta}\right\}\right]dt \right] \\
    = & \  \mathbb{E}\left[\int_{t=t_1}^{t_2} 
    \symbf{W}_j(t;\symbf{\beta}) \left[ Y(t) -  s\left\{ \symbf{B}(t)^\prime \symbf{\beta}\right\}\right] dt\right] \\
    = & \  \int_{t=t_1}^{t_2} \symbf{W}_j(t;\symbf{\beta}) \left[ s\left\{ \symbf{B}(t)^\prime \symbf{\beta}_0\right\} -  s\left\{ \symbf{B}(t)^\prime \symbf{\beta}\right\}\right] dt. 
\end{align*}

We know that $\mathbb{P}\varphi_j(\symbf{O}; \symbf{\eta}_0, \symbf{\beta}_0) = 0$, which means $\symbf{\beta}_0$ is a solution for $\mathbb{P}\varphi_j(\symbf{O}; \symbf{\eta}_0, \symbf{\beta}) = 0$, $j=1,2$.  Define $g_j(\symbf{\beta}) = \mathbb{P}\varphi_j(\symbf{O}; \symbf{\eta}_0, \symbf{\beta})^\prime \mathbb{P}\varphi_j(\symbf{O}; \symbf{\eta}_0, \symbf{\beta})$, then
$g_j(\symbf{\beta}) \geq 0$, $j=1,2$. Thus, $g_j(\symbf{\beta}_0) = 0$ means $\symbf{\beta}_0$ is a minimizer of $g_j(\symbf{\beta})$.

If we can show that $\symbf{\beta}_0$ is a locally unique minimizer of $g_j(\symbf{\beta})$, we can show that $\symbf{\beta}_0$ is also a locally unique solution of $\mathbb{P}\varphi_j(\symbf{O}; \symbf{\eta}_0, \symbf{\beta}) = 0$. In the following, we verify that the Hessian matrix for $g_j(\symbf{\beta})$ at $\symbf{\beta}_0$ is positive definite, $j=1,2$.

Define $\mathbb{P}^l\varphi_j(\symbf{O}; \symbf{\eta}_0, \symbf{\beta})$ as the $l$-th component of $\mathbb{P}\varphi_j(\symbf{O}; \symbf{\eta}_0, \symbf{\beta})$ and $\symbf{\beta}^l$ as the $l$-th element of $\symbf{\beta}$, for $l=1,\cdots,p$. Define $J_j(\symbf{\beta}) = \frac{\partial \mathbb{P}\varphi_j(\symbf{O}; \symbf{\eta}_0, \symbf{\beta})}{\partial \symbf{\beta}^\prime}$ as
\begin{align*}
    J_j(\symbf{\beta}) 
    & =\ 
    \frac{\partial \mathbb{P}\varphi_j(\symbf{O}; \symbf{\eta}_0, \symbf{\beta})}{\partial \symbf{\beta}^\prime} \\
    & = \
    \frac{\partial \int_{t=t_1}^{t_2} 
    \symbf{W}_j(t;\symbf{\beta}) \left[ s\left\{ \symbf{B}(t)^\prime \symbf{\beta}_0\right\} -  s\left\{ \symbf{B}(t)^\prime \symbf{\beta}\right\}\right] dt }{\partial \symbf{\beta}^\prime} \\
    & = \
    \int_{t=t_1}^{t_2} \frac{\partial \symbf{W}_j(t;\symbf{\beta}) }{\partial \symbf{\beta}^\prime}\left[ s\left\{ \symbf{B}(t)^\prime \symbf{\beta}_0\right\} -  s\left\{ \symbf{B}(t)^\prime \symbf{\beta}\right\}\right] dt 
    - \int_{t=t_1}^{t_2}  \symbf{W}_j(t;\symbf{\beta}) 
    \frac{\partial s\left\{ \symbf{B}(t)^\prime \symbf{\beta}\right\}}{\partial \symbf{\beta}^\prime} dt, 
\end{align*}
where the third equality is by the continuity of the $s(\cdot)$ function, and the $(l_1,l_2)$-th element of $J_j(\symbf{\beta})$ is $\frac{\partial \mathbb{P}^{l_1}\varphi_j(\symbf{O}; \symbf{\eta}_0, \symbf{\beta})}{\partial \symbf{\beta}^{l_2}}$, for $l_1,l_2=1,\cdots, p$ and $j=1,2$. Additionally, define the Hessian matrix for the $l$-th component of $\mathbb{P}\varphi_j(\symbf{O}; \symbf{\eta}_0, \symbf{\beta})$ as $H_j^l(\symbf{\beta}) = \frac{\partial^2 \mathbb{P}^l\varphi_j(\symbf{O}; \symbf{\eta}_0, \symbf{\beta})}{\partial \symbf{\beta} \partial \symbf{\beta}^\prime }$, for $l=1,\cdots, p$ and $j=1,2$.

The first and second derivatives of $g(\symbf{\beta})$ with respect to $\symbf{\beta}$ are 
\begin{align*}
    \frac{\partial g_j(\symbf{\beta})}{\partial \symbf{\beta}} 
    = & \ 
    2 J_j(\symbf{\beta})^\prime \mathbb{P}\varphi_j(\symbf{O}; \symbf{\eta}_0, \symbf{\beta}),
    \\
    \frac{\partial^2 g_j(\symbf{\beta})}{\partial \symbf{\beta} \partial \symbf{\beta}^\prime }
    = & \ 
    2 J_j(\symbf{\beta})^\prime J_j(\symbf{\beta}) + 2 \sum_{l=1}^p \mathbb{P}^l\varphi_j(\symbf{O}; \symbf{\eta}_0, \symbf{\beta}) H_j^l(\symbf{\beta}).
\end{align*}

Set $\symbf{\beta} = \symbf{\beta}_0$ and using $\mathbb{P}^l\varphi_j(\symbf{O}; \symbf{\eta}_0, \symbf{\beta}_0) = 0$ for $l=1,\cdots, p$, we have 
\begin{align*}
   \frac{\partial^2 g_j(\symbf{\beta})}{\partial \symbf{\beta} \partial \symbf{\beta}^\prime }
    = \ 
    2 J_j(\symbf{\beta})^\prime J_j(\symbf{\beta}), \quad \quad J_j(\symbf{\beta}_0) = - \int_{t=t_1}^{t_2}  \symbf{W}_j(t;\symbf{\beta}_0)
   \frac{\partial s\left\{ \symbf{B}(t)^\prime \symbf{\beta}\right\}}{\partial \symbf{\beta}^\prime}\Big|_{\symbf{\beta} = \symbf{\beta}_0} 
    dt.
\end{align*}
where 
\begin{align*}
    \frac{\partial s\left\{ \symbf{B}(t)^\prime \symbf{\beta}\right\}}{\partial \symbf{\beta}^\prime}\Big|_{\symbf{\beta} = \symbf{\beta}_0} =  \frac{\partial s(z)}{\partial z} \Big|_{z = \symbf{B}(t)^\prime \symbf{\beta}_0} \symbf{B}(t)^\prime.
\end{align*}

\begin{align*}
    \frac{\partial^2 g(\symbf{\beta})}{\partial \symbf{\beta} \partial \symbf{\beta}^\prime }\Big|_{\symbf{\beta} = \symbf{\beta}_0}
    = & \ 
    2 J_j(\symbf{\beta}_0)^\prime J_j(\symbf{\beta}_0) \\
    = & \ 
    2 \left[\int_{t=t_1}^{t_2}  \symbf{W}_j(t;\symbf{\beta}_0) 
    \frac{\partial s\left\{ \symbf{B}(t)^\prime \symbf{\beta}\right\}}{\partial \symbf{\beta}^\prime}\Big|_{\symbf{\beta} = \symbf{\beta}_0} dt\right]^\prime \int_{t=t_1}^{t_2} \symbf{W}_j(t;\symbf{\beta}_0) 
    \frac{\partial s\left\{ \symbf{B}(t)^\prime \symbf{\beta}\right\}}{\partial \symbf{\beta}^\prime}\Big|_{\symbf{\beta} = \symbf{\beta}_0} dt.
\end{align*}

For $\symbf{W}_1(t;\symbf{\beta}_0)$, since $s^{-1}$ is the inverse function of $s$, we have 
$$\frac{\partial s^{-1}(z)}{\partial z} \Big|_{z = s\left\{\symbf{B}(t)^\prime \symbf{\beta}_0\right\}} = \left\{\frac{\partial s(z)}{\partial z} \Big|_{z = \symbf{B}(t)^\prime \symbf{\beta}_0}
\right\}^{-1}, $$
and therefore
\begin{align*}
    \symbf{W}_1(t;\symbf{\beta}_0) 
    \frac{\partial s\left\{ \symbf{B}(t)^\prime \symbf{\beta}\right\}}{\partial \symbf{\beta}^\prime}\Big|_{\symbf{\beta} = \symbf{\beta}_0}
    & = \symbf{V}_1^{-1}\symbf{B}(t) \symbf{B}(t)^\prime.
\end{align*}
Integrating over $[t_1,t_2]$ gives $J_1(\symbf{\beta}_0) = -\symbf{V}_1^{-1}\symbf{V}_1 = - \symbf{I}_p$.

For $\symbf{W}_2(t;\symbf{\beta}_0)$, we have
\begin{align*}
    & \symbf{W}_2(t;\symbf{\beta}_0) 
    \frac{\partial s\left\{ \symbf{B}(t)^\prime \symbf{\beta}\right\}}{\partial \symbf{\beta}^\prime}\Big|_{\symbf{\beta} = \symbf{\beta}_0}
    = \symbf{V}_2^{-1}(\symbf{\beta}_0)\symbf{B}(t) \symbf{B}(t)^\prime \frac{\partial s(z)}{\partial z} \Big|_{z = \symbf{B}(t)^\prime \symbf{\beta}_0},
\end{align*}
and integrating over $[t_1,t_2]$ gives $J_2(\symbf{\beta}_0) = -\symbf{V}_2(\symbf{\beta}_0)^{-1}\symbf{V}_2(\symbf{\beta}_0) = -\symbf{I}_p$. Here $\symbf{V}_2(\symbf{\beta}_0)$ is invertible because $s$ is strictly increasing and $\int_{t_1}^{t_2} \symbf{B}(t) \symbf{B}(t)^\prime dt$ is positive definite.

Hence, for $j=1,2$, $\frac{\partial^2 g_j(\symbf{\beta})}{\partial \symbf{\beta} \partial \symbf{\beta}^\prime }\big|_{\symbf{\beta} = \symbf{\beta}_0} = 2\symbf{I}_p > 0$, which is positive definite. Therefore, $\symbf{\beta}_0$ is a locally unique minimizer for $g_j(\symbf{\beta})$ and a locally unique solution for $\mathbb{P}\varphi_j(\symbf{O}; \symbf{\eta}_0, \symbf{\beta}) = 0$.
\end{proof}


\begin{lemma}
  \label{lemma:convg1}
 Suppose that Assumptions \ref{chp4:assum:tilting}, \ref{chp4:assum:positivity}, \ref{chp4:assum:marginal}, \ref{chp4:assum:nofuture}, and \ref{assum: corr_spec} - \ref{assum:Y_rate} hold, then each element of $\mathbb{P}\symbf{\varphi}_j \big(\symbf{O} ; \widehat{\symbf{\eta}}, \symbf{\beta}\big) - \mathbb{P}\symbf{\varphi}_j \big(\symbf{O} ; \symbf{\eta}_0, \symbf{\beta}\big)$ is $o_P(n^{-1/2})$, 
 for $\symbf{\beta} \in \mathcal{B}^*$,
 $j=1,2$. 
\end{lemma}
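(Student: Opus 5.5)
The approach is to compute $\mathbb{P}\symbf{\varphi}_j(\symbf{O};\widehat{\symbf{\eta}},\symbf{\beta}) - \mathbb{P}\symbf{\varphi}_j(\symbf{O};\symbf{\eta}_0,\symbf{\beta})$ explicitly, treating $\widehat{\symbf{\eta}}$ as fixed (conditioning on the training sample). The aim is to show that it is a second-order remainder, i.e. an integral of products of nuisance errors, and then bound each product by Cauchy--Schwarz using the rates in Assumptions \ref{assum: lambda_rate} and \ref{assum:Y_rate}.

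\textbf{Step 1: reduce the first term with the compensator.} Write $\widehat m(t)=\widehat{\mathbb{E}}\{Y(t)\mid\overline{\symbf{O}}(t)\}$ and $m(t)$ for the truth. The weight $\symbf{W}_j(t;\symbf{\beta})$ is a deterministic function of $t$ and $\symbf{\beta}$, so it does not depend on $\symbf{\eta}$. By the martingale property of $M(t)$ under Assumption \ref{chp4:assum:nofuture}, replace $dN(t)$ by $\rho\{t\mid\overline{\symbf{O}}(t),Y(t)\}dt$ inside the expectation. This gives
\begin{align*}
\mathbb{P}\symbf{\varphi}_j(\symbf{O};\widehat{\symbf{\eta}},\symbf{\beta})
= \mathbb{E}\int_{t_1}^{t_2}\symbf{W}_j(t;\symbf{\beta})\Big[\tfrac{\rho}{\widehat\rho}\{Y(t)-\widehat m(t)\}+\widehat m(t)-s\{\symbf{B}(t)'\symbf{\beta}\}\Big]dt .
\end{align*}
Subtract the same expression at $\symbf{\eta}_0$, where $\rho/\widehat\rho=1$. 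The difference then equals
\begin{align*}
\mathbb{E}\int_{t_1}^{t_2}\symbf{W}_j(t;\symbf{\beta})\Big(\tfrac{\rho}{\widehat\rho}-1\Big)\{Y(t)-\widehat m(t)\}\,dt .
\end{align*}

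\textbf{Step 2: condition on the observed past.} Use Proposition \ref{chp4:prop:lambda_rho}, so that $\rho/\widehat\rho=\{\lambda c\}/\{\widehat\lambda\widehat c\}$, where $c$ and $\widehat c$ are the true and estimated $\mathbb{E}[\exp\{\alpha Y(t)\}\mid\Delta N(t)=1,\overline{\symbf{O}}(t)]$. The factor $\exp\{-\alpha Y(t)\}$ cancels, so this ratio depends only on $\overline{\symbf{O}}(t)$. Conditioning on $\overline{\symbf{O}}(t)$ then replaces $Y(t)$ by $m(t)$ and yields
\begin{align*}
\mathbb{E}\int_{t_1}^{t_2}\symbf{W}_j(t;\symbf{\beta})\,\frac{\lambda c-\widehat\lambda\widehat c}{\widehat\lambda\widehat c}\,\{m(t)-\widehat m(t)\}\,dt .
\end{align*}
This is the product structure we want.

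\textbf{Step 3: bound the product.} Boundedness of $Y$ (Assumption \ref{assum:bound_Y}) gives $c$ and $\widehat c$ bounded away from $0$ and $\infty$. Positivity (Assumption \ref{chp4:assum:positivity}) gives $\widehat\lambda$ bounded below with probability tending to one. Continuity of $\symbf{B}$, $s\in\mathcal{C}^2$ and compactness of $[t_1,t_2]$ give $\symbf{W}_j$ bounded. Hence each component of the difference is bounded by a constant times
\begin{align*}
\Big(\mathbb{E}\!\int\{\widehat\lambda-\lambda\}^2dt+\mathbb{E}\!\int\{\widehat c-c\}^2dt\Big)^{1/2}\Big(\mathbb{E}\!\int\{\widehat m-m\}^2dt\Big)^{1/2}.
\end{align*}
Write $\widehat m-m=\widehat d/\widehat c-d/c$, where $d=\mathbb{E}[Y(t)\exp\{\alpha Y(t)\}\mid\Delta N(t)=1,\overline{\symbf{O}}(t)]$ and $\widehat d$ is its estimator. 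Since the denominators are bounded below, $\mathbb{E}\!\int(\widehat m-m)^2$ is bounded by a constant times $\mathbb{E}\!\int\{(\widehat d-d)^2+(\widehat c-c)^2\}$, which is $o_P(n^{-1/2})$ by Assumption \ref{assum:Y_rate}. The same assumption makes the $\widehat c-c$ part of the first factor $o_P(n^{-1/2})$ as well.

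\textbf{Main obstacle: the intensity error.} This step needs $\mathbb{E}\!\int(\widehat\lambda-\lambda)^2dt=O_P(n^{-1/2})$, or at least $o_P(1)$ combined with the $o_P(n^{-1/2})$ rate on the other factor.
\begin{itemize}
\item The linear predictor part is straightforward: $\widehat{\symbf{\gamma}}$ is $\sqrt n$-consistent, and the delta method gives an $O_P(n^{-1})$ contribution to the squared error.
\item The kernel-smoothed Breslow baseline $\widehat\lambda_{0,k}$ is the difficult part. Under Assumption \ref{assum: lambda_rate}, with $nh_n^5\to d$, the standard kernel-intensity results give an integrated mean squared error of order $h_n^4+(nh_n)^{-1}=O(n^{-4/5})$. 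I would cite the Ramlau-Hansen-type result in \citet{andersen1993bookABGK}, possibly after excluding boundary regions near $t_1,t_2$ or assuming a boundary-corrected kernel.
\end{itemize}
The product of the two factors is then $O_P(n^{-2/5})\cdot o_P(n^{-1/4})=o_P(n^{-1/2})$, which completes the argument. If only consistency of $\widehat\lambda$ were available, the bound would still hold, since $o_P(1)\cdot o_P(n^{-1/4})$ is not enough by itself, but Assumption \ref{assum:Y_rate} requires the $o_P(n^{-1/2})$ squared rate, and Cauchy--Schwarz needs $\sqrt{\cdot}$ of each factor. So the intensity rate genuinely matters, and it is where care is needed.
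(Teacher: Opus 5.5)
Your proposal is correct and follows the paper's proof essentially step for step. The shared steps are the compensator reduction, conditioning on $\overline{\symbf{O}}(t)$ because $\rho/\widehat\rho$ depends only on the observed past by Proposition~\ref{chp4:prop:lambda_rho}, and Cauchy--Schwarz. They then use the rate $O_P\{(nh_n)^{-1}\}$ for the smoothed baseline (the paper invokes the weak convergence result of \citet{wells1994kernelestimation} rather than a Ramlau-Hansen bound), $O_P(n^{-1})$ for $\widehat{\symbf{\gamma}}$, and Assumption~\ref{assum:Y_rate} for the tilted moments.

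One arithmetic correction: your first Cauchy--Schwarz factor also contains the $\widehat c-c$ term, so it is only $o_P(n^{-1/4})$, not $O_P(n^{-2/5})$, and your aside that an $o_P(1)$ intensity error would suffice is false. The conclusion still holds, because $\int(\widehat\lambda_{0,k}-\lambda_{0,k})^2dt=O_P(n^{-4/5})=o_P(n^{-1/2})$ makes the product $o_P(n^{-1/4})\cdot o_P(n^{-1/4})=o_P(n^{-1/2})$.
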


\begin{proof}
\label{proof:lemma_convg1}
\begin{align*}
  & \ \mathbb{P}\symbf{\varphi}_j \big( \symbf{O} ; \widehat{\symbf{\eta}}, \symbf{\beta} \big) 
  - \mathbb{P}\symbf{\varphi}_j \big( \symbf{O} ; \symbf{\eta}_0, \symbf{\beta} \big) \nonumber \\
   = & \ \mathbb{E}\left[\int_{t=t_1}^{t_2} \frac{\symbf{W}_j(t;\symbf{\beta})\left[ Y(t) - \widehat{\mathbb{E}}\left\{Y(t)\mid\overline{\symbf{O}}(t)\right\}\right]}{\widehat{\rho}\left\{t \mid \overline{\symbf{O}}(t), Y(t)\right\}} dN(t) + 
  \int_{t=t_1}^{t_2}  \symbf{W}_j(t;\symbf{\beta})
    \widehat{\mathbb{E}}\left\{Y(t)\mid\overline{\symbf{O}}(t)\right\}dt \right] \nonumber \\
    & \quad  - \mathbb{E} \left[\int_{t=t_1}^{t_2} \frac{\symbf{W}_j(t;\symbf{\beta})\left[ Y(t) - \mathbb{E}\left\{Y(t)\mid \overline{\symbf{O}}(t)\right\}\right]}{\rho\left\{t \mid \overline{\symbf{O}}(t), Y(t)\right\}} dN(t) + 
    \int_{t=t_1}^{t_2} \symbf{W}_j(t;\symbf{\beta}) \mathbb{E}\left\{Y(t)\mid\overline{\symbf{O}}(t)\right\}dt\right] \\
   = & \ \underbrace{\mathbb{E}\left[\int_{t=t_1}^{t_2} \frac{\symbf{W}_j(t;\symbf{\beta})\left[ Y(t) - \widehat{\mathbb{E}}\left\{Y(t)\mid \overline{\symbf{O}}(t)\right\}\right]}{\widehat{\rho}\left\{t \mid \overline{\symbf{O}}(t), Y(t)\right\}} \big[dN(t) - \rho\left\{t \mid \overline{\symbf{O}}(t), Y(t)\right\} dt \big]\right]}_{=0} \nonumber \\
    & \quad + \mathbb{E}\left[\int_{t=t_1}^{t_2} \frac{\symbf{W}_j(t;\symbf{\beta})\left[ Y(t) - \widehat{\mathbb{E}}\left\{Y(t)\mid\overline{\symbf{O}}(t)\right\}\right]}{\widehat{\rho}\left\{t \mid \overline{\symbf{O}}(t), Y(t)\right\}}\rho\left\{t \mid \overline{\symbf{O}}(t), Y(t)\right\} dt\right] \\
    & \quad 
    + \mathbb{E}\left[ \int_{t=t_1}^{t_2} \symbf{W}_j(t;\symbf{\beta})
    \widehat{\mathbb{E}}\left\{Y(t)\mid\overline{\symbf{O}}(t)\right\}dt \right] \nonumber \\
     & \quad 
     - \underbrace{\mathbb{E} \left[\int_{t=t_1}^{t_2} \frac{\symbf{W}_j(t;\symbf{\beta})\left[ Y(t) - \mathbb{E}\left\{Y(t)\mid \overline{\symbf{O}}(t)\right\}\right]}{\rho\left\{t \mid \overline{\symbf{O}}(t), Y(t)\right\}} \big[dN(t) - \rho\left\{t \mid \overline{\symbf{O}}(t), Y(t)\right\} dt \big]\right]}_{=0} \nonumber \\
     & \quad - \mathbb{E} \left[\int_{t=t_1}^{t_2} \symbf{W}_j(t;\symbf{\beta})\left[ Y(t) - \mathbb{E}\left\{Y(t)\mid \overline{\symbf{O}}(t)\right\}\right] dt\right] - \mathbb{E}\left[\int_{t=t_1}^{t_2}\symbf{W}_j(t;\symbf{\beta}) \mathbb{E}\left\{Y(t)\mid \overline{\symbf{O}}(t)\right\}dt\right]. 
\end{align*}

Then 
\begin{align}
  & \ \mathbb{P}\symbf{\varphi}_j \big( \symbf{O} ; \widehat{\symbf{\eta}}, \symbf{\beta} \big) 
  - \mathbb{P}\symbf{\varphi}_j \big( \symbf{O} ; \symbf{\eta}_0, \symbf{\beta} \big) \nonumber \\
    = & \ \mathbb{E}\left[\int_{t=t_1}^{t_2} \frac{\symbf{W}_j(t;\symbf{\beta})\left[ Y(t) - \widehat{\mathbb{E}}\left\{Y(t) \mid \overline{\symbf{O}}(t)\right\}\right]}{\widehat{\rho}\left\{t \mid \overline{\symbf{O}}(t), Y(t)\right\}}\rho\left\{t \mid \overline{\symbf{O}}(t), Y(t)\right\} dt\right] \\
    & \ + 
    \mathbb{E}\left[ \int_{t=t_1}^{t_2} \symbf{W}_j(t;\symbf{\beta})
    \widehat{\mathbb{E}}\left\{Y(t)\mid \overline{\symbf{O}}(t)\right\}dt \right] \nonumber \\
     & \quad 
     - \mathbb{E} \left[\int_{t=t_1}^{t_2} \symbf{W}_j(t;\symbf{\beta})\left[ Y(t) - \mathbb{E}\left\{Y(t)\mid \overline{\symbf{O}}(t)\right\}\right] dt\right] - \mathbb{E}\left[\int_{t=t_1}^{t_2}\symbf{W}_j(t;\symbf{\beta}) \mathbb{E}\left\{Y(t)\mid \overline{\symbf{O}}(t)\right\}dt\right] \nonumber \\
    = & \ \mathbb{E}\left[\int_{t=t_1}^{t_2} \frac{\symbf{W}_j(t;\symbf{\beta})\left[ Y(t) - \widehat{\mathbb{E}}\left\{Y(t) \mid \overline{\symbf{O}}(t)\right\}\right]}{\widehat{\rho}\left\{t \mid \overline{\symbf{O}}(t), Y(t)\right\}}\rho\left\{t \mid \overline{\symbf{O}}(t), Y(t)\right\} dt\right]
    \nonumber \\
    & \quad 
    + \mathbb{E}\left[ \int_{t=t_1}^{t_2} \symbf{W}_j(t;\symbf{\beta})
    \widehat{\mathbb{E}}\left\{Y(t)\mid \overline{\symbf{O}}(t)\right\}dt \right] - \mathbb{E}\left\{\int_{t=t_1}^{t_2}\symbf{W}_j(t;\symbf{\beta}) Y(t)dt\right\}
    \nonumber \\
    = & \ \mathbb{E}\left[\int_{t=t_1}^{t_2} \frac{\symbf{W}_j(t;\symbf{\beta})\rho\left\{t \mid \overline{\symbf{O}}(t), Y(t)\right\}}{\widehat{\rho}\left\{t \mid \overline{\symbf{O}}(t), Y(t)\right\}}
    \left[ \mathbb{E}\{Y(t)\mid \overline{\symbf{O}}(t)\} - \widehat{\mathbb{E}}\left\{Y(t)\mid \overline{\symbf{O}}(t)\right\}\right] 
     dt\right] \nonumber \\
     & \quad - \mathbb{E}\left[ \int_{t=t_1}^{t_2} \symbf{W}_j(t;\symbf{\beta})
     \left[\mathbb{E}\left\{Y(t)\mid\overline{\symbf{O}}(t)\right\} - \widehat{\mathbb{E}}\left\{Y(t)\mid \overline{\symbf{O}}(t)\right\}\right]dt \right],
     \label{equ:P_varphi}
\end{align}

\noindent where in the last equality we used the fact that, by Proposition \ref{chp4:prop:lambda_rho}:
\[ \frac{ \rho \left\{ t \mid \overline{\symbf{O}}(t),Y(t) \right\} }{ \widehat{\rho} \{ t \mid \overline{\symbf{O}}(t), Y(t) \}} 
= 
\frac{ \lambda \left\{ t \mid \overline{\symbf{O}}(t) \right\} \mathbb{E}[\exp\{ \alpha Y(t) \mid \Delta N(t)=1, \overline{\symbf{O}}(t) \} ] }{\widehat{\lambda} \left\{ t \mid \overline{\symbf{O}}(t) \right\} \widehat{\mathbb{E}}[ \exp\{ \alpha Y(t) \} \mid  \Delta N(t)=1, \overline{\symbf{O}}(t)  ] }, \]
which means $\rho \left\{ t \mid \overline{\symbf{O}}(t),Y(t) \right\}/ \widehat{\rho} \left\{ t \mid \overline{\symbf{O}}(t),Y(t) \right\}$ is a function of $t$ and $\overline{\symbf{O}}(t)$ only.

Thus, 
\begin{align*}
& \mathbb{P}\symbf{\varphi}_j \big( \symbf{O} ; \widehat{\symbf{\eta}}, \symbf{\beta} \big) 
- \mathbb{P}\symbf{\varphi}_j \big( \symbf{O} ; \symbf{\eta}_0, \symbf{\beta} \big)\\
= & 
\mathbb{E}\left[\int_{t=t_1}^{t_2}\symbf{W}_j(t;\symbf{\beta})\left[\frac{\rho\left\{t \mid \overline{\symbf{O}}(t), Y(t)\right\}}{\widehat{\rho}\left\{t \mid \overline{\symbf{O}}(t), Y(t)\right\}} - 1 \right]
  \left[ \mathbb{E}\{Y(t) \mid \overline{\symbf{O}}(t)\} - \widehat{\mathbb{E}}\left\{Y(t)\mid\overline{\symbf{O}}(t)\right\}\right] 
     dt\right].
\end{align*}

Write $\mathbb{P}\symbf{\varphi}_j \big( \symbf{O} ; \widehat{\symbf{\eta}}, \symbf{\beta} \big) 
- \mathbb{P}\symbf{\varphi}_j \big( \symbf{O} ; \symbf{\eta}_0, \symbf{\beta} \big)$  as $ (R_j^1(\symbf{\beta}),\cdots, R_j^p(\symbf{\beta}))^\prime$, we show that for $j=1,2$, $l=1,\cdots, p$, $R_j^l(\symbf{\beta}) = o_P(n^{-1/2})$; then the result for $\mathbb{P}\symbf{\varphi}_j \big( \symbf{O} ; \widehat{\symbf{\eta}}, \symbf{\beta} \big) 
- \mathbb{P}\symbf{\varphi}_j \big( \symbf{O} ; \symbf{\eta}_0, \symbf{\beta} \big)$ follows.

Write $\symbf{W}_j(t;\symbf{\beta}) = (\symbf{W}_j^1(t;\symbf{\beta}), \cdots, \symbf{W}_j^p(t;\symbf{\beta}))$, then for $j=1,2$, $l=1,\cdots, p$, we have 
\begin{align*}
R_j^l(\symbf{\beta}) = 
\mathbb{E}\left[\int_{t=t_1}^{t_2}
\symbf{W}_j^l(t;\symbf{\beta})\left[\frac{\rho\left\{t \mid \overline{\symbf{O}}(t), Y(t)\right\}}{\widehat{\rho}\left\{t \mid \overline{\symbf{O}}(t), Y(t)\right\}} - 1 \right]
  \left[ \mathbb{E}\{Y(t) \mid \overline{\symbf{O}}(t)\} - \widehat{\mathbb{E}}\left\{Y(t)\mid\overline{\symbf{O}}(t)\right\}\right] 
     dt\right].
\end{align*}

By the Cauchy-Schwarz inequality,
\begin{align}
    \{R_j^l (\symbf{\beta})\}^2
     \leq  & \ \mathbb{E}\left[\int_{t=t_1}^{t_2} \left[\frac{
     \symbf{W}_j^l(t;\symbf{\beta})}{\widehat{\rho}\left\{t \mid \overline{\symbf{O}}(t), Y(t)\right\}}\right]^2\left[\rho\left\{t \mid \overline{\symbf{O}}(t), Y(t)\right\} - \widehat{\rho}\left\{t \mid \overline{\symbf{O}}(t), Y(t)\right\} \right]^2 dt\right]
     \label{equ:rho_rate}\\
     & \  \times \mathbb{E}\left[\int_{t=t_1}^{t_2} \left[ \mathbb{E}\{Y(t) \mid \overline{\symbf{O}}(t)\} - \widehat{\mathbb{E}}\left\{Y(t) \mid \overline{\symbf{O}}(t)\right\}\right]^2 dt \right].
     \label{equ:E_rate}
\end{align}

\noindent Now we expand:
 \begin{align*}
\lefteqn{\widehat{\rho}\{t \mid \overline{\symbf{O}}(t), Y(t) \} - \rho\{t \mid \overline{\symbf{O}}(t), Y(t)\}} \\
&= \widehat{\lambda}_{0,k}(t) \exp\{ \widehat{\symbf{\gamma}} \symbf{Z}(t) \} \exp\{ -\alpha Y(t) \} \widehat{\mathbb{E}} \big[ \exp\{ \alpha Y(t) \} \mid \Delta N(t) = 1, \overline{\symbf{O}}(t) \big] \\
&\quad - \lambda_{0,k}(t) \exp\{ \symbf{\gamma} \symbf{Z}(t) \} \exp\{ -\alpha Y(t) \} \mathbb{E} \big[ \exp\{ \alpha Y(t) \} \mid \Delta N(t) = 1, \overline{\symbf{O}}(t) \big] \\
&= \Big\{ \widehat{\lambda}_{0,k}(t) - \lambda_{0,k}(t) \Big\} \exp\{ \widehat{\symbf{\gamma}} \symbf{Z}(t) \} \exp\{ -\alpha Y(t) \} \widehat{\mathbb{E}} \big[ \exp\{ \alpha Y(t) \} \mid \Delta N(t) = 1, \overline{\symbf{O}}(t) \big] \\
&\quad + \lambda_{0,k}(t) \Big[ \exp\{ \widehat{\symbf{\gamma}}^\prime \symbf{Z}(t) \} - \exp\{ \symbf{\gamma}^\prime \symbf{Z}(t) \} \Big] \exp\{ -\alpha Y(t) \} \widehat{\mathbb{E}} \big[ \exp\{ \alpha Y(t) \} \mid \Delta N(t) = 1, \overline{\symbf{O}}(t) \big] \\
&\quad + \lambda_{0,k}(t) \exp\{ \symbf{\gamma}^\prime \symbf{Z}(t) \} \exp\{ -\alpha Y(t) \} \Big\{ \widehat{\mathbb{E}} \big[ \exp\{ \alpha Y(t) \} \mid \Delta N(t) = 1, \overline{\symbf{O}}(t) \big] \\
&\quad\qquad\qquad\qquad\qquad\qquad\qquad - \mathbb{E} \big[ \exp\{ \alpha Y(t) \} \mid \Delta N(t) = 1, \overline{\symbf{O}}(t) \big] \Big\}.
\end{align*}

 Using this expansion and the inequality $rs \leq \frac{1}{2}(r^2+s^2)$, the rate for \eqref{equ:rho_rate} will be determined by the rates of three terms, namely the squares of each of the terms in the expansion above.

 For the factor \eqref{equ:E_rate}, by Proposition \ref{prop:mu_identity}, we write 
\begin{align*}
 & \ \mathbb{E} \{ Y(t) \mid \overline{\symbf{O}}(t) \} 
 -
 \widehat{\mathbb{E}}\{ Y(t) \mid \overline{\symbf{O}}(t)\} \\
 = & \ \frac{ \mathbb{E} \big[ Y(t) \exp \{ \alpha Y(t) \}  \mid \Delta N(t) = 1, \overline{\symbf{O}}(t) \big] }{ \mathbb{E} \big[ \exp \{ \alpha Y(t) \}  \mid \Delta N(t) = 1, \overline{\symbf{O}}(t) \big]  }
 - 
 \frac{ \widehat{\mathbb{E}} \big[ Y(t) \exp \{ \alpha Y(t) \}  \mid \Delta N(t) = 1, \overline{\symbf{O}}(t)  \big] }{ \widehat{\mathbb{E}} \big[ \exp \{ \alpha Y(t) \}  \mid \Delta N(t) = 1, \overline{\symbf{O}}(t)  \big]  } \\
 = & \ \frac{\mathbb{E} \big[ Y(t)\exp \{ \alpha Y(t) \}  \mid \Delta N(t) = 1, \overline{\symbf{O}}(t) \big]}{ 
 \mathbb{E} \big[ \exp \{ \alpha Y(t) \}  \mid \Delta N(t) = 1, \overline{\symbf{O}}(t)  \big] \widehat{\mathbb{E}} \big[ \exp \{ \alpha Y(t) \} \mid \Delta N(t) = 1, \overline{\symbf{O}}(t) \big]} \times \\
 & \quad \Big[ \widehat{\mathbb{E}} \big[ \exp \{ \alpha Y(t) \}  \mid \Delta N(t) = 1, \overline{\symbf{O}}(t) \big]- \mathbb{E} \big[  \exp \{ \alpha Y(t) \}  \mid \Delta N(t) = 1, \overline{\symbf{O}}(t) \big]\Big] \\
 & \ - \frac{\mathbb{E} \big[ \exp \{ \alpha Y(t) \}  \mid \Delta N(t) = 1, \overline{\symbf{O}}(t)  \big]}{
 \mathbb{E} \big[ \exp \{ \alpha Y(t) \} \mid \Delta N(t) = 1, \overline{\symbf{O}}(t)  \big] \widehat{\mathbb{E}} \big[ \exp \{ \alpha Y(t) \} \mid \Delta N(t) = 1, \overline{\symbf{O}}(t) \big]} \times \\
 & \quad  \Big[ \widehat{\mathbb{E}} \big[ Y(t) \exp \{ \alpha Y(t) \} \mid \Delta N(t) = 1, \overline{\symbf{O}}(t) \big]- \mathbb{E} \big[ Y(t) \exp \{ \alpha Y(t) \} \mid \Delta N(t) = 1, \overline{\symbf{O}}(t) \big]\Big].
\end{align*}
The rates for \eqref{equ:E_rate} will similarly be determined by the rates for the squares of the two terms in this expansion. Combining these results, we thus have
 \begin{align}
  & \{R_j^l(\symbf{\beta})\}^2 \ \leq \nonumber \\
  & \quad \Bigg[ 
  C_j^1(\symbf{\beta}, \widehat{\symbf{\eta}}) \ \mathbb{E} \left[ \int_{t=t_1}^{t_2} \Big\{ \widehat{ \lambda}_{0,k}(t)  - \lambda_{0,k}(t) \Big\}^2 dt \right] + \label{equ:term1} \\
  &  \quad\quad  
  C_j^2(\symbf{\beta}, \widehat{\symbf{\eta}}) \ \mathbb{E} \left[ \int_{t=t_1}^{t_2} \Big[  \exp\{ \widehat{\symbf{\gamma}}^\prime \symbf{Z}(t) \} - \exp\{ \symbf{\gamma} \symbf{Z}(t) \} \Big]^2 dt \right] + \label{equ:term2}  \\
  & \quad \quad  
  C_j^3(\symbf{\beta}, \widehat{\symbf{\eta}}) \ \mathbb{E} \bigg[ \int_{t=t_1}^{t_2} \Big\{ \widehat{\mathbb{E}} \big[ \exp\{ \alpha Y(t) \} \mid \Delta N(t) =1 , \overline{\symbf{O}}(t)  \big] - \label{equ:term3} \\
  & \hspace{1.6in} \mathbb{E} \big[ \exp\{ \alpha Y(t) \} \mid \Delta N(t) =1 , \overline{\symbf{O}}(t) \big] \Big\}^2 dt \bigg] \Bigg] \times \nonumber \\
  & \quad \Bigg[
  C_j^4(\symbf{\beta}, \widehat{\symbf{\eta}}) \ \mathbb{E} \bigg[ \int_{t=t_1}^{t_2} \bigg[  \widehat{\mathbb{E}} \big[ Y(t) \exp\{ \alpha Y(t) \} \mid \Delta N(t) =1 , \overline{\symbf{O}}(t) \big] -  \label{equ:term4} \\
  & \hspace{1.6in}
  \mathbb{E} \big[ Y(t) \exp\{ \alpha Y(t) \} \mid \Delta N(t) =1 , \overline{\symbf{O}}(t)  \big] \bigg]^2 dt \Bigg] + \nonumber \\
  & \quad \  
  C_j^5(\symbf{\beta}, \widehat{\symbf{\eta}}) \ \mathbb{E} \bigg[ \int_{t=t_1}^{t_2} \Big[ \widehat{\mathbb{E}} \big[ \exp\{ \alpha Y(t) \} \mid \Delta N(t) =1 , \overline{\symbf{O}}(t)  \big] - \label{equ:term5} \\
  & \hspace{1.6in} \mathbb{E} \big[ \exp\{ \alpha Y(t) \} \mid \Delta N(t) =1 , \overline{\symbf{O}}(t) \big] \Big]^2 dt \bigg] \Bigg],
  \nonumber 
\end{align}

\noindent where $C_j^1(\symbf{\beta}, \widehat{\symbf{\eta}}), \ldots,C_j^5(\symbf{\beta}, \widehat{\symbf{\eta}})$ are terms that are each $O_P(1)$ under Assumptions \ref{chp4:assum:positivity}  and \ref{assum:Y_rate}, $j=1,2$.

We first consider term \eqref{equ:term1}. Let $X_n(t)= (n h_n)^{1/2} \Big\{ \widehat{\lambda}_{0,k}(t)  - \lambda_{0,k}(t) \Big\}$. \cite{wells1994kernelestimation} has shown that, under Assumption \ref{assum: lambda_rate}, the stochastic process $\{X_n(t) : t_1 \leq t \leq t_2 \}_{n=1}^\infty$ converges weakly to a process $\{ X(t) : t_1 \leq t \leq t_2 \}$, where $\{ X(t) : t_1 \leq t \leq t_2 \}$ is a Wiener process plus a term depending on $\lambda_{0,k}(t)$.  Therefore, it follows by the Continuous Mapping Theorem that $\{ X_n^2(t) : t_1 \leq t \leq t_2 \}_{n=1}^\infty$ converges weakly to $\{ X^2(t) : t_1 \leq t \leq t_2 \}$.  Again by the Continuous Mapping Theorem, $\int_{t=t_1}^{t_2} X_n^2(t) dt = n h_n \left[ \int_{t=t_1}^{t_2}  \big\{ \widehat{\lambda}_{0,k}(t)  - \lambda_{0,k}(t) \big\}^2 dt \right]$ is $O_P(1)$.  Therefore, the term in \eqref{equ:term1} is $O_P((nh_n)^{-1})$. By Assumption \ref{assum: lambda_rate}, for large $n$, $n h_n$ is approximately $n^{4/5}$, so in particular the convergence rate for \eqref{equ:term1} is faster than $\sqrt{n}$. 

Since $\widehat{\gamma}$ converges to $\gamma$ at root-$n$ rates, the term in \eqref{equ:term2} is $O_P(n^{-1})$.  

Finally, terms \eqref{equ:term3}, \eqref{equ:term4}, and \eqref{equ:term5} are $o_P(n^{-1/2})$ by Assumption \ref{assum:Y_rate}. Hence the product that bounds $\{R_j^l(\symbf{\beta})\}^2$ is $o_P(n^{-1})$, so that $R_j^l(\symbf{\beta})$ is $o_P(n^{-1/2})$ as claimed.
\end{proof}


\begin{lemma}
  \label{lemma:convg2}
 Suppose that Assumptions \ref{chp4:assum:tilting}, \ref{chp4:assum:positivity}, \ref{chp4:assum:marginal}, \ref{chp4:assum:nofuture}, \ref{assum: corr_spec} - \ref{assum:Y_rate} hold, then each element of $\mathbb{P} \frac{\partial \symbf{\varphi}_j ( \symbf{O} ; \widehat{\symbf{\eta}}, \symbf{\beta} )}{\partial \symbf{\beta}}
 -
 \mathbb{P} \frac{\partial \symbf{\varphi}_j ( \symbf{O} ; \symbf{\eta}_0, \symbf{\beta} )}{\partial \symbf{\beta}}$ is $ o_P(n^{-1/2})$, for $\symbf{\beta} \in \mathcal{B}^*$, 
$j=1,2$.
\end{lemma}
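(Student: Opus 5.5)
The plan is to mirror the proof of Lemma~\ref{lemma:convg1}, using the fact that the nuisance parameter $\symbf{\eta}$ enters $\partial\symbf{\varphi}_j/\partial\symbf{\beta}$ exactly as it enters $\symbf{\varphi}_j$. Only the weight changes.

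\textbf{Reduction to Lemma~\ref{lemma:convg1} with a new weight.} Differentiate $\symbf{\varphi}_j(\symbf{O};\symbf{\eta},\symbf{\beta})$ in $\symbf{\beta}$, as in the display for $\partial\symbf{\varphi}_j/\partial\symbf{\beta}$ in Theorem~\ref{theorem:asy_normal}. This gives three pieces:
\begin{itemize}
\item the inverse-weighted residual term, with $\symbf{W}_j$ replaced by $\partial\symbf{W}_j(t;\symbf{\beta})/\partial\symbf{\beta}$;
\item the augmentation term, with the same replacement;
\item the term $-\int_{t_1}^{t_2}\symbf{W}_j(t;\symbf{\beta})\,\partial s\{\symbf{B}(t)'\symbf{\beta}\}/\partial\symbf{\beta}\,dt$.
\end{itemize}
The third term does not involve $\symbf{\eta}$, so it cancels exactly in the difference. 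Now fix a row index and a column index $(l_1,l_2)$. The $(l_1,l_2)$ entry of the remaining difference is
\begin{align*}
\mathbb{P}\phi(\symbf{O};\widehat{\symbf{\eta}},\symbf{\beta})-\mathbb{P}\phi(\symbf{O};\symbf{\eta}_0,\symbf{\beta}),
\end{align*}
where $\phi$ has the form of the $l_1$ component of $\symbf{\varphi}_j$, but with weight $\widetilde W(t):=\partial W_j^{l_1}(t;\symbf{\beta})/\partial\beta^{l_2}$ in place of $W_j^{l_1}$, and without the $s\{\cdot\}$ term. The weight $\widetilde W$ is deterministic and depends on $(t,\symbf{\beta})$ only.

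\textbf{Martingale manipulation.} Repeat the computation leading to \eqref{equ:P_varphi} verbatim. Write $dN(t)=dM(t)+\rho\,dt$. The $dM$ parts have mean zero because the integrands are predictable functions of $\overline{\symbf{O}}(t)$ and $Y(t)$. Iterated expectations replace $Y(t)$ by $\mathbb{E}\{Y(t)\mid\overline{\symbf{O}}(t)\}$, which is legitimate because, by Proposition~\ref{chp4:prop:lambda_rho}, $\rho/\widehat\rho$ depends only on $(t,\overline{\symbf{O}}(t))$. The entry then reduces to
\begin{align*}
\mathbb{E}\left[\int_{t_1}^{t_2}\widetilde W(t)\left\{\frac{\rho}{\widehat\rho}-1\right\}\left[\mathbb{E}\{Y(t)\mid\overline{\symbf{O}}(t)\}-\widehat{\mathbb{E}}\{Y(t)\mid\overline{\symbf{O}}(t)\}\right]dt\right].
\end{align*}
This is the same double-robustness product structure as before.

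\textbf{Rates and main obstacle.} Apply Cauchy--Schwarz as in \eqref{equ:rho_rate}--\eqref{equ:E_rate}, then the same expansions of $\widehat\rho-\rho$ and of $\widehat{\mathbb{E}}\{Y(t)\mid\overline{\symbf{O}}(t)\}-\mathbb{E}\{Y(t)\mid\overline{\symbf{O}}(t)\}$. The factors are handled exactly as before:
\begin{itemize}
\item the kernel-smoothed baseline intensity term is $O_P((nh_n)^{-1})$ by \cite{wells1994kernelestimation};
\item the $\widehat{\symbf{\gamma}}$ term is $O_P(n^{-1})$;
\item the conditional-expectation terms are $o_P(n^{-1/2})$ by Assumption~\ref{assum:Y_rate}.
\end{itemize}
The product is therefore $o_P(n^{-1})$, and each entry is $o_P(n^{-1/2})$.

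The one genuinely new point, and the expected obstacle, is showing that the constants multiplying these rate terms are still $O_P(1)$. This requires $\widetilde W$ to be bounded on $[t_1,t_2]$.
\begin{itemize}
\item For $j=1$: $\partial\symbf{W}_1/\partial\symbf{\beta}=\symbf{V}_1^{-1}\symbf{B}(t)\,g''\{s(\symbf{B}(t)'\symbf{\beta})\}\,s'(\symbf{B}(t)'\symbf{\beta})\,\symbf{B}(t)'$. This is bounded when $s\in\mathcal{C}^2$ with $s'>0$ (so $g\in\mathcal{C}^2$ on the compact range $\{s(\symbf{B}(t)'\symbf{\beta}):t\in[t_1,t_2]\}$) and $\symbf{B}$ is continuous on the compact interval.
\item For $j=2$: use $\partial \symbf{V}_2^{-1}=-\symbf{V}_2^{-1}(\partial\symbf{V}_2)\symbf{V}_2^{-1}$, with $\partial\symbf{V}_2$ involving $s''$. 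This is again bounded, since $\symbf{V}_2(\symbf{\beta})$ is positive definite for each fixed $\symbf{\beta}\in\mathcal{B}^*$.
\end{itemize}
Together with positivity (Assumption~\ref{chp4:assum:positivity}), bounded $Y$ (Assumption~\ref{assum:bound_Y}), and $\widehat\rho$ bounded away from zero with probability tending to one, this gives the $O_P(1)$ constants. The claim then follows entrywise for $j=1,2$.
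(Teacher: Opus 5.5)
Your proposal is correct and is essentially the paper's proof. The paper likewise observes that the $\symbf{\eta}$-free term $-\int_{t_1}^{t_2}\symbf{W}_j(t;\symbf{\beta})\,\partial s\{\symbf{B}(t)^\prime\symbf{\beta}\}/\partial\symbf{\beta}\,dt$ cancels, writes the remaining difference in the form of \eqref{equ:P_varphi} with $\symbf{W}_j$ replaced by $\partial\symbf{W}_j/\partial\symbf{\beta}$, and defers to the argument of Lemma~\ref{lemma:convg1}. Your check that $\partial\symbf{W}_j/\partial\symbf{\beta}$ is bounded on $[t_1,t_2]$ fills in a step the paper leaves implicit. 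Your appeal to bounded $Y$ (Assumption~\ref{assum:bound_Y}) and to $\widehat{\rho}$ bounded away from zero goes slightly beyond the lemma's listed hypotheses, just as the paper's own $O_P(1)$ claim for the constants tacitly does.
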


\begin{proof}
\label{proof:lemma_convg2} 

\begin{align*}
    & \mathbb{P} \left[ \frac{\partial \symbf{\varphi}_j \big( \symbf{O} ; \widehat{\symbf{\eta}}, \symbf{\beta} \big)}{\partial \symbf{\beta}} \right] - 
    \mathbb{P} \left[ \frac{\partial \symbf{\varphi}_j \big( \symbf{O} ; \symbf{\eta}_0, \symbf{\beta} \big)}{\partial \symbf{\beta}} \right] \\
    = &\ \mathbb{E}\left[ \int_{t=t_1}^{t_2} \frac{\partial \symbf{W}_j(t;\symbf{\beta})}{\partial \symbf{\beta}}\frac{\left[Y(t) - \widehat{\mathbb{E}}\left\{Y(t)\mid \overline{\symbf{O}}(t)\right\}\right]}{\widehat{\rho}\left\{t \mid \overline{\symbf{O}}(t), Y(t)\right\}} dN(t) \right] \\
    & \ + \mathbb{E}\left[ \int_{t=t_1}^{t_2} \frac{\partial \symbf{W}_j(t;\symbf{\beta})}{\partial \symbf{\beta}}\left[ \widehat{\mathbb{E}}\left\{Y(t)\mid \overline{\symbf{O}}(t)\right\} - s\left\{ \symbf{B}(t)^\prime \symbf{\beta}\right\}\right]dt \right] 
    - 
    \mathbb{E}\left[ \int_{t=t_1}^{t_2} \symbf{W}_j(t;\symbf{\beta}) \frac{\partial}{\partial \symbf{\beta}} s\left\{ \symbf{B}(t)^\prime \symbf{\beta}\right\} dt \right]\\
    & \ - \mathbb{E}\left[ \int_{t=t_1}^{t_2} \frac{\partial \symbf{W}_j(t;\symbf{\beta})}{\partial \symbf{\beta}}\frac{\left[ Y(t) - \mathbb{E}\left\{Y(t)\mid\overline{\symbf{O}}(t)\right\}\right]}{\rho\left\{t \mid \overline{\symbf{O}}(t), Y(t)\right\}} dN(t) \right]\\
    & \ - \mathbb{E}\left[ \int_{t=t_1}^{t_2} \frac{\partial \symbf{W}_j(t;\symbf{\beta})}{\partial \symbf{\beta}}\left[ \mathbb{E}\left\{Y(t)\mid\overline{\symbf{O}}(t)\right\} - s\left\{ \symbf{B}(t)^\prime \symbf{\beta}\right\}\right]dt \right]  
    + 
    \mathbb{E}\left[ \int_{t=t_1}^{t_2} \symbf{W}_j(t;\symbf{\beta}) \frac{\partial}{\partial \symbf{\beta}} s\left\{ \symbf{B}(t)^\prime \symbf{\beta}\right\} dt \right]\\
    = & \ \mathbb{E}\left[ \int_{t=t_1}^{t_2} \frac{\partial \symbf{W}_j(t;\symbf{\beta})}{\partial \symbf{\beta}}\frac{\left[Y(t) - \widehat{\mathbb{E}}\left\{Y(t)\mid \overline{\symbf{O}}(t)\right\}\right]}{\widehat{\rho}\left\{t \mid \overline{\symbf{O}}(t), Y(t)\right\}} dN(t) \right] \\
    & \quad -
    \mathbb{E}\left[ \int_{t=t_1}^{t_2} \frac{\partial \symbf{W}_j(t;\symbf{\beta})}{\partial \symbf{\beta}}\frac{\left[ Y(t) - \mathbb{E}\left\{Y(t)\mid \overline{\symbf{O}}(t)\right\}\right]}{\rho\left\{t \mid \overline{\symbf{O}}(t), Y(t)\right\}} dN(t) \right]\\
    & \quad + \mathbb{E}\left[ \int_{t=t_1}^{t_2} \frac{\partial \symbf{W}_j(t;\symbf{\beta})}{\partial \symbf{\beta}}\widehat{\mathbb{E}}\left\{Y(t)\mid \overline{\symbf{O}}(t)\right\}dt \right]  - 
    \mathbb{E}\left[ \int_{t=t_1}^{t_2} \frac{\partial \symbf{W}_j(t;\symbf{\beta})}{\partial \symbf{\beta}}\mathbb{E}\left\{Y(t) \mid \overline{\symbf{O}}(t)\right\} dt \right] 
\end{align*}

Then we can write
$\mathbb{P} \frac{\partial \symbf{\varphi}_j \big( \symbf{O}; \widehat{\symbf{\eta}}, \symbf{\beta} \big)}{\partial \symbf{\beta}} - 
\mathbb{P} \frac{\partial \symbf{\varphi}_j \big( \symbf{O} ; \symbf{\eta}_0, \symbf{\beta} \big)}{\partial \symbf{\beta}}$ similar to the equation \eqref{equ:P_varphi} by changing $\symbf{W}_j(t;\symbf{\beta})$ to $\frac{\partial \symbf{W}_j(t;\symbf{\beta})}{\partial \symbf{\beta}}$. 
    
The proof for each element in $\mathbb{P} \frac{\partial \symbf{\varphi}_j \big( \symbf{O} ; \widehat{\symbf{\eta}}, \symbf{\beta} \big)}{\partial \symbf{\beta}} - 
\mathbb{P} \frac{\partial \symbf{\varphi}_j \big( \symbf{O} ; \symbf{\eta}_0, \symbf{\beta} \big)}{\partial \symbf{\beta}}$  is similar to the proof of Lemma \ref{lemma:convg1}.
\end{proof}


\begin{lemma}
    \label{lemma:L2P}
    Suppose that Assumptions \ref{chp4:assum:tilting}, \ref{chp4:assum:positivity}, \ref{chp4:assum:marginal}, \ref{chp4:assum:nofuture}, and \ref{assum:bound_Y} hold, 
    for each 
    $\symbf{\beta}\in \mathcal{B}^*$, $\varphi_j(\symbf{O}; \symbf{\eta}_0, \symbf{\beta}) \in L^2(P)$ and $\frac{\partial \varphi_j(\symbf{O}; \symbf{\eta}_0, \symbf{\beta})}{\partial \symbf{\beta}} \in L^2(P)$, which means
    \begin{align*}
        \mathbb{P} ||
        \varphi_j(\symbf{O}; \symbf{\eta}_0, \symbf{\beta})||_2^2 < +\infty, 
        \ \mathbb{P} 
        \Big|\Big|\frac{\partial \varphi_j(\symbf{O}; \symbf{\eta}_0, \symbf{\beta})}{\partial \symbf{\beta}}\Big|\Big|_2^2 < +\infty.
    \end{align*}
\end{lemma}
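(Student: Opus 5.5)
The plan is to bound $\|\symbf{\varphi}_j(\symbf{O};\symbf{\eta}_0,\symbf{\beta})\|_2^2$ pointwise by a constant times a polynomial in $N(t_2)-N(t_1-)$, and then show that this count has a finite second moment. First we collect deterministic bounds on $[t_1,t_2]$.
\begin{itemize}
\item By Assumption~\ref{assum:bound_Y}, $|Y(t)|\le M$. Hence $\mathbb{E}\{Y(t)\mid\overline{\symbf{O}}(t)\}$, which Proposition~\ref{prop:mu_identity} writes as a ratio of tilted conditional moments, is also bounded by $M$.
\item Since $\symbf{\beta}$ is fixed and $\symbf{B}(\cdot)$ is a specified basis (continuous, hence bounded on the compact interval), $s\{\symbf{B}(t)^\prime\symbf{\beta}\}$ is bounded.
\item $\symbf{W}_2(t;\symbf{\beta})=\symbf{V}_2(\symbf{\beta})^{-1}\symbf{B}(t)$ is bounded, because $\symbf{V}_2(\symbf{\beta})$ is positive definite: $s'>0$ and $\int\symbf{B}\symbf{B}^\prime$ is positive definite.
\item For $\symbf{W}_1$ we use $g'(\mu)=1/s'(z)$ at $z=\symbf{B}(t)^\prime\symbf{\beta}$. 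Since $s\in\mathcal{C}^2$ with $s'>0$, the function $s'$ is continuous and strictly positive on the compact range of $z$, so it is bounded away from zero and $\symbf{W}_1$ is bounded.
\item The derivatives $\partial\symbf{W}_j/\partial\symbf{\beta}$ involve $s''$ and $\partial\symbf{V}_2^{-1}/\partial\symbf{\beta}=-\symbf{V}_2^{-1}(\partial\symbf{V}_2)\symbf{V}_2^{-1}$. These are bounded for the same reason, using $s\in\mathcal{C}^2$.
\item The term $\symbf{W}_j\,\partial s/\partial\symbf{\beta}=\symbf{W}_j s'(\cdot)\symbf{B}(t)^\prime$ is bounded.
\end{itemize}

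Next, Assumption~\ref{chp4:assum:positivity} gives $\rho\{t\mid\overline{\symbf{O}}(t),Y(t)\}>c$, so the integrand of the $dN(t)$ term is bounded by $C/c$. The $dN$ integral is therefore bounded by $(C/c)\{N(t_2)-N(t_1-)\}$, and the $dt$ integral is bounded by $C(t_2-t_1)$. Using $(a+b)^2\le 2a^2+2b^2$ componentwise,
\begin{align*}
\|\symbf{\varphi}_j(\symbf{O};\symbf{\eta}_0,\symbf{\beta})\|_2^2 \le C_1\{N(t_2)-N(t_1-)\}^2 + C_2 .
\end{align*}
The same form of bound holds for the matrix $\partial\symbf{\varphi}_j/\partial\symbf{\beta}$, whose three terms are given in Theorem~\ref{theorem:asy_normal}.

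It remains to show $\mathbb{E}[\{N(t_2)-N(t_1-)\}^2]<\infty$, which is the main obstacle. I would use the martingale $M(t)=N(t)-\int\rho\,du$ from the proof of Theorem~\ref{chp4:theorem:IF}. By Proposition~\ref{chp4:prop:lambda_rho},
\begin{align*}
\rho=\lambda\,\mathbb{E}[e^{\alpha Y}\mid\cdot\,]/e^{\alpha Y}\le \lambda\, e^{2|\alpha|M},
\end{align*}
so an upper bound on $\rho$ follows from one on $\lambda$. Writing
\begin{align*}
N(t_2)-N(t_1-)=\int_{t_1}^{t_2} dM+\int_{t_1}^{t_2}\rho\,dt,
\end{align*}
the martingale covariance identity \eqref{equ:mart_covar} gives
\begin{align*}
\mathbb{E}\Big\{\Big(\int dM\Big)^2\Big\}=\mathbb{E}\int\rho\,dt .
\end{align*}
Both pieces are then finite provided $\rho$ is bounded above on $[t_1,t_2]$. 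Boundedness of the intensity is not among the listed assumptions, so I would state it as the implicit regularity condition: either a bounded intensity or finite second moment of the number of visits in the window. This holds for the Andersen--Gill model under Assumption~\ref{assum: corr_spec} with bounded $\symbf{Z}(t)$ and bounded baseline intensity, and also holds trivially if the number of assessments is bounded, as in practice. With this, both $L^2(P)$ claims follow.
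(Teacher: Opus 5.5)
Your deterministic bounds on $\symbf{W}_j$, $\partial\symbf{W}_j/\partial\symbf{\beta}$, $s\{\symbf{B}(t)^\prime\symbf{\beta}\}$ and $\mathbb{E}\{Y(t)\mid\overline{\symbf{O}}(t)\}$ are fine. They are in fact more careful than the paper's appeal to continuity in $\symbf{\beta}$. The gap is that your argument proves the lemma only under an extra hypothesis: bounded intensity, or $\mathbb{E}[\{N(t_2)-N(t_1-)\}^2]<\infty$. That hypothesis is not among the stated assumptions, and it is not needed. The loss happens when you bound the $dN(t)$ integral pathwise by $(C/c)\{N(t_2)-N(t_1-)\}$. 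Replacing $1/\rho$ by $1/c$ throws away the inverse-intensity weight, which is exactly what tames participants with many assessments. You then apply the martingale covariance identity with the integrand $\equiv 1$, and that requires $\mathbb{E}\int\rho\,dt<\infty$.

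The missing idea is to compensate the weighted integrand itself rather than the count. Let $H(t)=W_j^l(t;\symbf{\beta})[Y(t)-\mathbb{E}\{Y(t)\mid\overline{\symbf{O}}(t)\}]/\rho\{t\mid\overline{\symbf{O}}(t),Y(t)\}$ and write $\int H\,dN=\int H\,dM+\int H\rho\,dt$. The second piece equals $\int W_j^l[Y(t)-\mathbb{E}\{Y(t)\mid\overline{\symbf{O}}(t)\}]\,dt$, which is bounded. For the first piece, apply \eqref{equ:mart_covar} with $H_1=H_2=H$. Writing $|Y(t)|\le K$, this gives
\begin{equation*}
\mathbb{E}\Big\{\Big(\int_{t_1}^{t_2}H\,dM\Big)^2\Big\}=\mathbb{E}\Big[\int_{t_1}^{t_2}\frac{\{W_j^l(t;\symbf{\beta})\}^2[Y(t)-\mathbb{E}\{Y(t)\mid\overline{\symbf{O}}(t)\}]^2}{\rho\{t\mid\overline{\symbf{O}}(t),Y(t)\}}\,dt\Big]\le\frac{4K^2(t_2-t_1)\sup_t\{W_j^l(t;\symbf{\beta})\}^2}{c}.
\end{equation*}
The $1/\rho^2$ in $H^2$ cancels one factor of $\rho$ from the predictable variation. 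So only the lower bound $\rho>c$ from Assumption~\ref{chp4:assum:positivity} is used, with no upper bound on $\lambda$ and no moment of the visit count. The identity is legitimate here because $|H|\le C/c$ and $\mathbb{E}\int H^2\rho\,dt<\infty$. This is the paper's argument. The same step with $\partial\symbf{W}_j/\partial\symbf{\beta}$ in place of $\symbf{W}_j$ handles the first term of the derivative, and the remaining $dt$ terms are bounded as you did.
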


\begin{proof}

\textbf{Part 1: proof of }$\varphi_j(\symbf{O}; \symbf{\eta}_0, \symbf{\beta}) \in L^2(P)$.

For $\symbf{\beta}\in \mathcal{B}^*$, we show that each element of $\varphi_j(\symbf{O}; \symbf{\eta}_0, \symbf{\beta})$ is in $L^2(P)$, which implies that $\varphi_j(\symbf{O}; \symbf{\eta}_0, \symbf{\beta}) \in L^2(P)$. Write $\symbf{W}_j(t;\symbf{\beta}) = (\symbf{W}_j^1(t;\symbf{\beta}), \cdots, \symbf{W}_j^p(t;\symbf{\beta}))$ and define the $l$-th element of $\symbf{\varphi}_j(\symbf{O}; \symbf{\eta}_0, \symbf{\beta})$ as $\varphi_j^l(\symbf{O}; \symbf{\eta}_0, \symbf{\beta})$, $j=1,2$, $l=1,\cdots, p$, where 
\begin{align*}
    \varphi_j^l(\symbf{O}; \symbf{\eta}_0, \symbf{\beta})
    & = \
   \int_{t=t_1}^{t_2} \frac{
    \symbf{W}_j^l(t;\symbf{\beta}) \left[ Y(t) - \mathbb{E}\left\{Y(t) \mid \overline{\symbf{O}}(t)\right\}\right]}{\rho\left\{t \mid \overline{\symbf{O}}(t), Y(t)\right\}} dN(t) \\
    & \quad 
    + 
    \int_{t=t_1}^{t_2} \symbf{W}_j^l(t;\symbf{\beta}) \left[ \mathbb{E}\left\{Y(t)\mid \overline{\symbf{O}}(t)\right\} -  s\left\{ \symbf{B}(t)^\prime \symbf{\beta}\right\}\right]dt \\
    & = \ \int_{t=t_1}^{t_2} \frac{
    \symbf{W}_j^l(t;\symbf{\beta}) \left[ Y(t) - \mathbb{E}\left\{Y(t) \mid \overline{\symbf{O}}(t)\right\}\right]}{\rho\left\{t \mid \overline{\symbf{O}}(t), Y(t)\right\}} \left[dN(t) - \rho\left\{t \mid \overline{\symbf{O}}(t), Y(t)\right\}dt\right]\\
    & \quad + 
    \int_{t=t_1}^{t_2} \symbf{W}_j^l(t;\symbf{\beta}) \left[ Y(t) - \mathbb{E}\left\{Y(t)\mid \overline{\symbf{O}}(t)\right\}\right] dt \\
    & \quad 
    + \int_{t=t_1}^{t_2} \symbf{W}_j^l(t;\symbf{\beta}) \left[ \mathbb{E}\left\{Y(t) \mid \overline{\symbf{O}}(t)\right\} -  s\left\{ \symbf{B}(t)^\prime \symbf{\beta}\right\}\right]dt \\
     & = \ \int_{t=t_1}^{t_2} \frac{
    \symbf{W}_j^l(t;\symbf{\beta}) \left[ Y(t) - \mathbb{E}\left\{Y(t) \mid \overline{\symbf{O}}(t)\right\}\right]}{\rho\left\{t \mid \overline{\symbf{O}}(t), Y(t)\right\}} \left[dN(t) - \rho\left\{t \mid \overline{\symbf{O}}(t), Y(t)\right\} dt\right]\\
    & \quad + 
    \int_{t=t_1}^{t_2} \symbf{W}_j^l(t;\symbf{\beta}) \left[Y(t) -  s\left\{ \symbf{B}(t)^\prime \symbf{\beta}\right\}\right]dt.
\end{align*}

Then 
\begin{align*}
    \mathbb{E}\{\varphi_j^l(\symbf{O}; \symbf{\eta}_0, \symbf{\beta})\}^2 
    & = \ E\Bigg[\int_{t=t_1}^{t_2} \frac{
    \symbf{W}_j^l(t;\symbf{\beta}) \left[ Y(t) - \mathbb{E}\left\{Y(t) \mid \overline{\symbf{O}}(t)\right\}\right]}{\rho\left\{t \mid \overline{\symbf{O}}(t), Y(t)\right\}} \left[dN(t) - \rho\left\{t \mid \overline{\symbf{O}}(t), Y(t)\right\}dt\right] \\
    & \quad \quad \quad + \int_{t=t_1}^{t_2} \symbf{W}_j^l(t;\symbf{\beta}) \left[ Y(t) -  s\left\{ \symbf{B}(t)^\prime \symbf{\beta}\right\}\right]dt \Bigg]^2  \\
    & \leq \ 2 E\Bigg[\int_{t=t_1}^{t_2} \frac{\symbf{W}_j^l(t;\symbf{\beta}) \left[ Y(t) - \mathbb{E}\left\{Y(t) \mid \overline{\symbf{O}}(t)\right\}\right]}{\rho\left\{t \mid \overline{\symbf{O}}(t), Y(t)\right\}} \left[dN(t) - \rho\left\{t \mid \overline{\symbf{O}}(t), Y(t)\right\}dt\right]\Bigg]^2 \\
    &  \quad \quad \quad + 
    2 \mathbb{E}\left[\int_{t=t_1}^{t_2} \symbf{W}_j^l(t;\symbf{\beta}) \left[ Y(t)-  s\left\{ \symbf{B}(t)^\prime \symbf{\beta}\right\}\right]dt\right]^2,
\end{align*}
where
\begin{align*}
    & \ E\Bigg[
    \int_{t=t_1}^{t_2} \frac{
    \symbf{W}_j^l(t;\symbf{\beta}) \left[ Y(t) - \mathbb{E}\left\{Y(t) \mid \overline{\symbf{O}}(t)\right\}\right]}{\rho\left\{t \mid \overline{\symbf{O}}(t), Y(t)\right\}} \left[dN(t) - \rho\left\{t \mid \overline{\symbf{O}}(t), Y(t)\right\}dt\right]\Bigg]^2
    \\
    = & \ \mathbb{E}\left[\int_{t=t_1}^{t_2} \frac{
    \{\symbf{W}_j^l(t;\symbf{\beta}) \}^2\left[ Y(t) - \mathbb{E}\left\{Y(t) \mid \overline{\symbf{O}}(t)\right\}\right]^2 }{\rho\left\{t \mid \overline{\symbf{O}}(t), Y(t)\right\}} dt \right],
\end{align*}
and 
\begin{align*}
    \mathbb{E}\left[\int_{t=t_1}^{t_2} \symbf{W}_j^l(t;\symbf{\beta}) \left[ Y(t)-  s\left\{ \symbf{B}(t)^\prime \symbf{\beta}\right\}\right]dt\right]^2 
    \leq  
    \mathbb{E}\left[
    \int_{t=t_1}^{t_2} \{\symbf{W}_j^l(t;\symbf{\beta})\}^2 dt 
    \int_{t=t_1}^{t_2} \left[ Y(t)-  s\left\{ \symbf{B}(t)^\prime \symbf{\beta}\right\}\right]^2 dt\right].
\end{align*}

Since $\symbf{W}_j(t;\symbf{\beta})$ is continuous in $\symbf{\beta}$, then  $\int_{t=t_1}^{t_2} \{\symbf{W}_j^l(t;\symbf{\beta})\}^2 dt < +\infty$. Under Assumption \ref{assum:bound_Y}, we have
\begin{align*}
    \mathbb{E}\left[\int_{t=t_1}^{t_2}    \frac{
    \{\symbf{W}_j^l(t;\symbf{\beta}) \}^2\left[ Y(t) - \mathbb{E}\left\{Y(t) \mid \overline{\symbf{O}}(t)\right\}\right]^2 }{\rho\left\{t \mid \overline{\symbf{O}}(t), Y(t)\right\}} dt \right] < + \infty, \\
    \ 
    \mathbb{E}\left[
    \int_{t=t_1}^{t_2} \{\symbf{W}_j^l(t;\symbf{\beta})\}^2 dt 
    \int_{t=t_1}^{t_2} \left[ Y(t)-  s\left\{ \symbf{B}(t)^\prime \symbf{\beta}\right\}\right]^2 dt\right] < + \infty,
\end{align*}
which proves $\mathbb{E}\{\varphi_j^l(\symbf{O}; \symbf{\eta}_0, \symbf{\beta})\}^2  < +\infty$ for $j=1,2$, $l=1,2,\cdots,p$. 

Thus, each element of $\varphi_j(\symbf{O}; \symbf{\eta}_0, \symbf{\beta}) \in L^2(P)$ and further 
$\varphi_j(\symbf{O}; \symbf{\eta}_0, \symbf{\beta}) \in L^2(P)$.
\\

\textbf{Part 2: proof of }$\frac{\partial \varphi_j(\symbf{O}; \symbf{\eta}_0, \symbf{\beta})}{\partial \symbf{\beta}} \in L^2(P)$.

Next, we show that each element of $\frac{\partial \varphi_j(\symbf{O}; \symbf{\eta}_0, \symbf{\beta})}{\partial \symbf{\beta}} \in L^2(P)$, which implies that $\frac{\partial \varphi_j(\symbf{O}; \symbf{\eta}_0, \symbf{\beta})}{\partial \symbf{\beta}} \in L^2(P)$.

Write the $(l_1,l_2)$-th element of $\frac{\partial  \symbf{W}_j(t;\symbf{\beta})}{\partial \symbf{\beta}}$ as $\left[\frac{\partial  \symbf{W}_j(t;\symbf{\beta})}{\partial \symbf{\beta}} \right]_{l_1,l_2}$, and write the $(l_1,l_2)$-th element of $\frac{\partial \symbf{\varphi}_j ( \symbf{O}; \symbf{\eta}_0, \symbf{\beta} )}{\partial \symbf{\beta}}$ as $\left[\frac{\partial \symbf{\varphi}_j ( \symbf{O} ; \symbf{\eta}_0, \symbf{\beta} )}{\partial \symbf{\beta}} \right]_{l_1,l_2}$, for $j=1,2$, $l_1,l_2=1,\cdots, p$, where 
\begin{align*}
    \quad & \left[\frac{\partial \symbf{\varphi}_j \big( \symbf{O} ; \symbf{\eta}_0, \symbf{\beta} \big)}{\partial \symbf{\beta}} \right]_{l_1,l_2}\\
    = & \ 
    \int_{t=t_1}^{t_2} \left[\frac{\partial  \symbf{W}_j(t;\symbf{\beta})}{\partial \symbf{\beta}} \right]_{l_1,l_2}
    \frac{
    \left[ Y(t) - \mathbb{E}\left\{Y(t) \mid \overline{\symbf{O}}(t)\right\}\right]}{\rho\left\{t \mid \overline{\symbf{O}}(t), Y(t)\right\}} dN(t)\\
    & \quad + 
    \int_{t=t_1}^{t_2} 
    \left[\frac{\partial  \symbf{W}_j(t;\symbf{\beta})}{\partial \symbf{\beta}} \right]_{l_1,l_2}
    \left[ \mathbb{E}\left\{Y(t)\mid \overline{\symbf{O}}(t)\right\} -  s\left\{ \symbf{B}(t)^\prime \symbf{\beta}\right\}\right]dt\\
    & \quad -
    \int_{t=t_1}^{t_2} \symbf{W}_j^{l_1}(t;\symbf{\beta}) \left[\frac{\partial s\left\{ \symbf{B}(t)^\prime \symbf{\beta}\right\}}{\partial \symbf{\beta}}\right]_{l_2} dt \\
    = & \ 
    \underbrace{\int_{t=t_1}^{t_2} \left[\frac{\partial  \symbf{W}_j(t;\symbf{\beta})}{\partial \symbf{\beta}} \right]_{l_1,l_2}
    \frac{
    \left[ Y(t) - \mathbb{E}\left\{Y(t) \mid \overline{\symbf{O}}(t)\right\}\right]}{\rho\left\{t \mid \overline{\symbf{O}}(t), Y(t)\right\}} \left[dN(t) - \rho\left\{t \mid \overline{\symbf{O}}(t), Y(t)\right\}dt\right]}_{term1} \\
    & \quad + 
    \underbrace{\int_{t=t_1}^{t_2} 
    \left[\frac{\partial  \symbf{W}_j(t;\symbf{\beta})}{\partial \symbf{\beta}} \right]_{l_1,l_2}
    \left[ Y(t) -  s\left\{ \symbf{B}(t)^\prime \symbf{\beta}\right\}\right]dt}_{term2}
    - 
    \underbrace{\int_{t=t_1}^{t_2} \symbf{W}_j^{l_1}(t;\symbf{\beta}) \left[\frac{\partial s\left\{ \symbf{B}(t)^\prime \symbf{\beta}\right\}}{\partial \symbf{\beta}}\right]_{l_2} dt}_{term3}
\end{align*}

 As in the previous part, we need to show that each term satisfies:

\begin{align*}
    & \mathbb{E}\left[
    \int_{t=t_1}^{t_2} \left[\frac{\partial  \symbf{W}_j(t;\symbf{\beta})}{\partial \symbf{\beta}} \right]_{l_1,l_2}
    \frac{
    \left[ Y(t) - \mathbb{E}\left\{Y(t) \mid \overline{\symbf{O}}(t)\right\}\right]}{\rho\left\{t \mid \overline{\symbf{O}}(t), Y(t)\right\}} \left[dN(t) - \rho\left\{t \mid \overline{\symbf{O}}(t), Y(t)\right\}dt\right]
    \right]^2 
    < +\infty, \\
    & \mathbb{E}\left[\int_{t=t_1}^{t_2} 
    \left[\frac{\partial  \symbf{W}_j(t;\symbf{\beta})}{\partial \symbf{\beta}} \right]_{l_1,l_2}
    \left[Y(t) -  s\left\{ \symbf{B}(t)^\prime \symbf{\beta}\right\}\right]dt\right]^2 
    < +\infty, \\
    & \mathbb{E}\left[\int_{t=t_1}^{t_2} \symbf{W}_j^{l_1}(t;\symbf{\beta}) \left[\frac{\partial s\left\{ \symbf{B}(t)^\prime \symbf{\beta}\right\}}{\partial \symbf{\beta}}\right]_{l_2} dt\right]^2 
    < +\infty,
\end{align*}
where term1 and term2 can be proved similarly to the first part, and term3 can be proved by the continuity of the inverse link function, $s \in \mathbb{C}^2$.

Thus, we have proved that for $\symbf{\beta}\in \mathcal{B}^*$, each element of $\frac{\partial \symbf{\varphi}_j ( \symbf{O}; \symbf{\eta}_0, \symbf{\beta} )}{\partial \symbf{\beta}} \in L^2(P)$, then $\frac{\partial \varphi_j(\symbf{O}; \symbf{\eta}_0, \symbf{\beta})}{\partial \symbf{\beta}} \in L^2(P)$.
\end{proof}


\begin{lemma}
  \label{lemma:L2P_continuous}
      Suppose that Assumptions \ref{chp4:assum:tilting}, \ref{chp4:assum:positivity}, \ref{chp4:assum:marginal}, \ref{chp4:assum:nofuture}, and \ref{assum:bound_Y}, \ref{assum: corr_spec} - \ref{assum:Y_rate} hold, for $\symbf{\beta} \in \mathcal{B}^*$, $\varphi_j(\symbf{O}; \symbf{\eta}, \symbf{\beta})$ and $\frac{\partial \varphi_j(\symbf{O}; \symbf{\eta}, \symbf{\beta})}{\partial \symbf{\beta}}$ satisfy
  \begin{align*}
   \sup_{\symbf{\beta} \in \mathcal{B}^*}
   \mathbb{P} 
   ||\varphi_j(\symbf{O}; \widehat{\symbf{\eta}}, \symbf{\beta}) - \varphi_j(\symbf{O}; \symbf{\eta}_0, \symbf{\beta})||_2^2  = o_P(1), 
   \ 
   \sup_{\symbf{\beta} \in \mathcal{B}^*}\mathbb{P} 
   \Big|\Big|\frac{\partial \varphi_j(\symbf{O}; \widehat{\symbf{\eta}}, \symbf{\beta})}{\partial \symbf{\beta}} - \frac{\partial \varphi_j(\symbf{O}; \symbf{\eta}_0, \symbf{\beta})}{\partial \symbf{\beta}} \Big|\Big|_2^2  = o_P(1),
    \end{align*}
    as $\widehat{\symbf{\eta}} \rightarrow \symbf{\eta} _0$.
\end{lemma}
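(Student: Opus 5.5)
We bound the difference $\varphi_j(\symbf{O}; \widehat{\symbf{\eta}}, \symbf{\beta}) - \varphi_j(\symbf{O}; \symbf{\eta}_0, \symbf{\beta})$ pointwise in $\symbf{O}$ by nuisance errors times quantities that do not depend on $\symbf{\beta}$, and then take the supremum. Write $\widehat{\rho}$, $\rho$, $\widehat{\mathbb{E}}\{Y(t)\mid\overline{\symbf{O}}(t)\}$, $\mathbb{E}\{Y(t)\mid\overline{\symbf{O}}(t)\}$ as in Lemma \ref{lemma:convg1}. The $s\{\symbf{B}(t)'\symbf{\beta}\}$ terms cancel, leaving
\begin{align*}
\varphi_j(\symbf{O}; \widehat{\symbf{\eta}}, \symbf{\beta}) - \varphi_j(\symbf{O}; \symbf{\eta}_0, \symbf{\beta})
= & \int_{t_1}^{t_2} \symbf{W}_j(t;\symbf{\beta})\Big[\frac{Y(t)-\widehat{\mathbb{E}}\{Y(t)\mid\overline{\symbf{O}}(t)\}}{\widehat{\rho}} - \frac{Y(t)-\mathbb{E}\{Y(t)\mid\overline{\symbf{O}}(t)\}}{\rho}\Big] dN(t) \\
& + \int_{t_1}^{t_2} \symbf{W}_j(t;\symbf{\beta})\big[\widehat{\mathbb{E}}\{Y(t)\mid\overline{\symbf{O}}(t)\} - \mathbb{E}\{Y(t)\mid\overline{\symbf{O}}(t)\}\big]dt .
\end{align*}
The bracket in the first integral is split as
\[
\frac{\mathbb{E}\{Y(t)\mid\overline{\symbf{O}}(t)\} - \widehat{\mathbb{E}}\{Y(t)\mid\overline{\symbf{O}}(t)\}}{\widehat{\rho}} + \big[Y(t)-\mathbb{E}\{Y(t)\mid\overline{\symbf{O}}(t)\}\big]\frac{\rho-\widehat{\rho}}{\rho\widehat{\rho}} .
\]
This gives three terms, each linear in a nuisance error.

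Next we control the weights uniformly in $\symbf{\beta}$. We need $\sup_{\symbf{\beta}\in\mathcal{B}^*}\sup_{t\in[t_1,t_2]}\|\symbf{W}_j(t;\symbf{\beta})\|_2 \le C$, and similarly for $\partial\symbf{W}_j/\partial\symbf{\beta}$ and $\partial s\{\symbf{B}(t)'\symbf{\beta}\}/\partial\symbf{\beta}$. These follow from continuity of $\symbf{B}$ and $s\in\mathcal{C}^2$ if $\mathcal{B}^*$ is bounded with closure inside the domain where $\partial s/\partial z$ is bounded away from $0$. That condition controls $\partial g/\partial\mu = 1/s'$ in $\symbf{W}_1$ and $\symbf{V}_2(\symbf{\beta})^{-1}$ in $\symbf{W}_2$. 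This is the \textbf{main obstacle}, since $\mathcal{B}^*$ is only said to be open. We plan to argue on a bounded neighborhood of $\symbf{\beta}_0$ with compact closure, which suffices for the later consistency and normality arguments, and to state this restriction explicitly.

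Given these uniform bounds, we apply Cauchy--Schwarz and $(a+b+c)^2\le 3(a^2+b^2+c^2)$. For the $dN(t)$ integrals we need $\mathbb{E}\{N(t_2)^2\}<\infty$, bounded $\rho$, and $\widehat{\rho}$ bounded below with probability tending to one. The integral against $dN(t)$ equals a martingale integral plus a $\rho\,dt$ integral, as in Lemma \ref{lemma:L2P}, so each term is bounded by
\[
C\,\mathbb{E}\int_{t_1}^{t_2}\big[\widehat{\mathbb{E}}\{Y(t)\mid\overline{\symbf{O}}(t)\}-\mathbb{E}\{Y(t)\mid\overline{\symbf{O}}(t)\}\big]^2dt
\quad\text{or}\quad
C\,\mathbb{E}\int_{t_1}^{t_2}(\widehat{\rho}-\rho)^2dt,
\]
with $C$ free of $\symbf{\beta}$. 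Here $Y$ is bounded by Assumption \ref{assum:bound_Y}. The lower bound on $\widehat{\rho}$ comes from Assumption \ref{chp4:assum:positivity} together with uniform consistency of $\widehat{\rho}$.

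These two error integrals are exactly the quantities in \eqref{equ:rho_rate}--\eqref{equ:E_rate}. By the expansions in the proof of Lemma \ref{lemma:convg1}, they are $O_P((nh_n)^{-1})+O_P(n^{-1})+o_P(n^{-1/2})=o_P(1)$, using Assumptions \ref{assum: lambda_rate} and \ref{assum:Y_rate}. Since the bound holds uniformly in $\symbf{\beta}$, the first claim follows.

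The derivative claim is proved the same way. In $\partial\varphi_j/\partial\symbf{\beta}$, the third term $\int \symbf{W}_j\,\partial s/\partial\symbf{\beta}\,dt$ does not depend on $\symbf{\eta}$ and cancels. The remaining two terms have the structure above with $\symbf{W}_j$ replaced by $\partial\symbf{W}_j/\partial\symbf{\beta}$, which is uniformly bounded by the same compactness argument.
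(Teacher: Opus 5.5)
Your proposal is correct and follows the paper's proof. The paper splits the difference into four terms $\Delta_{j,1},\dots,\Delta_{j,4}$; your middle term is just $\Delta_{j,1}+\Delta_{j,3}$. Like you, it writes each $dN(t)$ integral as a martingale integral plus a $\rho\,dt$ integral, bounds these with the martingale covariance identity and Cauchy--Schwarz, reduces everything to the error integrals \eqref{equ:rho_rate}--\eqref{equ:E_rate} from Lemma~\ref{lemma:convg1}, and handles the derivative case the same way.

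The only real difference is how uniformity in $\symbf{\beta}$ is handled. The paper simply moves $\sup_{\symbf{\beta}\in\mathcal{B}^*}$ inside the expectation and integral and leaves the finiteness of $\int\sup_{\symbf{\beta}}\{\symbf{W}_j^l(t;\symbf{\beta})\}^2\,dt$ implicit. You instead make it explicit by restricting to a bounded neighborhood of $\symbf{\beta}_0$ with compact closure. That restriction is a genuine tightening, not a cosmetic one: for the log and logit links, $\symbf{W}_j(t;\symbf{\beta})$ can be unbounded over an unbounded open $\mathcal{B}^*$.
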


\begin{proof}
   \textbf{Part 1: proof of }$\sup_{\symbf{\beta} \in \mathcal{B}^*}
   \mathbb{P} 
   ||\varphi_j(\symbf{O}; \widehat{\symbf{\eta}}, \symbf{\beta}) - \varphi_j(\symbf{O}; \symbf{\eta}_0, \symbf{\beta})||_2^2  = o_P(1)$.
\begin{align*}
    & \varphi_j(\symbf{O}; \widehat{\symbf{\eta}}, \symbf{\beta}) - \varphi_j(\symbf{O}; \symbf{\eta}_0, \symbf{\beta}) \\
   =  & \ 
   \int_{t=t_1}^{t_2} \frac{
    \symbf{W}_j(t;\symbf{\beta}) \left[ Y(t) - \widehat{\mathbb{E}}\left\{Y(t)\mid \overline{\symbf{O}}(t)\right\}\right]}{\widehat{\rho}\left\{t \mid \overline{\symbf{O}}(t), Y(t)\right\}} dN(t) \\
    & \quad 
    + \int_{t=t_1}^{t_2} \symbf{W}_j(t;\symbf{\beta}) \left[ \widehat{\mathbb{E}}\left\{Y(t)\mid \overline{\symbf{O}}(t)\right\} -  s\left\{ \symbf{B}(t)^\prime \symbf{\beta}\right\}\right]dt  \\
    & \quad 
    - \int_{t=t_1}^{t_2} \frac{
    \symbf{W}_j(t;\symbf{\beta}) \left[ Y(t) - \mathbb{E}\left\{Y(t)\mid \overline{\symbf{O}}(t)\right\}\right]}{\rho\left\{t \mid \overline{\symbf{O}}(t), Y(t)\right\}} dN(t)\\
    & \quad 
    - \int_{t=t_1}^{t_2} \symbf{W}_j(t;\symbf{\beta}) \left[ \mathbb{E}\left\{Y(t)\mid \overline{\symbf{O}}(t)\right\} -  s\left\{ \symbf{B}(t)^\prime \symbf{\beta}\right\}\right]dt \\
  = & \ 
    \int_{t=t_1}^{t_2} 
    \symbf{W}_j(t;\symbf{\beta})
    \left[ \frac{
    \left[ Y(t) - \widehat{\mathbb{E}}\left\{Y(t)\mid \overline{\symbf{O}}(t)\right\}\right]}{\widehat{\rho}\left\{t \mid \overline{\symbf{O}}(t), Y(t)\right\}} 
    - 
    \frac{
    \left[ Y(t) - \mathbb{E}\left\{Y(t)\mid \overline{\symbf{O}}(t)\right\}\right]}{\rho\left\{t \mid \overline{\symbf{O}}(t), Y(t)\right\}} 
    \right] dN(t) \\
    & \quad 
    + \int_{t=t_1}^{t_2} \symbf{W}_j(t;\symbf{\beta}) \left[ \widehat{\mathbb{E}}\left\{Y(t)\mid \overline{\symbf{O}}(t)\right\} -  \mathbb{E}\left\{Y(t)\mid \overline{\symbf{O}}(t)\right\}\right]dt \\
   = & \ 
   \underbrace{
   \int_{t=t_1}^{t_2} \frac{\symbf{W}_j(t;\symbf{\beta})
    Y(t)}{\rho\left\{t \mid \overline{\symbf{O}}(t), Y(t)\right\}}\left[
    \frac{\rho\left\{t \mid \overline{\symbf{O}}(t), Y(t)\right\}}{\widehat{\rho}\left\{t \mid \overline{\symbf{O}}(t), Y(t)\right\}} - 1\right]dN(t)
   }_{\Delta_{j,1}}
   \\
    & \quad \underbrace{- \int_{t=t_1}^{t_2} \frac{\symbf{W}_j(t;\symbf{\beta})}{\widehat{\rho}\left\{t \mid \overline{\symbf{O}}(t), Y(t)\right\}}\left[\widehat{\mathbb{E}}\left\{Y(t)\mid \overline{\symbf{O}}(t)\right\} - \mathbb{E}\left\{Y(t)\mid \overline{\symbf{O}}(t)\right\}\right] dN(t)}_{\Delta_{j,2}} \\
    & \quad \underbrace{ - \int_{t=t_1}^{t_2}\frac{\symbf{W}_j(t;\symbf{\beta})  \mathbb{E}\left\{Y(t)\mid \overline{\symbf{O}}(t)\right\}}{\rho\left\{t \mid \overline{\symbf{O}}(t), Y(t)\right\}}\left[\frac{\rho\left\{t \mid \overline{\symbf{O}}(t), Y(t)\right\}}{\widehat{\rho}\left\{t \mid \overline{\symbf{O}}(t), Y(t)\right\}} - 1\right] dN(t)}_{\Delta_{j,3}} \\
    & \quad \underbrace{
    + \int_{t=t_1}^{t_2} \symbf{W}_j(t;\symbf{\beta}) \left[ \widehat{\mathbb{E}}\left\{Y(t)|\overline{\symbf{O}}(t)\right\} -  \mathbb{E}\left\{Y(t)\mid \overline{\symbf{O}}(t)\right\}\right]dt}_{\Delta_{j,4}}.
\end{align*}

By the Cauchy-Schwarz inequality, we have:
\begin{align*}
    \mathbb{P} ||\varphi_j(\symbf{O}; \widehat{\symbf{\eta}}, \symbf{\beta}) - \varphi_j(\symbf{O}; \symbf{\eta}_0, \symbf{\beta})||_2^2 
    & = \ 
    \mathbb{P} ||\Delta_{j,1} + \Delta_{j,2} + \Delta_{j,3} + \Delta_{j,4}||_2^2 \\
    & = \ 
    \mathbb{P}\left\{\left(\Delta_{j,1} + \Delta_{j,2} + \Delta_{j,3} + \Delta_{j,4}\right)^\prime \left(\Delta_{j,1} + \Delta_{j,2} + \Delta_{j,3} +\Delta_{j,4}\right)\right\}
    \\
    & \leq \ 
     \mathbb{P}
     \left(||\Delta_{j,1}||_2 +
     ||\Delta_{j,2}||_2 + ||\Delta_{j,3}||_2 + ||\Delta_{j,4}||_2
     \right)^2
    \\
    & \leq \ 
    4\left(
    \mathbb{P} ||\Delta_{j,1}||_2^2 + \mathbb{P} ||\Delta_{j,2}||_2^2 + 
    \mathbb{P} ||\Delta_{j,3}||_2^2 + 
    \mathbb{P} ||\Delta_{j,4}||_2^2\right).
\end{align*}
Thus, 
\begin{align*}
    & \sup_{\symbf{\beta} \in \mathcal{B}^*}
   \mathbb{P} 
   ||\varphi_j(\symbf{O}; \widehat{\symbf{\eta}}, \symbf{\beta}) - \varphi_j(\symbf{O}; \symbf{\eta}_0, \symbf{\beta})||_2^2 \\
   \leq & \ 
   \sup_{\symbf{\beta} \in \mathcal{B}^*} 4 
   \left(
    \mathbb{P} ||\Delta_{j,1}||_2^2 + \mathbb{P} ||\Delta_{j,2}||_2^2 + 
    \mathbb{P} ||\Delta_{j,3}||_2^2 + 
    \mathbb{P} ||\Delta_{j,4}||_2^2\right) \\
   \leq & \
    4\left(
    \sup_{\symbf{\beta} \in \mathcal{B}^*}\mathbb{P} ||\Delta_{j,1}||_2^2 + \sup_{\symbf{\beta} \in \mathcal{B}^*}\mathbb{P} ||\Delta_{j,2}||_2^2 + 
    \sup_{\symbf{\beta} \in \mathcal{B}^*}\mathbb{P} ||\Delta_{j,3}||_2^2 + 
    \sup_{\symbf{\beta} \in \mathcal{B}^*}\mathbb{P} ||\Delta_{j,4}||_2^2\right).
\end{align*}

Thus, if we can show that $\sup_{\symbf{\beta} \in \mathcal{B}^*}\mathbb{P} ||\Delta_{j,k}||_2^2 = o_P(1)$ for $j=1,2$, $k=1,2,3,4$, we can then prove $\sup_{\symbf{\beta} \in \mathcal{B}^*}
\mathbb{P} ||\varphi_j(\symbf{O}; \widehat{\symbf{\eta}}, \symbf{\beta}) - \varphi_j(\symbf{O}; \symbf{\eta}_0, \symbf{\beta})||_2^2  = o_P(1)$.

Define the $l$-th element of $\Delta_{j,k}$ as $\Delta_{j,k}^l$, we need to show that each element of $\Delta_{j,k}$ satisfies: $\sup_{\symbf{\beta} \in \mathcal{B}^*}\mathbb{P} (\Delta_{j,k}^l)^2 = o_P(1)$, for $j=1,2$, $k=1,2,3,4$, $l=1,\cdots, p$.

For the $l$-th element of $\Delta_{j,1}$, we have
\begin{align*}
   \Delta_{j,1}^l 
   = & \ \int_{t=t_1}^{t_2} \frac{\symbf{W}_j^l(t;\symbf{\beta})Y(t)}{\rho\left\{t \mid \overline{\symbf{O}}(t), Y(t)\right\}}
    \left[
    \frac{\rho\left\{t \mid \overline{\symbf{O}}(t), Y(t)\right\}}{\widehat{\rho}\left\{t \mid \overline{\symbf{O}}(t), Y(t)\right\}} - 1\right]
    dN(t) \\
   = & \ 
   \int_{t=t_1}^{t_2} \frac{\symbf{W}_j^l(t;\symbf{\beta})Y(t)}{\rho\left\{t \mid \overline{\symbf{O}}(t), Y(t)\right\}}
    \left[
    \frac{\rho\left\{t \mid \overline{\symbf{O}}(t), Y(t)\right\}}{\widehat{\rho}\left\{t \mid \overline{\symbf{O}}(t), Y(t)\right\}} - 1\right]
    \left[dN(t) - \rho\left\{t \mid \overline{\symbf{O}}(t), Y(t)\right\} dt \right] \\
    & \quad + 
    \int_{t=t_1}^{t_2} \symbf{W}_j^l(t;\symbf{\beta})Y(t)
    \left[
    \frac{\rho\left\{t \mid \overline{\symbf{O}}(t), Y(t)\right\}}{\widehat{\rho}\left\{t \mid \overline{\symbf{O}}(t), Y(t)\right\}} - 1\right]dt 
\end{align*}

Then we have 
\begin{align*}
    & \sup_{\symbf{\beta} \in \mathcal{B}^*}\mathbb{P} (\Delta_{j,1}^l)^2\\
    = & \  \sup_{\symbf{\beta} \in \mathcal{B}^*} 
   \mathbb{E}\left[\int_{t=t_1}^{t_2} \frac{\symbf{W}_j^l(t;\symbf{\beta})Y(t)}{\rho\left\{t \mid \overline{\symbf{O}}(t), Y(t)\right\}}
    \left[
    \frac{\rho\left\{t \mid \overline{\symbf{O}}(t), Y(t)\right\}}{\widehat{\rho}\left\{t \mid \overline{\symbf{O}}(t), Y(t)\right\}} - 1\right]
    dN(t)\right]^2 \nonumber \\
   \leq & \ 
   2 \sup_{\symbf{\beta} \in \mathcal{B}^*} 
   \mathbb{E} \left[
   \int_{t=t_1}^{t_2} \frac{\symbf{W}_j^l(t;\symbf{\beta})Y(t)}{\rho\left\{t \mid \overline{\symbf{O}}(t), Y(t)\right\}}
    \left[
    \frac{\rho\left\{t \mid \overline{\symbf{O}}(t), Y(t)\right\}}{\widehat{\rho}\left\{t \mid \overline{\symbf{O}}(t), Y(t)\right\}} - 1\right]
    \left[dN(t) - \rho\left\{t \mid \overline{\symbf{O}}(t), Y(t)\right\} dt \right]
   \right]^2 \\
   & \quad + 
   2 \sup_{\symbf{\beta} \in \mathcal{B}^*}  \mathbb{E} \left[
   \int_{t=t_1}^{t_2} \symbf{W}_j^l(t;\symbf{\beta})Y(t)\left[
    \frac{\rho\left\{t \mid \overline{\symbf{O}}(t), Y(t)\right\}}{\widehat{\rho}\left\{t \mid \overline{\symbf{O}}(t), Y(t)\right\}} - 1\right]dt \right]^2,
\end{align*}
where 
\begin{align}
    & \  2 \sup_{\symbf{\beta} \in \mathcal{B}^*} 
   \mathbb{E} \left[
   \int_{t=t_1}^{t_2} \frac{\symbf{W}_j^l(t;\symbf{\beta})Y(t)}{\rho\left\{t \mid \overline{\symbf{O}}(t), Y(t)\right\}}
    \left[
    \frac{\rho\left\{t \mid \overline{\symbf{O}}(t), Y(t)\right\}}{\widehat{\rho}\left\{t \mid \overline{\symbf{O}}(t), Y(t)\right\}} - 1\right]
    \left[dN(t) - \rho\left\{t \mid \overline{\symbf{O}}(t), Y(t)\right\} dt \right]
   \right]^2 \nonumber \\
   = & \  
   2\sup_{\symbf{\beta} \in \mathcal{B}^*}
   \mathbb{E} \left[
   \int_{t=t_1}^{t_2} \frac{\left\{\symbf{W}_j^l(t;\symbf{\beta})Y(t)\right\}^2}{ \rho\left\{t \mid \overline{\symbf{O}}(t), Y(t)\right\} \widehat{\rho}^2\left\{t \mid \overline{\symbf{O}}(t), Y(t)\right\}} \left[
   \rho\left\{t \mid \overline{\symbf{O}}(t), Y(t)\right\} - \widehat{\rho}\left\{t \mid \overline{\symbf{O}}(t), Y(t)\right\}
   \right]^2 dt 
   \right] \nonumber \\
   \leq & \  
    2 \mathbb{E} \left[
   \int_{t=t_1}^{t_2} 
   \sup_{\symbf{\beta} \in \mathcal{B}^*} \frac{\left\{\symbf{W}_j^l(t;\symbf{\beta})Y(t)\right\}^2}{ \rho\left\{t \mid \overline{\symbf{O}}(t), Y(t)\right\} \widehat{\rho}^2\left\{t \mid \overline{\symbf{O}}(t), Y(t)\right\}} \left[
   \rho\left\{t \mid \overline{\symbf{O}}(t), Y(t)\right\} - \widehat{\rho}\left\{t \mid \overline{\symbf{O}}(t), Y(t)\right\}
   \right]^2 dt 
   \right], 
   \label{equ:delta1_l_1}
\end{align}
and 
\begin{align}
    & \ 2 \sup_{\symbf{\beta} \in \mathcal{B}^*}  \mathbb{E} \left[
   \int_{t=t_1}^{t_2} \symbf{W}_j^l(t;\symbf{\beta})Y(t)\left[
    \frac{\rho\left\{t \mid \overline{\symbf{O}}(t), Y(t)\right\}}{\widehat{\rho}\left\{t \mid \overline{\symbf{O}}(t), Y(t)\right\}} - 1\right]dt \right]^2 \nonumber \\
    \leq & \
    2 \sup_{\symbf{\beta} \in \mathcal{B}^*} \mathbb{E} \left[
   \left[\int_{t=t_1}^{t_2} \frac{\left\{\symbf{W}_j^l(t;\symbf{\beta})Y(t)\right\}^2}{\widehat{\rho}^2 \left\{t \mid \overline{\symbf{O}}(t), Y(t)\right\}}
   dt \right]
   \int_{t=t_1}^{t_2} \left[
   \rho\left\{t \mid \overline{\symbf{O}}(t), Y(t)\right\} - \widehat{\rho}\left\{t \mid \overline{\symbf{O}}(t), Y(t)\right\}\right]^2 dt \right] \nonumber \\
   \leq & \
   2 \mathbb{E} \left[\left[\int_{t=t_1}^{t_2} \sup_{\symbf{\beta} \in \mathcal{B}^*} \frac{\left\{\symbf{W}_j^l(t;\symbf{\beta})Y(t)\right\}^2}{\widehat{\rho}^2 \left\{t \mid \overline{\symbf{O}}(t), Y(t)\right\}}
   dt \right]
   \int_{t=t_1}^{t_2} \left[
   \rho\left\{t \mid \overline{\symbf{O}}(t), Y(t)\right\} - \widehat{\rho}\left\{t \mid \overline{\symbf{O}}(t), Y(t)\right\}\right]^2 dt \right]
     \label{equ:delta1_l_2}
\end{align}
Both equations \eqref{equ:delta1_l_1} and \eqref{equ:delta1_l_2} can be proved to be $o_P(1)$ similarly to the proof for equation \eqref{equ:rho_rate} in Lemma \ref{lemma:convg1}.


For the $l$-th element of $\Delta_{j,2}$, we have
\begin{align*}
    \Delta_{j,2}^l 
    = & \ 
    \int_{t=t_1}^{t_2} \frac{\symbf{W}_j^l(t;\symbf{\beta})}{\widehat{\rho}\left\{t \mid \overline{\symbf{O}}(t), Y(t)\right\}}\left[\widehat{\mathbb{E}}\left\{Y(t)\mid \overline{\symbf{O}}(t)\right\} - \mathbb{E}\left\{Y(t)\mid \overline{\symbf{O}}(t)\right\}\right] dN(t)\\
    = & \ 
    \int_{t=t_1}^{t_2} \frac{\symbf{W}_j^l(t;\symbf{\beta})}{\widehat{\rho}\left\{t \mid \overline{\symbf{O}}(t), Y(t)\right\}}\left[\widehat{\mathbb{E}}\left\{Y(t)\mid \overline{\symbf{O}}(t)\right\} - \mathbb{E}\left\{Y(t)\mid \overline{\symbf{O}}(t)\right\}\right] \left[dN(t) - \rho\left\{t \mid \overline{\symbf{O}}(t), Y(t)\right\} dt \right] \\ 
    & \quad + 
     \int_{t=t_1}^{t_2} \frac{\symbf{W}_j^l(t;\symbf{\beta})\rho\left\{t \mid \overline{\symbf{O}}(t), Y(t)\right\}}{\widehat{\rho}\left\{t \mid \overline{\symbf{O}}(t), Y(t)\right\}}\left[\widehat{\mathbb{E}}\left\{Y(t)\mid \overline{\symbf{O}}(t)\right\} - \mathbb{E}\left\{Y(t)\mid \overline{\symbf{O}}(t)\right\}\right] dt.
\end{align*}
Then we have 
\begin{align*}
     & \sup_{\symbf{\beta} \in \mathcal{B}^*}\mathbb{P} (\Delta_{j,2}^l)^2 \\
     = & 
    \sup_{\symbf{\beta} \in \mathcal{B}^*} \mathbb{E}\left[
    \int_{t=t_1}^{t_2} \frac{\symbf{W}_j^l(t;\symbf{\beta})}{\widehat{\rho}\left\{t \mid \overline{\symbf{O}}(t), Y(t)\right\}}\left[\widehat{\mathbb{E}}\left\{Y(t)\mid \overline{\symbf{O}}(t)\right\} - \mathbb{E}\left\{Y(t)\mid \overline{\symbf{O}}(t)\right\}\right] dN(t)
    \right]^2\nonumber \\
    \leq &
    2 \sup_{\symbf{\beta} \in \mathcal{B}^*} \mathbb{E}\left[
    \int_{t=t_1}^{t_2} \frac{\symbf{W}_j^l(t;\symbf{\beta})}{\widehat{\rho}\left\{t \mid \overline{\symbf{O}}(t), Y(t)\right\}}\left[\widehat{\mathbb{E}}\left\{Y(t)\mid \overline{\symbf{O}}(t)\right\} - \mathbb{E}\left\{Y(t)\mid \overline{\symbf{O}}(t)\right\}\right] \left[dN(t) - \rho\left\{t \mid \overline{\symbf{O}}(t), Y(t)\right\} dt \right]
    \right]^2 \\
    & \quad + 
    2 \sup_{\symbf{\beta} \in \mathcal{B}^*}\mathbb{E}\left[
    \int_{t=t_1}^{t_2} \frac{\symbf{W}_j^l(t;\symbf{\beta})\rho\left\{t \mid \overline{\symbf{O}}(t), Y(t)\right\}}{\widehat{\rho}\left\{t \mid \overline{\symbf{O}}(t), Y(t)\right\}}\left[\widehat{\mathbb{E}}\left\{Y(t)\mid \overline{\symbf{O}}(t)\right\} - \mathbb{E}\left\{Y(t)\mid \overline{\symbf{O}}(t)\right\}\right] dt
    \right]^2,
\end{align*}
where 
\begin{align}
    & \ 2 \sup_{\symbf{\beta} \in \mathcal{B}^*} \mathbb{E}\left[
    \int_{t=t_1}^{t_2} \frac{\symbf{W}_j^l(t;\symbf{\beta})}{\widehat{\rho}\left\{t \mid \overline{\symbf{O}}(t), Y(t)\right\}}\left[\widehat{\mathbb{E}}\left\{Y(t)\mid \overline{\symbf{O}}(t)\right\} - \mathbb{E}\left\{Y(t)\mid \overline{\symbf{O}}(t)\right\}\right] \left[dN(t) - \rho\left\{t \mid \overline{\symbf{O}}(t), Y(t)\right\} dt \right]
    \right]^2 \nonumber \\
    = & \ 
    2 \sup_{\symbf{\beta} \in \mathcal{B}^*} \mathbb{E}\left[
     \int_{t=t_1}^{t_2} 
     \frac{\{\symbf{W}_j^l(t;\symbf{\beta})\}^2 \rho\left\{t \mid \overline{\symbf{O}}(t), Y(t)\right\}}{\widehat{\rho}^2\left\{t \mid \overline{\symbf{O}}(t), Y(t)\right\}}
     \left[\widehat{\mathbb{E}}\left\{Y(t)\mid \overline{\symbf{O}}(t)\right\} - \mathbb{E}\left\{Y(t)\mid \overline{\symbf{O}}(t)\right\}\right]^2 dt 
    \right]\nonumber \\
    \leq & \ 
    2\mathbb{E}\left[
    \int_{t=t_1}^{t_2}
    \sup_{\symbf{\beta} \in \mathcal{B}^*} \frac{\{\symbf{W}_j^l(t;\symbf{\beta})\}^2 \rho\left\{t \mid \overline{\symbf{O}}(t), Y(t)\right\}}{\widehat{\rho}^2\left\{t \mid \overline{\symbf{O}}(t), Y(t)\right\}}
    \left[\widehat{\mathbb{E}}\left\{Y(t)\mid \overline{\symbf{O}}(t)\right\} - \mathbb{E}\left\{Y(t)\mid \overline{\symbf{O}}(t)\right\}\right]^2 dt
    \right],
    \label{equ:delta2_l_1}
\end{align}
and 
\begin{align}
    & \ 
    2 \sup_{\symbf{\beta} \in \mathcal{B}^*} \mathbb{E}\left[
    \int_{t=t_1}^{t_2} \frac{\symbf{W}_j^l(t;\symbf{\beta})\rho\left\{t \mid \overline{\symbf{O}}(t), Y(t)\right\}}{\widehat{\rho}\left\{t \mid \overline{\symbf{O}}(t), Y(t)\right\}}\left[\widehat{\mathbb{E}}\left\{Y(t)\mid \overline{\symbf{O}}(t)\right\} - \mathbb{E}\left\{Y(t)\mid \overline{\symbf{O}}(t)\right\}\right] dt
    \right]^2 \nonumber \\
    \leq & \ 
    2 \sup_{\symbf{\beta} \in \mathcal{B}^*} \mathbb{E}\left[
    \left[ \int_{t=t_1}^{t_2} \frac{\left[\symbf{W}_j^l(t;\symbf{\beta})\rho\left\{t \mid \overline{\symbf{O}}(t), Y(t)\right\}\right]^2}{\widehat{\rho}^2\left\{t \mid \overline{\symbf{O}}(t), Y(t)\right\}} dt \right]
    \int_{t=t_1}^{t_2} \left[\widehat{\mathbb{E}}\left\{Y(t)\mid \overline{\symbf{O}}(t)\right\} - \mathbb{E}\left\{Y(t)\mid \overline{\symbf{O}}(t)\right\}\right]^2 dt
    \right] \nonumber \\
    \leq & \
    2 \mathbb{E}\left[
    \left[ \int_{t=t_1}^{t_2} 
    \sup_{\symbf{\beta} \in \mathcal{B}^*} \frac{\left[\symbf{W}_j^l(t;\symbf{\beta})\rho\left\{t \mid \overline{\symbf{O}}(t), Y(t)\right\}\right]^2}{\widehat{\rho}^2\left\{t \mid \overline{\symbf{O}}(t), Y(t)\right\}} dt \right]
    \int_{t=t_1}^{t_2} \left[\widehat{\mathbb{E}}\left\{Y(t)\mid \overline{\symbf{O}}(t)\right\} - \mathbb{E}\left\{Y(t)\mid \overline{\symbf{O}}(t)\right\}\right]^2 dt
    \right].
     \label{equ:delta2_l_2}
\end{align}

Both equations \eqref{equ:delta2_l_1} and \eqref{equ:delta2_l_2} can be proved to be $o_P(1)$ similarly to the proof of equation \eqref{equ:E_rate} in Lemma \ref{lemma:convg1}.

For the $l$-th element of $\Delta_{j,3}$, we have
\begin{align*}
    \Delta_{j,3}^l = & \ \int_{t=t_1}^{t_2}\frac{\symbf{W}_j^l(t;\symbf{\beta})  \mathbb{E}\left\{Y(t)\mid \overline{\symbf{O}}(t)\right\}}{\rho\left\{t \mid \overline{\symbf{O}}(t), Y(t)\right\}}\left[\frac{\rho\left\{t \mid \overline{\symbf{O}}(t), Y(t)\right\}}{\widehat{\rho}\left\{t \mid \overline{\symbf{O}}(t), Y(t)\right\}} - 1\right] dN(t)\\
    = & \ \int_{t=t_1}^{t_2}\frac{\symbf{W}_j^l(t;\symbf{\beta})  \mathbb{E}\left\{Y(t)\mid \overline{\symbf{O}}(t)\right\}}{\rho\left\{t \mid \overline{\symbf{O}}(t), Y(t)\right\}}\left[\frac{\rho\left\{t \mid \overline{\symbf{O}}(t), Y(t)\right\}}{\widehat{\rho}\left\{t \mid \overline{\symbf{O}}(t), Y(t)\right\}} - 1\right] \left[dN(t) - \rho\left\{t \mid \overline{\symbf{O}}(t), Y(t)\right\}dt\right] \\
    & \quad + \int_{t=t_1}^{t_2}
    \symbf{W}_j^l(t;\symbf{\beta})  \mathbb{E}\left\{Y(t)\mid \overline{\symbf{O}}(t)\right\}\left[\frac{\rho\left\{t \mid \overline{\symbf{O}}(t), Y(t)\right\}}{\widehat{\rho}\left\{t \mid \overline{\symbf{O}}(t), Y(t)\right\}} - 1\right] dt. 
\end{align*}

Then we have 
\begin{align*}
    & \sup_{\symbf{\beta} \in \mathcal{B}^*}\mathbb{P} (\Delta_{j,3}^l)^2 \\
    = & \ 
    \sup_{\symbf{\beta} \in \mathcal{B}^*} E
    \left[
    \int_{t=t_1}^{t_2} \frac{\symbf{W}_j^l(t;\symbf{\beta})  \mathbb{E}\left\{Y(t)\mid \overline{\symbf{O}}(t)\right\}}{\rho\left\{t \mid \overline{\symbf{O}}(t), Y(t)\right\}}
    \left[\frac{\rho\left\{t \mid \overline{\symbf{O}}(t), Y(t)\right\}}{\widehat{\rho}\left\{t \mid \overline{\symbf{O}}(t), Y(t)\right\}} - 1\right] dN(t)\right]^2 \\
    \leq & \ 
    2 \sup_{\symbf{\beta} \in \mathcal{B}^*} \mathbb{E} \left[\int_{t=t_1}^{t_2}\frac{\symbf{W}_j^l(t;\symbf{\beta})  \mathbb{E}\left\{Y(t)\mid \overline{\symbf{O}}(t)\right\}}{\rho\left\{t \mid \overline{\symbf{O}}(t), Y(t)\right\}}\left[\frac{\rho\left\{t \mid \overline{\symbf{O}}(t), Y(t)\right\}}{\widehat{\rho}\left\{t \mid \overline{\symbf{O}}(t), Y(t)\right\}} - 1\right] \left[dN(t) - \rho\left\{t \mid \overline{\symbf{O}}(t), Y(t)\right\}dt\right]
    \right]^2 \\
    & \quad + 
    2 \sup_{\symbf{\beta} \in \mathcal{B}^*} \mathbb{E} \left[
    \int_{t=t_1}^{t_2}
    \symbf{W}_j^l(t;\symbf{\beta})  \mathbb{E}\left\{Y(t)\mid \overline{\symbf{O}}(t)\right\}\left[\frac{\rho\left\{t \mid \overline{\symbf{O}}(t), Y(t)\right\}}{\widehat{\rho}\left\{t \mid \overline{\symbf{O}}(t), Y(t)\right\}} - 1\right] dt
    \right]^2,
\end{align*}
where 
\begin{align}
    & \ 
    2 \sup_{\symbf{\beta} \in \mathcal{B}^*} \mathbb{E} \left[
    \int_{t=t_1}^{t_2}\frac{\symbf{W}_j^l(t;\symbf{\beta})  \mathbb{E}\left\{Y(t)\mid \overline{\symbf{O}}(t)\right\}}{\rho\left\{t \mid \overline{\symbf{O}}(t), Y(t)\right\}}\left[\frac{\rho\left\{t \mid \overline{\symbf{O}}(t), Y(t)\right\}}{\widehat{\rho}\left\{t \mid \overline{\symbf{O}}(t), Y(t)\right\}} - 1\right] \left[dN(t) - \rho\left\{t \mid \overline{\symbf{O}}(t), Y(t)\right\}dt\right]
    \right]^2 \nonumber \\
    = & \ 
    2 \sup_{\symbf{\beta} \in \mathcal{B}^*} \mathbb{E}\left[\int_{t=t_1}^{t_2} \frac{\left[\symbf{W}_j^l(t;\symbf{\beta})  \mathbb{E}\left\{Y(t)\mid \overline{\symbf{O}}(t)\right\}\right]^2}{\rho\left\{t \mid \overline{\symbf{O}}(t), Y(t)\right\} \widehat{\rho}^2\left\{t \mid\overline{\symbf{O}}(t),Y(t)\right\}}\left[\rho\left\{t \mid \overline{\symbf{O}}(t), Y(t)\right\}  - \widehat{\rho}\left\{t \mid \overline{\symbf{O}}(t), Y(t)\right\} \right]^2 dt
    \right] 
    \nonumber\\
    \leq & \ 
    2 \mathbb{E}\left[\int_{t=t_1}^{t_2}\sup_{\symbf{\beta} \in \mathcal{B}^*} \frac{\left[\symbf{W}_j^l(t;\symbf{\beta})  \mathbb{E}\left\{Y(t)\mid \overline{\symbf{O}}(t)\right\}\right]^2}{\rho\left\{t \mid \overline{\symbf{O}}(t), Y(t)\right\} \widehat{\rho}^2\left\{t \mid\overline{\symbf{O}}(t),Y(t)\right\}}\left[\rho\left\{t \mid \overline{\symbf{O}}(t), Y(t)\right\}  - \widehat{\rho}\left\{t \mid \overline{\symbf{O}}(t), Y(t)\right\} \right]^2 dt
    \right],
    \label{equ:delta3_l_1}
\end{align}
and 
\begin{align}
    & \ 2 \sup_{\symbf{\beta} \in \mathcal{B}^*} \mathbb{E} \left[
    \int_{t=t_1}^{t_2}
    \symbf{W}_j^l(t;\symbf{\beta})  \mathbb{E}\left\{Y(t)\mid \overline{\symbf{O}}(t)\right\}\left[\frac{\rho\left\{t \mid \overline{\symbf{O}}(t), Y(t)\right\}}{\widehat{\rho}\left\{t \mid \overline{\symbf{O}}(t), Y(t)\right\}} - 1\right] dt
    \right]^2 \nonumber \\
    \leq & \ 
    2 \sup_{\symbf{\beta} \in \mathcal{B}^*} 
    \mathbb{E}\left[
    \left[\int_{t=t_1}^{t_2} \frac{\left[\symbf{W}_j^l(t;\symbf{\beta})  \mathbb{E}\left\{Y(t)\mid \overline{\symbf{O}}(t)\right\}\right]^2}{\widehat{\rho}^2\left\{t \mid \overline{\symbf{O}}(t), Y(t)\right\}}dt
    \right] \int_{t=t_1}^{t_2}\left[
    \rho\left\{t \mid \overline{\symbf{O}}(t), Y(t)\right\} - \widehat{\rho}\left\{t \mid \overline{\symbf{O}}(t), Y(t)\right\}\right]^2 dt
    \right] \nonumber \\
    \leq & \ 
    2\mathbb{E}\left[
    \left[\int_{t=t_1}^{t_2} \sup_{\symbf{\beta} \in \mathcal{B}^*} \frac{\left[\symbf{W}_j^l(t;\symbf{\beta})  \mathbb{E}\left\{Y(t)\mid \overline{\symbf{O}}(t)\right\}\right]^2}{\widehat{\rho}^2\left\{t \mid \overline{\symbf{O}}(t), Y(t)\right\}}dt
    \right] \int_{t=t_1}^{t_2}\left[
    \rho\left\{t \mid \overline{\symbf{O}}(t), Y(t)\right\} - \widehat{\rho}\left\{t \mid \overline{\symbf{O}}(t), Y(t)\right\}\right]^2 dt
    \right]
    \label{equ:delta3_l_2}
\end{align}

Both equations \eqref{equ:delta3_l_1} and \eqref{equ:delta3_l_2} can be proved to be $o_P(1)$ similarly to the proof for equation \eqref{equ:rho_rate} in Lemma \ref{lemma:convg1}.

For the $l$-th element of $\Delta_{j,4}$, we have
\begin{align}
     \sup_{\symbf{\beta} \in \mathcal{B}^*}\mathbb{P} (\Delta_{j,4}^l)^2  
     = & \ 
     \sup_{\symbf{\beta} \in \mathcal{B}^*} E
    \left[\int_{t=t_1}^{t_2} \symbf{W}_j^l(t;\symbf{\beta}) \left[ \widehat{\mathbb{E}}\left\{Y(t)\mid\overline{\symbf{O}}(t)\right\} -  \mathbb{E}\left\{Y(t)\mid\overline{\symbf{O}}(t)\right\}\right]dt\right]^2 \nonumber \\
    \leq & \ 
    \sup_{\symbf{\beta} \in \mathcal{B}^*} \mathbb{E}\left[
    \left[\int_{t=t_1}^{t_2}  
    \{\symbf{W}_j^l(t;\symbf{\beta})\}^2 dt \right] \int_{t=t_1}^{t_2}
    \left[ \widehat{\mathbb{E}}\left\{Y(t)\mid\overline{\symbf{O}}(t)\right\} -  \mathbb{E}\left\{Y(t)\mid\overline{\symbf{O}}(t)\right\}\right]^2dt
    \right]\\
    \leq & \ 
    \mathbb{E}\left[
    \left[\int_{t=t_1}^{t_2}
    \sup_{\symbf{\beta} \in \mathcal{B}^*}\{\symbf{W}_j^l(t;\symbf{\beta})\}^2 dt \right] \int_{t=t_1}^{t_2}
    \left[ \widehat{\mathbb{E}}\left\{Y(t)\mid\overline{\symbf{O}}(t)\right\} -  \mathbb{E}\left\{Y(t)\mid\overline{\symbf{O}}(t)\right\}\right]^2dt
    \right]
    \label{equ:delta4_l}
\end{align}
Equation \eqref{equ:delta4_l} can be proved to be $o_P(1)$ similarly to the proof of equation \eqref{equ:E_rate} in Lemma \ref{lemma:convg1}.

Thus, we have proved that $\sup_{\symbf{\beta} \in \mathcal{B}^*}\mathbb{P} ||\Delta_{j,k}||_2^2 = o_P(1)$, for $j=1,2$, $k=1,2,3,4$, which further proves that $\sup_{\symbf{\beta} \in \mathcal{B}^*}\mathbb{P} 
||\varphi_j(\symbf{O}; \widehat{\symbf{\eta}}, \symbf{\beta}) - \varphi_j(\symbf{O}; \symbf{\eta}_0, \symbf{\beta})||_2^2  = o_P(1)$.

\textbf{Part 2: proof of }$\sup_{\symbf{\beta} \in \mathcal{B}^*}
\mathbb{P} 
\Big|\Big|
\frac{\partial \varphi_j(\symbf{O}; \widehat{\symbf{\eta}}, \symbf{\beta})}{\partial \symbf{\beta}} - \frac{\partial \varphi_j(\symbf{O}; \symbf{\eta}_0, \symbf{\beta})}{\partial \symbf{\beta}} \Big|\Big|_2^2  = o_P(1)$.

\begin{align*}
    & \ \frac{\partial \varphi_j(\symbf{O}; \widehat{\symbf{\eta}}, \symbf{\beta})}{\partial \symbf{\beta}} - \frac{\partial \varphi_j(\symbf{O}; \symbf{\eta}_0, \symbf{\beta})}{\partial \symbf{\beta}} \\
    = & \ 
    \int_{t=t_1}^{t_2} \frac{\partial  \symbf{W}_j(t;\symbf{\beta})}{\partial \symbf{\beta}}\frac{
    \left[ Y(t) - \widehat{\mathbb{E}}\left\{Y(t) \mid \overline{\symbf{O}}(t)\right\}\right]}{\widehat{\rho}\left\{t \mid \overline{\symbf{O}}(t), Y(t)\right\}} dN(t) \\
    & \quad + 
    \int_{t=t_1}^{t_2} \frac{\partial  \symbf{W}_j(t;\symbf{\beta})}{\partial \symbf{\beta}}\left[ \widehat{\mathbb{E}}\left\{Y(t)\mid \overline{\symbf{O}}(t)\right\} -  s\left\{ \symbf{B}(t)^\prime \symbf{\beta}\right\}\right]dt
    -
    \int_{t=t_1}^{t_2} \symbf{W}_j(t;\symbf{\beta})\frac{\partial s\left\{ \symbf{B}(t)^\prime \symbf{\beta}\right\}}{\partial \symbf{\beta}}dt\\
    & \quad - 
    \int_{t=t_1}^{t_2} \frac{\partial  \symbf{W}_j(t;\symbf{\beta})}{\partial \symbf{\beta}}\frac{
    \left[ Y(t) - \mathbb{E}\left\{Y(t) \mid \overline{\symbf{O}}(t)\right\}\right]}{\rho\left\{t \mid \overline{\symbf{O}}(t), Y(t)\right\}} dN(t) \\
    & \quad - 
    \int_{t=t_1}^{t_2} \frac{\partial  \symbf{W}_j(t;\symbf{\beta})}{\partial \symbf{\beta}}\left[ \mathbb{E}\left\{Y(t)\mid \overline{\symbf{O}}(t)\right\} -  s\left\{ \symbf{B}(t)^\prime \symbf{\beta}\right\}\right]dt
    +
    \int_{t=t_1}^{t_2} \symbf{W}_j(t;\symbf{\beta})\frac{\partial s\left\{ \symbf{B}(t)^\prime \symbf{\beta}\right\}}{\partial \symbf{\beta}}dt\\
   = & \ 
   \int_{t=t_1}^{t_2} \frac{\partial  \symbf{W}_j(t;\symbf{\beta})}{\partial \symbf{\beta}}\frac{
    \left[ Y(t) - \widehat{\mathbb{E}}\left\{Y(t) \mid \overline{\symbf{O}}(t)\right\}\right]}{\widehat{\rho}\left\{t \mid \overline{\symbf{O}}(t), Y(t)\right\}} dN(t) \\
    & \quad - \int_{t=t_1}^{t_2} \frac{\partial  \symbf{W}_j(t;\symbf{\beta})}{\partial \symbf{\beta}}\frac{
    \left[ Y(t) - \mathbb{E}\left\{Y(t) \mid \overline{\symbf{O}}(t)\right\}\right]}{\rho\left\{t \mid \overline{\symbf{O}}(t), Y(t)\right\}} dN(t) \\
    & \quad + 
    \int_{t=t_1}^{t_2} 
    \frac{\partial  \symbf{W}_j(t;\symbf{\beta})}{\partial \symbf{\beta}}\left[\widehat{\mathbb{E}}\left\{Y(t) \mid \overline{\symbf{O}}(t)\right\} - \mathbb{E}\left\{Y(t) \mid \overline{\symbf{O}}(t)\right\}\right] dt\\
  = & \ 
  \underbrace{\int_{t=t_1}^{t_2} \frac{\partial  \symbf{W}_j(t;\symbf{\beta})}{\partial \symbf{\beta}}
  \frac{Y(t)}{\rho\left\{t \mid \overline{\symbf{O}}(t), Y(t)\right\}} \left[\frac{\rho\left\{t \mid \overline{\symbf{O}}(t), Y(t)\right\}}{\widehat{\rho}\left\{t \mid \overline{\symbf{O}}(t), Y(t)\right\}}  - 1 \right]dN(t)}_{D_{j,1}}
  \\
  & \quad 
  \underbrace{-  
  \int_{t=t_1}^{t_2} \frac{\partial   \symbf{W}_j(t;\symbf{\beta})}{\partial \symbf{\beta}} \frac{1}{\widehat{\rho}\left\{t \mid \overline{\symbf{O}}(t), Y(t)\right\}} \left[\widehat{\mathbb{E}}\left\{Y(t) \mid \overline{\symbf{O}}(t)\right\} - \mathbb{E}\left\{Y(t) \mid \overline{\symbf{O}}(t)\right\}\right]dN(t)}_{D_{j,2}}\\
  & \quad 
  \underbrace{-  
  \int_{t=t_1}^{t_2} \frac{\partial   \symbf{W}_j(t;\symbf{\beta})}{\partial \symbf{\beta}} 
  \frac{\mathbb{E}\left\{Y(t) \mid \overline{\symbf{O}}(t)\right\}}{\rho\left\{t \mid \overline{\symbf{O}}(t), Y(t)\right\}} \left[\frac{\rho\left\{t \mid \overline{\symbf{O}}(t), Y(t)\right\}}{\widehat{\rho}\left\{t \mid \overline{\symbf{O}}(t), Y(t)\right\}}  - 1 \right] dN(t)}_{D_{j,3}}\\
  & \quad \underbrace{
  + \int_{t=t_1}^{t_2} 
    \frac{\partial  \symbf{W}_j(t;\symbf{\beta})}{\partial \symbf{\beta}}\left[\widehat{\mathbb{E}}\left\{Y(t) \mid \overline{\symbf{O}}(t)\right\} - \mathbb{E}\left\{Y(t) \mid \overline{\symbf{O}}(t)\right\}\right] dt}_{D_{j,4}}\\
 \end{align*}

By Cauchy-Schwarz inequality, we have:

\begin{align*}
    \mathbb{P} 
   \Big|\Big|\frac{\partial \varphi_j(\symbf{O}; \widehat{\symbf{\eta}}, \symbf{\beta})}{\partial \symbf{\beta}} - \frac{\partial \varphi_j(\symbf{O}; \symbf{\eta}_0, \symbf{\beta})}{\partial \symbf{\beta}} \Big|\Big|_2^2
   & = \ 
    \mathbb{P} ||D_{j,1} + D_{j,2} + D_{j,3} + D_{j,4}||_2^2 \\
    & \leq \ 
    4\left(
    \mathbb{P} ||D_{j,1}||_2^2 + \mathbb{P} ||D_{j,2}||_2^2 + 
    \mathbb{P} ||D_{j,3}||_2^2 + 
    \mathbb{P} ||D_{j,4}||_2^2\right),
\end{align*}
and 
\begin{align*}
   & \sup_{\symbf{\beta} \in \mathcal{B}^*}\mathbb{P} \Big|\Big|\frac{\partial \varphi_j(\symbf{O}; \widehat{\symbf{\eta}}, \symbf{\beta})}{\partial \symbf{\beta}} - \frac{\partial \varphi_j(\symbf{O}; \symbf{\eta}_0, \symbf{\beta})}{\partial \symbf{\beta}} \Big|\Big|_2^2 \\ 
  \leq &  \ 
   \sup_{\symbf{\beta} \in \mathcal{B}^*} 4 
   \left(
    \mathbb{P} ||D_{j,1}||_2^2 + \mathbb{P} ||D_{j,2}||_2^2 + 
    \mathbb{P} ||D_{j,3}||_2^2 + 
    \mathbb{P} ||D_{j,4}||_2^2\right) \\
  \leq &  \
    4\left(
    \sup_{\symbf{\beta} \in \mathcal{B}^*}\mathbb{P} ||D_{j,1}||_2^2 + \sup_{\symbf{\beta} \in \mathcal{B}^*}\mathbb{P} ||D_{j,2}||_2^2 + 
    \sup_{\symbf{\beta} \in \mathcal{B}^*}\mathbb{P} ||D_{j,3}||_2^2 + 
    \sup_{\symbf{\beta} \in \mathcal{B}^*}\mathbb{P} ||D_{j,4}||_2^2\right).
\end{align*}

Similarly to Part 1, we can prove that $(l_1,l_2)$-th element of $D_{j,k}$, $[D_{j,k}]_{l_1,l_2}$, satisfies 
$$\sup_{\symbf{\beta} \in \mathcal{B}^*}\mathbb{P} ([D_{j,k}]_{l_1,l_2})^2 = o_P(1),$$ for $j=1,2$, $k=1,2,3,4$, $l=1,\cdots, p$. Thus, we can 
prove that $\sup_{\symbf{\beta} \in \mathcal{B}^*}\mathbb{P} ||D_{j,k}||_2^2 = o_P(1)$, for $j=1,2$, $k=1,2,3,4$, which further proves that $\sup_{\symbf{\beta} \in \mathcal{B}^*}\mathbb{P} 
\Big|\Big|
\frac{\partial \varphi_j(\symbf{O}; \widehat{\symbf{\eta}}, \symbf{\beta})}{\partial \symbf{\beta}} - \frac{\partial \varphi_j(\symbf{O}; \symbf{\eta}_0, \symbf{\beta})}{\partial \symbf{\beta}} \Big|\Big|_2^2  = o_P(1)$.
\end{proof}


\begin{lemma}
  \label{lemma:convg3}
  Suppose that Assumptions \ref{chp4:assum:tilting}--\ref{assum:donsker1}, 
  \ref{assum: corr_spec}--\ref{assum:Y_rate} hold, then $\mathbb{P}_n\symbf{\varphi}_j \big(\symbf{O} ; \widehat{\symbf{\eta}}, \symbf{\beta}\big) - \mathbb{P}\symbf{\varphi}_j \big(\symbf{O} ; \symbf{\eta}_0, \symbf{\beta}\big) = n^{-1/2} \mathcal{N}(\symbf{0}, \Sigma_j) + o_P(n^{-1/2})$, where 
  $\Sigma_j = E \big\{\symbf{\varphi}_j \left( \symbf{O}; \symbf{\eta}_0, \symbf{\beta}\right)  \symbf{\varphi}_j \left( \symbf{O}; \symbf{\eta}_0, \symbf{\beta}\right) ^\prime \big\}$, 
  for $\symbf{\beta} \in \mathcal{B}^*$,
  j=1,2.
\end{lemma}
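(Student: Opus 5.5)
The plan is the standard empirical-process decomposition from \cite{kennedy2016empiricalprocesses}. For fixed $\symbf{\beta}\in\mathcal{B}^*$ write
\begin{align*}
\mathbb{P}_n\symbf{\varphi}_j(\symbf{O};\widehat{\symbf{\eta}},\symbf{\beta}) - \mathbb{P}\symbf{\varphi}_j(\symbf{O};\symbf{\eta}_0,\symbf{\beta})
= {} & (\mathbb{P}_n-\mathbb{P})\symbf{\varphi}_j(\symbf{O};\symbf{\eta}_0,\symbf{\beta})
+ (\mathbb{P}_n-\mathbb{P})\big\{\symbf{\varphi}_j(\symbf{O};\widehat{\symbf{\eta}},\symbf{\beta})-\symbf{\varphi}_j(\symbf{O};\symbf{\eta}_0,\symbf{\beta})\big\} \\
& + \big\{\mathbb{P}\symbf{\varphi}_j(\symbf{O};\widehat{\symbf{\eta}},\symbf{\beta})-\mathbb{P}\symbf{\varphi}_j(\symbf{O};\symbf{\eta}_0,\symbf{\beta})\big\}.
\end{align*}
I will treat the three terms separately: the first gives the Gaussian limit, and the other two are shown to be $o_P(n^{-1/2})$.

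\emph{Main term.} The summands $\symbf{\varphi}_j(\symbf{O}_i;\symbf{\eta}_0,\symbf{\beta})$ are i.i.d. By Lemma \ref{lemma:L2P} they have finite second moments under Assumption \ref{assum:bound_Y}. The multivariate central limit theorem then gives
\[
\sqrt{n}(\mathbb{P}_n-\mathbb{P})\symbf{\varphi}_j(\symbf{O};\symbf{\eta}_0,\symbf{\beta}) \overset{d}{\rightarrow} \mathcal{N}\big(\symbf{0},\ \mathrm{Var}\{\symbf{\varphi}_j(\symbf{O};\symbf{\eta}_0,\symbf{\beta})\}\big).
\]
At $\symbf{\beta}=\symbf{\beta}_0$ the mean $\mathbb{P}\symbf{\varphi}_j$ is zero, so this variance is exactly $\Sigma_j=\mathbb{E}(\symbf{\varphi}_j\symbf{\varphi}_j^\prime)$. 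For general $\symbf{\beta}$ I read $\Sigma_j$ as the covariance of the centered function; since the statement writes the second moment, I will note this convention explicitly.

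\emph{Bias term.} This is exactly Lemma \ref{lemma:convg1}, which shows each coordinate is $o_P(n^{-1/2})$. That lemma is the double-robustness product bound, combining the Andersen--Gill kernel rate (Assumption \ref{assum: lambda_rate}) with the outcome-model rate (Assumption \ref{assum:Y_rate}).

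\emph{Empirical-process term.} This is the step I expect to be the main obstacle. The first tool is Lemma 19.24 of van der Vaart (1998). It requires two things. The first is that $\symbf{\varphi}_j(\cdot;\widehat{\symbf{\eta}},\symbf{\beta})$ lies in a P-Donsker class with probability tending to one; I will take this from Assumption \ref{assum:donsker1}, read as covering the nuisance-indexed class as well. The second is that $\mathbb{P}\|\symbf{\varphi}_j(\cdot;\widehat{\symbf{\eta}},\symbf{\beta})-\symbf{\varphi}_j(\cdot;\symbf{\eta}_0,\symbf{\beta})\|_2^2=o_P(1)$, which Lemma \ref{lemma:L2P_continuous} supplies, even uniformly in $\symbf{\beta}$. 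Together these give $\mathbb{G}_n\{\symbf{\varphi}_j(\widehat{\symbf{\eta}})-\symbf{\varphi}_j(\symbf{\eta}_0)\}=o_P(1)$, i.e.\ the second term is $o_P(n^{-1/2})$. The delicate point is that the Donsker assumption as stated indexes only $\symbf{\beta}$, while the kernel-smoothed Breslow and single-index estimators are data dependent. If the Donsker route is not convincing, I would fall back on sample splitting or cross-fitting. Then, conditional on the training fold, the term has mean zero and variance at most $n^{-1}\mathbb{P}\|\symbf{\varphi}_j(\widehat{\symbf{\eta}})-\symbf{\varphi}_j(\symbf{\eta}_0)\|_2^2$, so Chebyshev's inequality and Lemma \ref{lemma:L2P_continuous} finish the argument without any complexity condition.

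Adding the three terms, Slutsky's theorem gives the stated representation $n^{-1/2}\mathcal{N}(\symbf{0},\Sigma_j)+o_P(n^{-1/2})$ for $j=1,2$.
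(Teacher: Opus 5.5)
Your proposal is correct and follows the paper's proof essentially verbatim. It uses the same three-term split: the CLT plus Lemma~\ref{lemma:L2P} for the leading term, Lemma~19.24 of van der Vaart with Assumption~\ref{assum:donsker1} and Lemma~\ref{lemma:L2P_continuous} for the empirical-process term, and Lemma~\ref{lemma:convg1} for the bias term. Your observation that the limiting covariance is really $\mathrm{Var}\{\symbf{\varphi}_j(\symbf{O};\symbf{\eta}_0,\symbf{\beta})\}$ is a valid refinement, since it equals the stated $\Sigma_j$ only at $\symbf{\beta}=\symbf{\beta}_0$ (the only point where the lemma is later used). Note, however, that your cross-fitting fallback would prove the result for a different, cross-fitted estimator, not for the full-sample $\widehat{\symbf{\eta}}$ in the statement.
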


\begin{proof}
For $\symbf{\beta} \in \mathcal{B}^*$, we have
\begin{align*}
    \mathbb{P}_n \symbf{\varphi}_j
    \big( \symbf{O} ; \widehat{\symbf{\eta}}, \symbf{\beta} \big) - \mathbb{P} \symbf{\varphi}_j \big( \symbf{O} ; \symbf{\eta}_0, \symbf{\beta} \big)
    = & \ \underbrace{(\mathbb{P}_n  - \mathbb{P})\symbf{\varphi}_j \big( \symbf{O} ; \symbf{\eta}_0, \symbf{\beta}\big)}_{term1} \ \\
    & \ +  \underbrace{(\mathbb{P}_n  - \mathbb{P}) \left\{\symbf{\varphi}_j \big( \symbf{O} ; \widehat{\symbf{\eta}}, \symbf{\beta} \big) - \symbf{\varphi}_j \big( \symbf{O} ; \symbf{\eta}_0, \symbf{\beta} \big)\right\}}_{term2}\\
    & \ + \underbrace{\mathbb{P}\symbf{\varphi}_j \big( \symbf{O} ; \widehat{\symbf{\eta}}, \symbf{\beta} \big) - \mathbb{P}\symbf{\varphi}_j \big( \symbf{O} ; \symbf{\eta}_0, \symbf{\beta} \big)}_{term3} 
\end{align*}

We next check each term:

(1) By Lemma \ref{lemma:L2P}, $\symbf{\varphi}_j \left( \symbf{O} ; \symbf{\eta}_0, \symbf{\beta} \right) \in L^2(\mathbb{P})$, since $\symbf{O}_1,\cdots, \symbf{O}_n$ are i.i.d., term1 is asymptotically normal by the central limit theorem:
\begin{align*}
    \sqrt{n}(\mathbb{P}_n - \mathbb{P}) \symbf{\varphi}_j \left( \symbf{O}; \symbf{\eta}_0, \symbf{\beta}\right) \overset{d}{\rightarrow} \mathcal{N}(\symbf{0}, \Sigma_j)
\end{align*}
where 
\begin{align*}
    \Sigma_j = \mathbb{E} \left\{\symbf{\varphi}_j \left( \symbf{O}; \symbf{\eta}_0, \symbf{\beta}\right)  \symbf{\varphi}_j \left( \symbf{O}; \symbf{\eta}_0, \symbf{\beta}\right) ^\prime \right\}
\end{align*}
or we can write it as  
\begin{align*}
(\mathbb{P}_n - \mathbb{P}) \symbf{\varphi}_j \left( \symbf{O}; \symbf{\eta}_0, \symbf{\beta} \right)  = n^{-1/2} \mathcal{N}(\symbf{0}, \Sigma_j) + o_P(n^{-1/2})
\end{align*}

(2)  By Lemma \ref{lemma:L2P_continuous}, $\mathbb{P}\ || \symbf{\varphi}_j \left( \symbf{O} ; \widehat{\symbf{\eta}}, \symbf{\beta}\right) - \symbf{\varphi}_j \left( \symbf{O} ; \symbf{\eta}_0, \symbf{\beta} \right)||_2^2 = o_P(1)$ as $\widehat{\symbf{\eta}} \rightarrow \symbf{\eta}_0$. Then under Assumption \ref{assum:donsker1}, by Lemma 19.24 of \cite{van2000asymptotic}, $||(\mathbb{P}_n  - \mathbb{P}) \left\{\symbf{\varphi}_j \big( \symbf{O} ; \widehat{\symbf{\eta}}, \symbf{\beta} \big) - \symbf{\varphi}_j \big( \symbf{O} ; \symbf{\eta}_0, \symbf{\beta} \big)\right\}||_2$ is the empirical process remainder which is ensured to be negligible ($o_P(n^{-1/2})$) \cite{wellner2013weak}. Then each element of  $(\mathbb{P}_n  - \mathbb{P}) \left\{\symbf{\varphi}_j \big( \symbf{O} ; \widehat{\symbf{\eta}}, \symbf{\beta} \big) - \symbf{\varphi}_j \big( \symbf{O} ; \symbf{\eta}_0, \symbf{\beta} \big)\right\}$ is also $o_P(n^{-1/2})$.

(3) By Lemma \ref{lemma:convg1}, we know that each element of $\mathbb{P}\symbf{\varphi}_j \big(\symbf{O} ; \widehat{\symbf{\eta}}, \symbf{\beta} \big) - \mathbb{P}\symbf{\varphi}_j \big(\symbf{O} ; \symbf{\eta}_0, \symbf{\beta} \big) = o_P(n^{-1/2})$.

Thus, we know that $\mathbb{P}_n \symbf{\varphi}_j
    \big( \symbf{O} ; \widehat{\symbf{\eta}}, \symbf{\beta} \big) - \mathbb{P} \symbf{\varphi}_j \big( \symbf{O} ; \symbf{\eta}_0, \symbf{\beta} \big) = n^{-1/2} \mathcal{N}(\symbf{0}, \Sigma_j) + o_P(n^{-1/2})$, where $\Sigma_j = E \big\{\symbf{\varphi}_j \left( \symbf{O}; \symbf{\eta}_0, \symbf{\beta}\right)  \symbf{\varphi}_j \left( \symbf{O}; \symbf{\eta}_0, \symbf{\beta}\right) ^\prime \big\}$, j=1,2.

\end{proof}

\begin{lemma}
  \label{lemma:convg4}
  Suppose that Assumptions \ref{chp4:assum:tilting}--\ref{assum:bound_Y}, 
  \ref{assum:donsker2}--\ref{assum:Y_rate} hold, then each element of $\mathbb{P}_n \frac{\partial \symbf{\varphi}_j \big(\symbf{O} ; \widehat{\symbf{\eta}}, \symbf{\beta}\big)}{\partial \symbf{\beta}} - 
  \mathbb{P} \frac{\partial \symbf{\varphi}_j \big(\symbf{O} ; \symbf{\eta}_0, \symbf{\beta}\big)}{\partial \symbf{\beta}}$ is $o_p(1)$. If we vectorize 
  the matrix $\frac{\partial \symbf{\varphi}_j \big(\symbf{O} ; \widehat{\symbf{\eta}}, \symbf{\beta}\big)}{\partial \symbf{\beta}} 
  \in \mathbb{R}^{p\times p}$ as $\symbf{\phi}_j^{1}\big(\symbf{O}; \widehat{\symbf{\eta}}, \symbf{\beta}\big) = vec\Big(\frac{\partial \symbf{\varphi}_j \big(\symbf{O} ; \widehat{\symbf{\eta}}, \symbf{\beta}\big)}{\partial \symbf{\beta}}\Big)$,
  and the matrix 
  $\frac{\partial \symbf{\varphi}_j \big(\symbf{O} ; \symbf{\eta}_0, \symbf{\beta}\big)}{\partial \symbf{\beta}} \in \mathbb{R}^{p\times p}$ as $\symbf{\phi}_j^{1}\big(\symbf{O} ; \symbf{\eta}_0, \symbf{\beta}\big) = vec\Big(\frac{\partial \symbf{\varphi}_j \big(\symbf{O} ; \symbf{\eta}_0, \symbf{\beta}\big)}{\partial \symbf{\beta}}\Big)$, which stacks all elements of the matrix column-wise into a single vector. Then $\mathbb{P}_n \symbf{\phi}_j^{1}\big(\symbf{O} ; \widehat{\symbf{\eta}}, \symbf{\beta}\big) - \mathbb{P}\symbf{\phi}_j^{1}\big(\symbf{O} ; \symbf{\eta}_0, \symbf{\beta}\big) = n^{-1/2} \mathcal{N}(\symbf{0}, \Sigma^1_j) + o_P(n^{-1/2})$, where 
  $\Sigma^1_j = E \big\{\symbf{\phi}_j^1 \left( \symbf{O}; \symbf{\eta}_0, \symbf{\beta}\right)  \symbf{\phi}_j^1 \left( \symbf{O}; \symbf{\eta}_0, \symbf{\beta}\right) ^\prime \big\}$, 
  for $\symbf{\beta} \in \mathcal{B}^*$, $j=1,2$.
\end{lemma}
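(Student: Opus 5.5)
The proof will mirror that of Lemma \ref{lemma:convg3}, applied to the vectorized derivative $\symbf{\phi}_j^1$ in place of $\symbf{\varphi}_j$. For fixed $\symbf{\beta}\in\mathcal{B}^*$, I will decompose
\begin{align*}
\mathbb{P}_n \symbf{\phi}_j^{1}(\symbf{O};\widehat{\symbf{\eta}},\symbf{\beta}) - \mathbb{P}\symbf{\phi}_j^{1}(\symbf{O};\symbf{\eta}_0,\symbf{\beta})
= {} & (\mathbb{P}_n-\mathbb{P})\symbf{\phi}_j^{1}(\symbf{O};\symbf{\eta}_0,\symbf{\beta})
+ (\mathbb{P}_n-\mathbb{P})\{\symbf{\phi}_j^{1}(\symbf{O};\widehat{\symbf{\eta}},\symbf{\beta})-\symbf{\phi}_j^{1}(\symbf{O};\symbf{\eta}_0,\symbf{\beta})\}\\
& + \{\mathbb{P}\symbf{\phi}_j^{1}(\symbf{O};\widehat{\symbf{\eta}},\symbf{\beta})-\mathbb{P}\symbf{\phi}_j^{1}(\symbf{O};\symbf{\eta}_0,\symbf{\beta})\}.
\end{align*}
Each of the three terms will then be treated separately, and the result assembled at the end.

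For the first term, Part 2 of Lemma \ref{lemma:L2P} gives $\frac{\partial \symbf{\varphi}_j(\symbf{O};\symbf{\eta}_0,\symbf{\beta})}{\partial\symbf{\beta}}\in L^2(P)$. Since the vectorization norm $||\cdot||_2$ defined above is exactly the Frobenius norm, this says $\symbf{\phi}_j^1(\symbf{O};\symbf{\eta}_0,\symbf{\beta})$ is a square-integrable $p^2$-vector. The multivariate CLT for i.i.d.\ $\symbf{O}_i$ then yields $n^{-1/2}\mathcal{N}(\symbf{0},\Sigma_j^1)+o_P(n^{-1/2})$. Strictly, the covariance is that of the centered vector; I will write it as $\Sigma^1_j$ following the paper's convention in Lemma \ref{lemma:convg3}. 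For the third term, Lemma \ref{lemma:convg2} states that each entry of $\mathbb{P}\frac{\partial \symbf{\varphi}_j(\symbf{O};\widehat{\symbf{\eta}},\symbf{\beta})}{\partial\symbf{\beta}}-\mathbb{P}\frac{\partial \symbf{\varphi}_j(\symbf{O};\symbf{\eta}_0,\symbf{\beta})}{\partial\symbf{\beta}}$ is $o_P(n^{-1/2})$, and vectorizing preserves this entrywise.

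The middle empirical-process term is the step that needs care, and I expect it to be the main obstacle. I will invoke Part 2 of Lemma \ref{lemma:L2P_continuous}, which gives $\mathbb{P}||\symbf{\phi}_j^{1}(\symbf{O};\widehat{\symbf{\eta}},\symbf{\beta})-\symbf{\phi}_j^{1}(\symbf{O};\symbf{\eta}_0,\symbf{\beta})||_2^2=o_P(1)$. Combined with the Donsker property in Assumption \ref{assum:donsker2}, Lemma 19.24 of \cite{van2000asymptotic} then shows that this term is $o_P(n^{-1/2})$ coordinatewise. The delicate point is that Lemma 19.24 requires the estimated function $\symbf{\phi}_j^1(\cdot;\widehat{\symbf{\eta}},\symbf{\beta})$ to lie, with probability tending to one, in a Donsker class containing the limit. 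I will argue this exactly as in Lemma \ref{lemma:convg3}, reading Assumption \ref{assum:donsker2} as holding for the class indexed by nuisance values in a neighborhood of $\symbf{\eta}_0$. As an alternative to that reading, I could use sample splitting in place of the Donsker condition, but I will follow the paper's route.

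Summing the three pieces gives the distributional claim. The first assertion then follows at once: dividing by nothing more, the leading term is $O_P(n^{-1/2})$, so each entry of the difference is $O_P(n^{-1/2})=o_P(1)$.
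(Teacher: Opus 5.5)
Your proposal is correct and matches the paper's proof step for step. It uses the same three-term decomposition, the multivariate CLT via Lemma~\ref{lemma:L2P} for the leading term, Lemma~\ref{lemma:L2P_continuous} together with Assumption~\ref{assum:donsker2} and Lemma 19.24 of \cite{van2000asymptotic} for the empirical-process term, and Lemma~\ref{lemma:convg2} for the bias term. Your side remarks are sound and go slightly beyond the paper, which silently treats the Donsker class as containing $\symbf{\phi}^1_j(\cdot;\widehat{\symbf{\eta}},\symbf{\beta})$, writes the second moment $\mathbb{E}\{\symbf{\phi}^1_j(\symbf{\phi}^1_j)^\prime\}$ where the centered covariance is needed (note $\mathbb{P}\symbf{\phi}^1_j(\symbf{O};\symbf{\eta}_0,\symbf{\beta}_0)=\mathrm{vec}(-\symbf{I}_p)\neq\symbf{0}$), and never explicitly derives the $o_P(1)$ claim that you correctly read off from the $O_P(n^{-1/2})$ expansion.
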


\begin{proof}
    For $\symbf{\beta} \in \mathcal{B}^*$, we have
\begin{align*}
    \mathbb{P}_n \symbf{\phi}_j^{1}
    \big( \symbf{O} ; \widehat{\symbf{\eta}}, \symbf{\beta} \big) 
    - \mathbb{P} \symbf{\phi}_j^{1} \big( \symbf{O} ; \symbf{\eta}_0, \symbf{\beta} \big)
    = & \ 
    \underbrace{(\mathbb{P}_n  - \mathbb{P})
    \symbf{\phi}_j^{1} \big( \symbf{O} ; \symbf{\eta}_0, \symbf{\beta}\big)
    }_{term1} \ + \ \underbrace{(\mathbb{P}_n  - \mathbb{P}) \left\{\symbf{\phi}_j^{1} \big( \symbf{O} ; \widehat{\symbf{\eta}}, \symbf{\beta} \big) - \symbf{\phi}_j^{1} \big( \symbf{O} ; \symbf{\eta}_0, \symbf{\beta} \big)\right\}}_{term2}\\
    & \quad + \underbrace{\mathbb{P}
    \symbf{\phi}_j^{1} \big( \symbf{O} ; \widehat{\symbf{\eta}}, \symbf{\beta} \big) - \mathbb{P}
    \symbf{\phi}_j^{1} \big( \symbf{O} ; \symbf{\eta}_0, \symbf{\beta} \big)}_{term3} 
\end{align*}

We next check each term:

(1) By Lemma \ref{lemma:L2P}, $\frac{\partial \varphi_j(\symbf{O}; \symbf{\eta}_0, \symbf{\beta})}{\partial \symbf{\beta}} \in L^2(P)$, thus 
$\symbf{\phi}_j^{1} \big( \symbf{O} ; \symbf{\eta}_0, \symbf{\beta} \big) \in L^2(P)$. Since $\symbf{O}_1,\cdots, \symbf{O}_n$ are i.i.d., term1 is asymptotically normal by the central limit theorem:
\begin{align*}
    \sqrt{n}(\mathbb{P}_n - \mathbb{P}) \symbf{\phi}_j^{1} \big( \symbf{O} ; \symbf{\eta}_0, \symbf{\beta}\big)
    \overset{d}{\rightarrow} \mathcal{N}(\symbf{0}, \Sigma_j^1)
\end{align*}
where 
\begin{align*}
    \Sigma_j^1 = \mathbb{E} \left\{
    \symbf{\phi}_j^{1} \big( \symbf{O} ; \symbf{\eta}_0, \symbf{\beta}\big)
    \symbf{\phi}_j^{1} \big( \symbf{O} ; \symbf{\eta}_0, \symbf{\beta}\big)^\prime \right\},
\end{align*}
or we can write it as 
\begin{align*}
(\mathbb{P}_n - \mathbb{P}) \symbf{\phi}_j^{1} \big( \symbf{O} ; \symbf{\eta}_0, \symbf{\beta}\big) = n^{-1/2} \mathcal{N}(\symbf{0}, \Sigma_j^1) + o_P(n^{-1/2})
\end{align*}

(2)  By Lemma \ref{lemma:L2P_continuous}, $\mathbb{P} 
\Big|\Big|
\frac{\partial \varphi_j(\symbf{O}; \widehat{\symbf{\eta}}, \symbf{\beta})}{\partial \symbf{\beta}} - \frac{\partial \varphi_j(\symbf{O}; \symbf{\eta}_0, \symbf{\beta})}{\partial \symbf{\beta}} \Big|\Big|_2^2  = o_p(1)$ as $\widehat{\symbf{\eta}} \rightarrow \symbf{\eta}_0$. Thus, 
$\mathbb{P} 
\big|\big| \symbf{\phi}_j^{1} \big( \symbf{O} ; \widehat{\symbf{\eta}}, \symbf{\beta} \big) - \symbf{\phi}_j^{1} \big( \symbf{O} ; \symbf{\eta}_0, \symbf{\beta} \big)\big|\big|_2^2  = o_p(1)$ as $\widehat{\symbf{\eta}} \rightarrow \symbf{\eta}_0$.
Under Assumption \ref{assum:donsker2}, by Lemma 19.24 \citep{van2000asymptotic}, $\big|\big|(\mathbb{P}_n - \mathbb{P}) \left\{\symbf{\phi}_j^{1} \big( \symbf{O} ; \widehat{\symbf{\eta}}, \symbf{\beta} \big) - \symbf{\phi}_j^{1} \big( \symbf{O} ; \symbf{\eta}_0, \symbf{\beta} \big)\right\}\big|\big|_2$ is the empirical process remainder which is ensured to be negligible ($o_P(n^{-1/2})$) \citep{wellner2013weak}. Then each element of  
$$(\mathbb{P}_n  - \mathbb{P}) \left\{\symbf{\phi}_j^{1} \big( \symbf{O} ; \widehat{\symbf{\eta}}, \symbf{\beta} \big) - \symbf{\phi}_j^{1} \big( \symbf{O} ; \symbf{\eta}_0, \symbf{\beta} \big)\right\} $$ 
is also $o_P(n^{-1/2})$.

(3) By Lemma \ref{lemma:convg2}, we know that each element of $\mathbb{P}\symbf{\phi}_j^{1} \big( \symbf{O} ; \widehat{\symbf{\eta}}, \symbf{\beta} \big) - \mathbb{P}\symbf{\phi}_j^{1} \big( \symbf{O} ; \symbf{\eta}_0, \symbf{\beta} \big)$ is $o_P(n^{-1/2})$.

Thus, we know that $\mathbb{P}_n \symbf{\phi}_j^{1}\big(\symbf{O} ; \widehat{\symbf{\eta}}, \symbf{\beta}\big) - \mathbb{P}\symbf{\phi}_j^{1}\big(\symbf{O} ; \symbf{\eta}_0, \symbf{\beta}\big) = n^{-1/2} \mathcal{N}(\symbf{0}, \Sigma^1_j) + o_P(n^{-1/2})$, where $\Sigma^1_j = E \big\{\symbf{\phi}_j^1 \left( \symbf{O}; \symbf{\eta}_0, \symbf{\beta}\right)  \symbf{\phi}_j^1 \left( \symbf{O}; \symbf{\eta}_0, \symbf{\beta}\right) ^\prime \big\}$, 
  for $\symbf{\beta} \in \mathcal{B}^*$, $j=1,2$.
\end{proof}

\subsubsection{Proof of Theorem \ref{theorem:consist}}
\label{app:proof_consist}

\begin{proof}

We first prove the uniform convergence $\sup_{\symbf{\beta}\in \mathcal{B}^*} ||\mathbb{P}_n\varphi_j(\symbf{O}; \widehat{\symbf{\eta}}, \symbf{\beta}) - \mathbb{P}\varphi_j(\symbf{O}; \symbf{\eta}_0, \symbf{\beta}) ||_2 \overset{p}{\rightarrow} 
0$ and existence of $\widehat{\symbf{\beta}}_j$ in a neighborhood of $\symbf{\beta}_0$ for each influence function $\symbf{\varphi}_j$, $j=1,2$. Finally, we
apply the proof of Theorem 5.9 in  \cite{van2000asymptotic} to derive the consistency.

\textbf{Part 1: proof of }$\sup_{\symbf{\beta}\in \mathcal{B}^*} || \mathbb{P}_n\varphi_j(\symbf{O}; \widehat{\symbf{\eta}}, \symbf{\beta}) - \mathbb{P}\varphi_j(\symbf{O}; \symbf{\eta}_0, \symbf{\beta}) ||_2 \overset{p}{\rightarrow} 
0$, $j=1,2$.
\begin{align*}
    \mathbb{P}_n \symbf{\varphi}_j
    \big( \symbf{O} ; \widehat{\symbf{\eta}}, \symbf{\beta} \big) - \mathbb{P} \symbf{\varphi}_j \big( \symbf{O} ; \symbf{\eta}_0, \symbf{\beta} \big)
    = & \quad \underbrace{(\mathbb{P}_n  - \mathbb{P})\symbf{\varphi}_j \big( \symbf{O} ; \symbf{\eta}_0, \symbf{\beta}\big)}_{term1} \ \\
    & \quad + \ \underbrace{(\mathbb{P}_n  - \mathbb{P}) \left\{\symbf{\varphi}_j \big( \symbf{O} ; \widehat{\symbf{\eta}}, \symbf{\beta} \big) - \symbf{\varphi}_j \big( \symbf{O} ; \symbf{\eta}_0, \symbf{\beta} \big)\right\}}_{term2}\\
    & \quad + \underbrace{\mathbb{P}\symbf{\varphi}_j \big( \symbf{O} ; \widehat{\symbf{\eta}}, \symbf{\beta} \big) - \mathbb{P}\symbf{\varphi}_j \big( \symbf{O} ; \symbf{\eta}_0, \symbf{\beta} \big)}_{term3}.
\end{align*}

For term1, under Assumption \ref{assum:donsker1}, the class of functions $\left\{ \symbf{\varphi}_j \left(\symbf{O} ;  
   \symbf{\eta}_0, \symbf{\beta}\right):
  \symbf{\beta} \in \mathcal{B}^* \right\}$ 
  is also P-Donsker, and thus is P-Glivenko-Cantelli, then $\sup_{\symbf{\beta} \in \mathcal{B}^*}||(\mathbb{P}_n  - \mathbb{P})\symbf{\varphi}_j \big( \symbf{O} ; \symbf{\eta}_0, \symbf{\beta}\big)||_2 = o_P(1) $, $j=1,2$.

For term2, by Lemma \ref{lemma:L2P_continuous}, we know $\sup_{\symbf{\beta}\in \mathcal{B}^*}
   \mathbb{P} 
   ||\varphi_j(\symbf{O}; \widehat{\symbf{\eta}}, \symbf{\beta}) - \varphi_j(\symbf{O}; \symbf{\eta}_0, \symbf{\beta})||_2^2  = o_P(1)$, combined with Assumption \ref{assum:donsker1}, by Lemma 19.24 \citep{van2000asymptotic}, 
   $$\sup_{\symbf{\beta} \in \mathcal{B}^*} 
   ||(\mathbb{P}_n  - \mathbb{P}) \left\{\symbf{\varphi}_j \big( \symbf{O} ; \widehat{\symbf{\eta}}, \symbf{\beta} \big) - \symbf{\varphi}_j \big( \symbf{O} ; \symbf{\eta}_0, \symbf{\beta} \big)\right\}||_2 = o_P(n^{-1/2}).$$
   
   For term3, by Lemma \ref{lemma:L2P_continuous} and Jensen's inequality
   \begin{align*}
      \sup_{\symbf{\beta} \in \mathcal{B}^*} 
      || \mathbb{P}\symbf{\varphi}_j \big( \symbf{O} ; \widehat{\symbf{\eta}}, \symbf{\beta} \big) - \mathbb{P}\symbf{\varphi}_j \big( \symbf{O} ; \symbf{\eta}_0, \symbf{\beta} \big)||_2^2 
      \leq \sup_{\symbf{\beta}\in \mathcal{B}^*}
     \mathbb{P} 
      ||\varphi_j(\symbf{O}; \widehat{\symbf{\eta}}, \symbf{\beta}) - \varphi_j(\symbf{O}; \symbf{\eta}_0, \symbf{\beta})||_2^2 \overset{p}{\rightarrow} 0.
   \end{align*}

Thus, we have $\sup_{\symbf{\beta} \in \mathcal{B}^*} ||\mathbb{P}_n \symbf{\varphi}_j
\big( \symbf{O} ; \widehat{\symbf{\eta}}, \symbf{\beta} \big) - \mathbb{P} \symbf{\varphi}_j \big( \symbf{O} ; \symbf{\eta}_0, \symbf{\beta} \big)||_2^2  = o_P(1)$, $j=1,2$.

\textbf{Part 2: proof of the existence of} $\widehat{\symbf{\beta}}_j$ \textbf{ in a neighborhood of }$\symbf{\beta}_0$, $j=1,2$.

By Lemma \ref{lemma:unique}, $\symbf{\beta}_0 \in \mathcal{B}^*$ is the locally unique solution of the equation $\mathbb{P}\varphi_j(\symbf{O}; \symbf{\eta}_0, \symbf{\beta}) = 0$.  
Then there exists $\delta_j >0$ and $c_j>0$ such that 
\begin{align*}
    ||\mathbb{P}\varphi_j(\symbf{O}; \symbf{\eta}_0, \symbf{\beta})||_2  \geq c_j ||\symbf{\beta} - \symbf{\beta}_0||_2,\mbox{ for all }\symbf{\beta} \in B_{\delta_j}(\symbf{\beta}_0).
\end{align*}
where $B_{\delta_j}(\symbf{\beta}_0)$ is an open ball of $\symbf{\beta}_0$, $B_{\delta_j}(\symbf{\beta}_0) = \{\symbf{\beta}: ||\symbf{\beta} - \symbf{\beta}_0||_2 < \delta_j \} \subset \mathcal{B}^*$, $j=1,2$. 

By the uniform convergence of $\mathbb{P}_n\varphi_j(\symbf{O}; \widehat{\symbf{\eta}}, \symbf{\beta})$ to $\mathbb{P}\varphi_j(\symbf{O}; \symbf{\eta}_0, \symbf{\beta})$ on $B_{\delta_j}(\symbf{\beta}_0) \subset \mathcal{B}^*$, for any $\epsilon_j < c_j\delta_j/2$, 
\begin{align*}
    \mathbb{P}\big( \sup_{\symbf{\beta}\in \mathcal{B}_{\delta_j} (\symbf{\beta}_0)} ||\mathbb{P}_n\varphi_j(\symbf{O}; \widehat{\symbf{\eta}}, \symbf{\beta}) - \mathbb{P}\varphi_j(\symbf{O}; \symbf{\eta}_0, \symbf{\beta}) ||_2 < \epsilon_j \big) \rightarrow 1.
\end{align*}

On this event, for all $\symbf{\beta}$ with $||\symbf{\beta} - \symbf{\beta}_0||_2 = \delta_j$, 
\begin{align*}
    & ||\mathbb{P}_n\varphi_j(\symbf{O}; \widehat{\symbf{\eta}}, \symbf{\beta})||_2 \geq ||\mathbb{P}\varphi_j(\symbf{O}; \symbf{\eta}_0, \symbf{\beta})||_2 - ||\mathbb{P}_n\varphi_j(\symbf{O}; \widehat{\symbf{\eta}}, \symbf{\beta}) - \mathbb{P}\varphi_j(\symbf{O}; \symbf{\eta}_0, \symbf{\beta})||_2 \\
    & \geq c_j\delta_j - \epsilon_j > c_j\delta_j/2 > 0.
\end{align*}

At the same time, $\mathbb{P}_n\varphi_j(\symbf{O}; \widehat{\symbf{\eta}}, \symbf{\beta}_0) \rightarrow \mathbb{P}\varphi_j(\symbf{O}; \symbf{\eta}_0, \symbf{\beta}_0) =0 $, so for large $n$, we have $||\mathbb{P}_n\varphi_j(\symbf{O}; \widehat{\symbf{\eta}}, \symbf{\beta}_0)||_2$ arbitrarily small with high probability. Under Assumption \ref{chp4:assum:marginal}, $\mathbb{P}_n\varphi_j(\symbf{O}; \widehat{\symbf{\eta}}, \symbf{\beta})$ is continuous on $\mathcal{B}_{\delta_j}(\symbf{\beta}_0)$. Thus, for large $n$, $\mathbb{P}_n\varphi_j(\symbf{O}; \widehat{\symbf{\eta}}, \symbf{\beta})$ is close to zero at $\symbf{\beta}_0$, and is bounded away from zero on the boundary for all $\symbf{\beta}$ with $||\symbf{\beta} - \symbf{\beta}_0||_2 = \delta_j$. A multivariate intermediate value theorem implies there exists at least one $\widehat{\symbf{\beta}}_j \in \mathcal{B}_{\delta_j}(\symbf{\beta}_0)$ with $\mathbb{P}_n\varphi_j(\symbf{O}; \widehat{\symbf{\eta}}, \widehat{\symbf{\beta}}_j) = 0$, $j=1,2$.

\textbf{Part 3: proof of the consistency.}

Since $\widehat{\symbf{\beta}}_j \in \mathcal{B}_{\delta_j}(\symbf{\beta}_0)$ with $\mathbb{P}_n\varphi_j(\symbf{O}; \widehat{\symbf{\eta}}, \widehat{\symbf{\beta}}_j) = 0$, $j=1,2$, the quantitative inequality gives:
\begin{align*}
    c_j ||\widehat{\symbf{\beta}}_j - \symbf{\beta}_0||_2 & \ \leq ||\mathbb{P}\varphi_j(\symbf{O}; \symbf{\eta}_0, \widehat{\symbf{\beta}}_j )||_2 \\
    & \ = ||\mathbb{P}\varphi_j(\symbf{O}; \symbf{\eta}_0, \widehat{\symbf{\beta}}_j) - \mathbb{P}_n\varphi_j(\symbf{O}; \widehat{\symbf{\eta}}, \widehat{\symbf{\beta}}_j) ||_2 \\
    & \ = ||\mathbb{P}_n\varphi_j(\symbf{O}; \widehat{\symbf{\eta}}, \widehat{\symbf{\beta}}_j) - \mathbb{P}\varphi_j(\symbf{O}; \symbf{\eta}_0, \widehat{\symbf{\beta}}_j) ||_2 \\
    & \leq \sup_{\symbf{\beta}\in \mathcal{B}^*} || \mathbb{P}_n\varphi_j(\symbf{O}; \widehat{\symbf{\eta}}, \symbf{\beta}) - \mathbb{P}\varphi_j(\symbf{O}; \symbf{\eta}_0, \symbf{\beta}) ||_2 \overset{p}{\rightarrow} 0.
\end{align*}
Thus, we know that $||\widehat{\symbf{\beta}}_j - \symbf{\beta}_0||_2 \overset{p}{\rightarrow} 0$, which means $\widehat{\symbf{\beta}}_j$ is a consistent estimator for $\symbf{\beta}_0$.
\end{proof}

\subsubsection{Proof of Theorem \ref{theorem:asy_normal}}
\label{app:proof_asy}

\begin{proof}

\textbf{Part 1: Focus on the term }$\mathbb{P}_n \symbf{\varphi}_j \big( \symbf{O} ; \widehat{\symbf{\eta}}, \symbf{\beta}_0 \big)$.

Based on Lemma \ref{lemma:convg3}, we have
\begin{align*}
    \mathbb{P}_n \symbf{\varphi}_j \big( \symbf{O} ; \widehat{\symbf{\eta}}, \symbf{\beta}_0 \big) & = \mathbb{P}_n\symbf{\varphi}_j \big(\symbf{O} ; \widehat{\symbf{\eta}}, \symbf{\beta}_0\big) - \mathbb{P}\symbf{\varphi}_j \big(\symbf{O} ; \symbf{\eta}_0, \symbf{\beta}_0\big) \\
    & = n^{-1/2} \mathcal{N}(\symbf{0}, \Sigma_j) + o_P(n^{-1/2}),
\end{align*}
where $\Sigma_j = E \big\{\symbf{\varphi}_j \left( \symbf{O}; \symbf{\eta}_0, \symbf{\beta}_0 
\right)  \symbf{\varphi}_j \left( \symbf{O}; \symbf{\eta}_0, \symbf{\beta}_0
\right) ^\prime \big\}$, for $j=1,2$.

\textbf{Part 2: Application of the mean value theorem}

Under Assumption \ref{chp4:assum:marginal},
$\mathbb{P}_n \symbf{\varphi}_j \big( \symbf{O} ; \widehat{\symbf{\eta}}, \symbf{\beta} \big)$ is continuous in $\symbf{\beta}$, then we can apply the mean value theorem to $ \mathbb{P}_n \symbf{\varphi}_j \big( \symbf{O} ; \widehat{\symbf{\eta}}, \symbf{\beta} \big)$ for $\symbf{\beta} \in \mathcal{B}^*$,
\begin{align*}
    \sqrt{n}\left\{
    \mathbb{P}_n \symbf{\varphi}_j \big( \symbf{O} ; \widehat{\symbf{\eta}}, \widehat{\symbf{\beta}}_j \big) 
    - 
    \mathbb{P}_n \symbf{\varphi}_j \big( \symbf{O} ; \widehat{\symbf{\eta}}, \symbf{\beta}_0 \big)
    \right\} = \sqrt{n} \left\{
    \frac{\partial \mathbb{P}_n \symbf{\varphi}_j ( \symbf{O} ; \widehat{\symbf{\eta}}, \symbf{\beta} )}{\partial \symbf{\beta}} \Big|_{\symbf{\beta} = \widetilde{\symbf{\beta}}}\right\} \big(\widehat{\symbf{\beta}}_j - \symbf{\beta}_0\big), 
\end{align*}
and 
\begin{align*}
    \sqrt{n}\big(\widehat{\symbf{\beta}} - \symbf{\beta}_0\big) 
    = & \ 
    \left\{
    \frac{\partial \mathbb{P}_n \symbf{\varphi}_j ( \symbf{O} ; \widehat{\symbf{\eta}}, \symbf{\beta} )}{\partial \symbf{\beta}} \Big|_{\symbf{\beta} = \widetilde{\symbf{\beta}}}\right\}^{-1}
    \sqrt{n}
    \left\{
    \mathbb{P}_n \symbf{\varphi}_j \big( \symbf{O} ; \widehat{\symbf{\eta}}_j, \widehat{\symbf{\beta}} \big) 
    - 
    \mathbb{P}_n \symbf{\varphi}_j \big( \symbf{O} ; \widehat{\symbf{\eta}}, \symbf{\beta}_0 \big)
    \right\}\\
    = & \ 
    - 
     \left\{\frac{\partial \mathbb{P}_n \symbf{\varphi}_j ( \symbf{O} ; \widehat{\symbf{\eta}}, \symbf{\beta} )}{\partial \symbf{\beta}} \Big|_{\symbf{\beta} = \widetilde{\symbf{\beta}}}\right\}^{-1}
     \sqrt{n} 
     \mathbb{P}_n \symbf{\varphi}_j \big( \symbf{O} ; \widehat{\symbf{\eta}}, \symbf{\beta}_0 \big)
\end{align*}
where $\widetilde{\symbf{\beta}}$ takes some value between 
$\widehat{\symbf{\beta}}_j$ and $\symbf{\beta}_0$ and $\widetilde{\symbf{\beta}} \in \mathcal{B}^*$.

For the first part, based on Part 1, we have
\begin{align*}
     \sqrt{n}\mathbb{P}_n \symbf{\varphi}_j \big( \symbf{O} ; \widehat{\symbf{\eta}}, \symbf{\beta}_0 \big) 
    =
    \mathcal{N}(\symbf{0}, \symbf{\Sigma}_j) + o_P(1),
\end{align*}
where 
\begin{align*}
    \Sigma_j = \mathbb{E} \left\{\symbf{\varphi}_j \left( \symbf{O}; \symbf{\eta}_0, \symbf{\beta}_0\right)  \symbf{\varphi}_j \left( \symbf{O}; \symbf{\eta}_0, \symbf{\beta}_0\right) ^\prime \right\}
\end{align*}

For the second part, 

\begin{align*}
   & \ \frac{\partial \mathbb{P}_n \symbf{\varphi}_j \big( \symbf{O}; \widehat{\symbf{\eta}}, \symbf{\beta} \big)}{\partial \symbf{\beta}} \Big|_{\symbf{\beta} = \widetilde{\symbf{\beta}}} - 
    \mathbb{P} \frac{\partial \symbf{\varphi}_j \big( \symbf{O} ; \symbf{\eta}_0, \symbf{\beta}_0 \big)}{\partial \symbf{\beta}}\\
= & \ \mathbb{P}_n \frac{\partial \symbf{\varphi}_j \big( \symbf{O}; \widehat{\symbf{\eta}}, \symbf{\beta} \big)}{\partial \symbf{\beta}} \Big|_{\symbf{\beta} = \widetilde{\symbf{\beta}}} - 
    \mathbb{P} \frac{\partial \symbf{\varphi}_j \big( \symbf{O} ; \symbf{\eta}_0, \symbf{\beta}_0 \big)}{\partial \symbf{\beta}}\\
   = &\  \underbrace{
   \left\{ \mathbb{P}_n 
    \frac{\partial  \symbf{\varphi}_j \big( \symbf{O}; \widehat{\symbf{\eta}}, \symbf{\beta} \big)}{\partial \symbf{\beta}} \Big|_{\symbf{\beta} = \widetilde{\symbf{\beta}}} 
    - \mathbb{P} \frac{\partial \symbf{\varphi}_j \big( \symbf{O} ; \symbf{\eta}_0, \widetilde{\symbf{\beta}} \big)}{\partial \symbf{\beta}}\right\} 
     }_{term1} 
     - 
    \underbrace{
    \left\{
    \mathbb{P} \frac{\partial \symbf{\varphi}_j \big( \symbf{O} ; \symbf{\eta}_0, \widetilde{\symbf{\beta}} \big)}{\partial \symbf{\beta}} 
    -
    \mathbb{P} \frac{\partial \symbf{\varphi}_j \big( \symbf{O} ; \symbf{\eta}_0, \symbf{\beta}_0 \big)}{\partial \symbf{\beta}}
    \right\}
    }_{term2} 
\end{align*}
where each element of term1 is proved to be $o_P(1)$ by Lemma \ref{lemma:convg4}. Since $\widetilde{\symbf{\beta}}$ takes some value between 
$\widehat{\symbf{\beta}}_j$ and $\symbf{\beta}_0$ and $\widehat{\symbf{\beta}}_j \overset{p}{\rightarrow} \symbf{\beta}_0$, we can prove term2 is $o_P(1)$ by the continuous mapping theorem.

Thus, we have 
\begin{align*}
    & \mathbb{P}_n \frac{\partial  \symbf{\varphi}_j \big( \symbf{O} ; \widehat{\symbf{\eta}}, \symbf{\beta} \big)}{\partial \symbf{\beta}} \Big|_{\symbf{\beta} = \widetilde{\symbf{\beta}}} \overset{p}{\rightarrow} \mathbb{P} \frac{\partial \symbf{\varphi}_j \big( \symbf{O} ; \symbf{\eta}_0, \symbf{\beta}_0 \big)}{\partial \symbf{\beta}},\\
    \Rightarrow & \left\{ \mathbb{P}_n \frac{\partial  \symbf{\varphi}_j \big( \symbf{O} ; \widehat{\symbf{\eta}}, \symbf{\beta} \big)}{\partial \symbf{\beta}} \Big|_{\symbf{\beta} = \widetilde{\symbf{\beta}}}\right\}^{-1}
     \overset{p}{\rightarrow} 
     \left\{\mathbb{P} \frac{\partial \symbf{\varphi}_j \big( \symbf{O} ; \symbf{\eta}_0, \symbf{\beta}_0 \big)}{\partial \symbf{\beta}}\right\}^{-1},
\end{align*}
where  
\begin{align*}
    \frac{\partial \symbf{\varphi}_j \big( \symbf{O} ; \symbf{\eta}_0, \symbf{\beta}_0 \big)}{\partial \symbf{\beta}} = & \ 
    \int_{t=t_1}^{t_2} \frac{\partial  \symbf{W}_j(t;\symbf{\beta}_0)}{\partial \symbf{\beta}}\frac{
    \left[ Y(t) - \mathbb{E}\left\{Y(t)\mid \overline{\symbf{O}}(t)\right\}\right]}{\rho\left\{t \mid \overline{\symbf{O}}(t), Y(t)\right\}} dN(t) \\
    & \quad + \int_{t=t_1}^{t_2} \frac{\partial  \symbf{W}_j(t;\symbf{\beta}_0)}{\partial \symbf{\beta}}\left[ \mathbb{E}\left\{Y(t)\mid \overline{\symbf{O}}(t)\right\} -  s\left\{ \symbf{B}(t)^\prime \symbf{\beta}_0\right\}\right]dt \\
    & \quad -
    \int_{t=t_1}^{t_2} \symbf{W}_j(t;\symbf{\beta}_0)\frac{\partial s\left\{ \symbf{B}(t)^\prime \symbf{\beta}_0\right\}}{\partial \symbf{\beta}}dt.
\end{align*}

By Slutsky's theorem, we have
\begin{align*}
    \sqrt{n}\big(\widehat{\symbf{\beta}}_j - \symbf{\beta}_0\big)  
    \overset{d}{\rightarrow}
    \mathcal{N}(\symbf{0},  \Sigma_j^*),
\end{align*}
where 
\begin{align*}
     \Sigma_j^* = 
     \left[\mathbb{P} \frac{\partial \symbf{\varphi}_j \big( \symbf{O} ; \symbf{\eta}_0, \symbf{\beta}_0 \big)}{\partial \symbf{\beta}}\right]^{-1}
     \mathbb{E} \left\{\symbf{\varphi}_j \left( \symbf{O}; \symbf{\eta}_0, \symbf{\beta}_0\right)  \symbf{\varphi}_j \left( \symbf{O}; \symbf{\eta}_0, \symbf{\beta}_0\right) ^\prime \right\} \left[\mathbb{P} \frac{\partial \symbf{\varphi}_j \big( \symbf{O} ; \symbf{\eta}_0, \symbf{\beta}_0 \big)}{\partial \symbf{\beta}}\right]^{-1, \prime}, \ j=1,2.
\end{align*}
\end{proof}

\subsection{Proof of Theorem \ref{chp4:theorem:mono}}
\label{apxB:app:mono_proof}

\begin{proof}
    In the proof, we suppress dependence on the treatment arm $a$ for notational simplicity. 
    Define:
    \begin{align*}
        A(\alpha, t) & = \mathbb{E}\big[\exp\{\alpha Y(t)\} \mid \Delta N(t) = 1, \overline{\symbf{O}}(t) \big]\\
        B(\alpha, t) & = \mathbb{E}\big[Y(t)\exp\{\alpha Y(t)\} \mid \Delta N(t) = 1, \overline{\symbf{O}}(t) \big], \\
        C(\alpha, t) &= \mathbb{E}[Y^2(t)\exp\{\alpha Y(t)\}| \Delta N(t) = 1, \overline{\symbf{O}}(t)]
    \end{align*}
    Under Assumption \ref{chp4:assum:tilting}, given Proposition \ref{prop:mu_identity}, we have
    \begin{align*}
        \mathbb{E}\{Y(t);\alpha\} 
        = \mathbb{E} \left\{\frac{B(\alpha, t)}{A(\alpha, t)} \right\}.
    \end{align*}
    Under Assumption \ref{assum:bound_Y}, by the Leibniz integral rule, we have
    \begin{align*}
        \frac{d \mathbb{E}\{Y(t);\alpha \}}{ d \alpha} 
        & = \mathbb{E} \left[\frac{d \{B(\alpha, t) / A(\alpha, t)\} }{d \alpha}\right]\\
        & = \mathbb{E}\left\{\frac{A(\alpha,t) C(\alpha,t) - B^2(\alpha,t)}{A^2(\alpha,t)}\right\}
    \end{align*}
 By the Cauchy-Schwarz inequality, we have $ A(\alpha, t) C(\alpha, t) \geq B^2(\alpha, t)$, then $ \frac{d \mathbb{E}\{Y(t);\alpha \}}{ d \alpha} \geq 0$.
 Thus, we have proved that $\mathbb{E}\{Y(t);\alpha\}$ is non-decreasing in $\alpha$.
\end{proof}



\section{Supplementary Data Analysis}
\label{apxB:appendix:data_analysis}

This appendix presents additional data analysis results obtained using the second weight function, $\symbf{W}_2(t;\symbf{\beta})$. Overall, the results are qualitatively similar to those obtained using the first weight function, $\symbf{W}_1(t;\symbf{\beta})$.

For the ACQ score, Figure \ref{fig:mean_curves_log_weight2} shows  the estimated mean outcome curves,  $\mu^a(t)$, under the log-link model with $\symbf{W}_2(t;\symbf{\beta})$ for each treatment arm over the interval $t\in[60,460]$ and under selected values of the sensitivity parameters. Point estimates and corresponding confidence intervals for $\mu^a(t)$ at the target follow-up times are also presented. Figure \ref{fig:contour_log_weight2} displays the estimated treatment effect $\delta(t)$ and its confidence intervals at 6 and 12 months over a range of sensitivity parameters under the log-link model with $\symbf{W}_2(t;\symbf{\beta})$.

\begin{figure}[ht]
\centering
\begin{subfigure}[b]{0.43\textwidth}
    \centering
    \includegraphics[width=\textwidth]{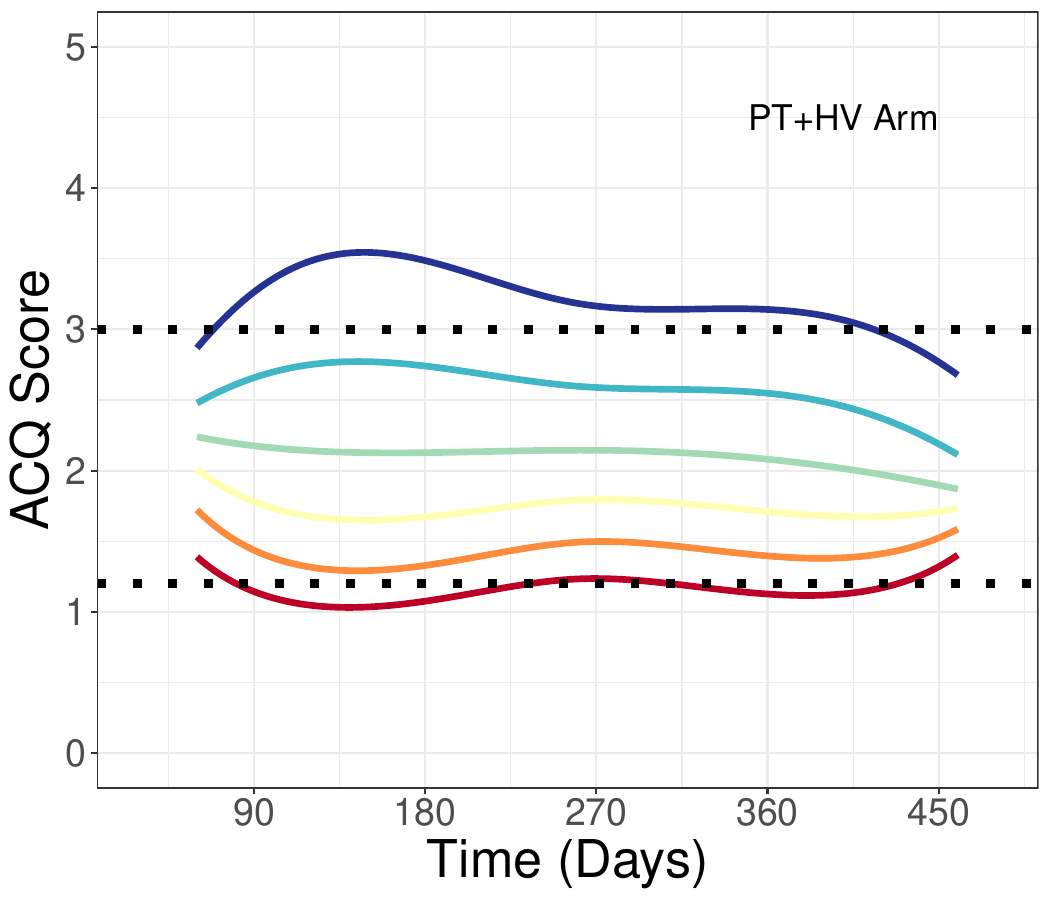}
\end{subfigure}
\hfill
\begin{subfigure}[b]{0.49\textwidth}  
    \centering 
    \includegraphics[width=\textwidth]{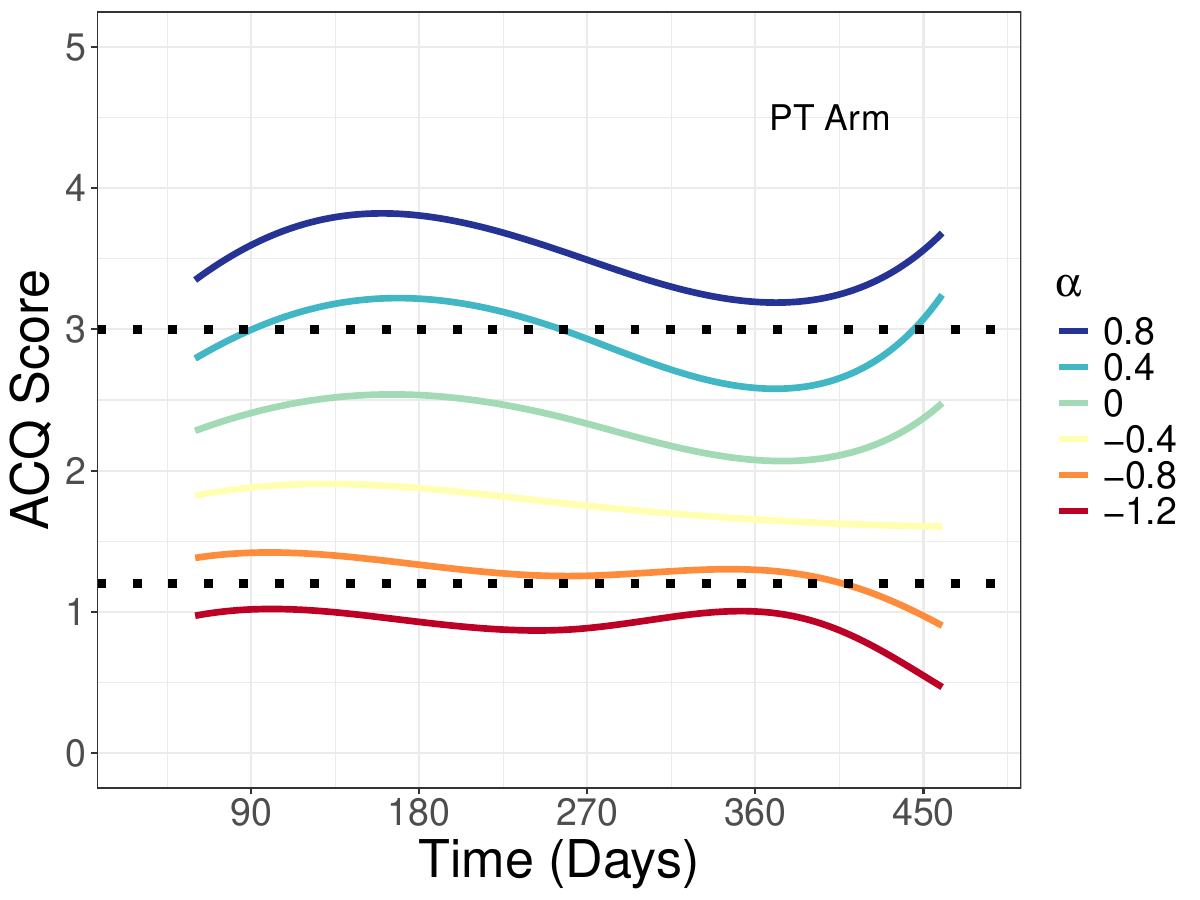}
\end{subfigure}

\medskip

\begin{subfigure}[b]{0.43\textwidth}   
    \centering 
    \includegraphics[width=\textwidth]{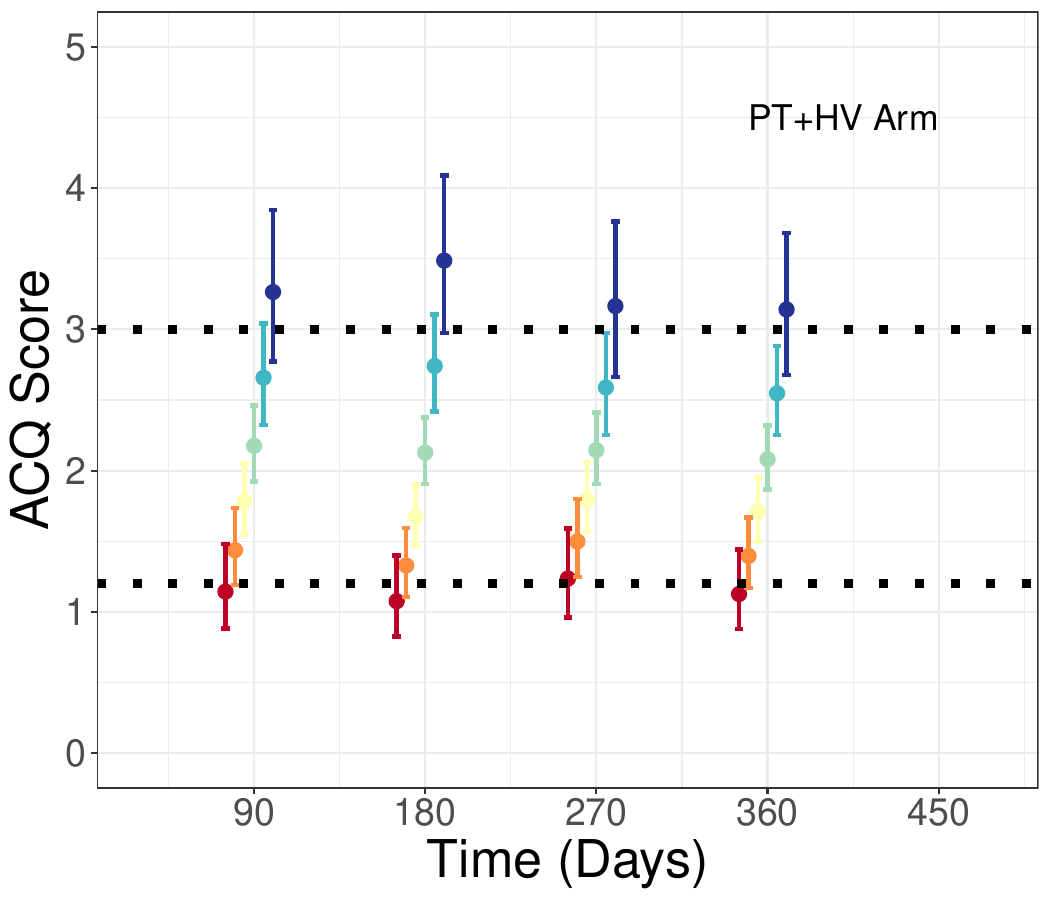}
   \end{subfigure}
\hfill
\begin{subfigure}[b]{0.49\textwidth}   
    \centering 
    \includegraphics[width=\textwidth]{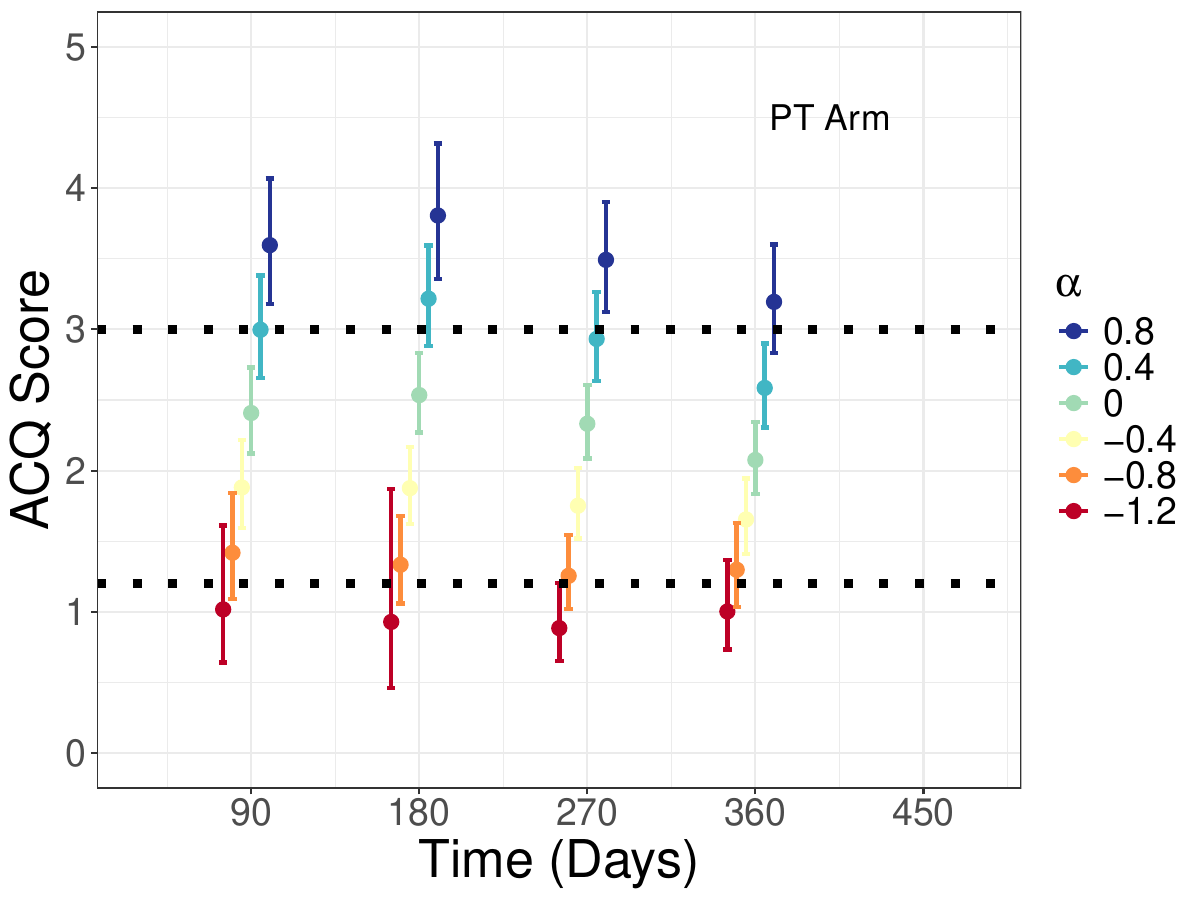}
\end{subfigure}

\caption[Population means of the Asthma Control Questionnaire (ACQ) score in the ARC trial under a range of sensitivity parameter values (log link with weight $\symbf{W}_2(t;\symbf{\beta})$).]{Population means of the Asthma Control Questionnaire (ACQ) score in the ARC trial under a range of sensitivity parameter values (log link with weight $\symbf{W}_2(t;\symbf{\beta})$). Here we show inference for the population mean ACQ score in the intervention (PT+HV) arm and the control (PT) arm.  Estimation is made under sensitivity parameter values of $\alpha = -1.2, -0.8, -0.4, 0, 0.4, 0.8$, where a positive (negative) value of $\alpha$ posits that unobserved outcome values at time $t$ tend to be higher (lower) than observed values, after controlling for variables observed before time $t$. Upper panels: point estimates on the interval 60 to 460 days after randomization.  Lower panels: point estimates and 95\% Wald confidence intervals using the parametric bootstrap variance estimate at each target time of 90, 180, 270, and 360 days. For each arm, only those values of $\alpha$ under which the estimated curves lie completely between the dotted lines at lower bound 1.2 and upper bound 3 are considered plausible based on subject-matter expertise.} 
\label{fig:mean_curves_log_weight2}
\end{figure}

\begin{figure}[ht]
\centering
\begin{subfigure}[b]{0.42\textwidth}
    \centering
    \includegraphics[width=\textwidth]{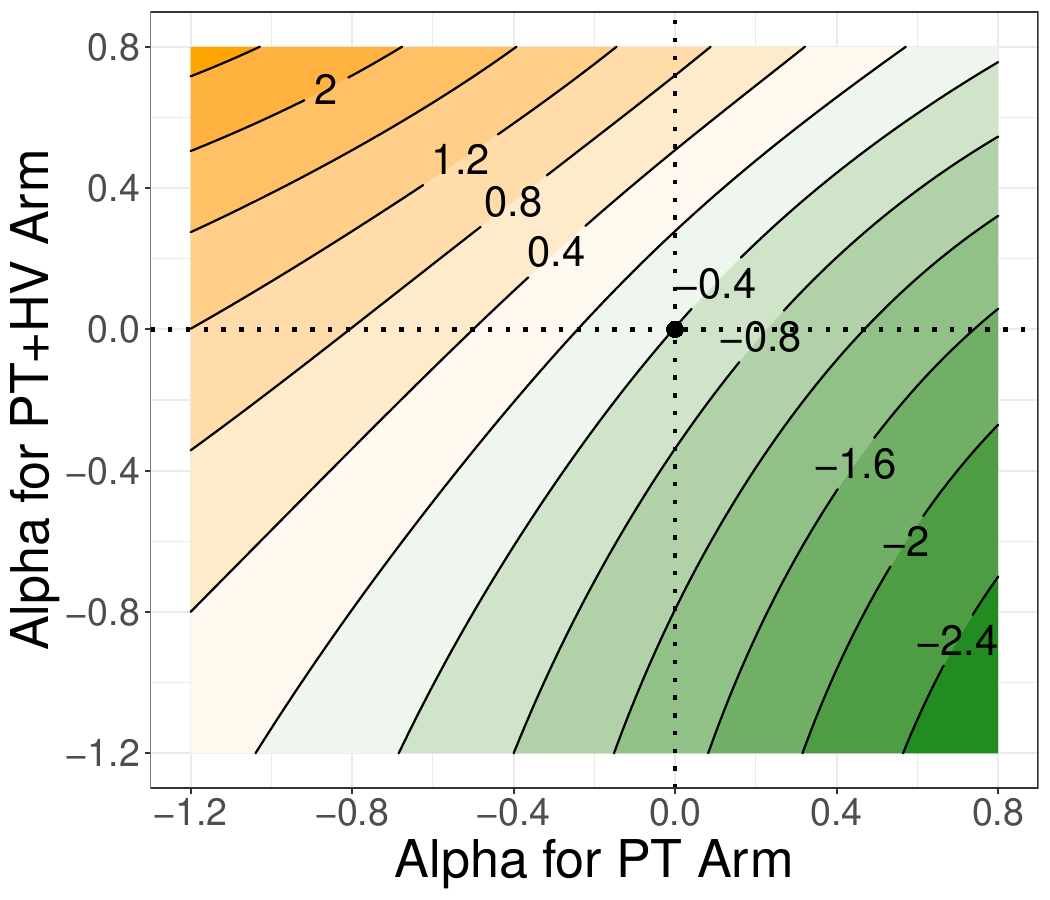}
\end{subfigure}
\hfill
\begin{subfigure}[b]{0.48\textwidth}  
    \centering 
    \includegraphics[width=\textwidth]{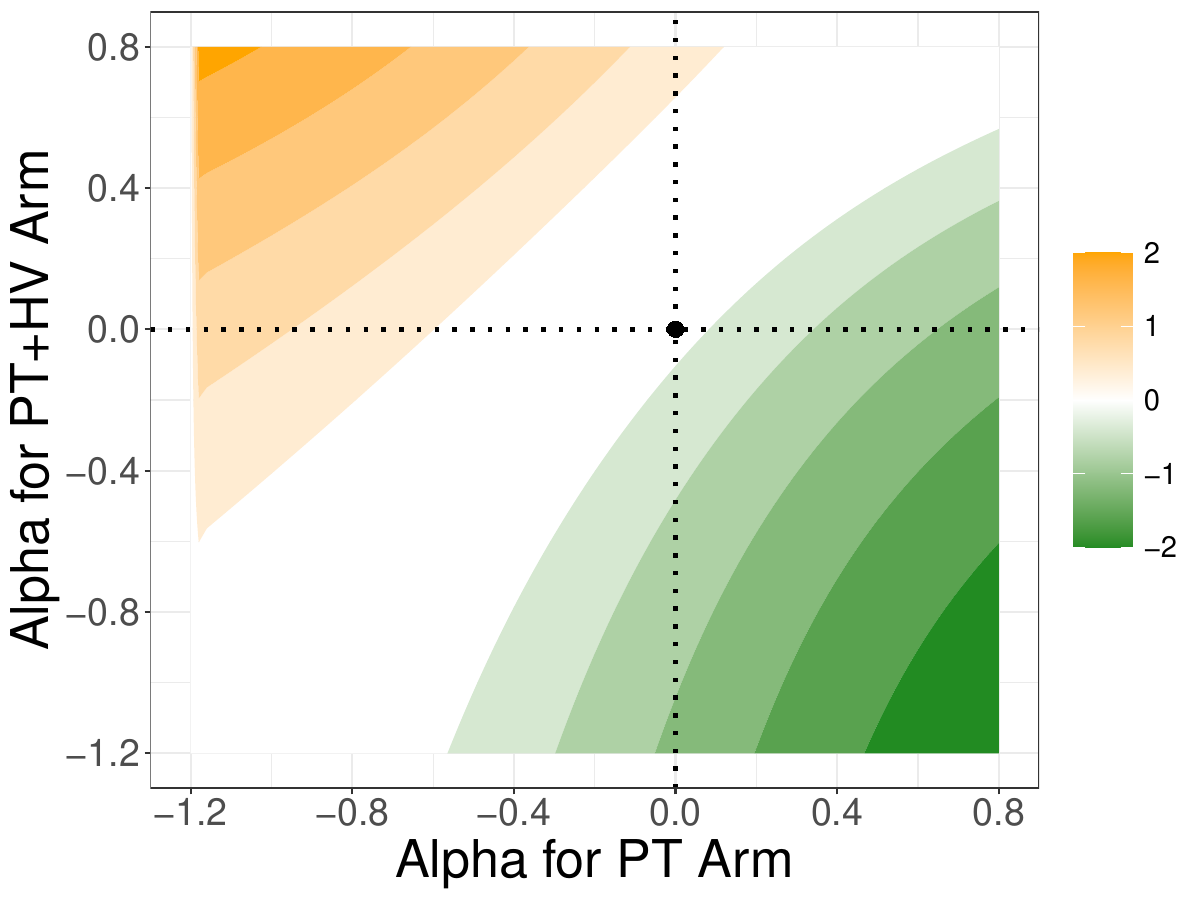}
\end{subfigure}
\medskip
\begin{subfigure}[b]{0.42\textwidth}   
    \centering 
    \includegraphics[width=\textwidth]{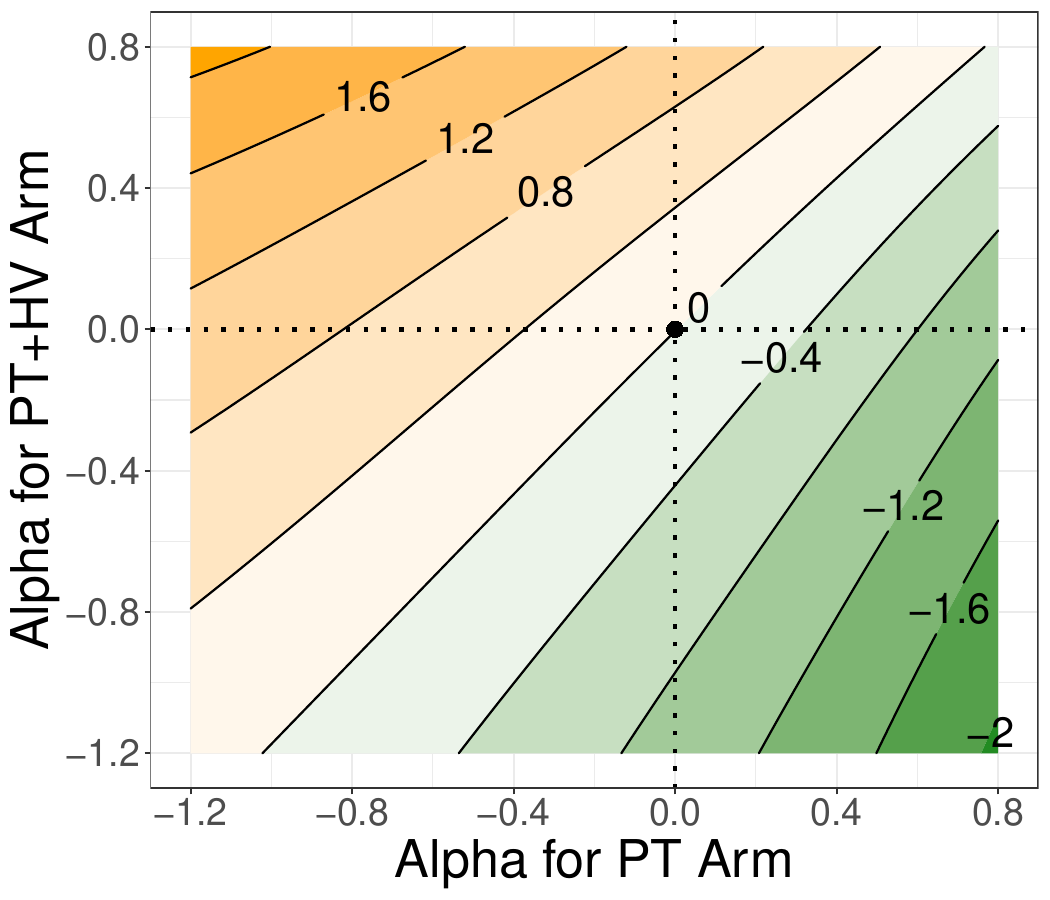}
   \end{subfigure}
\hfill
\begin{subfigure}[b]{0.48\textwidth}   
    \centering 
    \includegraphics[width=\textwidth]{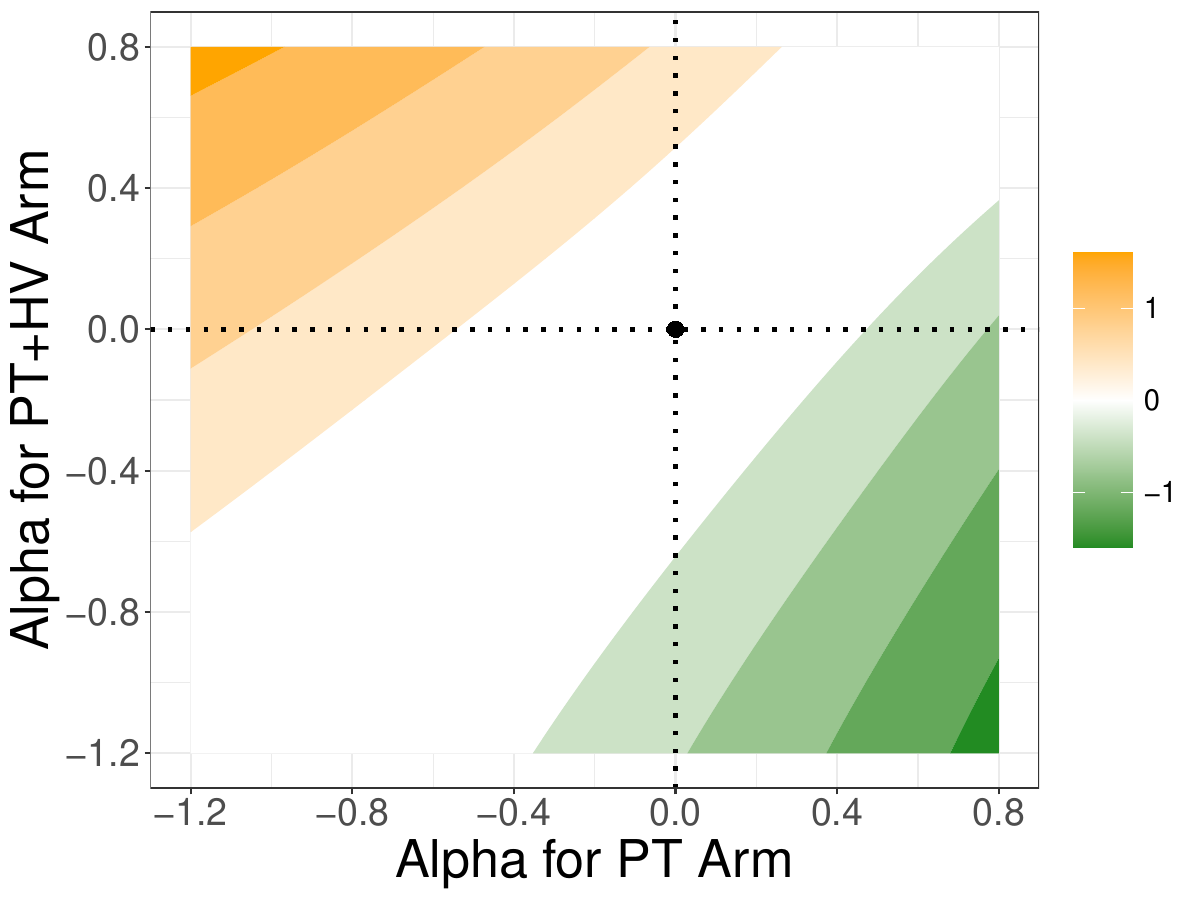}
\end{subfigure}

\caption[Sensitivity analysis for the Asthma Control Questionnaire (ACQ) score in the ARC trial (log link with weight $\symbf{W}_2(t;\symbf{\beta})$).]{Sensitivity analysis for the Asthma Control Questionnaire (ACQ) score in the ARC trial (log link with weight $\symbf{W}_2(t;\symbf{\beta})$). The left panels show the point estimate of the treatment effect $\delta(t)$ at 6 months (upper left) and 12 months (lower left) for each pair of sensitivity parameter values. The right panels summarize the 95\% confidence intervals for $\delta(t)$ at 6 months (upper right) and 12 months (lower right). The white region corresponds to sensitivity parameter values for which the confidence interval contains 0; the shaded region in the lower right (upper left) corresponds to values for which the confidence interval lies entirely below (above) 0. Contour heights give the value in the confidence interval closest to 0. Confidence intervals are Wald intervals based on the parametric bootstrap variance estimate.}
\label{fig:contour_log_weight2}
\end{figure}

For the asthma control indicator, Figure \ref{fig:mean_curves_logit_weight2}  displays the estimated mean response curves, $\mu^a(t)$, under the logit-link model with $\symbf{W}_2(t;\symbf{\beta})$ for each treatment arm over the interval $t\in[60,460]$ and under selected values of the sensitivity parameters. Point estimates and corresponding confidence intervals for $\mu^a(t)$ at the target follow-up times are also presented. Figure \ref{fig:contour_logit_weight2} displays the estimated treatment effect $\delta(t)$ and its confidence intervals at 6 and 12 months over a range of values of the sensitivity parameters under the logit-link model with $\symbf{W}_2(t;\symbf{\beta})$.

\begin{figure}[ht]
\centering
\begin{subfigure}[b]{0.43\textwidth}
    \centering
    \includegraphics[width=\textwidth]{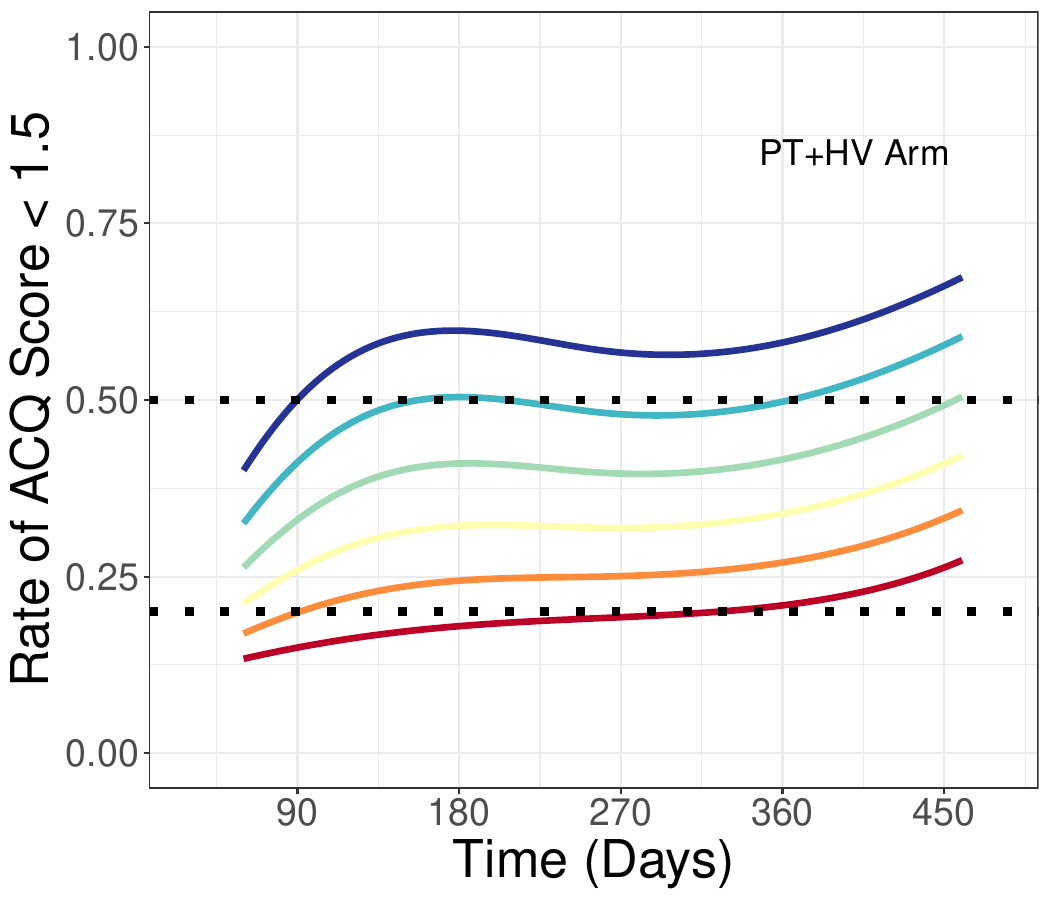}
\end{subfigure}
\hfill
\begin{subfigure}[b]{0.49\textwidth}  
    \centering 
    \includegraphics[width=\textwidth]{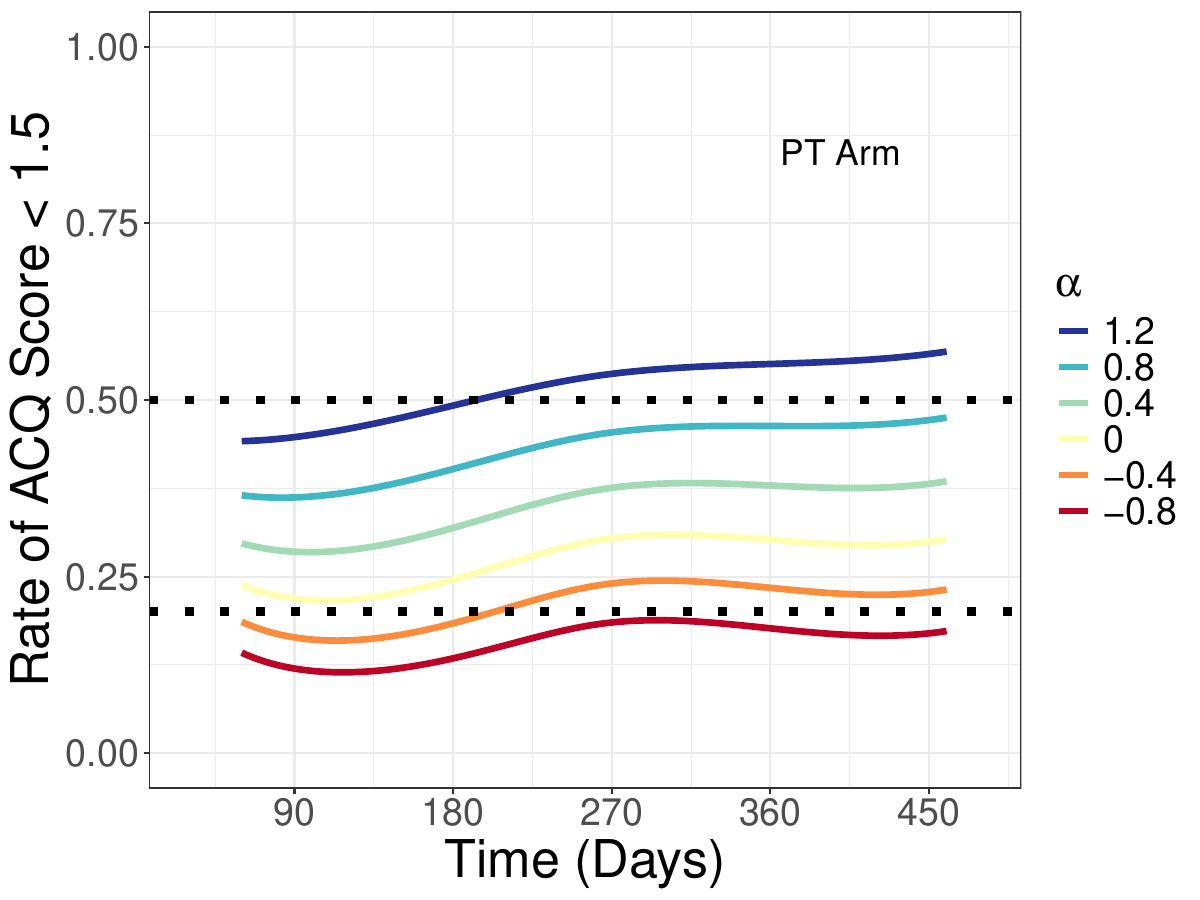}
\end{subfigure}

\medskip

\begin{subfigure}[b]{0.43\textwidth}   
    \centering 
    \includegraphics[width=\textwidth]{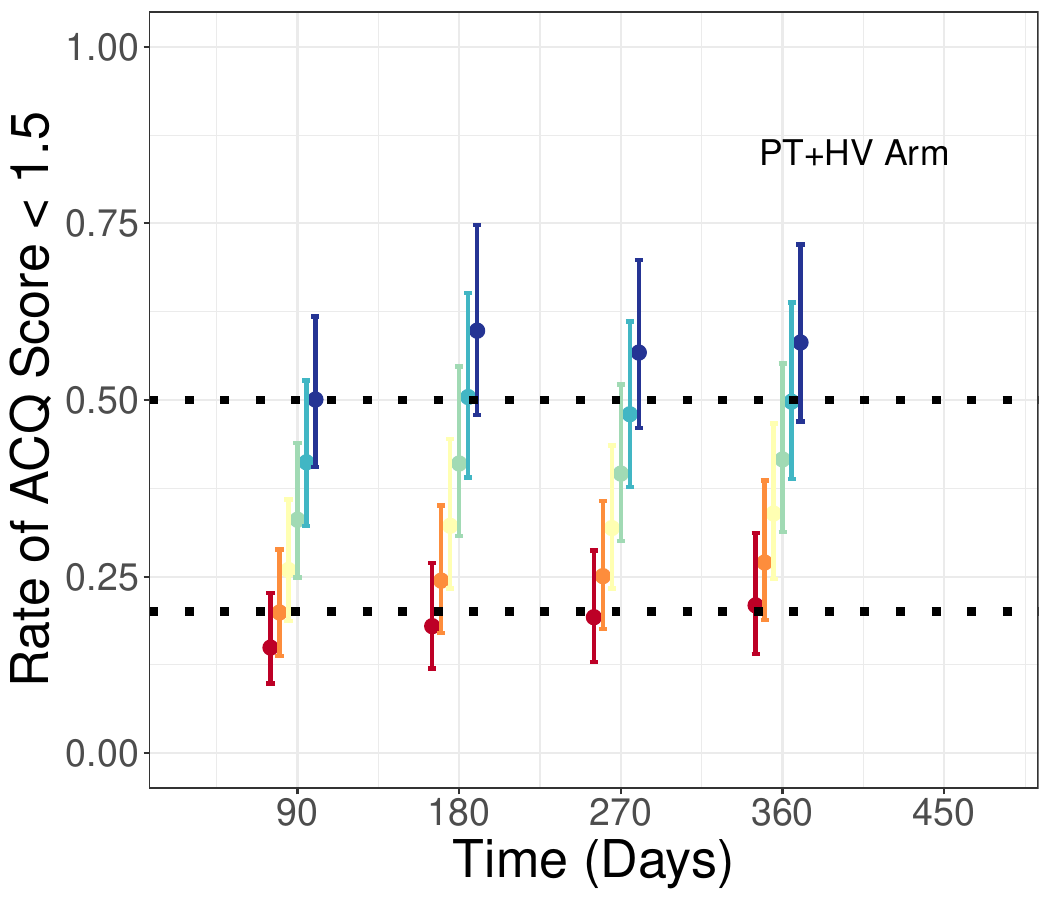}
   \end{subfigure}
\hfill
\begin{subfigure}[b]{0.49\textwidth}   
    \centering 
    \includegraphics[width=\textwidth]{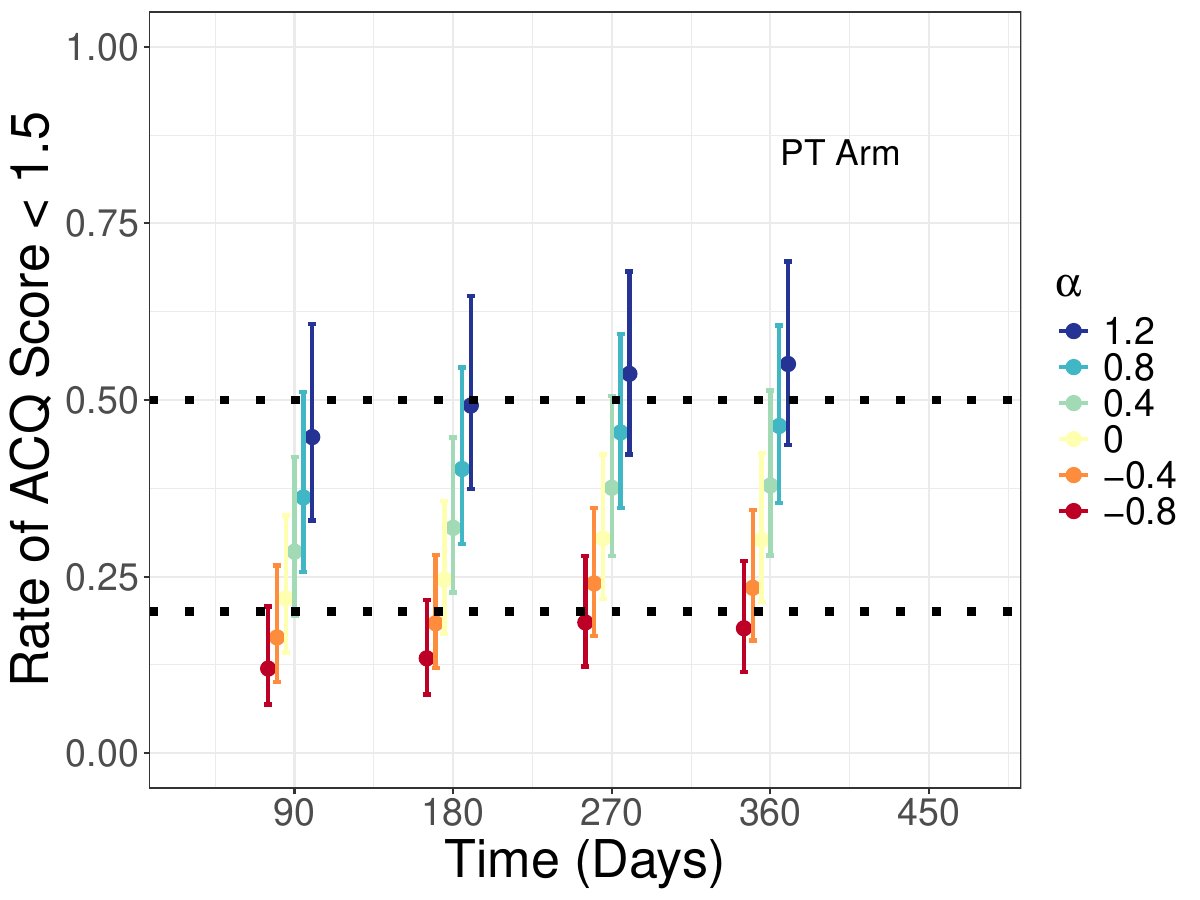}
\end{subfigure}

\caption[Population mean estimates of the proportion of patients with Asthma Control Questionnaire (ACQ) score less than 1.5 in the ARC trial under a range of sensitivity parameter values (logit link with weight $\symbf{W}_2(t;\symbf{\beta})$).]{Population mean estimates of the proportion of patients with Asthma Control Questionnaire (ACQ) score less than 1.5 in the ARC trial under a range of sensitivity parameter values (logit link with weight $\symbf{W}_2(t;\symbf{\beta})$). Here we show inference for the proportion of patients with ACQ score less than 1.5 in the intervention (PT+HV) arm and the control (PT) arm.  Estimation is made under sensitivity parameter values of $\alpha = -0.8, -0.4, 0, 0.4, 0.8, 1.2$, where a positive (negative) value of $\alpha$ posits that unobserved outcome values at time $t$ tend to be higher (lower) than observed values, after controlling for variables observed before time $t$. Upper panels: point estimates on the interval 60 to 460 days after randomization.  Lower panels: point estimates and 95\% Wald confidence intervals using the parametric bootstrap variance estimate at each target time of 90, 180, 270, and 360 days. For each arm, only those values of $\alpha$ under which the estimated curves lie completely between the dotted lines at lower bound 0.2 and upper bound 0.5 are considered plausible based on subject-matter expertise.} 
\label{fig:mean_curves_logit_weight2}
\end{figure}

\begin{figure}[ht]
\centering
\begin{subfigure}[b]{0.42\textwidth}
    \centering
    \includegraphics[width=\textwidth]{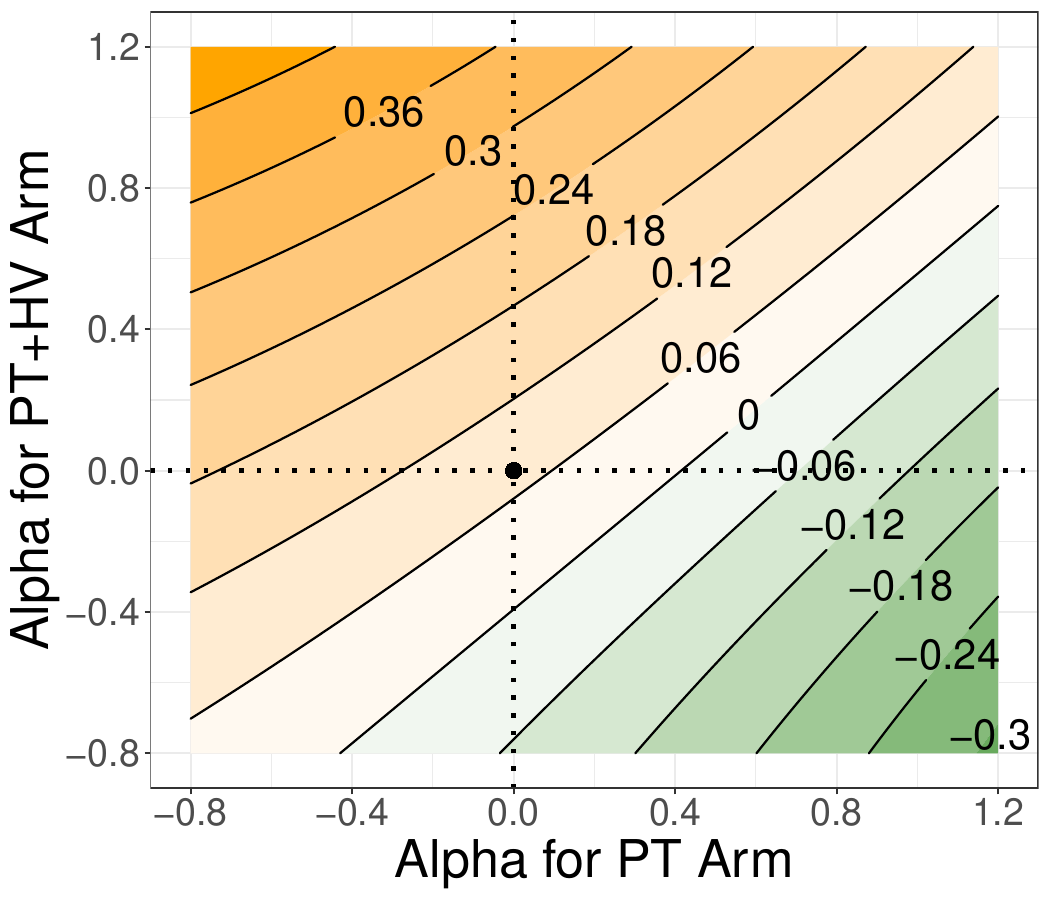}
\end{subfigure}
\hfill
\begin{subfigure}[b]{0.48\textwidth}  
    \centering 
    \includegraphics[width=\textwidth]{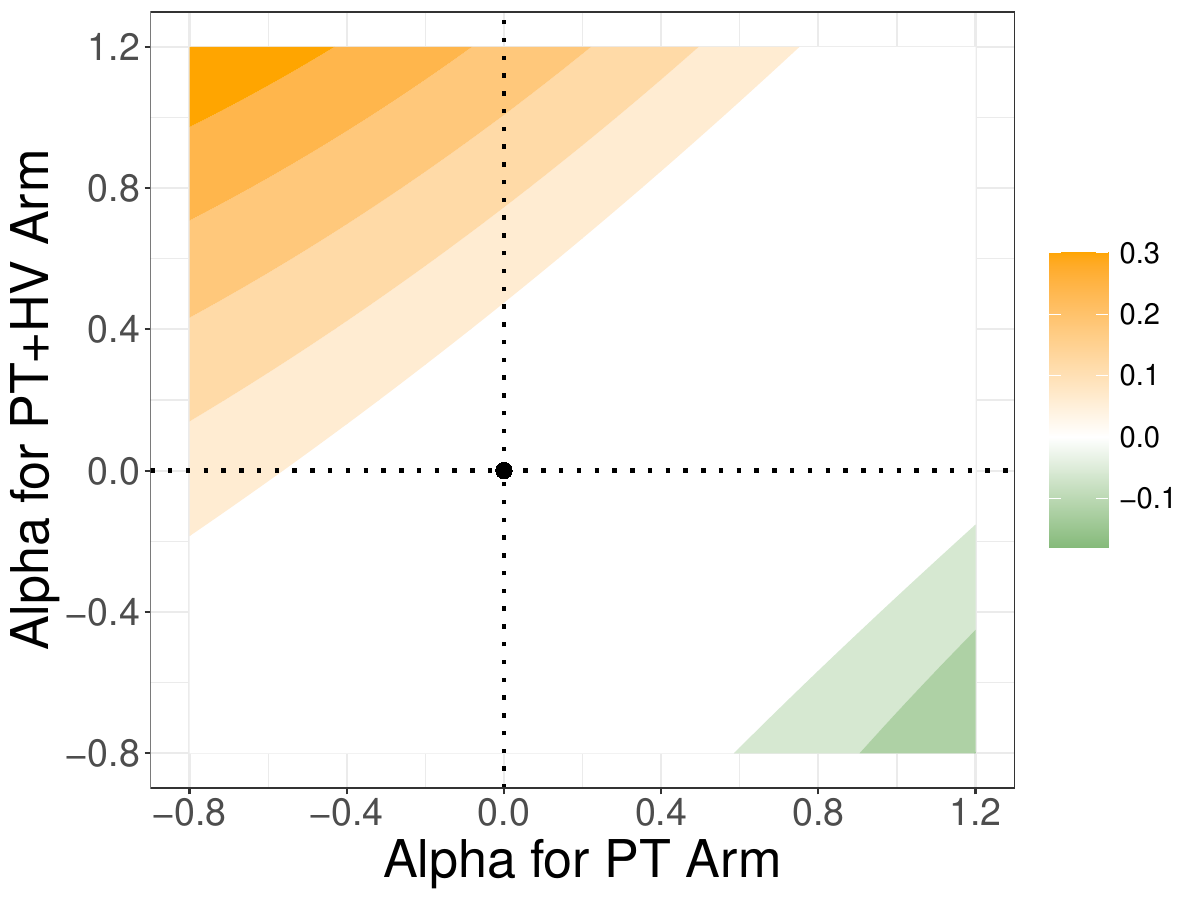}
\end{subfigure}

\medskip

\begin{subfigure}[b]{0.42\textwidth}   
    \centering 
    \includegraphics[width=\textwidth]{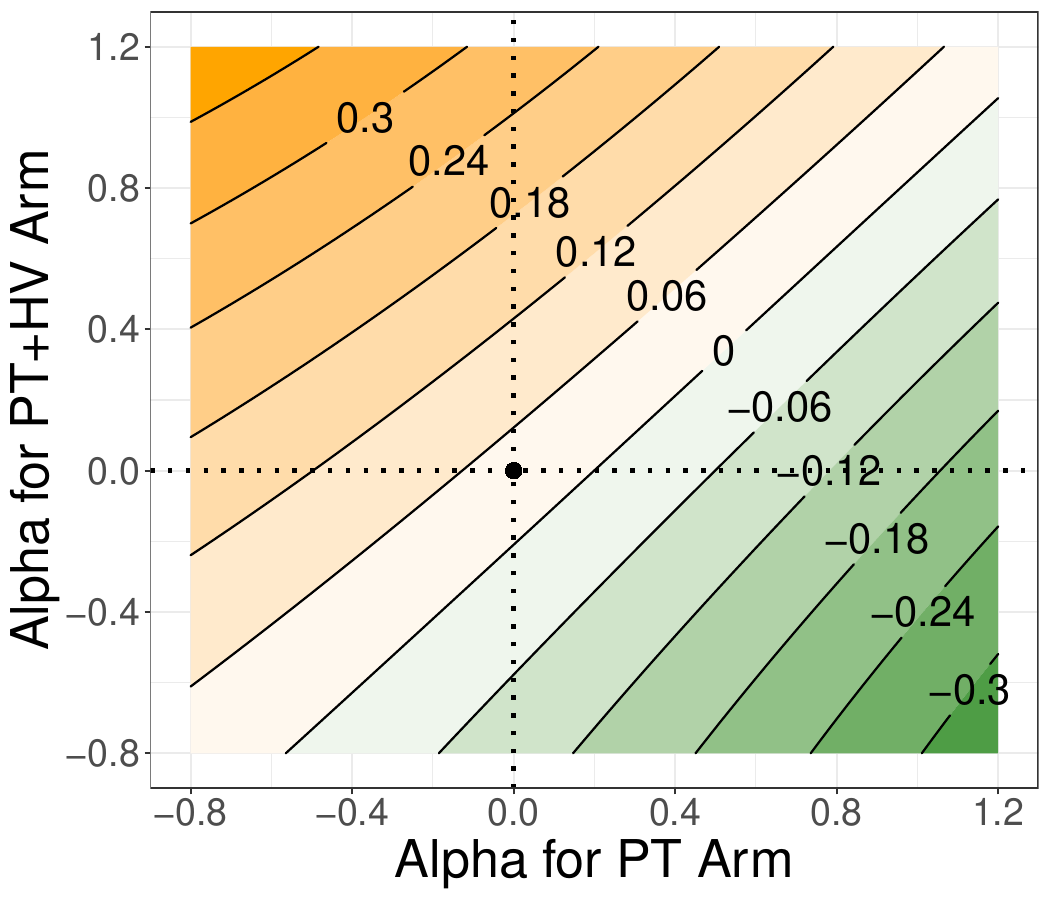}
   \end{subfigure}
\hfill
\begin{subfigure}[b]{0.48\textwidth}   
    \centering 
    \includegraphics[width=\textwidth]{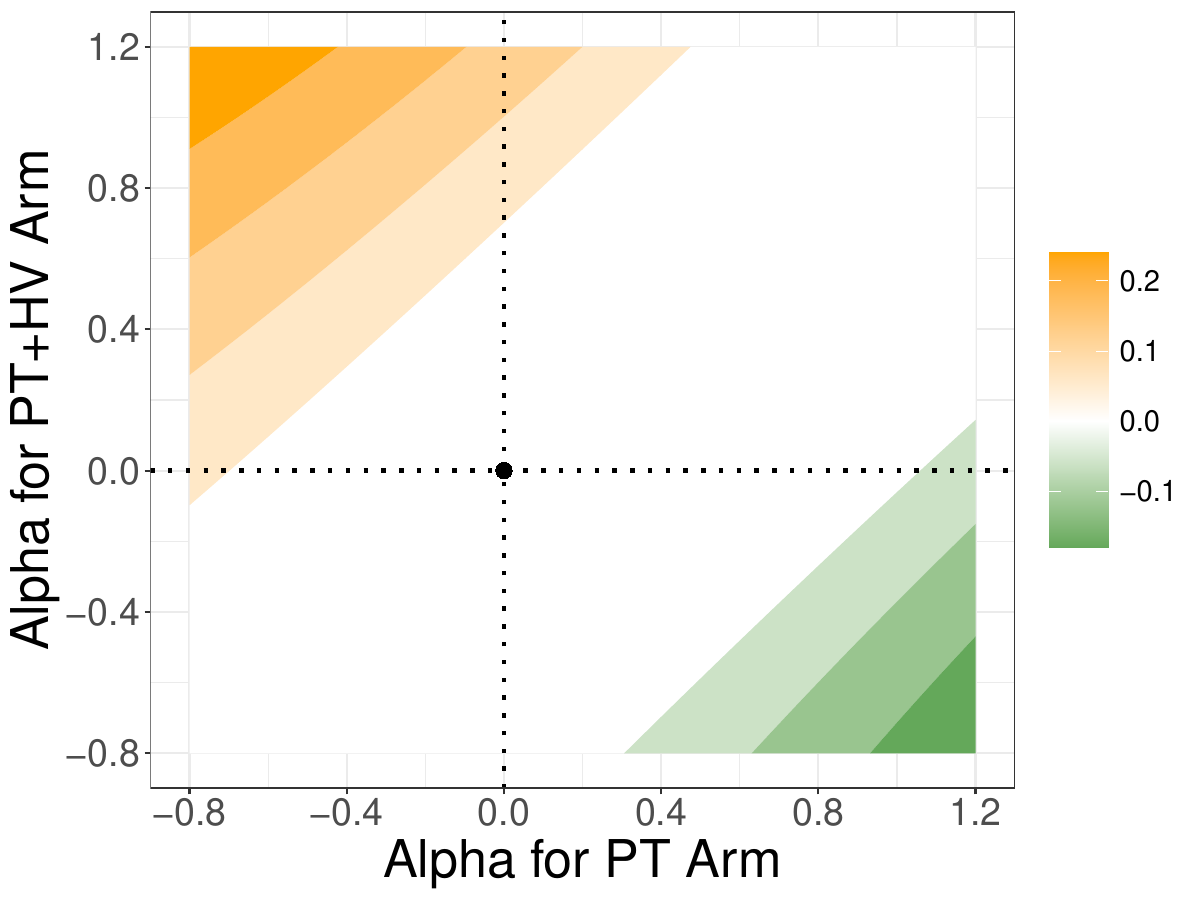}
\end{subfigure}
\caption[Sensitivity analysis for the proportion of patients with ACQ score less than 1.5 in the ARC trial (logit link with weight $\symbf{W}_2(t;\symbf{\beta})$).]{Sensitivity analysis for the proportion of patients with Asthma Control Questionnaire (ACQ) score less than 1.5 in the ARC trial (logit link with weight $\symbf{W}_2(t;\symbf{\beta})$). The left panels show the point estimate of the treatment effect $\delta(t)$ at 6 months (upper left) and 12 months (lower left) for each pair of sensitivity parameter values. The right panels summarize the 95\% confidence intervals for $\delta(t)$ at 6 months (upper right) and 12 months (lower right). The white region corresponds to sensitivity parameter values for which the confidence interval contains 0; the shaded region in the lower right (upper left) corresponds to values for which the confidence interval lies entirely below (above) 0. Contour heights give the value in the confidence interval closest to 0. Confidence intervals are Wald intervals based on the parametric bootstrap variance estimate.}
\label{fig:contour_logit_weight2}
\end{figure}

\section{Supplementary Simulation Study}
\label{apxB:appendix:simu}

\subsection{Simulation Mechanism and True Value}
\label{apxB:appendix:simu_mechanism}

This appendix describes the simulation procedure used to generate realistic datasets based on the ARC study and the corresponding true value calculation. The data-generating mechanism for the ACQ score is the same as that described in \cite{Smith2024}. For the asthma control indicator, however, we replaced the outcome model with a logistic regression model for $dF\{ y(t) \mid \Delta N(t)=1, \overline{\symbf{O}}(t)\}$.

Simulated datasets were generated separately for each treatment arm. For participant $i$, we first generated the baseline outcome $Y_i(0)$ by sampling from the empirical distribution of baseline outcomes in the corresponding treatment arm of the ARC data. We then alternated between generating subsequent assessment times and the corresponding outcomes. Specifically, given the $k$th assessment time $T_{ik}$ and the observed outcome $Y_i(T_{ik})$, we generated the $(k+1)$st assessment time, $T_{i,k+1}$, using Ogata's thinning algorithm \citep{ogata1981simulation} as follows:

\begin{enumerate}

\item Set $\lambda^*$ to be a value that is greater than or equal to $\sup \big\{ \widehat{\lambda}_{0,k}(t) \exp\{ \widehat{\gamma} \  Y_i(T_{ik}) \} : t \in [0,\tau] \big\}$, where $\widehat{\lambda}_{0,k}(t)$ and $\widehat{\gamma}$ are the estimates from the stratified Andersen-Gill model that we fit on the ARC data. 

\item Take the start time to be $T_{ik}$.  

\item Draw a potential gap time $s^* \sim Exp(\lambda^*)$, and set $t^* := $ start time $ + s^*$.  Accept the candidate assessment time $t^*$ with probability $ \widehat{\lambda}_{0,k}(t^*) \exp\{ \widehat{\gamma} \ Y_i(T_{ik}) \} /\lambda^*$.

\item  If $t^*$ is accepted, set $T_{i,k+1} := t^*$.  If $t^*$ is rejected, set the new start time to be $t^*$ and return to Step 3.  Iterate until a time $t^*$ is accepted, or until $t^* > \tau$, in which case $T_{ik}$ was participant $i$'s final assessment time. 
\end{enumerate}

Outcomes were generated from a fitted model for $dF\{y(t) \mid \Delta N(t)=1, \overline{\symbf{O}}(t)\}$ based on the ARC data. For the ACQ score, the outcome model was a negative binomial regression with predictors including time, the lag time since the previous assessment, and the outcome value at the previous assessment. For the asthma control indicator, the outcome model was a logistic regression with the same set of predictors. After generating the assessment time $T_{i,k+1}$, we obtained the estimated conditional distribution $d\widehat{F}\{ y(T_{i,k+1}) \mid \Delta N(T_{i,k+1})=1,\overline{\symbf{O}}_i(T_{i,k+1})\}$ 
using the fitted model for $dF\{y(t) \mid \Delta N(t)=1, \overline{\symbf{O}}(t)\}$. We then generated the outcome $Y_i(T_{i,k+1})$ by sampling from this estimated conditional distribution.

We computed the true value of $\symbf{\beta}$ using the identification formula provided in Proposition \ref{chp4:prop:identify}:
$$\symbf{\beta} = \left\{\int_{t=t_1}^{t_2} \symbf{B}(t) \symbf{B}(t)^\prime dt\right\}^{-1} \int_{t=t_1}^{t_2} \symbf{B}(t) s^{-1}\left\{\mu(t)\right\}dt,$$
under specific sensitivity parameter values. 
We simulated a large sample of $N=1,000,000$ participants.  For each time $t \in [t_1,t_2]$, we obtained the predicted value of 
$$
\mathbb{E}\{Y(t) \mid \overline{\symbf{O}}_i(t)\} = \frac{ \sum_{y(t)} y(t) \exp\{ \alpha y(t) \}dF\{ y(t) \mid \Delta N(t)=1,\overline{\symbf{O}}_i(t)\} }{ \sum_{y(t)} \exp\{ \alpha y(t) \}dF\{ y(t) \mid \Delta N(t)=1,\overline{\symbf{O}}_i(t)\} } 
$$
for each participant.  We then used the approximation $\mu(t) = \mathbb{E}[Y(t)] \approx \frac{1}{N} \sum_{i=1}^N \widehat{\mathbb{E}}[ Y(t) | \overline{\symbf{O}}_i(t) ]$.  We approximated the integral  $\int_{t=t_1}^{t_2} \symbf{B}(t) s^{-1}\{\mu(t) \}dt$ using intervals of one day.  The value of $\symbf{\beta}$  we obtain corresponds to the projection of the true curve of means onto spline curves of the form $\symbf{\beta}^\prime \symbf{B}(t)$. To obtain the true range of the sensitivity parameters, we first generated a large simulated dataset consisting of $N = 200,000$ participants and used the procedure described above to approximate the true value of $\mu(t)$. We then applied Algorithm \ref{alg:point_est} using a grid search method to identify the corresponding true upper and lower bounds of the sensitivity parameters.

\subsection{Supplementary Simulation Results for Section \ref{chp4:sec:simu}}
\label{apxB:appendix:simu_result}

This appendix presents additional simulation results obtained using the second weight function, $\symbf{W}_2(t;\symbf{\beta})$.
Overall, the results are qualitatively similar to those obtained using the first weight function, $\symbf{W}_1(t;\symbf{\beta})$.

Simulation results for the treatment effect in each treatment arm using the second weight function $\symbf{W}_2(t;\symbf{\beta})$ are shown in Tables \ref{table:coverage_log_weight2} and \ref{table:coverage_logit_weight2}. 
When the analyses were conducted using the true values of $\alpha_1$ and $\alpha_0$, the proposed estimators exhibited good finite-sample performance. Across 500 simulation replicates, the empirical bias was small, with absolute bias below 0.01 for the asthma control indicator and below 0.07 for the ACQ score. The empirical coverage probabilities of the nominal 0.95 confidence intervals were close to the target level, ranging from 0.944 to 0.976 for the asthma control indicator and from 0.926 to 0.970 for the ACQ score. In contrast, when the analyses assumed $\alpha_1 = \alpha_0 = 0$, the resulting estimators often exhibited substantial bias and the confidence interval coverage deviated markedly from the nominal 0.95 level.
This highlights the importance of considering a range of different assumptions through sensitivity analysis.


\begin{table}[!htbp]
\centering
\caption[Simulation results for the ACQ score with the log link ($\symbf{W}_2(t;\symbf{\beta})$).]{Simulation results for the ACQ score with the log link ($\symbf{W}_2(t;\symbf{\beta})$). Data were generated under our sensitivity analysis assumption using values of $\alpha_0,\alpha_1=-1.2, -0.8, -0.4, 0, 0.4, 0.8$. The treatment effects at 6 and 12 months were then estimated using our augmented inverse intensity-weighted estimator with the log link ($\symbf{W}_2(t;\symbf{\beta})$): (a) using the true values of $\alpha_0,\alpha_1$ (rows denoted ``S.A.''), and (b) under the explainable assessment assumption that $\alpha_0=\alpha_1=0$ (rows denoted ``Expl.''). Shown are the absolute value of the empirical bias and the confidence interval coverage across 500 simulations. Confidence intervals are Wald confidence intervals using the parametric bootstrap variance estimate.}
\label{table:coverage_log_weight2}
\footnotesize
\setlength{\tabcolsep}{3pt}
\begin{tabular}{ccc*{12}{c}}
\hline
 & & & \multicolumn{12}{c}{True $\alpha_0$} \\
\cline{4-15}
 & & & \multicolumn{2}{c}{$-1.2$} & \multicolumn{2}{c}{$-0.8$} & \multicolumn{2}{c}{$-0.4$}
     & \multicolumn{2}{c}{0} & \multicolumn{2}{c}{0.4} & \multicolumn{2}{c}{0.8} \\
\cline{4-15}
Month & True $\alpha_1$ & Method & Bias & Cov. & Bias & Cov. & Bias & Cov. & Bias & Cov. & Bias & Cov. & Bias & Cov. \\
\hline
\multirow{12}{*}{6}
 & \multirow{2}{*}{0.8}    & S.A.  & 0.045 & 0.958 & 0.041 & 0.960 & 0.038 & 0.964 & 0.030 & 0.960 & 0.019 & 0.958 & 0.004 & 0.958 \\
 &                         & Expl. & 2.155 & 0.000 & 1.916 & 0.000 & 1.573 & 0.000 & 1.073 & 0.000 & 0.408 & 0.304 & 0.356 & 0.412 \\ \cline{2-15}
 & \multirow{2}{*}{0.4}    & S.A.  & 0.024 & 0.966 & 0.020 & 0.968 & 0.017 & 0.964 & 0.010 & 0.968 & 0.002 & 0.962 & 0.024 & 0.954 \\
 &                         & Expl. & 1.571 & 0.000 & 1.333 & 0.000 & 0.989 & 0.000 & 0.490 & 0.156 & 0.176 & 0.834 & 0.940 & 0.000 \\ \cline{2-15}
 & \multirow{2}{*}{0}      & S.A.  & 0.013 & 0.964 & 0.009 & 0.952 & 0.006 & 0.948 & 0.002 & 0.952 & 0.013 & 0.958 & 0.036 & 0.950 \\
 &                         & Expl. & 1.080 & 0.000 & 0.842 & 0.002 & 0.498 & 0.148 & 0.002 & 0.952 & 0.667 & 0.032 & 1.431 & 0.000 \\ \cline{2-15}
 & \multirow{2}{*}{$-0.4$} & S.A.  & 0.006 & 0.960 & 0.002 & 0.950 & 0.002 & 0.944 & 0.009 & 0.942 & 0.020 & 0.950 & 0.043 & 0.938 \\
 &                         & Expl. & 0.697 & 0.010 & 0.459 & 0.214 & 0.115 & 0.884 & 0.384 & 0.346 & 1.050 & 0.000 & 1.814 & 0.000 \\ \cline{2-15}
 & \multirow{2}{*}{$-0.8$} & S.A.  & 0.002 & 0.960 & 0.002 & 0.962 & 0.006 & 0.946 & 0.013 & 0.940 & 0.024 & 0.942 & 0.047 & 0.940 \\
 &                         & Expl. & 0.412 & 0.294 & 0.174 & 0.830 & 0.170 & 0.838 & 0.670 & 0.028 & 1.335 & 0.000 & 2.099 & 0.000 \\ \cline{2-15}
 & \multirow{2}{*}{$-1.2$} & S.A.  & 0.003 & 0.960 & 0.007 & 0.960 & 0.010 & 0.946 & 0.018 & 0.936 & 0.029 & 0.946 & 0.052 & 0.944 \\
 &                         & Expl. & 0.197 & 0.804 & 0.041 & 0.938 & 0.385 & 0.346 & 0.884 & 0.000 & 1.550 & 0.000 & 2.314 & 0.000 \\
\hline
\multirow{12}{*}{12}
 & \multirow{2}{*}{0.8}    & S.A.  & 0.032 & 0.954 & 0.028 & 0.962 & 0.022 & 0.964 & 0.012 & 0.958 & 0.003 & 0.960 & 0.011 & 0.940 \\
 &                         & Expl. & 2.028 & 0.000 & 1.805 & 0.000 & 1.485 & 0.000 & 1.019 & 0.000 & 0.387 & 0.302 & 0.363 & 0.362 \\ \cline{2-15}
 & \multirow{2}{*}{0.4}    & S.A.  & 0.018 & 0.944 & 0.014 & 0.950 & 0.008 & 0.956 & 0.002 & 0.964 & 0.011 & 0.966 & 0.025 & 0.942 \\
 &                         & Expl. & 1.466 & 0.000 & 1.243 & 0.000 & 0.922 & 0.000 & 0.456 & 0.188 & 0.175 & 0.780 & 0.926 & 0.000 \\ \cline{2-15}
 & \multirow{2}{*}{0}      & S.A.  & 0.011 & 0.946 & 0.006 & 0.944 & 0.000 & 0.950 & 0.010 & 0.964 & 0.019 & 0.970 & 0.033 & 0.952 \\
 &                         & Expl. & 1.000 & 0.000 & 0.777 & 0.000 & 0.456 & 0.188 & 0.010 & 0.964 & 0.641 & 0.010 & 1.391 & 0.000 \\ \cline{2-15}
 & \multirow{2}{*}{$-0.4$} & S.A.  & 0.006 & 0.940 & 0.001 & 0.948 & 0.004 & 0.954 & 0.015 & 0.966 & 0.024 & 0.956 & 0.038 & 0.944 \\
 &                         & Expl. & 0.641 & 0.006 & 0.418 & 0.248 & 0.097 & 0.924 & 0.369 & 0.346 & 1.000 & 0.000 & 1.750 & 0.000 \\ \cline{2-15}
 & \multirow{2}{*}{$-0.8$} & S.A.  & 0.001 & 0.942 & 0.005 & 0.948 & 0.011 & 0.956 & 0.021 & 0.952 & 0.030 & 0.948 & 0.044 & 0.932 \\
 &                         & Expl. & 0.370 & 0.344 & 0.147 & 0.858 & 0.173 & 0.788 & 0.639 & 0.010 & 1.271 & 0.000 & 2.021 & 0.000 \\ \cline{2-15}
 & \multirow{2}{*}{$-1.2$} & S.A.  & 0.007 & 0.940 & 0.011 & 0.938 & 0.017 & 0.942 & 0.027 & 0.944 & 0.036 & 0.944 & 0.050 & 0.926 \\
 &                         & Expl. & 0.167 & 0.820 & 0.056 & 0.946 & 0.376 & 0.340 & 0.842 & 0.000 & 1.474 & 0.000 & 2.224 & 0.000 \\
\hline
\end{tabular}
\end{table}



\begin{table}[!htbp]
\centering
\caption[Simulation results for the AC indicator with the logit link ($\symbf{W}_2(t;\symbf{\beta})$).]{Simulation results for the AC indicator with the logit link ($\symbf{W}_2(t;\symbf{\beta})$). Data were generated under our sensitivity analysis assumption using values of $\alpha_0,\alpha_1 = -0.8, -0.4, 0, 0.4, 0.8, 1.2$. The treatment effects at 6 and 12 months were then estimated using our augmented inverse intensity-weighted estimator with the logit link ($\symbf{W}_2(t;\symbf{\beta})$): (a) using the true values of $\alpha_0,\alpha_1$ (rows denoted ``S.A.''), and (b) under the explainable assessment assumption that $\alpha_0=\alpha_1=0$ (rows denoted ``Expl.''). Shown are the absolute value of the empirical bias and the confidence interval coverage across 500 simulations. Confidence intervals are Wald confidence intervals using the parametric bootstrap variance estimate.}
\label{table:coverage_logit_weight2}
\footnotesize
\setlength{\tabcolsep}{3pt}
\begin{tabular}{ccc*{12}{c}}
\hline
 & & & \multicolumn{12}{c}{True $\alpha_0$} \\
\cline{4-15}
 & & & \multicolumn{2}{c}{$-0.8$} & \multicolumn{2}{c}{$-0.4$} & \multicolumn{2}{c}{0}
     & \multicolumn{2}{c}{0.4} & \multicolumn{2}{c}{0.8} & \multicolumn{2}{c}{1.2} \\
\cline{4-15}
Month & True $\alpha_1$ & Method & Bias & Cov. & Bias & Cov. & Bias & Cov. & Bias & Cov. & Bias & Cov. & Bias & Cov. \\
\hline
\multirow{12}{*}{6}
 & \multirow{2}{*}{1.2}    & S.A.  & 0.004 & 0.972 & 0.004 & 0.974 & 0.004 & 0.966 & 0.005 & 0.962 & 0.006 & 0.964 & 0.005 & 0.966 \\
 &                         & Expl. & 0.365 & 0.000 & 0.315 & 0.000 & 0.255 & 0.010 & 0.184 & 0.158 & 0.105 & 0.634 & 0.017 & 0.944 \\ \cline{2-15}
 & \multirow{2}{*}{0.8}    & S.A.  & 0.004 & 0.970 & 0.004 & 0.976 & 0.004 & 0.968 & 0.005 & 0.962 & 0.006 & 0.962 & 0.005 & 0.966 \\
 &                         & Expl. & 0.278 & 0.002 & 0.228 & 0.046 & 0.167 & 0.220 & 0.097 & 0.698 & 0.017 & 0.944 & 0.070 & 0.808 \\ \cline{2-15}
 & \multirow{2}{*}{0.4}    & S.A.  & 0.003 & 0.964 & 0.003 & 0.966 & 0.003 & 0.964 & 0.004 & 0.960 & 0.005 & 0.962 & 0.005 & 0.960 \\
 &                         & Expl. & 0.192 & 0.126 & 0.143 & 0.376 & 0.082 & 0.788 & 0.011 & 0.952 & 0.068 & 0.814 & 0.155 & 0.274 \\ \cline{2-15}
 & \multirow{2}{*}{0}      & S.A.  & 0.003 & 0.956 & 0.003 & 0.954 & 0.003 & 0.954 & 0.004 & 0.958 & 0.005 & 0.954 & 0.004 & 0.954 \\
 &                         & Expl. & 0.114 & 0.576 & 0.064 & 0.856 & 0.003 & 0.954 & 0.068 & 0.818 & 0.147 & 0.334 & 0.234 & 0.044 \\ \cline{2-15}
 & \multirow{2}{*}{$-0.4$} & S.A.  & 0.004 & 0.948 & 0.003 & 0.954 & 0.004 & 0.950 & 0.005 & 0.958 & 0.006 & 0.954 & 0.005 & 0.946 \\
 &                         & Expl. & 0.045 & 0.910 & 0.005 & 0.954 & 0.066 & 0.828 & 0.136 & 0.398 & 0.216 & 0.066 & 0.303 & 0.004 \\ \cline{2-15}
 & \multirow{2}{*}{$-0.8$} & S.A.  & 0.002 & 0.946 & 0.002 & 0.944 & 0.003 & 0.948 & 0.003 & 0.950 & 0.004 & 0.952 & 0.004 & 0.944 \\
 &                         & Expl. & 0.015 & 0.952 & 0.065 & 0.832 & 0.126 & 0.468 & 0.197 & 0.120 & 0.276 & 0.016 & 0.363 & 0.000 \\
\hline
\multirow{12}{*}{12}
 & \multirow{2}{*}{1.2}    & S.A.  & 0.001 & 0.958 & 0.002 & 0.954 & 0.002 & 0.952 & 0.002 & 0.954 & 0.003 & 0.950 & 0.003 & 0.946 \\
 &                         & Expl. & 0.366 & 0.000 & 0.311 & 0.000 & 0.246 & 0.012 & 0.171 & 0.204 & 0.089 & 0.692 & 0.002 & 0.954 \\ \cline{2-15}
 & \multirow{2}{*}{0.8}    & S.A.  & 0.002 & 0.956 & 0.002 & 0.950 & 0.002 & 0.954 & 0.002 & 0.954 & 0.003 & 0.956 & 0.003 & 0.954 \\
 &                         & Expl. & 0.281 & 0.000 & 0.226 & 0.026 & 0.160 & 0.272 & 0.086 & 0.708 & 0.003 & 0.952 & 0.083 & 0.698 \\ \cline{2-15}
 & \multirow{2}{*}{0.4}    & S.A.  & 0.003 & 0.958 & 0.003 & 0.952 & 0.003 & 0.954 & 0.003 & 0.956 & 0.004 & 0.954 & 0.004 & 0.950 \\
 &                         & Expl. & 0.196 & 0.096 & 0.141 & 0.372 & 0.076 & 0.756 & 0.001 & 0.958 & 0.081 & 0.708 & 0.168 & 0.230 \\ \cline{2-15}
 & \multirow{2}{*}{0}      & S.A.  & 0.003 & 0.952 & 0.003 & 0.952 & 0.003 & 0.954 & 0.003 & 0.952 & 0.004 & 0.954 & 0.004 & 0.954 \\
 &                         & Expl. & 0.117 & 0.530 & 0.062 & 0.812 & 0.003 & 0.952 & 0.078 & 0.730 & 0.161 & 0.256 & 0.247 & 0.024 \\ \cline{2-15}
 & \multirow{2}{*}{$-0.4$} & S.A.  & 0.003 & 0.954 & 0.003 & 0.956 & 0.004 & 0.952 & 0.004 & 0.948 & 0.004 & 0.950 & 0.004 & 0.952 \\
 &                         & Expl. & 0.046 & 0.896 & 0.009 & 0.950 & 0.075 & 0.746 & 0.149 & 0.318 & 0.232 & 0.032 & 0.318 & 0.000 \\ \cline{2-15}
 & \multirow{2}{*}{$-0.8$} & S.A.  & 0.004 & 0.958 & 0.004 & 0.952 & 0.004 & 0.956 & 0.004 & 0.950 & 0.005 & 0.950 & 0.005 & 0.950 \\
 &                         & Expl. & 0.017 & 0.940 & 0.072 & 0.768 & 0.137 & 0.382 & 0.211 & 0.060 & 0.294 & 0.000 & 0.381 & 0.000 \\
\hline
\end{tabular}
\end{table}



\end{document}